\newcommand{\R}{\mathbb{R}}
\newcommand{\Z}{\mathbb{Z}}
\newcommand{\C}{\mathbb{C}}
\newcommand{\N}{\mathbb{N}}
\newcommand{\bS}{\mathbb{S}}
\newcommand{\dps}{\displaystyle}
\newcommand{\ii}{\infty}
\newcommand\1{{\ensuremath {\mathds 1} }}
\renewcommand\phi{\varphi}
\renewcommand\ge{\geq}
\renewcommand\le{\leq}
\newcommand{\wto}{\rightharpoonup}
\newcommand{\cB}{\mathcal{B}}
\newcommand{\cE}{\mathcal{E}}
\newcommand{\cF}{\mathcal{F}}
\newcommand{\dx}{\rd x}
\newcommand{\dy}{\rd y}
\newcommand\pscal[1]{{\ensuremath{\left\langle #1 \right\rangle}}}
\newcommand{\norm}[1]{ \left\| #1 \right\|}
\renewcommand{\geq}{\geqslant}
\renewcommand{\leq}{\leqslant}
\renewcommand{\hat}{\widehat}
\renewcommand{\tilde}{\widetilde}
\newcommand{\eps}{\varepsilon}
\newcommand{\nn}{\nonumber}
\newcommand{\rd}{{\rm d}}
\newtheorem{theorem}{Theorem}[section]
\newtheorem{lemma}[theorem]{Lemma}
\newtheorem{remark}[theorem]{Remark}
\newtheorem{assumption}[theorem]{Assumption}
\newtheorem{corollary}[theorem]{Corollary}
\newtheorem{definition}[theorem]{Definition}
\newtheorem{proposition}[theorem]{Proposition}
\title[Ground states of the Gross-Pitaevskii equation]{Positive-density ground states of the Gross-Pitaevskii equation}
\author[M. Lewin]{Mathieu Lewin}
\address{CEREMADE, CNRS, Université Paris-Dauphine, PSL Research University, Place de Lattre de Tassigny, 75016 Paris, France}
\email{mathieu.lewin@math.cnrs.fr}
\author[P. T. Nam]{Phan Thành Nam}
\address{Department of Mathematics, LMU Munich, Theresienstrasse 39, 80333 Munich, Germany}
\email{nam@math.lmu.de}
\date{\today}
\begin{document}

\maketitle

\begin{abstract}
We consider the nonlinear Gross-Pitaevskii equation at positive density, that is, for a bounded solution not tending to 0 at infinity. We focus on infinite ground states, which are by definition minimizers of the energy under local perturbations. When the Fourier transform of the interaction potential takes negative values we prove the existence of a phase transition at high density, where the constant solution ceases to be a ground state. The analysis requires mixing techniques from elliptic PDE theory and statistical mechanics, in order to deal with a large class of interaction potentials.

\medskip

\tiny \noindent\copyright\;2023 the authors.
\end{abstract}

\tableofcontents

 \section{Introduction}

We study solutions $u:\R^d\to\C$ to the nonlinear equation
\begin{equation}
 \big(-\Delta+w\ast|u|^2\big)u=\mu\,u,
  \label{eq:GP_infinite}
\end{equation}
where
$$\boxed{\mu>0}$$
is a fixed parameter and $w$ is a given real-valued function or measure with $\int_{\R^d}|w|<\ii$ and $\int_{\R^d}w>0$. The positivity of $\mu$ implies that solutions to~\eqref{eq:GP_infinite} must be ``infinitely extended'', in the sense that
$$\int_{B(z,R)}|u|^2\geq \alpha R^d$$
for all $z\in\R^d$ and all $R$ large enough. Here $B(z,R)$ denotes the ball of center $z\in\R^d$ and radius $R$, and the constant $\alpha>0$ is independent of $z$ and $R$. In particular, $u$ cannot tend to 0 at infinity. The simplest solution of~\eqref{eq:GP_infinite} is the constant function
\begin{equation}
u(x)\equiv\left(\frac{\mu}{\int_{\R^d}w}\right)^{\frac12}
 \label{eq:u_cnst_intro}
\end{equation}
but we will see that there are other more interesting solutions.

The nonlinear equation~\eqref{eq:GP_infinite} typically describes a \emph{system of infinitely many quantum particles in a mean-field-type approximation} and $w$ is interpreted as the interaction potential between the particles. This equation appears under different names in the literature, depending on the context. For Cooper pairs in superconductors, it is called the \emph{(non-local) Ginzburg-Landau equation}~\cite{GinLan-50,BetBreHel-94,PacRiv-00,DeLaire-12,FraHaiSeiSol-12}, whereas for the condensate part of an infinite Bose gas it is called the \emph{Gross-Pitaevskii} or \emph{Hartree equation}~\cite{Gross-57,Gross-58,Gross-61,Pitaevskii-61,PitStr-03,LieSeiSolYng-05,LewNamRou-14}. When $u$ is real-valued and $w=\delta$, the same equation can also describe phase separation in binary fluids and living tissues, in which case it is called the \emph{Van Der Waals}~\cite{VanDerWaals-79,Rowlinson-79}, \emph{Cahn-Hilliard} or \emph{Allen-Cahn equation}~\cite{CahHil-58,Miranville-19}. We use the name \emph{Gross-Pitaevskii} throughout.

We will restrict our attention to a particular class of solutions called \emph{infinite ground states}. Those are by definition minimizers of the associated energy functional
\begin{multline}
\cF_\mu(u):=\int_{\R^d}|\nabla u(x)|^2\,\dx+\frac12\iint_{\R^d\times\R^d}|u(x)|^2|u(y)|^2w(x-y)\,\dx\,\dy\\
-\mu\int_{\R^d}|u(x)|^2\,\dx.
 \label{eq:GP_intro}
\end{multline}
Since $u$ is infinitely extended the above integrals are infinite, so that the concept of minimizers is unclear. The traditional way of solving this issue is to simply ask that the energy goes up when $u$ is perturbed locally, whatever the size of the local perturbation. In other words, we require that
\begin{equation}
\text{``}\cF_\mu(v)-\cF_\mu(u)\text{''}\geq0\label{eq:cond_GS}
\end{equation}
for any $v$ that coincides with $u$ outside of a bounded set. The formal difference of the energies in~\eqref{eq:cond_GS} can be given a rigorous meaning by writing integrals of differences, for instance $\int_{\R^d} (|\nabla v|^2-|\nabla u|^2)$ for the first term, see Definition~\ref{def:infinite_GP_GS} below. We call such solutions \emph{infinite ground states} to emphasize that $u$ is infinitely extended and that its energy is infinite. Often we will simply call them \emph{ground states}. In the literature, they have also been called \emph{local minimizers} but this terminology could suggest that only small perturbations are considered. The allowed perturbation are indeed ``local'' in space but arbitrarily large in range and size.

Our goal is to prove the existence of such infinite ground states and study their properties. In particular we would like to know if the constant solution~\eqref{eq:u_cnst_intro} is an infinite ground state and, if not, what are the properties of the ground state(s).

Since~\eqref{eq:GP_infinite} physically describes an infinite system, it seems natural to adopt the point of view of statistical mechanics~\cite{Ruelle}, in particular concerning the existence of \emph{phase transitions} and \emph{symmetry breaking phenomena}. Equation~\eqref{eq:GP_infinite} is invariant under the action of several symmetry groups. We can translate a solution $u$ in space or multiply it by a constant phase factor in the form $e^{i\theta_0}u(x-\tau)$ and we still get a solution. If $w$ is radial, we can also rotate $u$. In statistical mechanics, a \emph{fluid phase} (a gas or a liquid) is usually characterized by the invariance of the equilibrium state under translations and rotations. Such states can be very complicated for general interacting models, but they are completely trivial for the mean-field equation~\eqref{eq:GP_infinite} considered in this paper. In fact, constants are the only functions invariant under space translations and they solve~\eqref{eq:GP_infinite} if and only if $u$ is given by~\eqref{eq:u_cnst_intro}, up to a constant phase factor. This leads us to interpret~\eqref{eq:u_cnst_intro} as an \emph{equilibrium fluid phase} of our system. We then want to investigate whether there are \emph{phase transitions} when the parameter $\mu$ is varied, corresponding to the appearance of other types of solutions. A \emph{solid phase} would typically be a non-trivial periodic solution $u$, for instance. The existence of such phases is going to crucially depend on the properties of the interaction potential $w$.

\subsubsection*{\textbf{Density}}
Infinite ground states will be shown to have a well-defined \emph{average density} (or mass per unit volume) $\rho \in (0,\infty)$, that is,
\begin{equation}
\lim_{R\to\ii}\inf_{z\in \R^d} \frac{1}{|B(0,R)|} \int_{B(z,R)}|u(y)|^2\,\dy =\rho.
 \label{eq:density_intro}
\end{equation}
We will show that the limit also exists with $\sup$ instead of $\inf$. In general the two limits can be different and in that case we think of $u$ as describing an interface between two equilibrium states with different densities but the same $\mu$. As we will see, the two limits will nevertheless coincide for almost all $\mu>0$. We can therefore interpret $\mu$ as a dual variable or penalization, which can be used to find ground states having any desired density $\rho$. Although one can think of fixing $\rho$, it is often easier to fix $\mu$. In statistical mechanics, $\mu$ is called the \emph{chemical potential}.

The density $\rho$ will be proved to be comparable to $\mu$ in the sense that $\mu/C\leq \rho\leq C\mu$ for some positive constant $C$, so that the regimes of high and low density correspond to large and small $\mu$, respectively. At low density ($\mu\ll1$), we expect that the system is a fluid and that the constant function~\eqref{eq:u_cnst_intro} is an infinite ground state. A solid phase could only occur at high density ($\mu\gg1$). This is what we will prove in this paper when the Fourier transform $\widehat w$ changes sign.

\subsubsection*{\textbf{Contact interaction}}
The problem of studying infinite ground states for Equation~\eqref{eq:GP_infinite} has a long history. Many works have been devoted to the special case of the Dirac delta
$$w=\delta_0.$$
The Dirac delta naturally occurs in macroscopic limits where the details of the microscopic interactions between the particles get simplified, or at low density $\mu\ll1$ where the particles interact rarely. The Ginzburg-Landau equation of superconductivity has $w=\delta$~\cite{GinLan-50,BetBreHel-94,FraHaiSeiSol-12}. For Bose gases, one arrives at $w=4\pi a\delta$ at low density, where $a$ is the $s$-wave scattering length~\cite{LieYng-98,LieSeiYng-00,LieSeiSolYng-05}.

Note that for $w=\delta$ the equation has an exact scaling and nothing special can happen when varying $\mu$. In fact, letting $\tilde u(x)=\sqrt\mu u(\sqrt\mu x)$ we are reduced to $\mu=1$. The constant solution~\eqref{eq:GP_infinite} can be proved to be an infinite ground state and solid phases are not expected to arise at all. However, there are some surprises when looking at the possibility of breaking rotational invariance, as we now quickly review.

A huge amount of works has been devoted in the 1990s to the two-dimensional case $d=2$~\cite{BetBreHel-94,PacRiv-00}. For $w=\delta$, it was proved that there are \emph{exactly two infinite ground states in 2D}, up to translations, multiplication by a constant phase factor and complex conjugation~\cite{BetBreHel-94,BreMerRiv-94,Sandier-98,Mironescu-96}:
\begin{itemize}
 \item the constant solution~\eqref{eq:u_cnst_intro},
 \item a vortex of topological degree one, that is, behaving at infinity like
 $$u(x)\underset{|x|\to\ii}\sim \sqrt{\mu}\,\frac{x_1+ix_2}{\sqrt{x_1^2+x_2^2}}.$$
 \end{itemize}
The vortex has an energy infinitely higher than the constant (by a logarithm of the volume) but it is nevertheless an infinite ground state, topologically protected by its degree at infinity. Note that $|u(x)|\to\sqrt\mu$ at infinity, so the solution is not very different from the constant solution if we only look at the modulus. Similar solutions are known to exist in higher dimensions~\cite{BetBreOrl-01,ConJer-17,SanSha-17,Roman-19}, but with vortices concentrated on manifolds of dimension $d-2$ (a line in $d=3$ for instance). Uniqueness is then largely open~\cite{Pisante-14}, however.

Since we are motivated by quantum mechanics, we work with complex-valued solutions $u$. When $u$ is on the contrary assumed to be \emph{real-valued} (Van Der Waals or Cahn-Hilliard model), the situation is different. An infinite ground state \emph{among real-valued functions} (that is, assuming $u$ and $v$ are real-valued in~\eqref{eq:cond_GS}) need not be one among complex-valued functions. In fact, there exists a famous kink solution $u(x)=\tanh(x/\sqrt2)$ in one dimension satisfying $u(x)\to\pm 1$ for $x\to\pm\ii$ and solving~\eqref{eq:GP_infinite} for $w=\delta$ and $\mu=1$. It is known to be a ground state among real-valued functions but it is not an infinite ground state among complex-valued functions. In other words, it becomes unstable under perturbations of the phase. The 1D kink solution was proved to be the basic object describing \emph{phase separation} for real-valued ground states in dimensions $d\leq 7$~\cite[Thm.~2.3]{Savin-09}. This is related to the famous \emph{De Giorgi conjecture} for minimal surfaces~\cite{DeGiorgi-79,Pisante-14,ChaWei-18} which was later proved in~\cite{GhoGui-98,AmbCab-00,Savin-09,PinKowWei-11}.

On the other hand, solutions taking values in a higher dimensional space (that is, $u:\R^d\to\R^m$ with $m\geq d$) have also been studied at length~\cite{MilPis-10,Pisante-11}, but this goes beyond the scope of this paper.

\subsubsection*{\textbf{Nonlocal interactions}}
Although we have some new results on the delta interaction, we mainly focus on \emph{arbitrary interactions $w$}. Then~\eqref{eq:GP_infinite} is sometimes called the \emph{nonlocal Ginzburg-Landau equation}~\cite{DeLaire-12,DeLaiLop-22,DeLaiMen-20} or the \emph{Hartree equation}~\cite{LewNamRou-14}. The original works of Gross in 1957--58~\cite{Gross-57,Gross-58} in fact had a general potential $w$ and the case of the delta potential was only considered in 1961 with Pitaevskii~\cite{Gross-61,Pitaevskii-61}. For simplicity, we call~\eqref{eq:GP_infinite} the \emph{Gross-Pitaevskii equation}.

We will be particularly interested in the situation where the Fourier transform $\widehat w$ has no sign, a case for which it is predicted that there exists a \emph{transition to a solid phase at large densities}. This was noticed in $d=1$ in~\cite{Lewin-15} which has motivated the current work, and is in fact well known in the Physics literature~\cite{Gross-57,KirNep-71,Nepomnyashchii-71,NepNep-71}. In particular, Gross wrote in 1957:
\begin{quote}\it
``We note that there is always a solution of uniform density. (...) But if $w(x)$ is negative in some region of space, there may be other solutions, such as a periodic solution with lower energy than for the uniform solution.''~\cite[p.~161]{Gross-57}
\end{quote}
Our goal is to make some rigorous advances in this direction. Although we cannot prove periodicity of ground states, we will at least confirm the existence of the phase transition.

For Bose-Einstein condensates, a periodic infinite ground state is believed to describe a \emph{supersolid}~\cite{Gross-57,MatTsu-70,Legett-70,PomRic-93,PomRic-94,CinJaiBinMicZolPup-10,HenCinJaiPupPoh-12,MacMacCinPoh-13}.
Although this state of matter has not yet been observed in solid helium~\cite{ChaHalRea-13,Hallock-15}, recent experiments have confirmed the existence of a supersolid phase in ultracold atomic gases, in particular with dipolar interactions~\cite{Kadau-etal-16,LeoMorZupEssDon-17,LiLeeBurShtTopJamKet-17,TanRocLucFamFioGabModRecStr-19,TanLucFamCatFioGabbBisSanMod-19,BotSchWenHerGuoLanPfa-19,Chomaz_etal-19,Norcia_etal-21}. On the other hand, numerical simulations have predicted the existence of a periodic solutions in several cases. One is the step potential
\begin{equation}
w(x)=c\1(|x|\leq R_0)
\label{eq:example_w_cnst}
\end{equation}
which has been largely considered due to its simplicity, even if it is probably not very physical~\cite{PomRic-93,PomRic-94,JosPomRic-07b,SepJosRic-10,KunKat-12,WatBra-12,AncRosToi-13,MacMacCinPoh-13,MasJos-13}. This model was also rigorously studied in~\cite{AftBlaJer-09} for small and large values of $c$, in a box with periodic boundary conditions. On the other hand, it was shown in~\cite{HenNatPoh-10,HenCinJaiPupPoh-12,CinMacLecPupPoh-14} that the Van Der Waals-type potential
\begin{equation}
w(x)=\frac{c}{1+|x|^6}
\label{eq:example_w_VdW}
\end{equation}
is appropriate for Bose-Einstein condensates in which atoms are weakly coupled to a highly excited Rydberg state. The most important property of the two potentials~\eqref{eq:example_w_cnst}--\eqref{eq:example_w_VdW} is that $\widehat{w}$ has no sign, which allows for the existence of periodic solutions with a roton-maxon excitation spectrum~\cite{PomRic-93,CorNatLi-20}. Both potentials will be covered by our rigorous results.

\subsubsection*{\textbf{Main contributions}}
The main new contribution of this work is to bring techniques from statistical mechanics into~\eqref{eq:GP_infinite}. This will allow us to treat a very large class of interaction potentials $w$, at a level of generality which has never been achieved for the nonlinear PDE~\eqref{eq:GP_infinite}, to our knowledge. We will only assume that $w$ is \emph{superstable} in the sense of Ruelle~\cite{Ruelle-70}, that is, remains stable if we remove a small positive part close to the origin. This is a very natural assumption in statistical mechanics which covers all physically interesting interactions.

To obtain the existence of infinite ground states, the usual technique is to resort to a \emph{thermodynamic limit}. Namely, we minimize the energy~\eqref{eq:cF_GP} in a ball of radius $L$ with suitable boundary conditions and obtain a minimizer $u_L$ by elementary compactness arguments. Then we want to show that $u_L$ is locally bounded so as to get an infinite ground state locally when we pass to the limit $L\to\ii$.

In the context of the Ginzburg-Landau equation with $w=\delta$, due to the exact scaling mentioned before it is the same to take $L\to\ii$ or to work in a fixed ball with a large constant $L^2$ in front of the interaction energy. This is in fact how the limit is usually stated in the literature. Obtaining a ground state in infinite space requires zooming at the microscopic scale~\cite{BetBreHel-94}. For a general $w$, it is important to right away work in a large ball.

For Ginzburg-Landau, the boundedness of $u_L$ was obtained in the literature using the maximum principle. This is not working well for our non-local problem, especially under our general assumptions on $w$. We will prove that $u_L$ is bounded in $L^\ii$ using an argument inspired by Ruelle~\cite{Ruelle-70}, relying only on energy comparison techniques. These bounds give the existence of an infinite ground state after passing to the limit $L\to\ii$.

We then study the properties of all infinite ground states, which may or may not come from the previous limit. We first prove that those always possess some natural ``thermodynamic properties''. They have the energy per unit volume and density $\rho$ in~\eqref{eq:density_intro} which can be obtained from the specific free energy of the infinite gas. Using again ideas from statistical mechanics~\cite{Ginibre-67,Ruelle-70}, we prove the important fact that the specific free energy is a \emph{strictly concave function of $\mu$}. This is related to the compressibility of the infinite system and implies that the map $\rho\mapsto \mu(\rho)$ is a well defined non-decreasing function. Any interval where $\mu$ is constant corresponds to a first order phase transition with the coexistence of two phases of different densities.

We then discuss the breaking of symmetries. The first question we ask is whether infinite ground states are automatically real-valued and positive. We can show this is always the case in dimension $d=1$, but not in higher dimensions. In fact, in dimension $d=2$ we can prove the existence of a vortex at low density, using some tools introduced in the context of the Ginzburg-Landau theory~\cite{BetBreHel-94} and as was also predicted by Gross in 1961~\cite{Gross-61}. The vortex represents a rotating quantum fluid, with the rotation seen in the complex phase of $u(x)$.

Next we investigate the translational symmetry of infinite ground states. We show that there exists a critical value $0<\mu_c\leq\ii$, below which the constant function~\eqref{eq:u_cnst_intro} is an infinite ground state and above which it is not anymore. In addition, we have $\mu_c<\ii$ if and only if $\widehat{w}$ changes sign. For small $\mu$, we can prove that the constant is the \emph{unique real-valued infinite ground state}. The main tool we use here is the convergence to the delta case $w=\delta$ in the limit $\mu\to0$, after rescaling, so that we can use some properties of the Ginzburg-Landau equation. For $\mu>\mu_c$ the constant function is no longer a ground state. When $\mu\to\ii$ we obtain convergence to the classical mean-field theory previously studied in~\cite{Suto-11}, which is known to have non-trivial solutions.

It is an interesting open problem to show that infinite ground states are periodic for large $\mu$, even just in 1D. This is one more instance of the famous \emph{crystallization conjecture}~\cite{BlaLew-15}. We can only show that they are really ``far from a constant everywhere''. For instance we will prove that
there exists a positive constant $c>0$ such that
$$\max_{x\in B(z,R)}|u(x)|\geq c+\min_{x\in B(z,R)}|u(x)|,$$
for any center $z\in \R^d$ and any large enough radius $R$, showing that there are fluctuations everywhere in space.

The next section contains a detailed description of all our results, without proofs. The latter are then provided in the following sections.

\subsubsection*{\textbf{Acknowledgement}} This project has received funding from the European Research Council (ERC) under the European Union's Horizon 2020 research and innovation programme (grant agreements MDFT No. 725528 of ML and RAMBAS No. 101044249 of PTN). We thank Rupert Frank, Philippe Gravejat, Radu Ignat, Vincent Millot, Nicolas Rougerie, \'Etienne Sandier and Sylvia Serfaty for useful comments. We also thank the referees for their careful reading and suggestions.

\section{Main results}

In this section we state all our main results, which are then proved in detail in the following sections.

\subsection{Definition of infinite ground states}
We first describe our assumptions on the interaction potential $w$ and introduce the concept of infinite ground states for the Gross-Pitaevskii (GP) equation \eqref{eq:GP_infinite}. For simplicity, we use the shorthand notation $B_r:=B(0,r)$ for the ball centered at the origin.

\begin{assumption}[Interaction]\label{ass:w}
Let
$w\in \R_+ \delta_0+L^1(\R^d)$, with $\mathbb{R}_+=[0,\infty)$, be an even real-valued potential satisfying
\begin{equation}
w = \eps\,\delta_r\ast\delta_r+w_2
\label{eq:decomp_w}
\end{equation}
with $\eps>0$ and $r\geq0$, where
\begin{equation}
\delta_r:=\begin{cases}
|B_r|^{-1}\1_{B_r}&\text{for $r>0$,}\\
\delta_0&\text{for $r=0$.}
            \end{cases}
\label{eq:def_delta_r}
\end{equation}
The potential $w_2$ is assumed to be \emph{stable}, that is,\footnote{We use here a slight abuse of notation, since $w_2$ can contain a Dirac delta at the origin.}
\begin{equation}
\iint_{\R^d\times\R^d}\rho(x)\rho(y)w_2(x-y)\dx\,\dy\geq0,\qquad\forall 0\leq \rho\in L^2(\R^d),
 \label{eq:w2_stable}
\end{equation}
\emph{lower regular}
\begin{equation}
w_2(x)\geq -\frac{\kappa}{1+|x|^{s}},\qquad\forall x\in\R^d,
\label{eq:w2_regular}
\end{equation}
and
\emph{upper regular at infinity}, that is,
\begin{equation}
w(x)=w_2(x)\leq \frac{\kappa}{|x|^{s}},\qquad\forall |x|\geq \kappa,
\label{eq:w2_upper}
\end{equation}
for some $s>d$ and $\kappa\geq 2r$.
\end{assumption}

A potential satisfying the above assumption was called \emph{superstable} by Ruelle in the famous article~\cite{Ruelle-70}. This covers all short range pair potentials of interest in statistical mechanics. The decomposition~\eqref{eq:decomp_w} means that $w$ can be written as a small positive part in a neighborhood of the origin, which is either a contact interaction $\delta_0$ or a strictly positive function (taken in the form of a convolution for convenience), plus a \emph{stable}\footnote{In statistical mechanics, a potential is called stable~\cite{Ruelle} when
$$\sum_{1\leq j<k\leq N}w_2(x_j-x_k)\geq -CN$$
for all $N\geq2$ and all $x_1,...,x_N\in\R^d$. Integrating against $\rho(x_1)\cdots \rho(x_N)/N^2$ and taking the limit $N\to\ii$, we find the mean-field condition~\eqref{eq:w2_stable}.}
short range potential $w_2$ decaying polynomially at infinity.

Assumption~\ref{ass:w} covers for instance any non-negative potential which is strictly positive in a neighborhood of the origin and decays fast enough, including~\eqref{eq:example_w_cnst} and~\eqref{eq:example_w_VdW} in the physical dimensions $d\in\{1,2,3\}$. We also allow $w_2$ to have a negative part. Note however that $w_2(0)\ge 0$ and $\int_{\R^d} w_2\geq0$ (take $\rho=\1_{B_R}$ in~\eqref{eq:w2_stable} and the limits $R\to0$ and $R\to\ii$, respectively) and hence
$$\int_{\R^d}w(x)\,\dx>0.$$
Recall that neutral atoms interact at infinity with an attractive Van Der Waals force decaying like $-|x|^{-6}$ in dimension $d=3$~\cite{LieThi-86}. Our assumptions allow it. A typical example is a truncated Lennard-Jones potential $w(x)=\min(A,|x|^{-12}-|x|^{-6})$ which satisfies all the above assumptions for $A$ large enough by~\cite[Lem.~22]{Triay-18}.

Note that we do not impose any upper bound on $w_2$ close to the origin (in particular, $w_2$ can also include a Dirac delta at 0). This may pose some technical difficulty. In particular, we will often have to restrict ourselves to the functions $u$ of finite local interaction energy, which means that
\begin{equation}
\iint_{K^2}|u(x)|^2|u(y)|^2w_+(x-y)\,\dx\,\dy<\ii
\label{eq:finite_interation_u}
\end{equation}
for every compact set $K\subset \R^d$. The property~\eqref{eq:finite_interation_u} is automatically satisfied in dimensions $d\in\{1,2,3,4\}$ whenever $u\in H^1_{\rm loc}(\R^d)$, as we will always require. But it is necessary in dimensions $d\geq5$ if $w_+$ is too rough.

For $\Omega \subset \R^d$ a smooth domain, we define the Gross-Pitaevskii (GP) energy by
\begin{equation}
\boxed{\cE_\Omega(u):=\int_\Omega|\nabla u|^2+\frac12\iint_{\Omega\times\Omega}w(x-y)|u(x)|^2|u(y)|^2\,\dx\,\dy}
 \label{eq:cE_GP}
\end{equation}
and the grand-canonical free energy
\begin{equation}
\boxed{\cF_{\mu,\Omega}(u):=\cE_\Omega(u)-\mu\int_\Omega|u(x)|^2\,\dx.}
 \label{eq:cF_GP}
\end{equation}
When $\Omega=\R^d$, we simply write $\cE:=\cE_{\R^d}$ and $\cF_\mu:=\cF_{\mu,\R^d}$ as we already did in the introduction in~\eqref{eq:GP_intro}.
Due to the stability condition in Assumption~\ref{ass:w}, $\cE_\Omega$ is a non-negative functional which is finite for all $u\in H^1(\Omega)$ such that
$$\iint_{\Omega^2}w_+(x-y)|u(x)|^2|u(y)|^2\,\dx\,\dy<\ii.$$
In dimension $d\leq4$ the latter is automatic from the $H^1$ regularity. Note that in a bounded set $\Omega$, $\cF_{\mu,\Omega}$ is bounded from below. In fact, from~\eqref{eq:decomp_w} and the stability condition~\eqref{eq:w2_stable} in Assumption~\ref{ass:w}, we have
$$\cE_\Omega(u)\geq \eps\int_{\R^d}(|u|^2\ast\delta_r)^2,$$
where it is here understood that $u$ is extended by $0$ outside of $\Omega$. We then obtain
\begin{align}
\cF_{\mu,\Omega}(u)&\geq \eps\int_{\R^d} (\delta_r\ast|u|^2)^2-\mu\int_{\R^d}|u|^2\nn\\
&= \eps\int_{\R^d} (\delta_r\ast|u|^2)^2-\mu\int_{\R^d}\delta_r\ast |u|^2\geq-\frac{\mu^2}{4\eps}|\Omega+B_r|\label{eq:F_bounded_below}
\end{align}
since $\delta_r\ast |u|^2$ is supported on $\Omega+B_r$.

In this paper we will be interested in minimizers of $\cE_\Omega$ at fixed mass $\int_\Omega|u|^2=\lambda$ (canonical case), or of $\cF_{\mu,\Omega}$ without any constraint (grand-canonical case). This is only a well-defined problem in a bounded domain~$\Omega$, however. In the whole space, we need to resort to the concept of infinite ground states.

\begin{definition}[Infinite ground state]\label{def:infinite_GP_GS} Let $d\geq1$ and $w$ satisfy Assumption~\ref{ass:w}. A function $u\in H^1_{\rm unif}(\R^d,\C)$ is called an \emph{infinite ground state} of chemical potential $\mu$ if it has finite local interaction energy in the sense of \eqref{eq:finite_interation_u} and it is a minimizer of the Gross-Pitaevskii free energy~\eqref{eq:cF_GP} under compact perturbations, in the sense that
\begin{multline}
\int_{\R^d}\big(|\nabla v(x)|^2-|\nabla u(x)|^2\big)\dx-\mu\int_{\R^d}(|v(x)|^2-|u(x)|^2\big)\dx\\
+\frac{1}{2} \iint_{\R^d\times\R^d}\Big(|v(x)|^2|v(y)|^2-|u(x)|^2|u(y)|^2\Big)w(x-y)\,\dx\,\dy\geq0, \label{eq:local_min}
\end{multline}
for all $v\in H^1_{\rm unif}(\R^d,\C)$ having finite local interaction energy and coinciding with $u$ outside of a bounded set.
\end{definition}

We recall that $H^1_{\rm unif}(\R^d)$ consists of functions $u\in H^1_{\rm loc}(\R^d)$ satisfying the uniform bound
$$\sup_{z\in \R^d}\|u\|_{H^1(B(z,1))} <\infty.$$
We will actually prove that any infinite ground state is uniformly bounded. Based on our analysis, it is possible to replace the requirement of $H^1_{unif}$ in Definition \ref{def:infinite_GP_GS} with $H^1_{\rm loc} \cap L^\infty$. However, we prefer to keep the former because it is more naturally related to the energy. 
 
As we have explained in the introduction, the property~\eqref{eq:local_min} means that the energy $\cF_\mu=\cF_{\mu,\R^d}$ in~\eqref{eq:cF_GP}
 increases when $u$ is locally perturbed. Although both $u$ and $v$ have an infinite energy, our assumptions on $u$ and $v$ guarantee that the energy difference in~\eqref{eq:local_min} is finite (see Lemma~\ref{lem:estim_potential} below for the details). We simply require this difference to be non-negative. Even if the condition~\eqref{eq:local_min} looks local (since $v$ is a compact perturbation of $u$), it contains a lot of global information since the support of $v-u$ can be arbitrarily large.

 Definition \ref{def:infinite_GP_GS} is very common in the context of the Ginzburg-Landau and Cahn-Hilliard equation $w=\delta$, see for instance~\cite{BreMerRiv-94}. This is also in the same spirit as the classical ground states discussed in~\cite{Radin-84,Radin-04,BelRadShl-10,Suto-05,Suto-11,Lewin-22}, the infinite Thomas-Fermi-type solutions studied in~\cite{BlaBriLio-03,CanEhr-11} and the signed mean-field minimizers studied in~\cite{GiuLebLie-09}.

We can reformulate~\eqref{eq:local_min} by saying that on any bounded domain $\Omega$, $u$ is a minimizer for the local minimization problem
\begin{equation}
\cF_{\mu,\Omega,u}(u)=\min_{\substack{ v\in H^1(\Omega)\\ v_{|\partial\Omega}=u_{|\partial\Omega}}}\cF_{\mu,\Omega,u}(v)
\label{eq:min_u_v}
\end{equation}
with
\begin{equation}
\cF_{\mu,\Omega,u}(v):=\cF_{\mu,\Omega}(v)+\iint_{\Omega \times (\R^d\setminus\Omega)}|v(x)|^2|u(y)|^2w(x-y)\,\dx\,\dy.
 \label{eq:local_energy_Omega}
\end{equation}
The additional term describes the interaction with the outside. The corresponding external potential $(\1_{\R^d\setminus\Omega}|u|^2)\ast w$ is often called a \emph{boundary condition}, although it actually penetrates in $\Omega$ (when $w$ is not proportional to a Dirac delta). Since $u\in H^1_{\rm unif}(\R^d)$ by assumption, it can be seen to decay fast with the distance to the boundary, so that it mainly affects the minimizer close to $\partial\Omega$. The interpretation of~\eqref{eq:min_u_v} is that $u$ minimizes the energy of a system immersed in an infinite bath $u\1_{\R^d\setminus \Omega}$ depending on itself, instead of hard walls for Dirichlet and a perfect insulator for Neumann boundary conditions. This is in the spirit of the famous Dobru\v{s}in-Lanford-Ruelle (DLR)~\cite{Dobrushin-68a,Dobrushin-68b,Dobrushin-69,LanRue-69} condition for equilibrium states of infinitely many classical particles. Any infinite ground state is automatically the thermodynamic limit of a sequence of finite volume ground states, solving~\eqref{eq:min_u_v}.

Let us now reformulate the infinite ground state property in the following lemma, which also provides first and second order conditions. In particular, infinite ground states have to solve the GP equation \eqref{eq:GP_infinite}.

\begin{lemma}[First and second order conditions for infinite ground states]\label{lem:1st_2nd_order} Let $d\geq1$ and $w$ satisfy Assumption~\ref{ass:w}.

\smallskip
\noindent
{\rm (i)}  Any infinite ground state $u$ as in Definition~\ref{def:infinite_GP_GS} solves the GP equation~\eqref{eq:GP_infinite} in the sense of distributions:
\begin{equation}
(-\Delta+w\ast |u|^2)u=\mu\,u \quad \text{ in } \mathcal{D}'(\R^d).
\end{equation}
Moreover, we have
\begin{equation} \label{eq:GP-ineq-0}
\pscal{h,\left(-\Delta+w\ast |u|^2-\mu\right)h}\\+2\iint_{\R^d\times\R^d}\!\!\Re(\overline{u}h)(x)\Re(\overline{u}h)(y)w(x-y)\,\dx\,\dy\geq0
\end{equation}
for all $h\in H^1(\R^d)$ having compact support and finite interaction energy. If $u$ is real-valued, the latter is equivalent to the positivity of the two operators
\begin{equation}
\begin{cases}
  -\Delta+w\ast |u|^2-\mu\geq0,\\
  -\Delta+w\ast |u|^2+2u(x)u(y)w(x-y)-\mu\geq0,
  \end{cases}
  \label{eq:linearly_stable}
\end{equation}
in the sense of quadratic forms like in~\eqref{eq:GP-ineq-0} with the same assumptions on $h$. It is here understood that $2u(x)u(y)w(x-y)$ is the kernel of the corresponding operator.

\smallskip
\noindent
{\rm (ii)} Conversely, if $u\in H^1_{\rm unif}(\R^d)$ is any solution to the GP equation~\eqref{eq:GP_infinite} with finite local interaction energy, then it is an infinite ground state if and only if
\begin{multline}
\pscal{h,\left(-\Delta+w\ast |u|^2-\mu\right)h}\\+\frac12\iint_{\R^d\times\R^d}(2\Re(\overline{u}h)+|h|^2)(x)(2\Re(\overline{u}h)+|h|^2)(y)w(x-y)\,\dx\,\dy\geq0
\label{eq:local_min_simplified_complex}
\end{multline}
for every complex-valued function $h\in H^1(\R^d,\C)$ having compact support and finite interaction energy.

\smallskip
\noindent
{\rm (iii)} If in addition $u$ is \textbf{real-valued} and \textbf{non-negative}, then $u$ is an infinite ground state if and only if \eqref{eq:local_min} holds for all $v\ge 0$ satisfying the same stated conditions. In this case, {\rm (ii)} remains valid with the additional condition $h\geq -u$ in~\eqref{eq:local_min_simplified_complex}. 

\smallskip
\noindent
{\rm (iv)} If $w$ has a non-negative Fourier transform,
$$\widehat w(k):=\frac1{(2\pi)^{\frac{d}{2}}}\int_{\R^d}w(x)e^{-ik\cdot x}\,\rd x\geq0,$$
and $u\in H^1_{\rm unif}(\R^d,\R)$ is any \textbf{real} solution to the GP equation~\eqref{eq:GP_infinite}, then it is an infinite ground state if and only if
\begin{equation} \label{eq:GP-ineq-1}
  -\Delta+w\ast |u|^2-\mu\geq0
\end{equation}
in the sense of quadratic forms.
\end{lemma}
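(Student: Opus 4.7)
The plan is to use parts (i) and (ii) directly, with the hypothesis $\widehat{w}\geq 0$ entering only through a single non-negative Fourier integral.

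For the forward implication, if $u$ is a real-valued infinite ground state, then \eqref{eq:GP-ineq-0} holds, and part (i) says that, in the real case, \eqref{eq:GP-ineq-0} is equivalent to the positivity of the two operators in \eqref{eq:linearly_stable}. The first of them is precisely \eqref{eq:GP-ineq-1}. Note that this direction does not use $\widehat{w}\geq 0$ at all, and is essentially a citation of (i).

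For the converse, assume $u\in H^1_{\rm unif}(\R^d,\R)$ solves~\eqref{eq:GP_infinite} and that \eqref{eq:GP-ineq-1} holds. By part (ii), it suffices to verify the complex second-order inequality~\eqref{eq:local_min_simplified_complex} for every compactly supported $h\in H^1(\R^d,\C)$ with finite interaction energy. Write $h=a+ib$ with $a,b$ real. Because $u$ is real, $\Re(\overline{u}h)=ua$ and $|h|^2=a^2+b^2$, so
$$2\Re(\overline{u}h)+|h|^2=f,\qquad f:=2ua+a^2+b^2,$$
and since the operator $A:=-\Delta+w\ast|u|^2-\mu$ is real and symmetric, $\pscal{h,Ah}=\pscal{a,Aa}+\pscal{b,Ab}$. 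The first two contributions to the left-hand side of~\eqref{eq:local_min_simplified_complex} are then non-negative by~\eqref{eq:GP-ineq-1} applied separately to $a$ and to $b$, while the remaining term is non-negative by Plancherel: $f$ is real, compactly supported and lies in $L^1\cap L^2$, hence
$$\iint_{\R^d\times\R^d}f(x)f(y)\,w(x-y)\,\dx\,\dy=(2\pi)^{\frac{d}{2}}\int_{\R^d}\widehat{w}(k)|\widehat{f}(k)|^2\,\rd k\geq 0$$
using $\widehat{w}\geq 0$. Summing the three non-negative contributions yields~\eqref{eq:local_min_simplified_complex}, and (ii) concludes.

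The only point requiring any care is the $L^1\cap L^2$ integrability of $f=2ua+a^2+b^2$ needed for Plancherel; this follows from the compact support of $a,b$, the $H^1_{\rm unif}$ regularity of $u$, and the standing finite local interaction energy hypothesis (the paper's remark that infinite ground states are in fact uniformly bounded would make this entirely immediate). Beyond this bookkeeping, the mechanism is purely algebraic: for real $u$, the complex Hessian-type expression in~\eqref{eq:local_min_simplified_complex} splits into the real quadratic form applied separately to $\Re h$ and $\Im h$, plus a pure density-fluctuation term $\iint f(x)f(y)w(x-y)\,\dx\,\dy$ that $\widehat{w}\geq 0$ renders non-negative. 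In particular, the second operator in~\eqref{eq:linearly_stable}—which is the genuinely new constraint in the general case—becomes automatic once $\widehat{w}\geq 0$, and this is exactly what makes \eqref{eq:GP-ineq-1} sufficient.
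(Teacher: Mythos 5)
Your proof is correct and is essentially the paper's argument: the first term of~\eqref{eq:local_min_simplified_complex} is non-negative by the assumed positivity of $-\Delta+w\ast|u|^2-\mu$, and the remaining interaction term is non-negative after Plancherel because $\widehat{w}\geq0$. The decomposition $h=a+ib$ is a small extra bookkeeping step, but the mechanism is the same as in the paper.
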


\begin{proof}
For~(i) we write $v=u+h$ with $h$ of compact support in~\eqref{eq:local_min} and obtain~\eqref{eq:GP_infinite} from the linear term in $h$ and~\eqref{eq:GP-ineq-0} from the quadratic term. If $u$ is real-valued we get~\eqref{eq:linearly_stable} by writing $h=h_1+ih_2$. For~(ii) we write again $v=u+h$. For~(iv) we use that the last term in~\eqref{eq:local_min_simplified_complex} is always non-negative, after passing to Fourier coordinates. For (iii) we note that $|v|=|u|=u$ on the set where $v=u$, so that $|v|$ is an allowed trial function in~\eqref{eq:min_u_v}. We then use the diamagnetic inequality $\int_\Omega|\nabla v|^2\geq \int_\Omega|\nabla |v||^2$; see e.g.~\cite[Thm. 7.21]{LieLos-01}.
\end{proof}

Note that for $u\in H^1_{\rm unif}(\Omega)$, our assumptions on $w$ and~\eqref{eq:finite_interation_u} imply that
$$w_+\ast|u|^2\in L^1_{\rm unif}(\R^d),\qquad w_-\ast|u|^2\in L^\ii(\R^d),\qquad  u(w_+\ast|u|^2)\in L^1_{\rm loc}(\R^d)$$
(see Lemma~\ref{lem:estim_potential} below). This allows us to give a clear meaning to the GP equation \eqref{eq:GP_infinite} and to the quadratic form associated with the operator in~\eqref{eq:GP-ineq-1}. In low dimensions this is easy to check using the $H^1$ regularity of $u$.

Lemma~\ref{lem:1st_2nd_order} has several interesting immediate consequences. For instance, (iii) implies that positive ground states \emph{among real-valued functions} (as considered for instance in the Cahn-Hilliard model of phase transitions~\cite{CahHil-58}) are automatically also ground states within the class of complex-valued functions. This is wrong without the positivity condition! As we have mentioned in the introduction, the 1D kink solution $u(x)=\tanh(x/\sqrt2)$ in the case $w=\delta$ and $\mu=1$ is not an infinite ground state. This also follows from Theorem~\ref{thm:real-valued} below.

Another important consequence is stated in the following

\begin{corollary}[Constant solution for positive-definite $w$]\label{cor:cnst}
Let $d\geq1$ and $w$ satisfy Assumption~\ref{ass:w}. If $w$ has a non-negative Fourier transform, $\widehat w\geq0$,
then the constant solution~\eqref{eq:u_cnst_intro} is an infinite ground state for all $\mu>0$.
\end{corollary}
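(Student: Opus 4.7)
The plan is to apply part (iv) of Lemma~\ref{lem:1st_2nd_order} directly to the constant function $u_0\equiv c$ with $c:=(\mu/\int w)^{1/2}$, so the task reduces to verifying the three hypotheses of that statement: (a) $u_0\in H^1_{\rm unif}(\R^d,\R)$ with finite local interaction energy, (b) $u_0$ solves the GP equation~\eqref{eq:GP_infinite}, and (c) the second-order operator inequality $-\Delta+w\ast|u_0|^2-\mu\ge0$ holds in the sense of quadratic forms.

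For (a), the constant $u_0$ is trivially in $H^1_{\rm unif}$, and finiteness of the local interaction energy $\iint_{K^2}c^4 w_+(x-y)\,\rd x\,\rd y$ for compact $K$ follows from $w\in\R_+\delta_0+L^1$: the $L^1$ part contributes at most $c^4\|w_+\|_{L^1}|K|$, and any Dirac part at the origin contributes a multiple of $c^4|K|$ via $\int_K|u_0|^4$.

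For (b), since $u_0$ is constant we have $w\ast|u_0|^2=c^2\int_{\R^d}w$, which equals $\mu$ by the choice of $c$, so $(-\Delta+w\ast|u_0|^2)u_0=\mu u_0$ pointwise.

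For (c), the same identity $w\ast|u_0|^2=\mu$ reduces the operator to $-\Delta$, whose quadratic form $\int_{\R^d}|\nabla h|^2$ is manifestly non-negative on every $h\in H^1(\R^d)$ of compact support. Since $u_0$ is real-valued and $\widehat{w}\ge 0$ by assumption, Lemma~\ref{lem:1st_2nd_order}(iv) now concludes that $u_0$ is an infinite ground state. There is no real obstacle here: the whole point is that constancy of $u_0$ makes the convolution $w\ast|u_0|^2$ collapse to the scalar $\mu$, so the positivity condition~\eqref{eq:GP-ineq-1} becomes the trivial positivity of $-\Delta$, and the hypothesis $\widehat w\ge 0$ is what allows (iv) to bypass the extra quadratic-form term in~\eqref{eq:local_min_simplified_complex} that one would otherwise have to control.
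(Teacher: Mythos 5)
Your proof is correct and follows exactly the paper's own argument: verify that the constant $u_0=(\mu/\int w)^{1/2}$ solves the GP equation and that $-\Delta+w\ast|u_0|^2-\mu=-\Delta\geq0$, then invoke Lemma~\ref{lem:1st_2nd_order}(iv) using $\widehat w\geq0$. The only difference is that you also spell out the (trivial) check that the constant has finite local interaction energy, which the paper leaves implicit.
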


\begin{proof}
Recall that Assumption~\ref{ass:w} implies $\int_{\R^d}w>0$. The constant solution $u_0\equiv(\mu/\int _{\R^d}w)^{1/2}$ is a positive solution of the GP equation. Since $-\Delta+w\ast|u_0|^2-\mu=-\Delta\geq0$, one can apply (iv) whenever $\widehat{w}\geq0$.
\end{proof}

\subsection{Boundedness of solutions to the GP equation} \label{sec:main-result-1}

It is well known that solutions to the Ginzburg-Landau equation $-\Delta u=(\mu-|u|^2)u$ must satisfy $|u|\leq \sqrt\mu$ everywhere~\cite{BreMerRiv-94,HerHer-96,Farina-98}. The proof uses the maximum principle and it seems hard to extend it to the case of a general interaction potential $w$. The following, which is one of our main results, provides the expected uniform bounds, using only Assumption~\ref{ass:w} on $w$. This is a nonlinear equivalent of a famous result of Ruelle~\cite{Ruelle-70} in classical statistical mechanics, later extended to the quantum case in~\cite{EspNicPul-82,Park-84,Park-85}. If in addition $u$ is an infinite ground state, then there exists also a lower bound on the local mass, which implies that $u$ has positive density, as was mentioned in the introduction.

\begin{theorem}[Uniform bounds]\label{thm:uniform_bound}
Let $d\geq1$ and $w$ satisfy Assumption~\ref{ass:w}. Let $u\in H^1_{\rm unif}(\R^d,\C)$ have finite local interaction energy as in \eqref{eq:finite_interation_u}. Assume that $u$ solves the GP equation
\begin{equation}
-\Delta u+(w\ast|u|^2)u=\mu\, u
 \label{eq:GP_local}
\end{equation}
in the sense of distributions for some constant $\mu\ge 0$.

\medskip
\noindent
{\rm (i)} Then $u$ is real analytic and all its derivatives are bounded. We have
\begin{equation}\label{eq:pointwise_bound_main}
\norm{u}_{L^\ii} \leq C\sqrt\mu \left(1+\mu^{\frac{d}{4}}\right),
\end{equation}
and, more generally, for any multi-index $\alpha$ of length $|\alpha|$, 
\begin{equation}\label{eq:pointwise_bound_derivatives_main}
\norm{\partial^\alpha u}_{L^\ii} \leq C^{|\alpha|}\mu^{\frac{|\alpha|+1}{2}}\left(1+\mu^{\frac{d}4}\right)^{1+|\alpha|}. 
\end{equation}

\medskip
\noindent
{\rm (ii)} If $u\ge 0$, then  we have the lower bound
\begin{equation}
\inf_{x\in \R^d} u(x)\geq \dps C^{-1}e^{-C\mu^{\frac{d+2}{4}}}\sqrt\mu.
\label{eq:pointwise_lower_bound}
\end{equation}

\medskip
\noindent
{\rm (iii)} If $u$ is an infinite ground state, then we have the averaged lower bound
\begin{equation}
\inf_{z\in \R^d}\int_{B(z,\ell)}|u|^2\geq \left(\frac{\mu}{C}-\frac{C}{\ell^{2}}\right)_+\ell^d,\quad \forall \ell \ge C.
 \label{eq:lower_bound}
\end{equation}
Here the constant $C > 0$ in \eqref{eq:pointwise_bound_main}, \eqref{eq:pointwise_bound_derivatives_main}, \eqref{eq:pointwise_lower_bound}, \eqref{eq:lower_bound} depends on $w$ and $d$ only.
\end{theorem}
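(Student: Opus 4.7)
For part (i), my approach is to first prove the pointwise bound~\eqref{eq:pointwise_bound_main} and then derive \eqref{eq:pointwise_bound_derivatives_main} by elliptic regularity. The $L^\infty$ bound is obtained in two stages: a uniform local $L^2$ estimate of the form $\sup_z \int_{B(z,1)}|u|^2 \le C(\mu)$, followed by a Moser/De~Giorgi iteration upgrading it to an $L^\infty$ bound with the sharp $\mu$-dependence. The local $L^2$ estimate is the heart of the argument, and the strategy adapts Ruelle's superstability technique to the Euler--Lagrange equation. Concretely, I would test~\eqref{eq:GP_local} against $\chi^2 \bar u$ for $\chi$ a smooth cutoff of a unit ball at $z$, integrate by parts, and symmetrize the interaction term in $(x,y)$. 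Using the decomposition $w = \eps\,\delta_r \ast \delta_r + w_2$ together with stability of $w_2$ yields a lower bound involving $\eps \int \chi^2(\delta_r \ast |u|^2)^2$ minus cross-cube error terms, the latter being summable thanks to the short-range decay $|w_2| \leq \kappa/(1+|x|^s)$ with $s > d$ in Assumption~\ref{ass:w}. Balancing this against the right-hand side $\mu \int \chi^2 |u|^2$ closes a quadratic inequality for $\int_{B(z,1)}|u|^2$ and yields the uniform local-mass bound. With $\|V\|_\infty \leq C(\mu)$ then in hand (where $V = w \ast |u|^2$), $u$ becomes a weak solution of a linear elliptic equation with bounded coefficients, and the Moser iteration applied to $|u|^{2p}$ produces~\eqref{eq:pointwise_bound_main}. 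The derivative estimates~\eqref{eq:pointwise_bound_derivatives_main} then follow by differentiating the PDE, applying interior Schauder estimates, and using real analyticity of $V$ (convolution with $w$ smooths) to obtain Cauchy-type factorial control.

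For part (iii), the plan is a direct energy comparison using the ground-state inequality~\eqref{eq:local_min}. As competitor, I would insert the constant $u_0 = (\mu/\int_{\R^d} w)^{1/2}$ on a slightly shrunken ball $B(z, \ell - \eta)$, set the competitor equal to $u$ outside $B(z, \ell)$, and smoothly interpolate on the collar. Because the constant has the optimal free-energy density $-\mu^2/(2\int w)$, the bulk gain from the comparison is of order $-c\mu^2 \ell^d$, while the kinetic, boundary, and exterior-interaction corrections are of order $\ell^{d-1}$ (with an optimization of the collar width $\eta$ absorbing the $1/\ell^2$ correction appearing in~\eqref{eq:lower_bound}). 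The ground-state inequality then rearranges to $\mu \int_{B(z,\ell)} |u|^2 \geq c\mu^2 \ell^d - C\ell^{d-1}$, which after dividing by $\mu$ and optimizing gives~\eqref{eq:lower_bound}.

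For part (ii), once (i) provides $\|V\|_\infty \leq C\mu^{(d+2)/2}$, the non-negative function $u$ solves a linear elliptic equation with bounded coefficient, and Harnack's inequality on unit balls yields $\sup_{B(z,1)} u \leq \exp(C \mu^{(d+2)/4}) \inf_{B(z,1)} u$. It remains to produce a uniform lower bound $\sup_{B(z,1)} u \geq c \sqrt\mu$ for every $z$. Applied at (a near-approach to) the global infimum of $u$, the maximum principle forces $V \geq \mu$ there, hence $\mu \leq \|w_+\|_1 \|u\|_\infty^2$ and $\|u\|_\infty \geq c\sqrt\mu$. To propagate this global information to every unit ball, I would use a translation/compactness argument based on $H^1_{\rm unif}$ boundedness, combined with elliptic unique continuation to rule out solutions whose supremum vanishes along some translation sequence. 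Combining with the Harnack estimate then yields~\eqref{eq:pointwise_lower_bound}.

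The main obstacle will be the local $L^2$ bound in part~(i). Multiplying by $\chi^2 \bar u$ breaks the symmetry of the interaction kernel, so the stability of $w_2$ cannot be applied directly to the compactly supported function $\chi u$; the cross-cube exchange terms must be carefully summed over neighboring unit cubes using precisely the short-range summability $s>d$ of Assumption~\ref{ass:w}. This is the nonlinear analogue of Ruelle's original combinatorial superstability proof, now carried out on the Euler--Lagrange equation rather than on the Gibbs measure, and the balancing between the positive-type core $\eps\,\delta_r \ast \delta_r$ and the possibly sign-indefinite tail $w_2$ is the delicate point throughout.
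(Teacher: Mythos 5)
Your blueprint for the local $L^2$ bound --- testing the equation against $\chi^2\bar u$, using the superstable decomposition $w=\eps\,\delta_r\ast\delta_r+w_2$, controlling the cross-cube error by the decay $|w_2|\lesssim(1+|x|^s)^{-1}$ with $s>d$, and closing a quadratic inequality in the local mass --- is exactly the path the paper follows (Theorem~3.2, with Ruelle's telescoping sequence of growing balls and the smeared squared mass $M_j$ as the bookkeeping device), and the Moser iteration upgrade (Corollary~3.5) follows your plan. Part~(iii) is handled differently in the paper --- via the second-order stability condition, tested against a rescaled Dirichlet eigenfunction of $B(z,\ell/2)$, which naturally produces the $1/\ell^2$ error --- whereas your constant competitor with a collar gives a weaker error of order $1/\ell$; this is a legitimate alternative route but note your collar optimization cannot produce $1/\ell^2$, since balancing kinetic and interaction collar terms is already optimized at $\eta\sim1$.

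However, there are two real gaps.

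\textbf{Part (i), high dimensions at small $\mu$.} Moser iteration upgrades the local $L^2$ bound to $L^\infty$ only with the Sobolev exponent $q\leq d/(d-2)$, while the iteration requires $q>d/2$; these are compatible only for $d\leq3$. For $d\geq4$ the iteration gives $\|u\|_\infty\leq C(1+\mu^{(d+2)/4})$, which fails to reproduce $C\sqrt\mu(1+\mu^{d/4})$ for small $\mu$. The paper needs an entirely separate ingredient here: a Liouville-type theorem (Prop.~3.6) that the GP equation with $\mu=0$ admits only $u\equiv0$ in $H^1_{\rm unif}$, proved by integrating $\Delta(|u|^2)$ over spheres and bootstrapping decay, followed by a blow-up/scaling contradiction argument (Cor.~3.8). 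Your proposal treats the Moser iteration as producing the sharp $\mu$-dependence in all dimensions, which is not the case.

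\textbf{Part (ii), the propagation step.} Your plan to rule out a translation sequence $z_n$ with $\sup_{B(z_n,1)}u\to0$ by compactness and unique continuation does not reach a contradiction: the local uniform limit $\tilde u$ would satisfy $\tilde u\equiv0$ on a ball, hence $\tilde u\equiv0$ everywhere, but $0$ is a perfectly good non-negative solution, so nothing is contradicted. It is also non-quantitative and cannot yield the explicit factor $e^{-C\mu^{(d+2)/4}}$. What is actually needed is the averaged lower bound from part~(iii) (or Corollary~3.4), which is itself a consequence of the second variation and is stated as a hypothesis in the paper's Corollary~3.10. Once one knows $\inf_z\int_{B(z,\ell)}u^2\gtrsim\mu\ell^d$ at scale $\ell\sim1/\sqrt\mu$, the Harnack inequality with the bounded potential $w\ast|u|^2-\mu$ gives the exponential factor. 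So part~(ii) is really a corollary of part~(iii) plus Harnack, not independent of the ground-state structure.
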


We have a similar result for solutions of the GP equation in a bounded domain $\Omega$, with Dirichlet, Neumann, or non-homogeneous boundary conditions. We give the precise statement later in Section~\ref{sec:proof_bounds}. In most cases, the constant $C$ depends on the regularity of the domain $\Omega$.

That the upper bound~\eqref{eq:pointwise_bound_main} involves the power $\mu^{d/4}$ at large $\mu$ is well known for solutions to Schrödinger equations (see for instance~\cite[Lem.~3.1]{Davies-74}). On the other hand, the lower bound~\eqref{eq:lower_bound} was announced in the introduction and it means that, for $\mu>0$, infinite ground states have positive density, that is, are ``infinitely extended''. In particular they cannot tend to 0 at infinity.

The proof of Theorem~\ref{thm:uniform_bound} is rather involved and it occupies the whole of Section~\ref{sec:proof_bounds}. The case $w_2\geq0$ is easier, which is very common in statistical mechanics~\cite[Sec.~4.5]{Ruelle} (the proof in this case is quickly explained in~\eqref{eq:local_bd_w_positive_proof} below). This includes the two examples~\eqref{eq:example_w_cnst} and~\eqref{eq:example_w_VdW}. By following arguments in~\cite{Rebenko-98,PetReb-07} one can also handle quite easily the case where $\kappa$ in~\eqref{eq:w2_regular} is much smaller than $\eps$ in~\eqref{eq:decomp_w}. The difficulty is to treat any $w_2$ without any assumption on the size of $\kappa$. This was done first by Ruelle in~\cite{Ruelle-70} in the framework of statistical mechanics and our proof is inspired by his approach.

\subsection{Existence, sign and complex phase of infinite ground states}  \label{sec:main-result-2}

In this subsection, we first explain how to construct infinite ground states in $\R^d$ as the limit of suitable minimizers in bounded sets, for any $w$ and $\mu>0$. We introduce the minimal grand-canonical Dirichlet energy
\begin{equation}
\boxed{F_{\rm D}(\mu,\Omega):=\min_{u\in H^1_0(\Omega)}\cF_{\mu,\Omega}(u)}
\label{eq:def_J}
\end{equation}
for any regular bounded domain $\Omega$ and any $\mu\in\R$. We define the Neumann energy $F_{\rm N}(\mu,\Omega)$ by assuming $u\in H^1(\Omega)$ instead of $H^1_0(\Omega)$, hence $F_{\rm N}(\mu,\Omega)\leq F_{\rm D}(\mu,\Omega)$. Similar minimal energies defined by minimizing $\cE_\Omega$ with a mass constraint will be discussed later in Section~\ref{sec:main-result-3}. Recall that $F_{\rm N}(\mu,\Omega)$ is bounded-below by~\eqref{eq:F_bounded_below}.

The existence of minimizers for $F_{\rm D/N}(\mu,\Omega)$ follows from standard techniques in the calculus of variations. It is also well known that those minimizers must be positive inside $\Omega$, up to a constant phase factor. In fact, the non-negativity follows from the diamagnetic inequality $|\nabla u| \ge |\nabla |u||$, and the strict positivity follows from the Gross--Pitaevskii equation; see e.g. \cite[Thm. 11.8]{LieLos-01} for relevant arguments. 

The following says that these minimizers locally converge to an infinite ground state in the thermodynamic limit $\Omega_n\nearrow\R^d$.

\begin{corollary}[Existence of real positive infinite ground states]\label{cor:existence-1}
Let $d\geq1$ and $w$ satisfy Assumption~\ref{ass:w}. Let $\mu>0$.
Let $\{\Omega_n\}$ be a sequence of smooth bounded domains such that
$$B_{R_n}\subset \Omega_n$$
for some $R_n\to\ii$. Let $0\leq u_n\in H^1(\Omega_n)$ be any minimizer of $F_{\rm D/N}(\mu,\Omega_n)$.  Then, $u_n$ is bounded in $L^\ii(B_{R_n/2})$ and, after extraction of a subsequence, converges uniformly locally to an infinite ground state $0<u\in H^1_{\rm unif}(\R^d,\R)$ as in Definition \ref{def:infinite_GP_GS}. In particular, infinite ground states exist for all $\mu>0$.
\end{corollary}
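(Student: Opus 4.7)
The plan is to carry out the classical thermodynamic limit scheme: extract a subsequential limit of the $u_n$'s by interior compactness, verify that the limit satisfies the local minimality condition of Definition~\ref{def:infinite_GP_GS}, and then rule out the trivial limit.

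First I would invoke the bounded-domain analog of Theorem~\ref{thm:uniform_bound}(i), to be stated in Section~\ref{sec:proof_bounds}: this gives $\|u_n\|_{L^\infty(B_{R_n/2})} \leq C$ with $C = C(\mu,w,d)$ independent of $n$, because the Ruelle-type interior argument does not see the boundary of $\Omega_n$, which sits at distance $\geq R_n/2$. Combined with interior elliptic regularity and a bootstrap using the uniform bound on $w\ast|u_n|^2$, this yields uniform $C^k_{\rm loc}$ bounds for every $k$. A diagonal Arzelà--Ascoli extraction then produces a non-negative limit $u \in H^1_{\rm unif}(\R^d,\R)$ with $u_n \to u$ in $C^k_{\rm loc}(\R^d)$, so that $u$ automatically solves~\eqref{eq:GP_infinite} in $\mathcal{D}'(\R^d)$ and has finite local interaction energy.

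Next I would verify~\eqref{eq:local_min}. Given $v \in H^1_{\rm unif}(\R^d,\C)$ with finite local interaction energy and $v = u$ outside some ball $B_R$, I pick $n$ so large that $B_{2R} \subset \Omega_n$, choose a smooth cutoff $\chi$ with $\chi \equiv 1$ on $B_R$ and $\supp\chi \subset B_{2R}$, and compare $u_n$ with the trial state
$$\tilde v_n := \chi\, v + (1-\chi)\, u_n.$$
Since $\tilde v_n = u_n$ outside $B_{2R}$, this is an admissible competitor for both the Dirichlet and the Neumann problem on $\Omega_n$, and minimality of $u_n$ reads $\cF_{\mu,\Omega_n}(\tilde v_n) - \cF_{\mu,\Omega_n}(u_n) \geq 0$. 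The kinetic and mass differences are supported in $B_{2R}$ and pass to the limit via $C^k_{\rm loc}$ convergence. The interaction double integral splits into a piece over $B_{2R} \times B_{2R}$ plus mixed pieces with one variable in $B_{2R}$ and the other in $\Omega_n \setminus B_{2R}$ (where $\tilde v_n = u_n$ exactly). The uniform $L^\infty$ bound, locally uniform convergence $|u_n|^2 \to |u|^2$, and the decay~\eqref{eq:w2_upper} of $w$ let me pass to the limit in each piece and recover exactly~\eqref{eq:local_min}. I expect this step, especially the control of the mixed interaction terms against points far inside $\Omega_n$, to be the main technical point.

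Finally, for the strict positivity: once $u$ has been identified as an infinite ground state, Theorem~\ref{thm:uniform_bound}(iii) rules out $u \equiv 0$, since the right-hand side of~\eqref{eq:lower_bound} is strictly positive for $\ell$ large enough. Since $u \geq 0$ satisfies the linear elliptic equation $-\Delta u + V u = 0$ with the bounded coefficient $V := w\ast|u|^2 - \mu \in L^\infty(\R^d)$, the strong maximum principle (or directly Theorem~\ref{thm:uniform_bound}(ii)) then forces $u > 0$ everywhere on $\R^d$.
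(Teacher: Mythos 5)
Your proof is correct and follows essentially the same thermodynamic-limit scheme as the paper. The only material difference is in how you pass to the limit in the local-minimality condition: you build the hybrid trial state $\tilde v_n := \chi v + (1-\chi)u_n$ glued to $u_n$ in an annulus, whereas the paper simply perturbs $u_n$ directly by the fixed compactly-supported $h := v-u$ (i.e.\ takes $v_n = u_n + h$, which is already an admissible competitor once $\supp h \subset \Omega_n$) and lets $n\to\infty$ with no transition region to estimate. Your version works too but carries an extra cross-term $\nabla\chi\,(v-u_n)$ that you then have to argue vanishes in the limit; the paper's route avoids this entirely. On the other hand, you are more explicit than the paper on the final strict-positivity step: you first use the ground-state lower mass bound (Theorem~\ref{thm:uniform_bound}(iii), obtained here only after the minimality inequality has been transferred) to exclude $u\equiv 0$ and then invoke the strong maximum principle (or the Harnack bound of Theorem~\ref{thm:uniform_bound}(ii)). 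The paper leaves that last step implicit, having established $u_n>0$ in $\Omega_n$ beforehand; your spelling-out of it is a useful addition. One small clarification worth keeping in mind: for the Neumann case the uniform $L^\infty$ bound is only interior (Remark~\ref{rmk:Neumann_pointwise}), but that is exactly what you use since points of $B_{R_n/2}$ are at distance $\geq R_n/2$ from $\partial\Omega_n$.
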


The proof of Corollary~\ref{cor:existence-1} can be found in Section~\ref{sec:proof_existence}. It uses some uniform bounds on $u_n$ in the domain $\Omega_n$ which we have not stated in Theorem~\ref{thm:uniform_bound}.

An important feature of the ground state solutions obtained by Corollary~\ref{cor:existence-1} is that they are \emph{real-valued} and \emph{positive}. This is a natural consequence of the Dirichlet and Neumann boundary conditions imposed on $F_{\rm D}(\mu,\Omega)$ and $F_{\rm N}(\mu,\Omega)$, respectively. We show here that all infinite ground states must be positive in 1D, that is, the $U(1)$ symmetry can never be broken. This is not true in higher dimensions, however.

\begin{theorem}[Positivity of infinite ground states in low dimensions]\label{thm:real-valued}
Let $d\geq1$ and $w$ satisfy Assumption~\ref{ass:w}. Let $\mu>0$.

\medskip
\noindent
{\rm (i)} In dimension $d=1$, all the infinite ground states must be real-valued and strictly positive, up to a constant phase factor.

\medskip
\noindent
{\rm (ii)} In dimension $d=2$, all the \textbf{real-valued} infinite ground states must be strictly positive, up to a sign.
\end{theorem}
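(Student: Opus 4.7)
\medskip\noindent
\textbf{Plan.} Both statements rest on the identity $|u|^2 = u^2$ with $|\nabla|u||=|\nabla u|$ a.e.\ for real-valued $u$: the functions $u$ and $|u|$ take the same value in each of the four terms of $\cF_\mu$. The idea is to use $|u|$ as a competitor against $u$ in the local minimization problem~\eqref{eq:min_u_v} on a suitably chosen bounded domain $\Omega$; the resulting equality $\cF_{\mu,\Omega,u}(|u|)=\cF_{\mu,\Omega,u}(u)$ makes $|u|$ also a minimizer, hence analytic on $\Omega$ by elliptic regularity together with Theorem~\ref{thm:uniform_bound}(i), which contradicts the corners of $|u|$ at the regular points of the nodal set. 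For complex $u$ in $d=1$ one must first reduce to the real-valued case via a phase-winding argument.

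\medskip\noindent
\emph{Part (i), Step 1: reduction to the real case.} Set $J(x):=\mathrm{Im}(\overline{u}u')$. Using the GP equation one computes $J'=\mathrm{Im}(\overline{u}u'')=\mathrm{Im}((w\ast|u|^2-\mu)|u|^2)=0$, so $J\equiv C$ is a constant. Assume for contradiction $C\ne0$; then $u$ has no zero, so one can write $u=Re^{i\theta}$ with $R>0$ and $R^2\theta'=C$. Theorem~\ref{thm:uniform_bound}(i) bounds $R$ above, while Theorem~\ref{thm:uniform_bound}(ii) applied to $|u|$ (a non-negative subsolution of the GP equation by the diamagnetic inequality) bounds $R$ below. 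For large $L$, take $\eta\in C^\infty(\R)$ equal to $0$ on $(-\infty,-L]$, linearly interpolating to $-2\pi\,\mathrm{sgn}(C)$ on $[-L,L]$, and constant equal to $-2\pi\,\mathrm{sgn}(C)$ on $[L,\infty)$. Since $e^{i\eta}\equiv1$ outside $[-L,L]$, the function $v:=Re^{i(\theta+\eta)}$ coincides with $u$ outside $[-L,L]$ with $|v|=R=|u|$, so only the kinetic term changes. Using $R^2\theta'=C$ one obtains
\begin{equation*}
\int_{\R}\bigl(|v'|^2-|u'|^2\bigr)\,\dx=2C\!\int_{\R}\eta'\,\dx+\int_{\R} R^2(\eta')^2\,\dx=-4\pi|C|+O(L^{-1}),
\end{equation*}
which is strictly negative for $L$ large, contradicting~\eqref{eq:local_min}. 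Hence $C=0$, and analyticity of $u$ together with ODE uniqueness at zeros yields $u=e^{i\theta_0}\tilde u$ for a constant $\theta_0$ and real-valued $\tilde u$.

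\medskip\noindent
\emph{Step 2 of part (i), and the heart of part (ii): the $|u|$-competitor.} Now assume $u$ is real-valued and suppose it changes sign. As explained above, it suffices to find a smooth bounded $\Omega$ with $u\geq0$ on $\partial\Omega$ and $u<0$ somewhere in $\Omega$. In $d=1$, such an $\Omega=[a,b]$ exists whenever $u$ has at least two zeros (pick $a<b$ in a positivity region bracketing a negativity interval). For a single-zero (kink-like) $u$, the $|u|$-trick is not available and we instead invoke the linear stability $L:=-\partial_x^2+w\ast u^2-\mu\ge0$ from Lemma~\ref{lem:1st_2nd_order}: by the Allegretto--Piepenbrink theorem there is a positive distributional solution $\phi>0$ of $L\phi=0$, and the substitution $g:=u/\phi$ reduces to $(\phi^2g')'=0$, i.e.\ $\phi^2g'$ is a constant. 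Boundedness of $u$ together with upper and lower bounds on $\phi$ (Theorem~\ref{thm:uniform_bound}) force this constant to vanish, so $g$ is constant and $u$ has constant sign, completing part (i).

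\medskip\noindent
\emph{Part (ii) and the main obstacle.} The one new difficulty in $d=2$ is the construction of the domain $\Omega$. If some connected component $V$ of $\{u<0\}$ is bounded, take $\Omega:=\{\mathrm{dist}(\cdot,V)<\delta\}$ for $\delta>0$ small enough that analyticity of $u$ prevents any other component of $\{u<0\}$ from entering; then $\partial\Omega\subset\{u\geq0\}$ while $V\subset\Omega$ provides the negative interior values, and the $|u|$-argument closes the proof. If on the contrary every nodal component of $\pm u$ is unbounded (a "stripe" configuration), the tubular thickening fails and one must combine the linear stability $L_2\ge0$ with a two-dimensional version of the substitution $u=g\phi$ used in the 1D kink case: this yields the divergence-form equation $\nabla\cdot(\phi^2\nabla g)=0$ in $\R^2$, which together with appropriate bounds on $\phi$ and the Moser Liouville theorem for uniformly elliptic divergence equations in the plane forces $g$ to be constant. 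Handling this second case rigorously is the most delicate step of part (ii) and is precisely where the dimensional restriction $d\leq 2$ enters — higher dimensions admit genuinely non-trivial real-valued structures that escape the present method.
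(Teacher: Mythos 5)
Your Step~1 for part~(i) — establishing that $J=\mathrm{Im}(\overline u u')$ is constant along the line and then ruling out $J\neq0$ by a phase-winding competitor — is a valid and attractive alternative to the paper's argument, which instead competes $u$ against a copy of $|u|$ patched to $u$ with a linear phase interpolation and invokes the vanishing of the averaged momentum from Theorem~\ref{thm:prop_GS}. Both routes deliver the same intermediate conclusion, \emph{``$u$ is a constant phase factor times a real profile''}, and from that point onward part~(i) and part~(ii) pose the identical question: why does a real-valued infinite ground state have a definite sign?

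That is exactly where your proposal develops a genuine gap. For the single-sign-change (kink) case in $d=1$, and for the stripe case in $d=2$, you substitute $g=u/\phi$ for a positive null solution $\phi$ of the linearized operator $L=-\Delta+w\ast u^2-\mu$, and you assert ``upper and lower bounds on $\phi$ (Theorem~\ref{thm:uniform_bound}).'' Theorem~\ref{thm:uniform_bound} only controls solutions of the \emph{nonlinear} GP equation, not solutions of the \emph{linear} Schr\"odinger equation $L\phi=0$; there is no {\it a priori} reason for $\phi$ to be bounded or bounded away from zero. Without a lower bound on $\phi$ you cannot conclude $g=u/\phi$ is bounded, which you need both for the 1D ODE argument ($\phi^2g'=\text{const}$ giving linear growth of $g$ if the constant is nonzero) and for the 2D Moser--Liouville argument. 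You also acknowledge the 2D stripe case is ``the most delicate step'' and do not carry it through. Finally, your appeal to Allegretto--Piepenbrink presumes $0\in\sigma(L)$; that needs the observation that $u$ is a bounded but non-decaying null solution of $L$, hence $0\in\sigma_{\mathrm{ess}}(L)$.

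The paper resolves the whole sign question with one uniform device, the \emph{criticality of bounded potentials in dimensions $d\le2$} (Lemma~\ref{lem:critical}). Because $L=-\Delta+V\ge0$ with $V=w\ast u^2-\mu\in L^\infty$, one finds cutoffs $\chi_n\in C^\infty_c$ with $\chi_n\to1$ locally and $\int|\nabla\chi_n|^2\to0$ (linear cutoffs in $d=1$, logarithmic cutoffs in $d=2$; this is the dimensional barrier). Setting $u_n=\chi_n u$ gives $\langle u_n,Lu_n\rangle\to0$, hence by the diamagnetic inequality $\langle|u_n|,L|u_n|\rangle\to0$; polarizing $\langle h+|u_n|,L(h+|u_n|)\rangle\ge0$ against arbitrary $h\in C^\infty_c$ and passing to the limit forces $L|u|=0$ in the distributional sense. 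Thus $|u|$ is itself a non-negative solution of the GP equation, and the Harnack-type lower bound of Corollary~\ref{cor:lower_bound} gives $|u|\ge c>0$, so $u$ never vanishes. This dispatches the kink and stripe configurations simultaneously, without a case distinction on the nodal topology and without producing or estimating a separate null solution $\phi$.

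Two smaller remarks. In your bounded-component case of (ii), the tubular thickening $\Omega=\{\mathrm{dist}(\cdot,V)<\delta\}$ requires care: negative components could in principle accumulate near $\partial V$, and you should argue that a small-enough $\delta$ achieves $u\ge0$ on $\partial\Omega$, for instance via the isolatedness of nodal components of a real-analytic function inside a compact set. Also note that the ``\,$|u|$-competitor is a minimizer and hence smooth'' argument needs only that $|u|$ solves the GP equation (with the exterior potential $\1_{\Omega^c}|u|^2\ast w$) in $\Omega$, not that it is a global solution, but you should spell out that this elliptic regularity contradicts a jump of $\nabla|u|$ across a regular point of the nodal set, and that regular points exist because $u$ is analytic and changes sign.
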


The proof of this result can be read in Section~\ref{sec:proof_real-valued}. Note that such positive solutions must satisfy the lower bound~\eqref{eq:pointwise_lower_bound}. It is an interesting question to determine if real-valued infinite ground states are also strictly positive in dimensions $d\geq3$. Our proof is very specific to the small dimensions $d\in\{1,2\}$.

When $\widehat{w}\geq0$ we expect that a \emph{real-valued} infinite ground state must necessarily be constant for all $\mu$. We are only able to prove this in 1D or for small values of $\mu$ (see Theorem~\ref{thm:phase_transitions-low-density} below for the latter case).

\begin{theorem}[Uniqueness in 1D for positive-definite interactions]\label{thm:uniqueness_1D}
Let $d=1$ and $w$ satisfy Assumption~\ref{ass:w} with $s>d+1=2$. 
Assume further that $\hat w\ge 0$. Then, for any $\mu>0$, the infinite ground states are exactly the constant functions
$$u(x)=e^{i\theta_0}\left(\frac\mu{\int_\R w}\right)^{\frac12},\qquad \theta_0\in\R.$$
\end{theorem}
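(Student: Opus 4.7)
Let $u$ be an infinite ground state. In dimension $d=1$, Theorem~\ref{thm:real-valued}(i) gives that $u$ is real-valued and strictly positive up to a constant phase, so we may assume $u>0$. By Theorem~\ref{thm:uniform_bound}, $u$ is real-analytic with $u$ and all its derivatives uniformly bounded, and $u$ is bounded below by a positive constant. Since $\widehat w\geq 0$, Lemma~\ref{lem:1st_2nd_order}(iv) gives the operator inequality
\[
H := -\partial_x^2 + w\ast u^2 - \mu \geq 0
\]
as a quadratic form on compactly supported $H^1$ functions.

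The strategy is to exploit $H\geq 0$ through the derivative $u'$. Differentiating the GP equation yields $Hu' = -(w\ast (u^2)')\,u$. Choose $\chi_R\in C_c^\infty(\R)$ equal to $1$ on $[-R,R]$, supported in $[-2R,2R]$, with $|\chi_R'|\leq C/R$. A direct integration by parts using this identity (in which the middle cross-terms cancel exactly) yields, with $f := (u^2)'$,
\[
0 \leq \langle \chi_R u', H(\chi_R u')\rangle = \int (\chi_R')^2 (u')^2\,dx - \tfrac12 \int \chi_R^2\,(w\ast f)\,f\,dx.
\]
The symmetrization identity
\[
\int \chi_R^2 f(w\ast f)\,dx - \int (\chi_R f)\,(w\ast \chi_R f)\,dx = \tfrac12\iint f(x)f(y)w(x-y)(\chi_R(x)-\chi_R(y))^2\,dx\,dy,
\]
combined with $(\chi_R(x)-\chi_R(y))^2\leq C\min(R^{-2}|x-y|^2,\,1)$ and the decay $|w(z)|\lesssim |z|^{-s}$ for $s>d+1=2$, produces a commutator bound of order $O(R^{2-s})$. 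Using Plancherel and $\widehat w\geq 0$ we therefore obtain
\[
0 \leq \int \widehat w(k)\,|\widehat{\chi_R f}(k)|^2\,dk \leq C\bigl(R^{-1}+R^{2-s}\bigr)\xrightarrow[R\to\infty]{} 0.
\]

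Equivalently, $\sqrt{\widehat w}\,\widehat{\chi_R f}\to 0$ in $L^2(\R)$. For any Schwartz function $\phi$ with $\widehat\phi$ compactly supported inside the open set $\{\widehat w>0\}$, Cauchy--Schwarz gives $\int \phi\,\chi_R f\,dx\to 0$, because $\widehat w$ is bounded below by a positive constant on $\supp\widehat\phi$; on the other hand, dominated convergence yields $\int\phi\,\chi_R f\,dx\to \int\phi f\,dx$. Hence $\int\phi f\,dx=0$ for every such $\phi$, forcing the tempered distribution $\widehat f$ to vanish on $\{\widehat w>0\}$. Consequently $\widehat w\cdot\widehat f\equiv 0$, i.e.\ $w\ast (u^2)'\equiv 0$, and so $V:=w\ast u^2-\mu$ is constant on $\R$.

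With $V$ a constant, both $u$ and $u'$ solve the linear ODE $-\phi''+V\phi=0$ (the first by the GP equation, the second via $Hu' = -(w\ast f)\,u = 0$). A bounded, strictly positive solution of a constant-coefficient second-order ODE on $\R$ exists only when $V=0$, in which case $u$ is affine and hence constant; the relation $u^2\int w = \mu$ then pins down $u\equiv\sqrt{\mu/\int w}$. The main obstacle lies in the commutator estimate: the strengthened decay $s>d+1=2$ (beyond the blanket $s>d=1$ of Assumption~\ref{ass:w}) is precisely what upgrades the uniform bound $\int\widehat w\,|\widehat{\chi_R f}|^2\leq C$ into vanishing as $R\to\infty$, which in turn allows the conclusion $w\ast(u^2)'\equiv 0$ to be drawn even when $\widehat w$ has zeros.
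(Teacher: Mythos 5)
Your strategy is genuinely different from the paper's and is, with one fixable gap, correct. The paper first invokes the thermodynamic machinery (Corollary~\ref{cor:cnst_e_f} and Theorem~\ref{thm:prop_GS}) to pin down the density $\rho=\mu/\int_\R w$, locates points $x_n\to-\infty$, $y_n\to+\infty$ with $u(x_n),u(y_n)\to\sqrt\rho$, replaces $u$ by a constant on $[x_n+1,y_n-1]$, and concludes $u'\in L^2(\R)$ and $u^2-\rho\in L^2(\R)$ via a two-region Fourier argument before passing to the limit in the energy comparison. You bypass the thermodynamics entirely: you test the stability operator $H=-\partial_x^2+w\ast u^2-\mu\ge 0$ against $\chi_R u'$, exploit the exact identity $Hu'=-(w\ast(u^2)')u$ together with $2uu'=(u^2)'$, and run a commutator argument to get $D(\chi_R(u^2)',\chi_R(u^2)')\to 0$. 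Both proofs need $s>2$ to control the transition region, but the structure of the argument is different, and yours is shorter and more ODE-flavored. Your identity and commutator estimates check out (the local singularity of $w$ is harmless because $(\chi_R(x)-\chi_R(y))^2$ vanishes on the diagonal and $f=(u^2)'$ is bounded by Theorem~\ref{thm:uniform_bound}).

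There is, however, a genuine gap in your final step: \emph{the inference ``$\widehat f$ vanishes on $\{\widehat w>0\}$, hence $\widehat w\cdot\widehat f\equiv 0$'' is not a valid distributional deduction.} A distribution supported on the zero set of a continuous function need not be annihilated by multiplication by that function; the classic counterexample is $T=\delta_0'$ and $\phi(k)=k$, where $T$ vanishes outside $\{0\}$, $\phi$ vanishes on $\{0\}$, yet $\phi T=-\delta_0\ne 0$. If $\widehat w$ has zeros away from the origin (allowed when $\widehat w\geq 0$ but not strictly positive), $\widehat f$ could carry derivative-of-delta mass there, and your argument does not rule this out. Fortunately the conclusion $w\ast(u^2)'\equiv 0$ can be reached without this step: for an arbitrary $g\in\mathcal S(\R)$, write
\[
\int_\R (w\ast g)\,\overline{\chi_R f}\,\dx = (2\pi)^{1/2}\int_\R \widehat w(k)\,\widehat g(k)\,\overline{\widehat{\chi_R f}(k)}\,\rd k
\le (2\pi)^{1/2}\left(\int_\R \widehat w\,|\widehat g|^2\right)^{1/2}\left(\int_\R \widehat w\,|\widehat{\chi_R f}|^2\right)^{1/2},
\]
splitting $\widehat w=\sqrt{\widehat w}\cdot\sqrt{\widehat w}$ rather than restricting the support of $\widehat g$. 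The first factor is finite since $\widehat w$ is bounded, and the second tends to zero by what you proved; dominated convergence then gives $\int (w\ast g)f=0$ for every Schwartz $g$, hence $w\ast f\equiv 0$. With that patch, your ODE conclusion ($V$ constant, $u$ a bounded positive solution of a constant-coefficient equation, hence constant) is sound.
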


The proof of Theorem~\ref{thm:uniqueness_1D} can be read in Section~\ref{sec:proof_uniqueness_1D}. The idea is to show that any infinite ground state must have a finite (properly renormalized) energy and to then use strict convexity with respect to $|u|^2$.

On the contrary, in dimensions $d\geq2$, we expect the existence of complex-valued ground states with a non-trivial phase, for instance describing point vortices in 2D and vortex lines in 3D. This is well known for the Ginzburg-Landau equation ($w=\delta$)~\cite{BetBreHel-94} and is certainly expected for all $w$ and all $\mu$. The generality of the assumption on $w$ poses serious technical problems, however, and we are only able to prove this for $\mu$ small enough.

\begin{theorem}[Existence of vortex-type infinite ground states in 2D]\label{thm:exist_vortex}
Let $d=2$ and $w$ satisfy Assumption~\ref{ass:w} with $s>d+1=3$. Suppose in addition that its Fourier transform satisfies
\begin{equation}
 2\widehat{w}(k)-k\cdot\nabla_k\widehat{w}(k)\geq c|\widehat{w}(k)|^2,\qquad\forall k\in\R^d
 \label{eq:assumption_Fourier_w_Pohozaev}
\end{equation}
for some $c>0$. Then, for $\mu$ small enough there exists an infinite ground state $u$ in $\R^2$ with a non trivial phase. More precisely, we have $|u(x)|>0$ for all $|x|\geq R_0$ and $u$ takes the form
\begin{equation}
u(x)=|u(x)|\frac{x_1+ix_2}{\sqrt{x_1^2+x_2^2}}e^{i\psi(x)},\qquad\forall |x|\geq R_0,
 \label{eq:vortex_GS}
\end{equation}
with a smooth (single-valued) $\psi$. In other words, $u$ has topological degree one at infinity.
\end{theorem}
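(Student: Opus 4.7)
The plan is to construct the desired vortex-type ground state as an infinite-volume limit of finite-ball minimizers with a prescribed degree-one Dirichlet boundary condition, and, in the regime $\mu\to 0$, compare with the (rescaled) Ginzburg--Landau vortex. First, a reduction: writing $u(x)=\sqrt{\mu/\int w}\,v(\sqrt\mu\,x)$ turns \eqref{eq:GP_infinite} into
\begin{equation*}
-\Delta v + (w_\mu*|v|^2)v = v, \qquad w_\mu(x)=\frac{\mu}{\int w}\,w\!\left(\frac{x}{\sqrt\mu}\right),
\end{equation*}
where $w_\mu$ is an approximate identity ($\int w_\mu=1$, $w_\mu\to\delta_0$ as $\mu\to 0$). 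The limiting equation $-\Delta v + |v|^2 v = v$ is the classical Ginzburg--Landau equation, whose degree-one radial vortex $v_\star(x)=f(|x|)(x_1+ix_2)/|x|$ is known to be an infinite ground state. The solution $u$ we want should rescale to something close to $v_\star$ as $\mu\to 0$.

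For the construction, fix small $\mu$ and large $R$ and minimize $\cF_{\mu,B_R}$ over the affine class $\{v\in H^1(B_R,\C):\,v|_{\partial B_R}=\sqrt{\mu/\int w}\,(x_1+ix_2)/|x|\}$. The lower bound \eqref{eq:F_bounded_below} and the direct method provide a minimizer $u_R$ solving the GP equation in $B_R$; Theorem~\ref{thm:uniform_bound} (in the boundary-value variant announced in Section~\ref{sec:proof_bounds}) yields uniform $C^k$-bounds on compact subsets. Extracting a locally uniformly convergent subsequence gives a limit $u\in H^1_{\rm unif}(\R^2,\C)$ satisfying the GP equation on $\R^2$. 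To see that $u$ is an infinite ground state in the sense of Definition~\ref{def:infinite_GP_GS}, given any compact perturbation with $\supp(v-u)\subset B_{R_0}$, the function that equals $v$ on $B_{R_0}$ and $u_R$ on $B_R\setminus B_{R_0}$ is admissible for the $R$-problem; comparing its energy to that of $u_R$ and passing to $R\to\infty$ (the short-range condition $s>d+1=3$ makes the interaction cross-terms absolutely convergent) gives \eqref{eq:local_min}.

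The heart of the matter is the preservation of the degree at infinity, together with the asymptotic shape \eqref{eq:vortex_GS}. Applying the rescaling of the first paragraph to $u_R$ produces a minimizer $v_{R,\mu}$ on $B_{R\sqrt\mu}$ for the $w_\mu$-functional, still with boundary data of degree one. A test function built from $v_\star$ gives the upper bound $\cF_{\mu,B_{R\sqrt\mu}}(v_{R,\mu})\leq \pi\log(R\sqrt\mu)+O(1)$, analogous to the Ginzburg--Landau case. The matching lower bound should follow from Jerrard--Sandier-type ball constructions, whose main input is gradient concentration around vortex cores and carries over to our nonlocal setting, combined with the Pohozaev-type virial identity induced by the hypothesis \eqref{eq:assumption_Fourier_w_Pohozaev}: that identity provides a sign-definite contribution from the nonlocal interaction which precludes both the annihilation of the vortex by an incoming opposite-degree vortex and its escape to infinity. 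One then locates a single persistent vortex of degree one in the $\mu\to 0$ limit, from which $|u(x)|>0$ for $|x|\geq R_0$ (using the lower estimate \eqref{eq:pointwise_lower_bound} applied away from vortex cores) and \eqref{eq:vortex_GS} follows by writing the unit-modulus map $u/|u|$ as a degree-one field times $e^{i\psi}$.

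The main obstacle is exactly this last step: for general nonlocal $w$, neither the maximum principle nor the monotonicity formulas underlying Ginzburg--Landau vortex analysis are directly available. The Fourier-side hypothesis \eqref{eq:assumption_Fourier_w_Pohozaev} is what restores a usable virial identity, and without it one does not see how to control vortex dynamics in the double limit $R\to\infty$, $\mu\to 0$. This is also the reason the result is only claimed for $\mu$ small: it is only in the rescaled limit that one has enough proximity to the explicit Ginzburg--Landau vortex to carry out these quantitative energy matches.
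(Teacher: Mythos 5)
Your broad strategy is the right one — minimize in a large disk with a degree-one Dirichlet boundary datum, pass to the thermodynamic limit, and exploit the Pohozaev identity that Assumption~\eqref{eq:assumption_Fourier_w_Pohozaev} makes available. But the proposal is missing the key structural ingredient that the paper's argument rests on, and the route you sketch to close the degree argument would not work as stated.

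\textbf{The missing ingredient.} The paper first proves a separate theorem (Theorem~\ref{thm:degree_one}) stating that an infinite ground state whose modulus stays in $[\varrho(1-\eps),\varrho(1+\eps)]$ at infinity with $\eps<1/3$ has degree $D\in\{0,\pm1\}$; and a half-space infinite ground state with constant boundary data has degree $0$. This is the step that makes the thermodynamic limit work: once the Pohozaev bound shows $|u_L|^2$ is close to $\varrho$ everywhere except on at most $C/\rho$ unit balls, those bad balls are grouped into clusters, limits along each cluster give an infinite ground state (or a half-space one), the topological degrees sum to $1$, the boundary clusters contribute degree $0$, and the $|D|\leq1$ constraint then forces exactly one cluster of degree $+1$ in the bulk. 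Without a prior bound on the admissible degree of an arbitrary infinite ground state you cannot rule out, say, a degree-$2$ cluster plus a degree-$(-1)$ cluster, which carries total degree $1$ but produces no genuine vortex in the limit. Your reference to the virial identity ``precluding escape to infinity or annihilation'' gestures at this difficulty but does not fill it; the identity controls the total distortion of $|u_L|^2$, not the degree distribution among clusters.

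\textbf{The lower-bound mechanism.} You propose to obtain the matching energy lower bound from Jerrard--Sandier ball constructions, adapting them to the nonlocal setting; you yourself flag that this adaptation is the weak point. The paper avoids ball constructions entirely: the Pohozaev identity, written with $f_L=(\rho-|u_L|^2)\1_{B_L}$, gives the $L$-independent bound $\int_{\R^2}(f_L*w)^2\le C\rho$, and the uniform Lipschitz estimate on $f_L$ (valid for $\mu$ small, which is precisely why the theorem is stated at low density) converts this into the statement that $\{|f_L|\geq\rho/4\}$ is covered by at most $C/\rho$ unit balls. This is both simpler and the place where the full force of~\eqref{eq:assumption_Fourier_w_Pohozaev} enters, via $2\widehat w-k\cdot\nabla\widehat w\ge c|\widehat w|^2$.

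Two smaller issues. The bound $|u(x)|>0$ for large $|x|$ cannot come from~\eqref{eq:pointwise_lower_bound}: that estimate is a Harnack inequality for \emph{non-negative} solutions and does not apply to a complex field with a winding phase; one instead argues directly that $|u_L|^2\geq 3\rho/4$ off the bad balls and passes to the limit. Finally, in your rescaling the potential should be $\tilde w_\mu(x)=\mu^{-1}w(x/\sqrt\mu)/\int w$ in $d=2$ (so $\int\tilde w_\mu=1$), not $\mu\,w(x/\sqrt\mu)/\int w$; this is only a typo but worth correcting since it propagates into the limiting equation.
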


The condition~\eqref{eq:assumption_Fourier_w_Pohozaev} is well known and appears naturally in the Pohozaev identity~\cite{GinVel-80,DeLaire-12}. Note that it implies $\widehat{w}\geq0$.\footnote{If $\widehat{w}$ took negative values then it would attain its minimum at some $k_0$ since $\widehat{w}(0)>0$ and $w\to0$ at infinity. But $\nabla \widehat{w}(k_0)=0$ and thus $\widehat{w}(k_0)\geq0$ from~\eqref{eq:assumption_Fourier_w_Pohozaev}, a contradiction.} The condition~\eqref{eq:assumption_Fourier_w_Pohozaev} is for instance valid if $\widehat w$ is \emph{radial non-increasing} (and hence non-negative) with $c=2/\widehat{w}(0)=2/\|\widehat w\|_{L^\ii}>0$, since then $k\cdot\nabla \widehat{w}\leq0$.

We can conclude from Theorem~\ref{thm:exist_vortex} that there exist infinite ground states $u$ for which the modulus $|u|$ is not a ground state. In fact, for the vortex~\eqref{eq:vortex_GS}, $|u|$ is not even a solution of the GP equation.

The proof of Theorem~\ref{thm:exist_vortex} can be read in Section~\ref{sec:proof_vortex}. The reason for the constraint that $\mu$ is small enough comes from our proof being based on techniques developed in the 90s in the context of the Ginzburg-Landau equation ($w=\delta$)~\cite{BetBreHel-94,BreMerRiv-94,Sandier-98b,Shafrir-94,Mironescu-96}, which is recovered in the low density limit $\mu\to0$. Note, in particular, that $w=\delta$ satisfies~\eqref{eq:assumption_Fourier_w_Pohozaev} since $\widehat{w}$ is constant hence $\nabla_k\widehat w=0$.

In the (2D) Ginzburg-Landau case it was proved in~\cite{Mironescu-96,Sandier-98} that there are exactly two infinite ground states, up to translations, complex conjugation and a global constant phase factor. We conjecture that the same property holds for a general positive-definite $w$, at least when $\mu\ll1$.

\subsection{Thermodynamic properties}  \label{sec:main-result-3}

In Section \ref{sec:main-result-2}, we have explained how to construct infinite ground state solutions as thermodynamic limits of minimizers in compact sets. Now we consider the reverse direction and show that every infinite ground state as in Definition \ref{def:infinite_GP_GS} exhibits \emph{universal} thermodynamic properties.

\subsubsection{Canonical and grand-canonical energies}
We have introduced in~\eqref{eq:def_J} the grand-canonical minimal energy $F_{\rm D/N}(\mu,\Omega)$ in a bounded domain $\Omega$ with a fixed chemical potential $\mu$. We can similarly introduce the \emph{minimal canonical Dirichlet energy}
\begin{equation}
\boxed{E_{\rm D}(\lambda,\Omega):=\min_{\substack{u\in H^1_0(\Omega)\\ \int_\Omega|u|^2=\lambda}}\cE_\Omega(u).}
 \label{eq:def_I}
\end{equation}
Recall from~\eqref{eq:cE_GP} that $\cE_\Omega=\cF_{0,\Omega}$ is the GP energy with $\mu=0$. The Neumann energy $E_{\rm N}(\lambda,\Omega)$ is defined similarly  with $u\in H^1(\Omega)$.

It is well known that the minimal energies per unit volume converge in the thermodynamic limit~\cite{Ruelle}. To be precise, let $\Omega_n\subset\R^d$ be a sequence of domains which converges to $\R^d$ in the sense of Fischer, that is, satisfies
\begin{equation}\label{eq:Fisher-0}
B_{R_n}\subset \Omega_n\subset B_{CR_n}
\end{equation}
and
\begin{equation}
\left|\left\{x\in \R^d\ :\ \rd(x,\partial\Omega_n)\leq \ell\right\}\right|\leq C R_n^{d-1}\ell,\qquad\forall \ell\leq R_n
\label{eq:Fisher}
\end{equation}
for some $C>0$ and $R_n\to\ii$. Then for all $\rho,\mu>0$, the following limits exist and are independent of the sequence $\{\Omega_n\}$ (as well as the constant $C$ in \eqref{eq:Fisher-0} and \eqref{eq:Fisher})
\begin{equation}
e(\rho)=\lim_{\substack{n\to\ii\\ \frac{\lambda_n}{|\Omega_n|}\to\rho}}\frac{E_{\rm D}(\lambda_n,\Omega_n)}{|\Omega_n|}=\lim_{\substack{n\to\ii\\ \frac{\lambda_n}{|\Omega_n|}\to\rho}}\frac{E_{\rm N}(\lambda_n,\Omega_n)}{|\Omega_n|},
\label{eq:def_e_lambda}
\end{equation}
\begin{equation}
f(\mu)=\lim_{n\to\ii}\frac{F_{\rm D}(\mu,\Omega_n)}{|\Omega_n|}=\lim_{n\to\ii}\frac{F_{\rm N}(\mu,\Omega_n)}{|\Omega_n|}.
\label{eq:def_f_mu}
\end{equation}
For the convenience of the reader, we quickly outline the argument in Appendix~\ref{app:existence-thermo}. Now we state some important properties of the functions $\rho\mapsto e(\rho)$ and $\mu\mapsto f(\mu)$.

\begin{theorem}[Thermodynamic limit]\label{thm:thermo}
Let $d\geq1$ and $w$ satisfy Assumption~\ref{ass:w}.
The function $\rho\mapsto e(\rho)$ is positive, increasing, $C^1$ and convex, whereas the function $\mu\mapsto f(\mu)$ is negative, decreasing, continuous and strictly concave. They are Legendre transforms to each other:
\begin{equation}
e(\rho)=\max_{\mu\geq0}\big\{f(\mu)+\mu\rho\big\},\qquad f(\mu)=\min_{\rho\geq0}\big\{e(\rho)-\mu\rho\big\}.
\label{eq:Legendre}
\end{equation}
For any $\rho>0$, the maximum 
in ~\eqref{eq:Legendre} is attained for $\mu(\rho)=e'(\rho)$. The function $\rho\mapsto \mu(\rho)$ is continuous non-decreasing and satisfies
\begin{equation}
C^{-1}\rho\leq \mu(\rho) \leq C\rho
\label{eq:bound_mu_rho}
\end{equation}
for a constant $C>0$. For any $\mu$, the minimum on the right of~\eqref{eq:Legendre} is attained for all the $\rho$'s such that $\mu(\rho)=\mu$, that is, $\rho$ in the interval $[\rho_-(\mu),\rho_+(\mu)]$ with $\rho_\pm(\mu)=-f'_\pm(\mu)$ (left and right derivative of $f$ at $\mu$). Finally, the function $\rho\mapsto e(\rho)/\rho$ is concave non-decreasing.
\end{theorem}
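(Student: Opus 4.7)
The plan is to pass the convex/concave structure of the finite-volume energies $E_{\rm D/N}(\lambda,\Omega)$ and $F_{\rm D/N}(\mu,\Omega)$ through the thermodynamic limit, and then derive the Legendre duality, the quantitative bounds, strict concavity, and concavity of $e(\rho)/\rho$ as consequences. The main technical obstacle is the strict concavity of $f$, equivalent by duality to $e\in C^1$.

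\emph{Basic structure and Legendre duality.} Convexity of $e$ follows from a standard ``cut and glue'' sub-additivity argument: for $\rho=t\rho_1+(1-t)\rho_2$ and a Fisher domain $\Omega_n$ partitioned into two sub-domains of relative volumes $t$ and $1-t$ separated by a collar wider than the effective range of $w_2$, juxtaposing approximate canonical minimizers of densities $\rho_1,\rho_2$ yields a canonical trial state of density $\rho$ and energy $t|\Omega_n|e(\rho_1)+(1-t)|\Omega_n|e(\rho_2)+o(|\Omega_n|)$, the collar and cross-interaction errors being controlled by~\eqref{eq:w2_upper} and~\eqref{eq:Fisher}. Concavity of $f$ is automatic: $F_{\rm D/N}(\mu,\Omega)$ is the infimum over trial functions of the affine-in-$\mu$ quantity $\cE_\Omega(u)-\mu\int_\Omega|u|^2$. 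Sign and monotonicity are elementary ($u\equiv 0$ gives $f\leq 0$; the stability bound $\cE_\Omega(u)\geq \eps\int(\delta_r\ast|u|^2)^2$ combined with Cauchy--Schwarz gives $e(\rho)\geq \eps\rho^2>0$; and $e$ is strictly increasing by convexity with $e(0)=0$ and $e(\rho)>0$). At finite volume, partitioning trial functions by their mass yields $F_{\rm D/N}(\mu,\Omega)=\inf_\lambda\{E_{\rm D/N}(\lambda,\Omega)-\mu\lambda\}$, which passes to the limit as $f(\mu)=\min_\rho\{e(\rho)-\mu\rho\}$; biduality of the convex function $e$ then completes~\eqref{eq:Legendre}.

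\emph{Quantitative bounds and strict concavity.} Taking the Neumann trial $u\equiv\sqrt{\mu/\int w}$ gives $f(\mu)\leq -\mu^2/(4\int w)$, while~\eqref{eq:F_bounded_below} yields $f(\mu)\geq -\mu^2/(4\eps)$, so $c\mu^2\leq |f(\mu)|\leq C\mu^2$. Since $|f|$ is convex with $|f|(0)=0$, the chord inequalities $|f(\mu)|/\mu\leq -f'_+(\mu)\leq (|f(2\mu)|-|f(\mu)|)/\mu$ force $c\mu\leq -f'_\pm(\mu)\leq 4C\mu$, which is~\eqref{eq:bound_mu_rho}. The main difficulty is the strict concavity of $f$: I would adapt the Ginibre--Ruelle scheme~\cite{Ginibre-67,Ruelle-70} from classical statistical mechanics, ruling out affine pieces in $f$ by exploiting the Gross--Pitaevskii equation $(-\Delta+w\ast|u|^2)u=\mu u$ satisfied by grand-canonical minimizers, together with the virial-type identity $\int|\nabla u|^2+\iint w|u|^2|u|^2=\mu\int|u|^2$ obtained by testing the equation against $u$, and the uniform $L^\infty$ bound of Theorem~\ref{thm:uniform_bound}, which ensures that all the relevant quantities converge in the thermodynamic limit. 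Once $f$ is strictly concave, convex duality gives $e\in C^1$ with $e'(\rho)=\mu(\rho)$, and the continuity and non-decreasingness of $\mu(\rho)$, together with the description of the minimizing set as $[\rho_-(\mu),\rho_+(\mu)]$ with $\rho_\pm(\mu)=-f'_\pm(\mu)$, all follow from standard convex analysis.

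\emph{Concavity of $\rho\mapsto e(\rho)/\rho$.} Fix $\rho^\star>0$ and let $u_n^\star$ be near-canonical minimizers on $\Omega_n$ with mass $\lambda_n^\star\simeq\rho^\star|\Omega_n|$. Decompose $\cE_{\Omega_n}(u_n^\star)=T_n+W_n$ with $T_n:=\int|\nabla u_n^\star|^2\geq 0$ and $W_n:=\tfrac12\iint w|u_n^\star|^2|u_n^\star|^2\geq 0$, the non-negativity of $W_n$ following from the stability of $w_2$ and the positivity of $\eps\delta_r\ast\delta_r$. The scaled function $\beta u_n^\star$ is an admissible canonical trial function of mass $\beta^2\lambda_n^\star$ with energy $\beta^2 T_n+\beta^4 W_n$. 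Dividing by the mass and passing to the limit along a subsequence with $T_n/\lambda_n^\star\to t^\star$ and $W_n/\lambda_n^\star\to w^\star$ (so that $t^\star+w^\star=e(\rho^\star)/\rho^\star$), we obtain, for every $\rho=\beta^2\rho^\star>0$,
\begin{equation*}
\frac{e(\rho)}{\rho}\leq t^\star+\frac{\rho}{\rho^\star}\,w^\star.
\end{equation*}
The right-hand side is an affine function of $\rho$ with non-negative slope $w^\star/\rho^\star$, tangent to $e(\cdot)/\cdot$ at $\rho^\star$. Since $\rho^\star>0$ was arbitrary, $\rho\mapsto e(\rho)/\rho$ is the infimum of a family of non-decreasing affine functions, hence non-decreasing and concave, completing the proof.
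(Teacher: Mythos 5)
Most of your proposal matches the paper's Step~1 (cut-and-glue convexity of $e$), Step~2 (Legendre biduality via $e\geq f^\ast\geq (e_\ast)^\ast$ and convexity of $e$), and Step~5 (concavity and monotonicity of $\rho\mapsto e(\rho)/\rho$ by scaling $u\mapsto\sqrt{\lambda}\,v$; your decomposition into $T_n$ and $W_n$ is just this argument written pointwise in the trial family rather than taking the infimum directly, and is equally valid). The chord-inequality derivation of $C^{-1}\rho\leq\mu(\rho)\leq C\rho$ from $c\mu^2\leq|f(\mu)|\leq C\mu^2$ is a correct alternative to the paper's reference to the local bounds of Theorem~\ref{thm:local_bound} (there is a small arithmetic slip: the constant trial gives $f(\mu)\leq -\mu^2/(2\int w)$, not $-\mu^2/(4\int w)$, but this is inconsequential).

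The genuine gap is the strict concavity of $f$, which is the main technical content of the theorem and in your proposal amounts only to the sentence ``I would adapt the Ginibre--Ruelle scheme.'' The virial identity you point to, $\int|\nabla u|^2+\iint w\,|u|^2|u|^2=\mu\int|u|^2$, is not enough: it only repackages $f(\mu)$ as minus the limiting interaction energy per unit volume and says nothing about the sign of a second difference. The actual mechanism in the paper is quite specific. One takes the \emph{Neumann} grand-canonical minimizer $u_n\geq0$ on a cube $C_n$, forms the trial state $u_n+\eta c\,\1_{C_n}$ for small $\eta,c\in(-1,1)$, and evaluates $\cF_{\mu+\eta,C_n}$ at it. Because $u_n$ solves the Euler--Lagrange equation for $\cF_{\mu,C_n}$, all the linear-in-$\eta$ contributions coming from perturbing the density cancel; the only surviving linear term is $-\eta\lambda_n$ from the shift $\mu\mapsto\mu+\eta$. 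The coefficient of $\eta^2$ is, up to controlled terms, $3c^2(\int w_+)\int_{C_n}u_n^2-2c\int_{C_n}u_n$. The crucial observation is that $\int_{C_n}u_n$ is bounded below by a positive multiple of $|C_n|$ (obtained by a H\"older interpolation between $\|u_n\|_{L^1}$, $\|u_n\|_{L^2}$, and the $H^1$ bound, combined with the mass lower bound from Corollary~\ref{cor:lower_bd_mass}), so choosing $c$ small makes the $\eta^2$ coefficient strictly negative uniformly in $n$. Passing to the limit yields $f(\mu+\eta)\leq f(\mu)-\eta\bar\rho-\alpha\eta^2+O(\eta^3)$ with $\alpha>0$, which rules out any affine piece of $f$. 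Your proposal contains none of these ingredients (the constant-function perturbation, the precise cancellation of linear terms via the Neumann Euler--Lagrange equation, and the uniform $L^1$ lower bound), so as written the proof of the central claim is missing.
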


The proof is provided in Section~\ref{sec:proof_thermo}.

It is clear from the definition that $\mu\mapsto F_{\rm D/N}(\mu,\Omega)$ is a concave function as an infimum of linear functions. If $\widehat{w}\geq0$ one can show that $\lambda \mapsto E_{\rm D/N}(\lambda,\Omega)$ is convex. For general potentials $w$, $E_{\rm D/N}(\lambda,\Omega)$ need not be convex, however. By Theorem~\ref{thm:thermo} we see that it always becomes convex in the thermodynamic limit.

The most difficult part of the theorem is the strict concavity of $f$, which plays an important role since it allows us to show that $e$ is $C^1$, hence to define the map $\rho\mapsto\mu(\rho)$. More precisely, recall that since $e$ is convex it possesses left and right derivatives at any point, which coincide except possibly on a countable set. The theorem says that there can be no jump. The reason is that its Legendre transform $f$ is strictly concave. Recall that a jump in the derivative of a convex function corresponds to a linear part in its Legendre transform. In statistical mechanics, the strict concavity of the free energy is related to the positivity of the bulk compressibility of an infinite gas~\cite{Ginibre-67,Ruelle-70}.

We see no particular reason why $f$ would itself have to be $C^1$ everywhere. We thus have to work with the two functions $\rho_\pm(\mu)$ which correspond to the left and right derivatives of $f$ as defined in the theorem. From the concavity, we have of course $\rho_-(\mu)=\rho_+(\mu)$ everywhere except on a countable set. When $\rho_-(\mu)<\rho_+(\mu)$ we interpret this as a phase transition for the corresponding $\mu$, with two (pure) phases of different densities. One can then expect the existence of (mixed) infinite ground states, for instance describing an interface between the two pure phases.

Note that the energy per unit volume can then be seen as a function of $\mu$ via
\begin{equation}
 \tilde e_\pm(\mu):=e\big(\rho_\pm(\mu)\big).
\end{equation}

\subsubsection{Equivalence of ensembles}
In Corollary~\ref{cor:existence-1}, we have constructed infinite ground states using a grand-canonical thermodynamic limit, that is, fixing $\mu$. 
Here we prove that if we fix the density $\rho$ instead of $\mu$, then we obtain infinite ground states with the corresponding $\mu=\mu(\rho)$. 

\begin{theorem}[Limit of canonical minimizers]\label{thm:canonical}
Let $d\geq1$ and $w$ satisfy Assumption~\ref{ass:w}. Let $\rho>0$ and
let $\{\Omega_n\}$ be a sequence of smooth bounded domains such that $B(0,R_n)\subset \Omega_n$ for some $R_n\to\ii$.
Let $\lambda_n$ be such that $\lambda_n/|\Omega_n|\to\rho$. Let finally $u_n$ be any minimizer for $E_{\rm D/N}(\lambda_n,\Omega_n)$. Then, in the thermodynamic limit $n\to\ii$ and up to extraction of a subsequence, $u_n$ converges locally uniformly to an infinite ground state of chemical potential $\mu=\mu(\rho)$.
\end{theorem}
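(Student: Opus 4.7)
The plan is to extract a subsequential local limit of the canonical minimizers $u_n$, identify its chemical potential as $\mu(\rho)$ via a mass-rescaling competitor, and upgrade it to an infinite ground state by the same rescaling trick. By the diamagnetic inequality we may take $u_n \geq 0$ real, and the canonical constraint produces a Lagrange multiplier $\mu_n \in \R$ such that $u_n$ solves the GP equation $(-\Delta + w\ast|u_n|^2)u_n = \mu_n u_n$ on $\Omega_n$ with the prescribed Dirichlet or Neumann boundary condition. A boundary-adapted version of Theorem~\ref{thm:uniform_bound} (the main content of Section~\ref{sec:proof_bounds}) gives uniform $L^\infty$-bounds for $u_n$ on interior subsets of $\Omega_n$. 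Multiplying the equation by $u_n$ and integrating yields
\[
\mu_n\lambda_n = \int_{\Omega_n}|\nabla u_n|^2\,\dx + \iint_{\Omega_n^2} w(x-y)|u_n(x)|^2|u_n(y)|^2\,\dx\,\dy,
\]
which is $\geq 0$ by the stability part of Assumption~\ref{ass:w} and $\leq 3\,E_{\rm D/N}(\lambda_n,\Omega_n) = O(|\Omega_n|)$ by Theorem~\ref{thm:thermo}; with $\lambda_n \sim \rho|\Omega_n|$ this bounds $(\mu_n)$. Elliptic regularity then bootstraps the $L^\infty$-bound to local $C^{k,\alpha}$-bounds, and a diagonal extraction produces a subsequence along which $u_n \to u$ locally uniformly in all derivatives, $0 \leq u \in H^1_{\rm unif}(\R^d)$, and $\mu_n \to \mu_*$. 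Passing to the limit, $u$ solves the GP equation with multiplier $\mu_*$ and has finite local interaction energy.

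The identification $\mu_* = \mu(\rho)$ comes from a homogeneous rescaling competitor. For small $t \in \R$, the function $\sqrt{1+t}\,u_n$ has mass $(1+t)\lambda_n$ and satisfies the same boundary condition as $u_n$, so it is admissible for $E_{\rm D/N}((1+t)\lambda_n,\Omega_n)$. A direct expansion together with the Euler--Lagrange identity displayed above yields
\[
\cE_{\Omega_n}(\sqrt{1+t}\,u_n) = \cE_{\Omega_n}(u_n) + t\,\mu_n\lambda_n + \tfrac{t^2}{2}\iint_{\Omega_n^2} w(x-y)|u_n(x)|^2|u_n(y)|^2\,\dx\,\dy,
\]
where the quartic term is $O(|\Omega_n|)$ by the $L^\infty$-bound and Assumption~\ref{ass:w}. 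Dividing by $|\Omega_n|$ and applying Theorem~\ref{thm:thermo} to both sides yields
\[
e\big((1+t)\rho\big) \leq e(\rho) + t\,\mu_*\rho + C\,t^2.
\]
Sending $t \to 0^\pm$ forces the two one-sided derivatives of $e$ at $\rho$ to coincide with $\mu_*$, so $e$ is differentiable at $\rho$; by Theorem~\ref{thm:thermo} its derivative equals $\mu(\rho)$, hence $\mu_* = \mu(\rho)$.

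The ground state property follows by combining a compact perturbation with the same scaling trick. Given $v$ a compact perturbation of $u$ with $h := v - u$ supported in a ball $B_R$, define $v_n := u_n + h$ on $\Omega_n$; for $n$ large this belongs to $H^1(\Omega_n)$ with the same boundary condition as $u_n$. The mass discrepancy $\Delta M_n := \int_{\Omega_n}|v_n|^2 - \lambda_n$ is $O(1)$ and converges to $\Delta M := \int_{\R^d}(|v|^2-|u|^2)$ by locally uniform convergence. Set $\tilde v_n := \sqrt{1-t_n}\,v_n$ with $t_n := \Delta M_n/\int|v_n|^2 = O(|\Omega_n|^{-1})$, so that $\int|\tilde v_n|^2 = \lambda_n$. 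Since $v_n - u_n = h$ is compactly supported, $\int|\nabla v_n|^2 + \iint w(x-y)|v_n(x)|^2|v_n(y)|^2\,\dx\,\dy = \mu_n\lambda_n + O(1)$, and the same expansion as before applied to $v_n$ gives
\[
\cE_{\Omega_n}(\tilde v_n) = \cE_{\Omega_n}(v_n) - \mu_n\Delta M_n + O(|\Omega_n|^{-1}).
\]
Canonical minimality forces $\cE_{\Omega_n}(\tilde v_n) \geq \cE_{\Omega_n}(u_n)$, whence $\cE_{\Omega_n}(v_n) - \cE_{\Omega_n}(u_n) \geq \mu_n\Delta M_n + o(1)$. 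Passing to the limit using $u_n \to u$ locally uniformly (which identifies the left-hand side with the renormalized energy difference of Definition~\ref{def:infinite_GP_GS}) together with $\mu_n \to \mu(\rho)$ yields $\cF_{\mu(\rho)}(v) - \cF_{\mu(\rho)}(u) \geq 0$, proving that $u$ is an infinite ground state at chemical potential $\mu(\rho)$.

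The main technical obstacle is the boundary-adapted uniform $L^\infty$-bound on $u_n$ (not covered verbatim by Theorem~\ref{thm:uniform_bound}, which lives on $\R^d$), needed to produce a locally convergent subsequence and to make all error terms in the expansion genuinely of the claimed order; this is exactly what Section~\ref{sec:proof_bounds} is devoted to. Once it is in hand, the pleasant feature of the argument is that the single rescaling $u \mapsto \sqrt{1+t}\,u$ — which is compatible with both Dirichlet and Neumann conditions and whose energy expansion couples cleanly with the Euler--Lagrange identity — simultaneously delivers the identification of the chemical potential and the transfer of minimality from finite volume to the whole space.
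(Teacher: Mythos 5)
Your proof is correct and follows essentially the same route as the paper's: extract a locally convergent subsequence using the local bounds, identify the Lagrange multipliers' limit as $\mu(\rho)$ via the mass rescaling $u_n\mapsto\sqrt{1+t}\,u_n$, and transfer canonical minimality to the infinite ground state inequality by rescaling a compactly perturbed competitor back onto the mass shell $\lambda_n$. The only difference is one of packaging: the paper factors the key estimate out as the Dobru\v{s}in--Minlos-type Lemma~\ref{lem:DobMin} ($E_{\rm D/N}(\lambda_n+a_n,\Omega_n)-E_{\rm D/N}(\lambda_n,\Omega_n)\to\mu(\rho)a$), proves it by the same $\sqrt{1+t}$ rescaling, and then applies it black-box in the inequality $\cE_{\Omega_n}(u_n+h)-\cE_{\Omega_n}(u_n)\ge E_{\rm D/N}(\lambda_n+a_n,\Omega_n)-E_{\rm D/N}(\lambda_n,\Omega_n)$; you instead unfold the rescaling inline by setting $\tilde v_n=\sqrt{1-t_n}(u_n+h)$ with $\int|\tilde v_n|^2=\lambda_n$ and comparing $\cE_{\Omega_n}(\tilde v_n)\ge\cE_{\Omega_n}(u_n)$ directly. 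Both routes hinge on the same identity $\int|\nabla u_n|^2+\iint w|u_n|^2|u_n|^2=\mu_n\lambda_n$ and the same expansion in $t$, so the ideas are the same and the error accounting ($t_n=O(|\Omega_n|^{-1})$, quartic correction $O(|\Omega_n|^{-1})$) is carried out correctly. For the identification $\mu_*=\mu(\rho)$, your two-sided argument together with the convexity of $e$ in fact proves differentiability of $e$ at $\rho$ directly (as in the proof of Lemma~\ref{lem:dual_mass}); the paper prefers to invoke the $C^1$ regularity from Theorem~\ref{thm:thermo}, but the content is the same.
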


The theorem provides the \emph{equivalence of ensembles}, in the sense that we always obtain a \emph{grand-canonical} infinite ground state, even starting with the \emph{canonical} problem. The reader could have expected that limits of canonical minimizers should have an additional constraint of the form $\int_{\R^d}(|v|^2-|u|^2)=0$ in~\eqref{eq:local_min} or~\eqref{eq:min_u_v} but our result shows that there is no such constraint. Our proof is inspired by another famous work in statistical mechanics, by Dobru\v{s}in and Minlos~\cite{DobMin-67}, and the details can be found in Section~\ref{sec:proof_canonical}.

\subsubsection{Thermodynamic properties of infinite ground states}

We conclude this section by providing the announced universal thermodynamic properties of infinite ground states. In particular, we prove that they all have the same density $\rho$ and energy per unit volume $e(\rho)$. This is sometimes called ``equidistribution of mass and energy''~\cite{RotSer-15,PetRot-18}.

\begin{theorem}[Thermodynamic properties of infinite ground states]\label{thm:prop_GS}
Let $d\geq1$ and $w$ satisfy Assumption~\ref{ass:w}.
Let $u$ be an infinite ground state as in Definition~\ref{def:infinite_GP_GS}, for some $\mu>0$. Then we have
\begin{equation}
\lim_{R\to\ii}\sup_{\tau\in\R^d}\left|\frac{\cF_{\mu,B(\tau,R)}(u\1_{B(\tau,R)})}{|B_R|}-f(\mu)\right|=0
 \label{eq:local_free_energy}
\end{equation}
for its local free energy. Its mass satisfies
\begin{multline*}
\rho_-(\mu)\leq \liminf_{R\to\ii}\inf_{\tau\in \R^d}\frac1{|B_R|}\int_{B(\tau,R)}|u|^2\\
\leq\limsup_{R\to\ii}\sup_{\tau\in\R^d}\frac1{|B_R|}\int_{B(\tau,R)}|u|^2\leq\rho_+(\mu).
\end{multline*}
For its energy, we have
\begin{multline*}
\tilde e_-(\mu)\leq \liminf_{R\to\ii}\inf_{\tau\in\R^d}\frac{\cE_{B(\tau,R)}(u\1_{B(\tau,R)})}{|B_R|}\\
\leq \limsup_{R\to\ii}\sup_{\tau\in\R^d}\frac{\cE_{B(\tau,R)}(u\1_{B(\tau,R)})}{|B_R|}\leq \tilde e_+(\mu).
\end{multline*}
Finally, its average momentum per unit volume vanishes:
\begin{equation}
\lim_{R\to\ii} \sup_{\tau \in \R^d}|B_R|^{-1}\left|\int_{B(\tau,R)}\overline{u(y)}\nabla u(y)\,\dy\right|=0.
 \label{eq:momentum}
\end{equation}
\end{theorem}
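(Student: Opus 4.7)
My strategy is to compare $\cF_{\mu, B(\tau, R)}(u \1_B)$ with the Dirichlet and Neumann minimum energies $F_{D/N}(\mu, B_R)$ and invoke Theorem~\ref{thm:thermo} to pass to the thermodynamic limit. The free energy statement \eqref{eq:local_free_energy} is the heart of the matter; the mass, energy and momentum statements will follow by algebraic manipulations once it is established uniformly in $\tau$.

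For the upper bound in \eqref{eq:local_free_energy} I let $w_0 \in H^1_0(B(\tau, R-\ell))$ minimize $F_D(\mu, B(\tau, R-\ell))$ and choose a smooth cutoff $\chi$ with $\chi \equiv 1$ on $B(\tau, R-\ell)$ and $\supp \chi \subset B(\tau, R)$. The function $v := \chi w_0 + (1-\chi) u$ coincides with $u$ outside $B(\tau, R)$, so \eqref{eq:local_min} applies, and a direct algebraic expansion yields
$$0 \le \cF_\mu(v) - \cF_\mu(u) = \bigl[\cF_{\mu, B(\tau,R)}(v\1_B) - \cF_{\mu, B(\tau,R)}(u\1_B)\bigr] + R_{\rm cross},$$
where $R_{\rm cross}$ collects the interaction between $|v|^2 - |u|^2$ (supported in $B(\tau,R)$) and $|u|^2$ outside. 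Using the uniform $C^1$ bound of Theorem~\ref{thm:uniform_bound}, the decay $|w(x)| \lesssim |x|^{-s}$ with $s > d$, and taking $\ell \to \infty$ with $\ell = o(R)$, both $R_{\rm cross}$ and the annulus contribution to $\cF_{\mu, B(\tau, R)}(v\1_B)$ are $o(|B_R|)$; combined with $F_D(\mu, B(\tau, R-\ell))/|B_R| \to f(\mu)$ from Theorem~\ref{thm:thermo}, this proves the upper bound. The lower bound is easier: $u|_{B(\tau,R)} \in H^1(B(\tau, R))$ is admissible for $F_N(\mu, B(\tau, R))$, and Theorem~\ref{thm:thermo} gives $F_N(\mu, B_R)/|B_R| \to f(\mu)$, uniformly in $\tau$ by translation invariance.

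The mass bound then follows by perturbing the chemical potential. Writing $\cF_{\mu+\eps, B(\tau,R)}(u\1_B) = \cF_{\mu, B(\tau,R)}(u\1_B) - \eps \int_B |u|^2$ and combining \eqref{eq:local_free_energy} at $\mu+\eps$ with $\cF_{\mu+\eps, B}(u\1_B) \ge F_N(\mu+\eps, B_R)$ yields $|B_R|^{-1}\int_B |u|^2 \le -(f(\mu+\eps)-f(\mu))/\eps + o(1)$ uniformly in $\tau$; passing to $\limsup_R$ first and then $\eps \searrow 0$ produces $\rho_+(\mu)$ as the upper bound, and the $\rho_-(\mu)$ lower bound comes from the symmetric choice $\eps \nearrow 0$. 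The energy bounds follow from $\cE_{B(\tau,R)}(u\1_B) = \cF_{\mu, B(\tau,R)}(u\1_B) + \mu \int_B |u|^2$ together with \eqref{eq:local_free_energy} and the Legendre relation $\tilde e_\pm(\mu) = f(\mu) + \mu \rho_\pm(\mu)$. For the momentum, the real part of $|B_R|^{-1} \int_B \bar u \nabla u$ equals $\frac{1}{2|B_R|} \int_{\partial B(\tau,R)} |u|^2 \vec n \, d\sigma = O(1/R)$ uniformly by Theorem~\ref{thm:uniform_bound}. For the imaginary part, the key idea is to use the ``boost'' $u_k(x) := e^{-i k \cdot (x-\tau)} u(x)$ as a trial function for $F_N(\mu, B(\tau, R))$: since $|u_k| = |u|$, only the kinetic piece changes and
$$\cF_{\mu, B(\tau,R)}(u_k \1_B) - \cF_{\mu, B(\tau,R)}(u \1_B) = |k|^2 \int_B |u|^2 - 2 k \cdot \int_B \Im(\overline u \nabla u).$$
Combined with the asymptotic $\cF_{\mu, B(\tau,R)}(u\1_B) - F_N(\mu, B_R) = o(|B_R|)$ from \eqref{eq:local_free_energy}, this gives $|k|^2 \int_B |u|^2 - 2 k \cdot \int_B \Im(\overline u \nabla u) \ge -o(|B_R|)$ for every $k \in \R^d$; optimizing at $k^* = \int_B \Im(\overline u \nabla u)/\int_B |u|^2$ yields $|\int_B \Im(\overline u \nabla u)|^2 \le o(|B_R|) \|u\|_\infty^2 |B_R|$, so $|B_R|^{-1}|\int_B \Im(\overline u \nabla u)| \to 0$ uniformly.

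The main technical difficulty will be the uniform-in-$\tau$ control of the boundary and cross terms in the upper bound for the free energy, particularly for $w$ with only mild decay: in the regime $d < s < d+1$ the cross interaction can scale like $R^{2d-s}$, which is $o(R^d)$ only by a small margin, so a careful slicing of $B(\tau, R)$ by distance to the boundary will be needed. The other three items are elegant corollaries of \eqref{eq:local_free_energy}; in particular the boost trick for the momentum converts an admissibility comparison into a Legendre-type minimization whose optimum extracts a quantitative decay rate for $|B_R|^{-1}\int_B \Im(\overline u \nabla u)$ directly from the free-energy asymptotic.
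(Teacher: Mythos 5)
Your proof is correct and follows the paper's architecture: establish the free-energy asymptotic \eqref{eq:local_free_energy} by sandwiching $\cF_{\mu,B(\tau,R)}(u\1_B)$ between the Neumann value $F_{\rm N}(\mu,B_R)$ (trivial inclusion, with the external-bath term $o(R^d)$ since $u\in L^\ii$) and the Dirichlet value (via a trial function that glues a Dirichlet minimizer inside to $u$ in a transition annulus), then derive mass/energy bounds by perturbing $\mu$, and use a phase boost for the momentum. The two differences are minor but worth noting: for the mass bound, you compare $\cF_{\mu\pm\eps,B}(u\1_B)\ge F_{\rm N}(\mu\pm\eps,B_R)$ directly and take a difference quotient of $f$, which gives $\rho_\pm(\mu)=-f'_\pm(\mu)$ in one step and avoids invoking the paper's Lemma~\ref{lem:dual_mass} (which detours through the canonical energy $e$); this is a cleaner route to the same conclusion. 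For the momentum, you optimize the boost vector $k$ and read off a quantitative bound $|\int_B\Im(\overline u\nabla u)|^2\lesssim o(R^d)\int_B|u|^2$, whereas the paper uses a scalar boost $e^{itx_j}$ per coordinate, changes $t\mapsto -t$ for a two-sided bound, and sends $R\to\ii$ before $t\to0$; the two are equivalent but your version is slightly sharper in form. One small wording slip: in the mass step you write ``combining \eqref{eq:local_free_energy} at $\mu+\eps$'', but \eqref{eq:local_free_energy} at $\mu+\eps$ holds for ground states \emph{at} $\mu+\eps$, not for $u$ — what you actually use is \eqref{eq:local_free_energy} at $\mu$ together with $F_{\rm N}(\mu+\eps,B_R)/|B_R|\to f(\mu+\eps)$, which is what your inequality chain correctly implements.
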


The proof of the theorem, which is provided in Section~\ref{sec:prop_GS}, relies on the fact that $u$ is an exact minimizer of Problem~\eqref{eq:min_u_v} involving the external bath depending on itself. It is thus an approximate minimizer for the Dirichlet and Neumann problems, and must therefore satisfy similar properties as the exact minimizers of these problems.

If $f$ is $C^1$ at $\mu$, then we obtain
$$\lim_{R\to\ii}\sup_{\tau\in\R^d}\left|\frac1{|B_R|}\int_{B(\tau,R)}|u|^2-\rho\right|=\lim_{R\to\ii}\sup_{\tau\in\R^d}\left|\frac{\cE_{B(\tau,R)}(u\1_{B(\tau,R)})}{|B_R|}-e(\rho)\right|=0$$
where $\rho=\rho(\mu)=-f'(\mu)$.

Theorem~\ref{thm:prop_GS} says that any infinite ground state has to have the free energy per unit volume $f(\mu)$, whereas Corollary~\ref{cor:cnst} states that constant functions are always infinite ground states when $\widehat w\geq0$. Since it is easy to compute the free energy of the constant, we immediately obtain the following.

\begin{corollary}[Specific energies for positive-definite $w$]\label{cor:cnst_e_f}
Let $d\geq1$ and $w$ satisfy Assumption~\ref{ass:w}. If $w$ has a non-negative Fourier transform, $\widehat w\geq0$, then
\begin{equation}
f(\mu)=-\frac{\mu^2}{2 \int_{\R^d}w},\qquad e(\rho)=\frac{\rho^2}{2}\int_{\R^d}w,\qquad \mu(\rho)=\rho\int_{\R^d}w
\label{eq:f_e_cnst}
\end{equation}
for all $\mu,\rho>0$.
\end{corollary}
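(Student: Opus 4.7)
The strategy is to combine Corollary~\ref{cor:cnst} (which tells us that under $\widehat w\geq 0$ the constant function is always an infinite ground state) with Theorem~\ref{thm:prop_GS} (which tells us that every infinite ground state realizes the specific free energy $f(\mu)$). Once $f$ is identified, the remaining formulas follow by the Legendre duality of Theorem~\ref{thm:thermo}.

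First, I fix $\mu>0$ and take the constant solution $u_0\equiv\sqrt{\mu/\int w}$ from~\eqref{eq:u_cnst_intro}, which by Corollary~\ref{cor:cnst} is an infinite ground state. I then compute the local free energy of $u_0\1_{B(\tau,R)}$ directly: the gradient term vanishes, and
\[
\cF_{\mu,B(\tau,R)}(u_0\1_{B(\tau,R)})
=\frac{\rho_0^2}{2}\iint_{B(\tau,R)\times B(\tau,R)}w(x-y)\,\dx\,\dy-\mu\,\rho_0\,|B_R|,
\]
where $\rho_0=\mu/\int w$. A short Fubini argument, writing the double integral as $\int w(z)\,|B_R\cap(B_R+z)|\,\rd z$, shows that
\[
\frac{1}{|B_R|}\iint_{B(\tau,R)\times B(\tau,R)}w(x-y)\,\dx\,\dy\;\xrightarrow[R\to\infty]{}\;\int_{\R^d}w,
\]
uniformly in $\tau$, since $|B_R\cap(B_R+z)|/|B_R|\to 1$ pointwise and is dominated by $1$, while $w\in\R_+\delta_0+L^1(\R^d)$ (the Dirac part simply contributes $|B_R|$ times its coefficient). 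Applying~\eqref{eq:local_free_energy} from Theorem~\ref{thm:prop_GS} to this particular infinite ground state then yields
\[
f(\mu)=\frac{\rho_0^2}{2}\int_{\R^d}w-\mu\rho_0
=\frac{1}{2}\cdot\frac{\mu^2}{\int w}-\frac{\mu^2}{\int w}
=-\frac{\mu^2}{2\int_{\R^d}w}.
\]

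Next, I compute $e$ via the Legendre transform $e(\rho)=\max_{\mu\geq 0}\{f(\mu)+\mu\rho\}$ from~\eqref{eq:Legendre}. With $f(\mu)=-\mu^2/(2\int w)$ the maximum is attained at the unique critical point $\mu=\rho\int w\geq 0$, yielding
\[
e(\rho)=-\frac{\rho^2\int w}{2}+\rho^2\int w=\frac{\rho^2}{2}\int_{\R^d}w.
\]
Finally, $\mu(\rho)=e'(\rho)=\rho\int_{\R^d}w$, as asserted.

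There is no genuine obstacle here: the only point requiring a mild argument is the uniform-in-$\tau$ convergence of the normalized double integral of $w$ to $\int w$, which is dominated convergence applied to Fubini's representation. Everything else is algebraic identification plus invocation of Corollary~\ref{cor:cnst}, Theorem~\ref{thm:prop_GS}, and Theorem~\ref{thm:thermo}.
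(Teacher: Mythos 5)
Your proposal is correct and takes essentially the same route as the paper: the paper also combines Corollary~\ref{cor:cnst} (constant is an infinite ground state when $\widehat w\geq0$) with Theorem~\ref{thm:prop_GS} (every infinite ground state realizes $f(\mu)$), computes the free energy per unit volume of the constant directly, and recovers $e(\rho)$ and $\mu(\rho)$ from the Legendre relations of Theorem~\ref{thm:thermo}.
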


We see that there are no special phase transition visible at the level of the energies when $\widehat w\geq0$. It is not difficult to give a direct proof of the corollary, by looking at the Neumann problem in a bounded domain $\Omega$ and passing to the thermodynamic limit. Note, however, that the constant function is not an exact ground state in a ball since $\1_\Omega\ast w$ is not exactly constant over $\Omega$.

\subsection{Phase transition}  \label{sec:main-result-4}

Now we are ready to discuss the macroscopic breaking of translational symmetry and the possibility of having infinite ground states which are really far from a constant, in modulus.

We recall that the constant function
\begin{equation}
\boxed{u_{\rm cnst}(x)=\left(\frac{\mu}{\int_{\R^d}w}\right)^{\frac12}}
\label{eq:cnst}
\end{equation}
is interpreted as a \emph{fluid equilibrium state} of the GP equation. The latter has the free energy, density and energy per unit volume given by
\begin{equation}
f_{\rm cnst}(\mu)=-\frac{\mu^2}{2\int_{\R^d} w},\qquad \rho_{\rm cnst}(\mu)=\frac{\mu}{\int_{\R^d}w},\qquad e_{\rm cnst}(\rho)=\frac{\rho^2}{2}\int_{\R^d} w.
\label{eq:cnst_f_e}
\end{equation}
Recall that $\int_{\R^d}w>0$ under Assumption~\ref{ass:w}. Plugging the constant function in the Neumann problem and taking the thermodynamic limit, we obtain
$$f(\mu)\leq f_{\rm cnst}(\mu),\qquad e(\rho)\leq e_{\rm cnst}(\rho)$$
for all $\mu,\rho>0$. We will first investigate whether there is equality, before we ask whether the constant function $u_{\rm cnst}$ is actually an infinite ground state.

It turns out that there can be only one phase transition. Once $e(\rho)$ starts to be strictly below $e_{\rm cnst}(\rho)$, it stays below it for all larger $\rho$'s. The system can never go back to a fluid phase when the density is increased. To see this, we introduce the function
\begin{equation}
\phi(\rho):=\frac{e(\rho)}{\rho}-\frac{\rho}{2}\int_{\R^d}w
\label{eq:def_phi_critical_rho}
\end{equation}
which is concave and non-positive with $\phi(0)=0$, hence also non-increasing, by Theorem~\ref{thm:thermo}.

\begin{definition}[Critical density for the fluid-solid transition]\label{def:rho_c}
We define $\rho_c\in[0,\ii]$ to be the largest constant such that $\phi\equiv0$ on $[0,\rho_c)$. Correspondingly, we define
$\mu_c:=\rho_c\int_{\R^d}w$.
\end{definition}

Since $\phi<0$ on $(\rho_c,\ii)$, we deduce immediately that
\begin{equation}
e(\rho)-e_{\rm cnst}(\rho)\begin{cases}
=0&\text{for $\rho\leq\rho_c$,}\\
<0&\text{for $\rho>\rho_c$,}
\end{cases}\quad
f(\mu)-f_{\rm cnst}(\mu)\begin{cases}
=0&\text{for $\mu\leq\mu_c$,}\\
<0&\text{for $\mu>\mu_c$,}
\end{cases}
 \label{eq:f_not_cnst}
\end{equation}
after taking the Legendre transform.
By Theorem~\ref{thm:thermo} we know that the free energy per unit volume of an infinite ground state must be $f(\mu)$. For the constant function~\eqref{eq:cnst} this is $f_{\rm cnst}(\mu)$, hence the strict inequality in~\eqref{eq:f_not_cnst} implies that \emph{$u_{\rm cnst}$ is not an infinite ground state for $\mu>\mu_c$}.

By concavity we must have $\phi'<0$ on the interval $(\rho_c,\ii)$ and after differentiation, this implies
\begin{equation}
\mu(\rho)=e'(\rho)=\rho\phi'(\rho)+\frac{e(\rho)}{\rho}+\frac\rho2\int_{\R^d}w<\rho\int_{\R^d}w,\qquad\forall \rho>\rho_c,
\label{eq:strict_phi_super_critical}
\end{equation}
by~\eqref{eq:f_not_cnst}. Hence the chemical potential satisfies
\begin{equation}
\mu(\rho)-\rho\int_{\R^d}w\begin{cases}
=0&\text{for $\rho\leq\rho_c$,}\\
<0&\text{for $\rho>\rho_c$.}
\end{cases}
\label{eq:critical_mu}
\end{equation}
In other words, after the transition the density is larger than that of the fluid, for a given $\mu$.

Note that there are, in fact, two critical densities. In Definition~\ref{def:rho_c} we have introduced the \emph{left} critical density
$$\rho_c=\rho_-(\mu_c)=-f'_-(\mu_c)=\mu_c\left(\int_{\R^d}w\right)^{-1}.$$
The \emph{right} critical density
$$\rho'_c=\rho_+(\mu_c)=-f'_+(\mu_c)$$
is also the largest number such that $\rho\mapsto e(\rho)$ is linear on the interval $[\rho_c,\rho_c']$, precisely:
$$e(\rho)=\left(\rho-\frac{\rho_c}{2}\right)\rho_c\int_{\R^d}w,\qquad \forall \rho\in[\rho_c,\rho_c'].$$
When $\rho'_c>\rho_c$ this is called a \emph{first order phase transition} (the derivative of $f$ has a jump). In general we know that $f$ ceases to be a real-analytic function at $\mu_c$, otherwise it would coincide with $f_{\rm cnst}$ on the right of $\mu_c$. We say that the phase transition is of order $N$ if $f$ admits $N$ derivatives on the right of $\mu_c$ and all the derivatives $f^{(n)}$ are continuous at $\mu_c$ for $n<N$ whereas $f^{(N)}$ is not. A transition of infinite order is in principle possible.

Now that we know that there can be \emph{at most one} fluid-solid phase transition, we ask whether it exists or not. The easiest case is when $w$ is positive-definite, $\widehat w\geq0$. Then we already know from Corollary~\ref{cor:cnst_e_f} that $\mu_c=\rho_c=+\ii$, hence the system is always a fluid. We thus have to study the case where $\widehat w$ changes sign. In this case we will prove that the critical $\mu_c$ and $\rho_c$ are always \emph{finite and strictly positive}.

To determine whether the constant function could be an infinite ground state, a natural idea is to look at its linear (in)stability. Of course, linear instability implies global instability but the converse is not necessarily true. For the (real-valued) constant solution $u_{\rm cnst}$, linear stability amounts to the conditions
\begin{equation}
\begin{cases}
  -\Delta+w\ast u_{\rm cnst}^2-\mu=-\Delta \geq0,\\
  -\Delta+2\frac{\mu}{\int_{\R^d}w}w(x-y)\geq0,
  \end{cases}
  \label{eq:linearly_stable2}
\end{equation}
from~\eqref{eq:linearly_stable} in Lemma~\ref{lem:1st_2nd_order}. The first condition is always true and the second can be reformulated as
\begin{equation}
|k|^2+\frac{2\mu}{\widehat{w}(0)}\widehat{w}(k)\geq0,\qquad\forall k\in\R^d,
\label{eq:linearization_cnst_Fourier}
\end{equation}
after passing to the Fourier domain. This is obviously satisfied for all $\mu$ when $\widehat{w}\geq0$, which we already knew from Corollaries~\ref{cor:cnst} and~\ref{cor:cnst_e_f}. On the other hand, this is trivially wrong for a large enough $\mu$ when $\widehat{w}$ changes sign. This fact is well known and has appeared many times in the physical literature~\cite{Gross-57,KirNep-71,Nepomnyashchii-71,NepNep-71}. Similar results in the classical case can be read in~\cite{GatPen-69, Gates-72,ButLeb-05}. Our conclusion is that the constant function cannot be an infinite ground state for large $\mu$ if $\widehat{w}(k)$ changes sign.

The following theorem is our main result about the phase transition. It  gives explicit estimates for $\mu_c$ and more information about infinite ground states below and above $\mu_c$.

\begin{theorem}[Fluid-solid phase transition]\label{thm:phase_transitions}
Let $d\geq1$ and $w$ satisfy Assumption~\ref{ass:w}. Let $\mu_c$ be as in Definition \ref{def:rho_c}.

\smallskip
\noindent
{\rm (i)} We have $\mu_c=+\ii$ if and only if $\widehat{w}\geq0$. More precisely, if $\widehat{w}_-\neq0$, then
\begin{equation}
\mu_c\leq \min_{k\in\R^d}\frac{|k|^2\widehat{w}(0)}{2\widehat{w}_-(k)} < \infty.
\label{eq:estim_mu_c}
\end{equation}

\medskip
\noindent
{\rm (ii)} We always have $\mu_c>0$. More precisely, we have the explicit lower bound
\begin{equation}
\mu_c \ge \inf_{k\in \R^d} \frac{ |k|^2 \hat w(0)}{2^{\frac52} \Big( (1+\alpha) |\hat w(0) - \hat w(k)| -  \eps (2\pi)^{-\frac{d}2}  \big(1-  \frac{|k|^2r^2}6 \big)_+^2  \Big)_+}>0
\label{eq:lower_bd_mu_c}
\end{equation}
where $\eps,r$ are the constants appearing in Assumption~\ref{ass:w} and where $\alpha:=(\int_{\R^d}|w|)^{1/2} \max  \big(\eps^{-1/2},r\big).$

\medskip
\noindent
{\rm (iii)} If $\mu\le \mu_c$, then the constant function \eqref{eq:cnst} is an infinite ground state. Moreover, for any infinite ground state $u$ for $\mu<\mu_c$ and for almost every direction $\omega\in\bS^{d-1}$, there exist a constant $\theta_0\in\R$ and two sequences $s_n,r_n\to \ii$ such that
\begin{equation}
\lim_{n\to \infty} \norm{u-e^{i\theta_0}u_{\rm cnst}}_{L^\ii(B(s_n \omega,r_n))} =0.
\label{eq:min_max_const}
\end{equation}

\medskip
\noindent
{\rm (iv)} Any infinite ground state $u$ for $\mu>\mu_c$ satisfies
\begin{equation}
\liminf_{R\to\ii}\inf_{\tau\in \R^d}\left(\max_{B(\tau,R)}|u|^2-\min_{B(\tau,R)}|u|^2\right)^2\geq \frac{\rho}{\int_{\R^d}|w|}\left(\rho\int_{\R^d}w-\mu\right)>0
\label{eq:min_max_periodic}
\end{equation}
where $\rho=\rho_-(\mu)=-f'_-(\mu)$.
\end{theorem}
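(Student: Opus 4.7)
A direct expansion using $u_{\rm cnst}^2\int_{\R^d} w=\mu$ gives, for every compactly supported $h$ with finite interaction energy,
\begin{equation*}
\cF_\mu(u_{\rm cnst}+h)-\cF_\mu(u_{\rm cnst})=\int_{\R^d}|\nabla h|^2\,\dx+\frac12\iint_{\R^d\times\R^d}\eta(x)\eta(y)w(x-y)\,\dx\,\dy,
\end{equation*}
with $\eta:=2u_{\rm cnst}\Re(h)+|h|^2$. Thus $u_{\rm cnst}$ is an infinite ground state if and only if this quadratic functional is non-negative on every such $h$. Given an $h$ with strictly negative functional, a tiling argument (place $N$ well-separated translates of $h$ in a box $\Omega_L\nearrow\R^d$; cross interactions are $o(N)$ since $w\in L^1$) produces trial states proving $f(\mu)<f_{\rm cnst}(\mu)$, i.e.\ $\mu>\mu_c$. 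Conversely if the functional is non-negative, the constant is an infinite ground state and $f(\mu)=f_{\rm cnst}(\mu)$. This proves the first equivalence in (iii) and reduces (i)--(ii) to the study of this single quadratic form.

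\textbf{Paragraph 2 (proof of (i) and (ii)).}
For (i), if $\widehat w(k_0)<0$ at some $k_0$, test with $h(x):=\epsilon\chi(x/L)\cos(k_0\cdot x)$ for $\chi$ a smooth compactly supported bump. As $L\to\infty$, the dominant $\epsilon^2$ contribution to the functional has the sign of $|k_0|^2+2\mu\widehat w(k_0)/\widehat w(0)$, which is negative when $\mu>|k_0|^2\widehat w(0)/(2\widehat w_-(k_0))$; optimizing over $k_0$ gives \eqref{eq:estim_mu_c}. For (ii), we prove non-negativity of the same functional for small $\mu$. In Fourier, write $w=\eps\,\delta_r\ast\delta_r+w_2$; the first piece contributes $\eps\|\eta\ast\delta_r\|_{L^2}^2$, whose Fourier density $\eps(2\pi)^{d/2}|\widehat{\delta_r}(k)|^2$ is bounded below by $\eps(2\pi)^{-d/2}\bigl(1-|k|^2r^2/6\bigr)_+^2$ using $\sin t\ge t-t^3/6$. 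Splitting $\widehat w(k)=\widehat w(0)-(\widehat w(0)-\widehat w(k))$, the first summand contributes $\widehat w(0)|\widehat\eta(0)|^2\ge 0$, and the negative contribution of size $|\widehat w(0)-\widehat w(k)|\,|\widehat\eta(k)|^2$ is absorbed into the kinetic term $\int|k|^2|\widehat h|^2$ and the positive-definite piece above; the cubic and quartic pieces of $\eta$ produce the factor $(1+\alpha)$ in \eqref{eq:lower_bd_mu_c} through Cauchy--Schwarz with $\alpha=(\int_{\R^d}|w|)^{1/2}\max(\eps^{-1/2},r)$. Taking the infimum over $k$ yields \eqref{eq:lower_bd_mu_c}, which is strictly positive since the denominator vanishes both as $|k|\to 0$ and as $|k|\to\infty$ while the numerator behaves as $|k|^2$.

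\textbf{Paragraph 3 (second claim of (iii)).}
Fix $\mu<\mu_c$ and let $u$ be an infinite ground state. Since $f\equiv f_{\rm cnst}$ on $[0,\mu_c]$ is real-analytic near $\mu$, Theorem~\ref{thm:prop_GS} forces the local density and local free energy of $u$ on $B(\tau,R)$ to converge uniformly in $\tau$, as $R\to\infty$, to $\rho_{\rm cnst}(\mu)|B_R|$ and $f_{\rm cnst}(\mu)|B_R|$ respectively. For any direction $\omega$, choose $s_n\to\infty$ and $r_n\to\infty$ slowly and extract, using the uniform $C^k$ bounds of Theorem~\ref{thm:uniform_bound}, a subsequence of translates $u(\cdot+s_n\omega)$ converging locally uniformly on $B_{r_n}$ to a limit $v\in H^1_{\rm unif}(\R^d,\C)$. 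The limit $v$ is itself an infinite ground state with density exactly $\rho_{\rm cnst}(\mu)$ and free energy density exactly $f_{\rm cnst}(\mu)$ on every ball. Combining these with the energy identity $\cE_B(v\1_B)=e_{\rm cnst}(\rho_{\rm cnst})|B_R|+o(|B_R|)$ forces $|B_R|^{-1}\int_{B_R}|\nabla v|^2\to 0$ and $|v|^2\to\rho_{\rm cnst}(\mu)$ in a suitable local sense; inserting compact truncations of $v-e^{i\theta_0}u_{\rm cnst}$ into the functional of Paragraph~1 (and fixing $\theta_0$ by a continuity argument) then identifies $v\equiv e^{i\theta_0}u_{\rm cnst}$, giving \eqref{eq:min_max_const}.

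\textbf{Paragraph 4 (proof of (iv)).}
For $\mu>\mu_c$ and an infinite ground state $u$, compare the local free energy on $B=B(\tau,R)$ with that of a trial constant $\sqrt{\rho_0}$. With $g_0:=|u|^2-\rho_0$, $\rho:=\rho_-(\mu)$, and using $\int_B g_0=(\rho-\rho_0)|B|+o(|B|)$ from Theorem~\ref{thm:prop_GS}, a direct expansion gives
\begin{equation*}
\int_B|\nabla u|^2\,\dx+(\rho-\rho_0)|B|\Bigl(\rho_0\!\int_{\R^d}\! w-\mu\Bigr)+\frac12\iint_{B\times B}g_0(x)g_0(y)w(x-y)\,\dx\,\dy=|B|\bigl[f(\mu)+\mu\rho_0-e_{\rm cnst}(\rho_0)\bigr]+o(|B|).
\end{equation*}
Bounding $|\iint g_0g_0w|\le|B|\|g_0\|_{L^\infty(B)}^2\int_{\R^d}|w|\le|B|(\max_B|u|^2-\min_B|u|^2)^2\int_{\R^d}|w|$ and using $\mu<\rho\int_{\R^d} w$ from \eqref{eq:critical_mu} produces a lower bound on the oscillation. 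The main obstacle is extracting precisely the constant $\rho(\rho\int w-\mu)/\int|w|$ in \eqref{eq:min_max_periodic}: this emerges from a careful optimization of $\rho_0$ together with the algebraic identity $f(\mu)+\mu\rho_0-e_{\rm cnst}(\rho_0)=[e(\rho)-e_{\rm cnst}(\rho)]+\tfrac12(\int_{\R^d}w)(\rho_0-\rho)^2$ and the Legendre relation $f(\mu)=e(\rho)-\mu\rho$, together with a careful control of the boundary corrections absorbed in the $o(|B|)$ terms.
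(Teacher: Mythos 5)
Your Paragraph~1 is the same starting point as the paper: by Lemma~\ref{lem:1st_2nd_order} the constant $u_{\rm cnst}$ is an infinite ground state if and only if a quadratic functional in $\eta=2u_{\rm cnst}\Re h+|h|^2$ is non-negative, and a strictly negative value propagated through a tiling argument gives $f(\mu)<f_{\rm cnst}(\mu)$. Paragraph~2's treatment of (i) is also essentially the paper's: the instability of the constant under $\cos(k_0\cdot x)$ perturbations when $\mu$ exceeds the stated bound. Paragraph~3 parallels the paper's proof of (iii), which first derives from Lemma~\ref{lem:variance_kinetic} that the averaged variance of $|u|^2$ and the averaged kinetic energy vanish for $\mu<\mu_c$, then uses Lemma~\ref{lem:cnst_in_one_ball} to extract, in almost every direction, balls where $u$ is uniformly close to a constant; the last step is to pass to the limit and solve $\Delta u_0=0$.

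The genuine gap is in your Paragraph~2 for (ii). You claim that the negative contribution $\int|\hat w(0)-\hat w(k)||\hat\eta(k)|^2\,\rd k$ can be ``absorbed into the kinetic term $\int|k|^2|\hat h|^2$'' together with the superstability piece, with the factor $(1+\alpha)$ coming from a Cauchy--Schwarz on the ``cubic and quartic pieces''. This does not work as stated: the quartic contribution of $\eta$, namely $\iint h^2(x)h^2(y)|w(x-y)|$, is not controlled by $\int|\nabla h|^2$ — take $h$ large and positive on a large ball, slowly varying — and a pure Fourier Cauchy--Schwarz cannot convert it into something dominated by the kinetic energy. The paper resolves this by first restricting to real $h\ge -u_{\rm cnst}$ (Lemma~\ref{lem:1st_2nd_order}(iii)), writing $h=h_+-h_-$ with $0\le h_-\le u_{\rm cnst}$ and $f_\pm=2u_{\rm cnst}h_\pm\pm h_\pm^2$, and then further decomposing the positive piece $f_+=g_1+g_2^2$ with an explicit interpolation at the scale $h_+\sim\alpha u_{\rm cnst}$, namely $g_1 = 2u h_+ + \frac{4\alpha^2 u^2 h_+^2}{4\alpha^2 u^2+h_+^2}$ and $g_2 = \frac{h_+^2}{\sqrt{4\alpha^2 u^2+h_+^2}}$. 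The kinetic energy controls $g_1$ and $f_-$ (both of which scale linearly in $h$), while the superstability term $\eps\int(f_+\ast\delta_r)^2$ controls $g_2^2$ (which is quadratic); the off-diagonal cross terms are estimated by Cauchy--Schwarz against these coercive pieces. The factor $2^{5/2}$ and the precise role of $\alpha=(\int|w|)^{1/2}\max(\eps^{-1/2},r)$ come from optimizing this two-scale splitting, not from a generic Fourier estimate on $\hat\eta$. Without this decomposition the argument does not close.

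For (iv) you take a different and more awkward route: comparing with a trial constant $\sqrt{\rho_0}$ and optimizing, and you yourself flag that you cannot cleanly extract the constant $\rho(\rho\int w-\mu)/\int|w|$. The paper's proof (Lemma~\ref{lem:variance_kinetic}) avoids trial states entirely: it records that any infinite ground state satisfies $K+I/2=e(\rho)$ (from Theorem~\ref{thm:prop_GS}) and $K+I=\mu\rho$ (from integrating the GP equation against $u$ over a large ball, with boundary terms $O(R^{d-1})$), so that $I=2(\mu\rho-e(\rho))$. Combined with the strict inequality $2e(\rho)>e'(\rho)\rho$ for $\rho>\rho_c$ (Lemma~\ref{lem:e_rho}, a consequence of the concavity of $e(\rho)/\rho$), this gives $\lim(I-\rho^2\int w)<-\rho(\rho\int w-\mu)$; the Young bound $\iint gg\,w\ge-\int|w|\cdot\int g^2$ with $g=|u|^2-\rho$ then gives the variance bound, and the oscillation bound follows since $||u|^2-\fint|u|^2|\le\max|u|^2-\min|u|^2$. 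You should adopt this identity-based argument; it gives the constant directly and avoids the uncontrolled boundary corrections in your version.
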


The proof is provided in Section~\ref{sec:fluid-solid transition}. On the right side of~\eqref{eq:estim_mu_c} and~\eqref{eq:lower_bd_mu_c}, we use the convention that the function equals $+\ii$ whenever the denominator vanishes.

The heuristic argument leading to (i) has been already explained. The upper bound \eqref{eq:estim_mu_c} essentially follows from the failure of the stability condition \eqref{eq:linearization_cnst_Fourier}. The positive lower bound in (ii) is important as it states that the system is always a fluid at low-enough density. We have provided the explicit bound~\eqref{eq:lower_bd_mu_c} for concreteness. It is not optimal at all and, in fact, our proof gives a slightly better bound which is harder to state. For $k$ large enough the function in the infimum behaves as
$$\frac{|k|^2}{4\sqrt2(1+\alpha)}\underset{|k|\to\ii}{\longrightarrow}+\ii$$
and this is why the infimum is attained and strictly positive.

The statement in (iii) that the constant function $u_{\rm cnst}$ is always an infinite ground state for $\mu\leq\mu_c$ confirms that the system is a fluid in this regime. It also implies that we could have equivalently defined $\mu_c$ as the largest $\mu$ for which the constant function is an infinite ground state, instead of only looking at the energy. Recall that the constant function is not expected to be the  only  ground state. There might exist other ones, for instance describing a vortex in 2D or a vortex line in 3D. The property~\eqref{eq:min_max_const} means that all infinite ground states (for $\mu<\mu_c$) must be ``very flat'' in some sufficiently large regions of space, in almost all possible directions (in dimension $d=1$, the limit~\eqref{eq:min_max_const} just holds for $\omega\in\{\pm1\}$). This is very natural for vortex points where typically $u\to\sqrt\rho e^{i\theta_0}$ at infinity in the direction where the polar angle is $\theta=\theta_0$. For vortex lines in 3D, the function also looks constant far away from the rotation axis. Our proof of (iii) in fact goes by first showing~\eqref{eq:min_max_const}, from which we can deduce by local convergence that the constant function $u_{\rm cnst}$ is an infinite ground state.

In (iv), the positivity of the right side of \eqref{eq:min_max_periodic} follows from~\eqref{eq:critical_mu}. Our estimate~\eqref{eq:min_max_periodic} shows that for $\mu>\mu_c$ the maximum and minimum of $|u|$ differ by a positive amount over any large enough ball. In other words an infinite ground state must oscillate everywhere in space. Note that in the proof we get a similar lower bound on the variance and kinetic energy in any large enough ball (see Lemma~\ref{lem:variance_kinetic}).

Theorem~\ref{thm:phase_transitions} says nothing about the behavior of infinite ground states at $\mu=\mu_c$, which depends on whether $f$ is $C^1$ at $\mu_c$ or not. If $f'_-(\mu_c)=f'_+(\mu_c)$ then we can prove that the property~\eqref{eq:min_max_const} holds for any ground states at $\mu=\mu_c$ as well (see Lemma~\ref{lem:variance_kinetic} below). If $f'$ has a jump at $\mu_c$, then there will be ground states satisfying~\eqref{eq:min_max_const} (e.g. the constant function $u_{\rm cnst}$) and others satisfying~\eqref{eq:min_max_periodic} (e.g. ground states obtained in the limit $\mu\to\mu_c^+$).

A very interesting problem is to find the precise shape of infinite ground states for $\mu>\mu_c$. Numerical simulations in~\cite{PomRic-93,JosPomRic-07b,SepJosRic-10,HenNatPoh-10,KunKat-12,MasJos-13,PreSerBru-18,PolBaiFerBla-24,Lewin-15} for different potentials indicate that they must be periodic in dimensions $d\in\{1,2,3\}$, a property which seems very hard to establish rigorously but would have an important meaning for the theory of supersolids. This is one more instance of the famous {\em crystallization conjecture}~\cite{BlaLew-15}. Convergence to a periodic function has been proved in the large density limit in~\cite{AftBlaJer-07,AftBlaJer-09} for the interaction potential $w=\1_{B_1}$ in dimension $d=1$. Some simple numerical simulations confirming the periodicity in dimension $d=1$ are provided in Figure~\ref{fig:numerics} below.

Another interesting question is to determine the nature of the phase transition. Should $\mu_c$ be equal to the right side of~\eqref{eq:estim_mu_c}, that is, exactly correspond to the value of $\mu$ at which the constant solution becomes unstable, it would be natural to expect a \emph{second-order phase transition} with oscillations developing progressively on top of the constant solution. This is what we seem to observe in the one-dimensional numerical simulations of Figure~\ref{fig:numerics} and what was predicted in the Physics literature as well~\cite{KirNep-71}. In dimension $d=2$, recent simulations in~\cite{MacMacCinPoh-13,PolBaiFerBla-24} rather predict the existence of a \emph{first order transition} where a periodic solution suddenly becomes more favorable than the constant, hence $f'$ has a jump at $\mu=\mu_c$.

We emphasize that there could exist phase transitions within the solid phase~\cite{ZhaMauPoh-19,RipBaiBla-23}. For instance, the phase diagram of Hartree-Fock Jellium (a model describing electrons interacting with the Coulomb potential) was thoroughly studied numerically in~\cite{BagDelBerHol-13,BagDelBerHol-14,BerDelDunHol-08,BerDelHolBag-11} and many transitions were found, including exotic ``incommensurate'' solid phases. The same could happen in our nonlinear model. The large density limit $\mu\to\ii$ is further studied in Section~\ref{sec:high-density} below.

\begin{remark}[Structure of the set of infinite ground states, pure phases]
In statistical mechanics, Gibbs states form a convex set, of which the extreme points have decaying correlations and are interpreted as ``pure phases''~\cite{Ruelle}. Due to the nonlinearity, the convexity does not hold in Gross-Pitaevskii theory where  infinite ground states rather form a kind of manifold. For instance, in the Ginzburg-Landau case $w=\delta_0$ the minimizers are exactly known in 2D and they take the form
\begin{equation}
u(x)=e^{i\theta_0}\quad\text{or}\quad e^{i\theta_0}f(x-x_0) \quad\text{or}\quad  e^{i\theta_0}\overline{f(x-x_0)},
 \label{eq:GL}
\end{equation}
for all $x_0\in\R^2$ and $\theta_0\in[0,2\pi)$, where $f$ describes one vortex at the origin, turning clockwise. Since $f$ converges to a constant at infinity in any fixed direction $\omega\in\bS^1$, we can compactify  and think that the three cases belong to the same ``manifold'' of solutions. With a general interaction $w$ we expect a similar picture in the fluid phase (as we have seen in Theorem~\ref{thm:exist_vortex}), except that several vortices could bind. In Ginzburg-Landau theory, only isolated vortices of topological degree one are stable.

In the solid phase $\mu>\mu_c$ an even more complicated structure should arise. First, vortices have recently been observed in experiments within the supersolid phase for dipolar gases~\cite{Casotti_etal-24}. In addition, there could exist infinite ground states describing an interface between two periodic solutions (one being shifted and/or rotated with respect to the other). Investigating the existence or non-existence of such ground states is an interesting open problem.

Finally, it would be interesting to find an appropriate concept of ``pure phases'' in Gross-Pitaevskii theory. One possible definition of a pure state would be to require that it arises ``at infinity'' from other infinite ground states, in directions $\omega\in\bS^{d-1}$ belonging to a set of positive measure on the sphere, modulo symmetries. With this definition, the statement \textnormal{(iii)} of Theorem~\ref{thm:phase_transitions} means that the constant function $u_{\rm cnst}$ is a pure phase in the fluid region. We conjecture it is the only one satisfying this property (modulo phase), so that the other ground states are only due to the breaking of $U(1)$ symmetry. A similar phenomenon is expected with periodic functions in the solid phase.
\end{remark}

\begin{figure}[t]
\centering
\setlength{\tabcolsep}{0pt}
\captionsetup{width=\linewidth}
\begin{tabular}{cc}
\includegraphics[width=6.4cm]{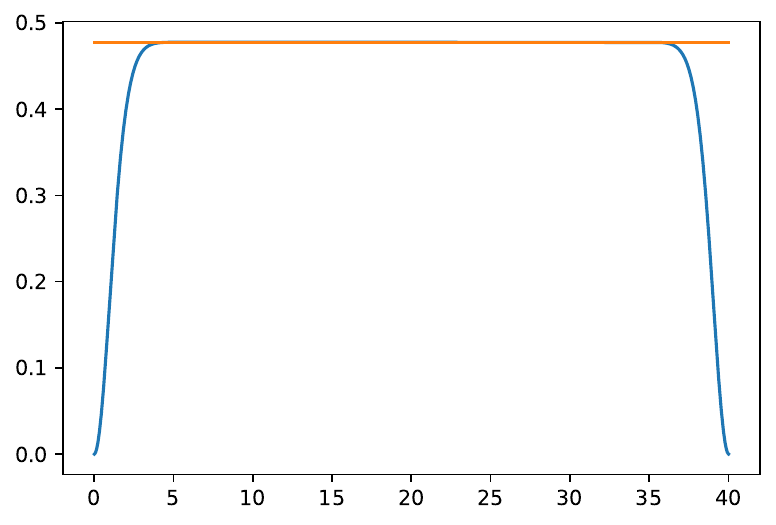}&\includegraphics[width=6.4cm]{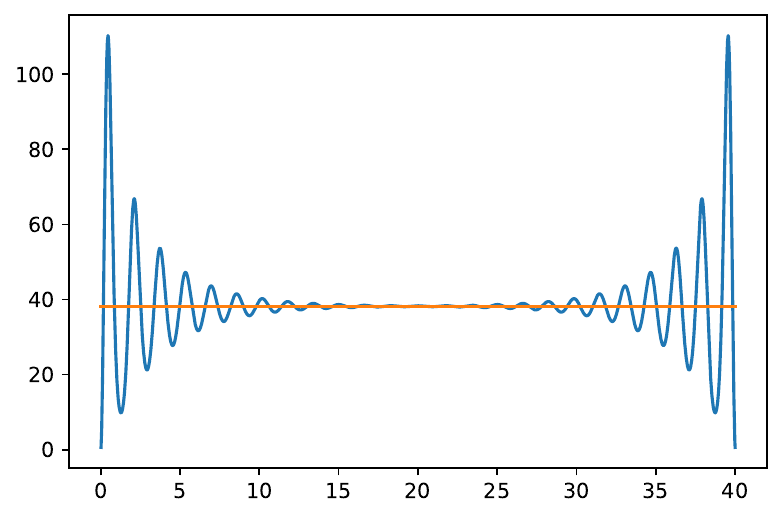}\\
\footnotesize$\mu=1$, $\rho\approx 0.47$ & \footnotesize $\mu=80$, $\rho\approx 39$\\
\includegraphics[width=6.4cm]{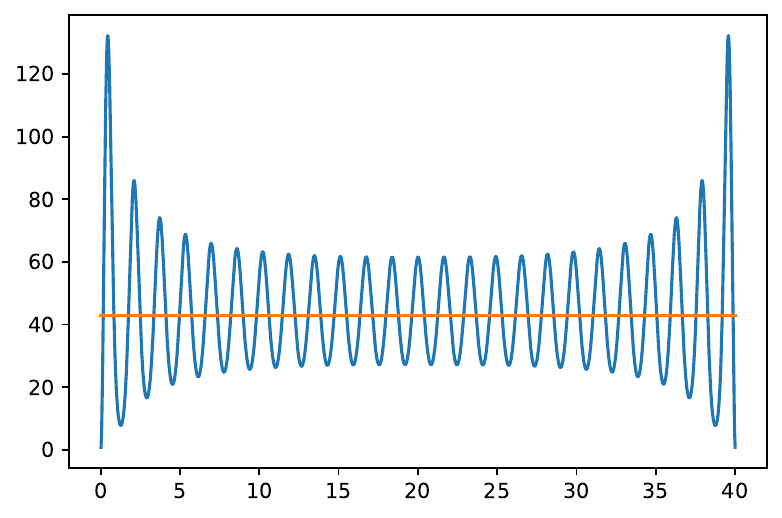}&\includegraphics[width=6.4cm]{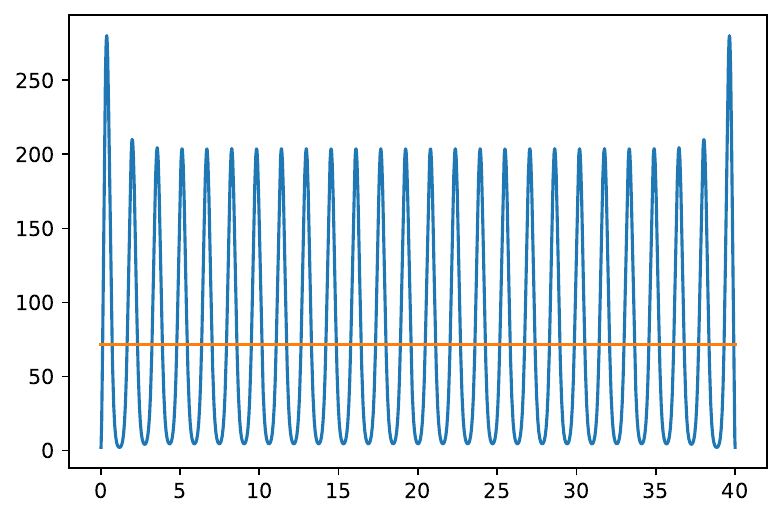}\\
\footnotesize$\mu=90$, $\rho\approx 44$&\footnotesize $\mu=150$, $\rho\approx 77$\\
\end{tabular}

\caption{\small Plot of $|u|^2$ for $u$ the solution to the Dirichlet problem in dimension $d=1$ with the interaction potential $w(x)=(1+x^6)^{-1}$ in the interval $\Omega=(0,40)$ and the mentioned values of $\mu$. Finite differences were used with 50 discretization points per unit length. The displayed horizontal line is the constant solution $u^2_{\rm cnst}=\mu/\int_\R w$ and the mentioned effective density is $\rho=\int_0^{40}|u|^2/40$. These computations confirm the occurrence of a phase transition to a periodic ground state somewhere between $\mu=80$ and $\mu=90$, with oscillations arising from the boundary. Of course, there is no well defined sharp transition in finite volume. Evaluating the right-hand side of~\eqref{eq:estim_mu_c} gives that the constant solution $u_{\rm cnst}$ becomes linearly unstable at $\mu\approx 86.5$, which might thus be the true value of $\mu_c$. This hints for a \emph{second-order phase transition}, with small oscillations developing on top of the constant function~\cite{KirNep-71}. Note that the lower bound in~\eqref{eq:mc-mu2} gives the rather loose estimate $\mu_c\geq 4.6$ for this potential $w$. Observe also that the period of the ground state $u$ does not vary very much with $\mu$, as predicted from Theorem~\ref{thm:high_density} below. There are respectively 25 and 26 peaks for $\mu=90$ and $\mu=150$, hence a period of about $1.6$. For comparison, the frequency at which the linear instability of $u_{\rm cnst}$ happens (the solution of the minimum in~\eqref{eq:estim_mu_c}) is $k_0\approx3.9$ which remarkably gives the same period $2\pi/k_0\approx 1.6$. Finally, for $\mu=150$ the effective density $\rho(\mu)\approx 77$ of the solid is higher than the density $\mu/\int_\R w\approx 71$ of the fluid, as predicted in~\eqref{eq:critical_mu}.
\label{fig:numerics}}
\end{figure}

Next we state two simpler lower bounds on $\mu_c$ under additional conditions on $w$, and compare our result with~\cite{AftBlaJer-09}.

\begin{proposition}[Improved lower bound on $\mu_c$]\label{prop:improved_mu_c}
Let $d\geq1$ and $w$ satisfy Assumption~\ref{ass:w}. Let $\mu_c$ be as in Definition \ref{def:rho_c}.

\noindent If $s>d+2$ and $w$ is radial, then we have
\begin{equation}\label{eq:mc-mu1-off-factor}
\mu_c \ge  \frac{\sqrt{5}-1}{2} \frac{d\int_{\R^d} w}{\int_{\R^d} |x|^2 |w|}>0.
\end{equation}

\smallskip

\noindent If $w\ge 0$, then
\begin{equation}\label{eq:mc-mu2}
\mu_c \ge \inf_{k\in \R^d} \frac{|k|^2\hat w(0)}{2(\hat w(0) - 2 \hat w(k))_+} >0.
\end{equation}
\end{proposition}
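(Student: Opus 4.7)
By Theorem~\ref{thm:thermo} and the trivial upper bound $f\le f_{\rm cnst}$, proving $\mu_c\ge M$ reduces to showing $f(\mu)=f_{\rm cnst}(\mu)$ for every $\mu\le M$, i.e.\ that no competitor $v$ in a large box $\Omega$ beats $u_{\rm cnst}$ modulo $o(|\Omega|)$. For any such $v$, set $g:=|v|^2-\rho_0$ with $\rho_0:=\mu/\int w$. Since $\mu=\rho_0\int w$, the linear-in-$g$ parts of the quartic interaction and of the chemical potential term cancel, yielding the identity (already exploited in the proof of Theorem~\ref{thm:phase_transitions})
\begin{equation*}
\cF_\mu(v)-\cF_\mu(u_{\rm cnst})=\int_{\R^d}|\nabla v|^2\,dx+\frac{(2\pi)^{d/2}}{2}\int_{\R^d}\hat w(k)\,|\hat g(k)|^2\,dk,
\end{equation*}
up to boundary contributions that vanish in the thermodynamic limit. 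The plan is to lower bound the Fourier integral by refining the estimates on $\hat w$ in each of the two regimes.

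For~\eqref{eq:mc-mu2} with $w\ge 0$, the bound $|\hat w(k)|\le\hat w(0)$ suggests the split $\hat w(k)=\tfrac12\hat w(0)-\tfrac12(\hat w(0)-2\hat w(k))$, which gives
\begin{equation*}
\int\hat w\,|\hat g|^2\,dk\ge\tfrac12\hat w(0)\,\|g\|_{L^2}^2-\tfrac12\int(\hat w(0)-2\hat w(k))_+\,|\hat g|^2\,dk.
\end{equation*}
Let $\gamma$ denote the infimum in~\eqref{eq:mc-mu2}, so that $(\hat w(0)-2\hat w(k))_+\le|k|^2/\gamma$ by definition; the last integral is then at most $\gamma^{-1}\|\nabla g\|_{L^2}^2\le 4\gamma^{-1}\|v\|_{L^\infty}^2\|\nabla v\|_{L^2}^2$, using the pointwise bound $|\nabla g|^2=|\nabla|v|^2|^2\le 4|v|^2|\nabla v|^2$. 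For~\eqref{eq:mc-mu1-off-factor}, the analogous step replaces the split by the Taylor estimate
\begin{equation*}
|\hat w(0)-\hat w(k)|\le \frac{|k|^2}{2d\,(2\pi)^{d/2}}\int_{\R^d}|x|^2|w(x)|\,dx,
\end{equation*}
obtained from $1-\cos\theta\le\theta^2/2$ and the radial identity $\int|w(x)|(k\cdot x)^2\,dx=\tfrac{|k|^2}{d}\int|x|^2|w|$, which requires the moment condition $s>d+2$. In both cases the output is a lower bound of the shape
\begin{equation*}
\cF_\mu(v)-\cF_\mu(u_{\rm cnst})\ge \bigl(1-C(\mu,w)\|v\|_{L^\infty}^2\bigr)\|\nabla v\|_{L^2}^2+c(\mu,w)\|g\|_{L^2}^2,
\end{equation*}
with constants matching the denominators of \eqref{eq:mc-mu2} and \eqref{eq:mc-mu1-off-factor}.

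The main obstacle is closing the $L^\infty$-loop: Theorem~\ref{thm:uniform_bound} only provides $\|v\|_{L^\infty}^2\lesssim \mu^{1+d/2}$, which is far too weak at small $\mu$. I would proceed by continuation in $\mu$ coupled with the positive $\|g\|_{L^2}^2$ contribution above: on the maximal interval $[0,\mu^*]$ on which $u_{\rm cnst}$ is a ground state, the equidistribution statement of Theorem~\ref{thm:prop_GS} forces $\|g\|_{L^2}^2/|\Omega|\to 0$, so one may take $\|v\|_{L^\infty}^2\le \rho_0+o(1)=\mu/\int w+o(1)$. Plugging this back turns the nonnegativity constraint into a quadratic inequality in $\mu$: for \eqref{eq:mc-mu1-off-factor} it is of the form $t^2+t\le 1$ with $t:=\mu\int|y|^2|w(y)|\,dy/(d\int w)$, whose positive root $\tfrac{\sqrt 5-1}{2}$ produces the golden-ratio factor.
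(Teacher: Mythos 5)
There is a genuine gap, and you have in fact put your finger on it yourself: the $L^\infty$ control on the competitor $v$ is not available, and the fix you sketch does not deliver it.

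Your starting point is genuinely different from the paper's. You work directly with the energy difference for $g=|v|^2-\rho_0$, reduce to bounding $\int\hat w\,|\hat g|^2$ from below, and try to absorb the negative Fourier modes into the kinetic energy via $\|\nabla g\|_{L^2}^2\le 4\|v\|_{L^\infty}^2\|\nabla v\|_{L^2}^2$. The paper instead verifies the second-order stability inequality from Lemma~\ref{lem:1st_2nd_order} (iii) around $u_{\rm cnst}$, where the admissible perturbation $h$ satisfies $h\geq -u_{\rm cnst}$, so that the ``downward'' part $f_-=2u_{\rm cnst}h_- - h_-^2$ obeys the a priori pointwise bound $0\le f_-\le u_{\rm cnst}^2=\rho_0$. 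This free $L^\infty$ bound on $f_-$ is precisely what plays the role you are trying to assign to $\|v\|_{L^\infty}^2\le\rho_0$, and it comes with no constant, exactly because the base point is a known constant. In addition, $f_-$ and $f_+$ have disjoint supports, which lets the paper subtract $\hat w(0)$ (or $t\hat w(0)$) from $\hat w(k)$ by orthogonality in the cross term; your variable $g$ has no such disjoint decomposition, since it records deviations in both directions at once.

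Concretely, the step ``$\|g\|_{L^2}^2/|\Omega|\to 0$, so one may take $\|v\|_{L^\infty}^2\le\rho_0+o(1)$'' is false: a thin spike of height $H$ and width $O(1)$ contributes $O(H^2)$ to $\|g\|_{L^2}^2$, which is still $o(|\Omega|)$ for $H\to\infty$ slowly, yet makes $\|v\|_{L^\infty}$ arbitrarily large. The variance estimate of Lemma~\ref{lem:variance_kinetic} (or the equidistribution in Theorem~\ref{thm:prop_GS}) controls averages over large balls of infinite ground states, not the sup norm of arbitrary Dirichlet or Neumann competitors in a box; and the continuation argument is circular, because just below $\mu^*$ it again requires a sup-norm control on competitors at chemical potentials $>\mu^*$, which nothing supplies. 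Even if you instead imported the a priori bound $\|v\|_{L^\infty}\le C\sqrt\mu(1+\mu^{d/4})$ from Corollary~\ref{cor:pointwise_bounds_I} for the box minimizers (valid, but with a non-explicit $C$), the constant $C$ would pollute the final inequality and you would not recover the explicit golden-ratio factor in \eqref{eq:mc-mu1-off-factor} nor the sharp form of \eqref{eq:mc-mu2}. In short: the right way to get a clean $L^\infty$ bound here is not to bound $v$ but to perturb around the constant and let the constraint $h\geq -u_{\rm cnst}$ bound the downward fluctuation for free, as Section~\ref{sec:improved_mu_c} does.
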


The proof of Proposition \ref{prop:improved_mu_c} is essentially a refinement of the proof of Theorem \ref{thm:phase_transitions} (ii). It is provided in Section \ref{sec:improved_mu_c}.

Note that if $\varphi$ is a real-valued radial function such that $|x|^2\varphi(x)\in L^1(\R^d)$, then
\begin{align} \label{eq:Taylor-expansion-fourier}
(2\pi)^{d/2}|\hat \varphi(0)-\hat \varphi(k)| &= \left| \int_{\R^d} (1-\cos(k\cdot x)) \varphi(x) \dx \right| \nn\\
&\le  \int_{\R^d} \frac{|k\cdot x|^2}{2} |\varphi(x)| \dx = \frac{|k|^2}{2d} \int_{\R^d} |x|^2 |\varphi(x)| \dx.
\end{align}
Consequently, if $s>d+2$, then
\begin{align} \label{eq:def-mu1}
\inf_{k\in \R^d} \frac{|k|^2\hat w(0)}{2(\hat w(0) - \hat w(k))_+} \ge \frac{d\int_{\R^d} w}{\int_{\R^d} |x|^2 |w|} =: \mu_1,
\end{align}
with equality when $w\ge 0$. In~\cite[Prop.~1.2]{AftBlaJer-09} it was shown that $\mu_c\ge \mu_1$ when $w\ge 0$ and $s>d+2$ (the result was written in a different setting but the proof applies to our case as well). Under these two assumptions on $w$ we can get the additional factor 2 in $\widehat{w}(0)-2\widehat{w}(k)$ so that~\eqref{eq:mc-mu2} is better than~\cite{AftBlaJer-09}. If we do not assume anything concerning the sign of $w$ but still require $s>d+2$, then we only lose the factor $(\sqrt5-1)/2$ in~\eqref{eq:mc-mu1-off-factor} compared with~\cite{AftBlaJer-09}.

\subsection{Low and high density limits}\label{sec:high-density}

Finally we give more information about the limits $\mu\to 0$ and $\mu\to \infty$.

\subsubsection{Low density limit} We have proved in Theorem \ref{thm:phase_transitions} that the constant function \eqref{eq:cnst} is an infinite ground state for all $\mu\le \mu_c$. In the limit $\mu\to 0$ we can prove uniqueness among real-valued functions.

\begin{theorem}[Low-density limit]\label{thm:phase_transitions-low-density} Let $w$ satisfy Assumption~\ref{ass:w}. Then  the constant function $u_{\rm cnst}=(\mu/\int_{\R^d}w)^{1/2}$ is the \textbf{unique real-valued} infinite ground state, up to a sign, for $\mu$ small enough.
\end{theorem}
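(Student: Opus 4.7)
The plan is to argue by contradiction using the rescaling to the Ginzburg--Landau limit $w \to \delta_0$. Assume there exist $\mu_n \to 0^+$ and real-valued infinite ground states $u_n$ at chemical potential $\mu_n$ with $u_n \not\equiv \pm\sqrt{\mu_n/\int_{\R^d} w}$. Introduce the rescaled sequence
\[
v_n(y) := \sqrt{\frac{\int_{\R^d} w}{\mu_n}}\, u_n\!\left(\frac{y}{\sqrt{\mu_n}}\right),
\]
which is a real-valued infinite ground state at chemical potential $1$ for the rescaled equation $-\Delta v + (W_n \ast v^2) v = v$, where $W_n(y) := \mu_n^{-d/2}\, w(y/\sqrt{\mu_n})/\int_{\R^d} w$ has total mass $1$ and converges weakly to $\delta_0$. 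By Theorem~\ref{thm:uniform_bound} applied to $u_n$, for small $\mu_n$ one has $\|\partial^\alpha u_n\|_\infty \lesssim \mu_n^{(|\alpha|+1)/2}$ and $\inf_z \int_{B(z,\ell)} |u_n|^2 \gtrsim \mu_n \ell^d$ for $\ell$ large, which rescale to uniform $C^k$ bounds on $v_n$ and a uniform positive-density lower bound. Extract a subsequence $v_n \to v_\infty$ in $C^\infty_{\rm loc}$; since $W_n \ast v_n^2 \to v_\infty^2$ locally uniformly, $v_\infty$ solves the Ginzburg--Landau equation $-\Delta v_\infty + v_\infty^3 = v_\infty$, and by passing the local-minimality inequality of Definition~\ref{def:infinite_GP_GS} to the limit against compactly supported test perturbations (whose interaction term with $W_n$ is confined to a slightly enlarged fixed compact set), $v_\infty$ is a real-valued infinite ground state of GL with uniformly positive density.

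The key step is a Liouville-type classification: any real-valued infinite ground state $v$ of $-\Delta v + v^3 = v$ on $\R^d$ must be $\pm 1$. The equation gives $(-\Delta + v^2 - 1)v = 0$, while the phase-direction stability from the second line of~\eqref{eq:linearly_stable} gives $-\Delta + v^2 - 1 \geq 0$ as a quadratic form on compactly supported functions. Hence $v$ is a generalized zero mode of a non-negative operator, and a strong-maximum-principle argument combined with translation invariance and the $C^\infty_{\rm loc}$ compactness of the set of ground states forces $v$ to have constant sign, say $v \geq 0$; strict positivity then follows from the maximum principle. Taking a blow-down sequence of translates at which $v$ approaches $\inf v$, one obtains a subsequential limit $\tilde v$ with $\tilde v(0) = \inf v$ and $\Delta \tilde v(0) \geq 0$, which via the equation forces $\tilde v(0)^2 \geq 1$, hence $v \geq 1$ on $\R^d$. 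Setting $g := v^2 - 1 \geq 0$, the equation rewrites as $\Delta g = 2(1+g) g + |\nabla g|^2/(2(1+g)) \geq 0$, so $g$ is bounded and subharmonic on $\R^d$; Liouville's theorem gives $g$ constant, and plugging back yields $g \equiv 0$, i.e., $v \equiv 1$. Applied to $v_\infty$, this gives $v_\infty \equiv 1$ up to an overall sign.

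The final step is to upgrade local to global convergence and conclude by a linear nondegeneracy argument. Repeating the classification on every translate $v_n(\cdot + \tau_n)$ (which is again a real-valued infinite ground state for the same rescaled equation) shows all subsequential limits are $\pm 1$; combined with Theorem~\ref{thm:real-valued} in dimensions $d \in \{1,2\}$, and the uniform positive-density lower bound together with continuity of $v_n$ to pin the sign in higher dimensions, this yields $\|v_n - 1\|_{L^\infty(\R^d)} \to 0$. Setting $h_n := v_n - 1$, expanding the rescaled equation gives $L_n h_n = -W_n \ast h_n^2 - 2 h_n (W_n \ast h_n) - h_n (W_n \ast h_n^2)$ with $L_n := -\Delta + 2\, W_n\ast$, so $\|L_n h_n\|_\infty \leq C \|h_n\|_\infty^2$. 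In Fourier, $L_n$ is the multiplier by $|k|^2 + 2\hat w(\sqrt{\mu_n}\, k)/\hat w(0)$, bounded below by a universal $c_0 > 0$ for $\mu_n$ small (by continuity of $\hat w$ at $0$ and decay of $\hat w$ at infinity). Hence $L_n^{-1}$ is uniformly bounded on $L^\infty(\R^d)$, giving $\|h_n\|_\infty \leq C' \|h_n\|_\infty^2$; since $\|h_n\|_\infty \to 0$, for $n$ large $h_n \equiv 0$, contradicting $u_n \not\equiv \pm\sqrt{\mu_n/\int_{\R^d} w}$. The main obstacles I expect are (a) the sign-rigidity for $v_\infty$ in the Liouville step, since the inequality $-\Delta + v^2 - 1 \geq 0$ on $\R^d$ has $0$ in the essential spectrum and ruling out sign changes requires a delicate maximum-principle or unique-continuation argument, and (b) pinning down the sign of $v_n$ globally in dimensions $d \geq 3$, where Theorem~\ref{thm:real-valued} is unavailable.
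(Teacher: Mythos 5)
Your overall strategy matches the paper's: rescale to the Ginzburg--Landau limit $w\to\delta_0$, extract a local limit, classify real-valued GL infinite ground states, upgrade local to global convergence, and then run a fixed-point/perturbation argument around the constant. However, both of your key lemmas are argued differently from the paper, and each has a genuine gap.

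\textbf{The GL classification step.} Your argument first requires that $v_\infty$ have constant sign. The paper explicitly notes that strict positivity of real-valued infinite ground states is open in dimensions $d\geq 3$, and its proof is ``very specific to the small dimensions $d\in\{1,2\}$'' (Theorem~\ref{thm:real-valued}). You acknowledge this obstacle but do not overcome it. More seriously, even granting $v_\infty>0$ and hence $v_\infty\geq 1$, your final step applies Liouville's theorem for bounded subharmonic functions to $g=v_\infty^2-1$. That theorem is \emph{false} in $\R^d$ for $d\geq 3$: e.g., $g(x)=1-\min(1,|x|^{2-d})$ is bounded, non-negative, non-constant, and subharmonic. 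The paper's Lemma~\ref{lem:uniqueness_GinzLan} avoids both issues by never assuming a sign condition on $u$: it works with $f=1-u^2$ (using the standard bound $|u|\leq 1$), combines the stability inequality $-\Delta+u^2-1\geq 0$ with Modica's pointwise gradient bound $2|\nabla u|^2\leq(1-u^2)^2$ to derive the reverse Hölder-type inequality $\int_{B_R}f\leq \frac{C}{R^2}\int_{B_{4R}}f$, and bootstraps to $f\equiv 0$ --- an argument that works in all $d$ without deciding the sign of $u$ first. You should adopt this route or supply a $d$-independent proof of sign-rigidity and a replacement for the failed Liouville step.

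\textbf{The perturbation step.} You claim a uniform lower bound $c_0>0$ on the Fourier multiplier of $L_n=-\Delta+2W_n\ast$ and deduce ``$L_n^{-1}$ is uniformly bounded on $L^\infty(\R^d)$.'' A pointwise lower bound on a Fourier multiplier gives $L^2$-boundedness of the inverse, not $L^\infty$-boundedness; for the latter you need a uniform $L^1$ bound on the kernel of $L_n^{-1}$, which is not automatic when $\hat w$ changes sign. The paper circumvents this by always inverting the fixed operator $(-\Delta+2)^{-1}$ (whose Yukawa kernel $Y_2$ is known to be in $L^1$ with $\nabla Y_2\in L^1$), and then estimating the commutator-type remainder $(\tilde w_\mu-\delta_0)\ast v_\mu$ in $L^\infty$ by splitting into a short-range piece controlled by $\|\nabla v_\mu\|_\infty$ and a long-range piece controlled by the decay $|w(x)|\lesssim |x|^{-s}$. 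This yields the inequality $\|v_\mu\|_\infty\leq C\big(R+\mu^{(s-d)/2}R^{-(s-d)}+\|v_\mu\|_\infty\big)\|v_\mu\|_\infty$ with a free small parameter $R$, which is what closes the bootstrap. You need a comparable argument in place of the asserted $L^\infty$-invertibility of $L_n$.
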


The proof of Theorem \ref{thm:phase_transitions-low-density} is given in Section \ref{sec:proof_low_density}. It is based on the fact that  we obtain infinite ground states for the Ginzburg-Landau equation in the limit $\mu\to0$, after scaling. More precisely, if $u_\mu$ is a real-valued infinite ground state for some $\mu$, then the function
$$\widetilde u_\mu(x):=\left(\frac{\int_{\R^d} w}{\mu}\right)^{\frac12} u_\mu\left(\frac{x}{\sqrt\mu}\right)$$
is an infinite ground state for the rescaled potential
$$\widetilde w_\mu(x)=\frac{\mu^{-\frac{d}2}w(x/\sqrt\mu)}{\int_{\R^d}w}\underset{\mu\to0}\longrightarrow \delta_0$$
and solves the rescaled GP equation
$$-\Delta \tilde u_\mu+\big(\tilde w_\mu\ast\tilde u_\mu^2-1\big)\tilde u_\mu=0$$
For this reason, we first consider the special case of the delta interaction, where Theorem~\ref{thm:phase_transitions-low-density} is well known in dimensions $d\in\{1,2\}$ but does not seem to have been written in all space dimensions. Then the general case in Theorem \ref{thm:phase_transitions-low-density} is obtained by a quantitative comparison with the Ginzburg-Landau equation. Note that our proof is independent of Theorem \ref{thm:phase_transitions} and it gives an alternative (although non-quantitative) justification for the fact that $\mu_c>0$.

In dimension $d=1$ we know from Theorem~\ref{thm:real-valued} that all the infinite ground states are real-valued, hence Theorem \ref{thm:phase_transitions-low-density}  implies uniqueness at low density (up to a constant phase factor).

In dimension $d=2$, we expect that the vortex solution of degree one (see Theorem \ref{thm:exist_vortex}) is the only possible solution at low densities apart from the constant  function~\eqref{eq:cnst} (up to translations and complex conjugation). This fact is well-known for the Ginzburg-Landau equation, see \cite{YeZho-96,Mironescu-96,Sandier-98,ComMir-99,PacRiv-00,MilPis-10,FarMir-13,IgnNguSlaZar-20} for various uniqueness results, but we are not able to prove it under the generality of $w$ in Assumption~\ref{ass:w}.

\subsubsection{High density limit} In the high density limit $\mu\to+\ii$, we obtain the classical mean-field problem, that is, with the kinetic energy removed. The latter was studied for instance in~\cite{Suto-05,Suto-11}.

For simplicity in the whole subsection we assume that $w$ is a continuous bounded function satisfying Assumption~\ref{ass:w}. Then, in a bounded domain~$\Omega$, we define the lowest classical canonical and grand-canonical energies by
\begin{align}
E_{\rm cl}(\lambda,\Omega)&:=\min_{\substack{\nu\geq0\\ \nu(\Omega)=\lambda}}\frac12\iint_{\Omega^2}w(x-y)\,\rd\nu(x)\,\rd\nu(y),
 \label{eq:E_classical}\\
F_{\rm cl}(\mu,\Omega)&:=\min_{\nu\geq0}\left\{\frac12\iint_{\Omega^2}w(x-y)\,\rd\nu(x)\,\rd\nu(y)-\mu\,\nu(\Omega)\right\}.
 \label{eq:F_classical}
\end{align}
In other words we remove the kinetic energy in $E_{\rm D}(\lambda,\Omega)$ and $F_{\rm D}(\mu,\Omega)$ and replace $|u|^2$ by a non-negative finite Borel measure $\nu$. Multiplying $\nu$ by an appropriate constant, we see that
$$E_{\rm cl}(\lambda,\Omega)=\left(\frac\lambda{|\Omega|}\right)^2E_{\rm cl}(|\Omega|,\Omega),\qquad F_{\rm cl}(\mu,\Omega)=\mu^2F_{\rm cl}(1,\Omega).$$
Hence we can always work at unit density or unit chemical potential. In the thermodynamic limit, we find
\begin{equation}
\lim_{\substack{\Omega_n\nearrow\R^d\\ \frac{\lambda_n}{|\Omega_n|}\to\rho}}\frac{E_{\rm cl}(\lambda_n,\Omega_n)}{|\Omega_n|}=\rho^2e_{\rm cl},\qquad \lim_{\Omega_n\nearrow\R^d}\frac{F_{\rm cl}(\mu,\Omega_n)}{|\Omega_n|}=-\frac{\mu^2}{4e_{\rm cl}},
\label{eq:thermo_limit_class}
\end{equation}
for some constant $0<e_{\rm cl}\leq\int_{\R^d}w/2$. Here $\Omega_n$ is any sequence of domains satisfying the Fischer regularity conditions~\eqref{eq:Fisher-0} and \eqref{eq:Fisher}. The canonical and grand-canonical classical energies are thus exactly quadratic.

An appropriate modification of Theorem~\ref{thm:local_bound} below (with no kinetic energy) proves that there exists a constant $C$ depending only on $w$ and $d$ such that any minimizer $\nu_\Omega$ of $F_{\rm cl}(1,\Omega)$ satisfies $\nu_\Omega(Q)\leq C|Q|$ for any large enough cube $Q$.  This allows one to pass to the thermodynamic limit and obtain a classical infinite ground state.

\begin{definition}[Classical infinite ground state]
A \emph{classical infinite ground state} at chemical potential $\mu=1$ is a uniformly locally finite measure $\nu$ on $\R^d$, which minimizes the local free energy
\begin{multline*}
\nu'\mapsto \frac12\iint_{(B_R)^2}w(x-y)\rd\nu'(x)\,\rd\nu'(y)\\+\iint_{B_R\times(\R^d\setminus B_R)}w(x-y)\rd\nu'(x)\,\rd\nu(y) -\nu'(B_R)
\end{multline*}
for any $R>0$.
\end{definition}

Many of the results in this paper apply to the classical setting. For instance, by an equivalent of Theorem~\ref{thm:prop_GS} any classical infinite ground state must have the average density and (twice the) energy
$$\lim_{R\to\ii}\frac{\nu(B_R)}{|B_R|}=\lim_{R\to\ii}\frac{1}{|B_R|}\iint_{(B_R)^2}w(x-y)\,\rd\nu(x)\,\rd\nu(y)=\frac1{2e_{\rm cl}}.$$
In addition, any classical infinite ground state solves the implicit equation
\begin{equation}
 \nu\ast w\begin{cases}
\geq1&\text{on $\R^d$,}\\
=1&\text{$\nu$-almost surely.}
          \end{cases}
\end{equation}
Finally, we have $e_{\rm cl}=\int_{\R^d}w/2$ if and only if $\widehat{w}\geq0$.

\begin{theorem}[High-density limit]\label{thm:high_density}
Assume that $d\geq1$ and that $w$ is a bounded continuous function satisfying Assumption~\ref{ass:w}. Then, we have
\begin{equation}
 e(\rho) = \rho^2e_{\rm cl} + o(\rho^2)_{\rho\to\ii},\qquad f(\mu)=-\frac{\mu^2}{4e_{\rm cl}} + o(\mu^2)_{\mu\to\ii}
 \label{eq:high_density_energy}
\end{equation}

For any $\mu_n\to\ii$ and any associated positive infinite ground state $u_n>0$, we have after extraction of a subsequence
$$\frac{|u_n|^2}{\mu_n}\wto\nu$$
locally in the sense of measures, where $\nu$ is a classical infinite ground state.
\end{theorem}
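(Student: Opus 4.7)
\emph{Energy asymptotics.}  I would prove the two energy asymptotics first, by matching a trivial classical lower bound against a quantum trial-state upper bound. Both lower bounds $e(\rho) \ge \rho^2 e_{\rm cl}$ and $f(\mu) \ge -\mu^2/(4e_{\rm cl})$ are immediate from dropping the non-negative kinetic term in $\cE_\Omega$, which gives $E_{\rm D}(\lambda,\Omega) \ge E_{\rm cl}(\lambda,\Omega)$ and $F_{\rm D}(\mu,\Omega) \ge F_{\rm cl}(\mu,\Omega)$, and then applying \eqref{eq:thermo_limit_class}. For the matching upper bounds I would use quantum trial states built by mollifying classical minimizers. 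Fix $\eps>0$ and a smooth compactly supported non-negative mollifier $\chi_\eps$ of unit mass. Let $\nu_\Omega$ be a classical canonical minimizer on a Fischer-regular domain $\Omega$ at unit density, so that $E_{\rm cl}(|\Omega|, \Omega)/|\Omega| \to e_{\rm cl}$, and set $\phi := \chi_\eps*\nu_\Omega$, a bounded smooth function with $\int \phi = |\Omega|$. Take the Neumann trial state $u_\rho := \sqrt{\rho\,\phi}$. Its potential energy $\frac{\rho^2}{2} \iint w(x-y)\phi(x)\phi(y)\,\dx\,\dy$ tends to $\rho^2 e_{\rm cl}|\Omega|\bigl(1+o_\eps(1)\bigr)$ by continuity of $w$, while the kinetic energy $\rho \int |\nabla \sqrt{\phi}|^2$ is only $O_\eps(\rho |\Omega|)$, hence negligible per unit volume as $\rho \to \infty$. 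Taking the limits in the order $|\Omega|\to\infty$, $\rho \to \infty$, $\eps \to 0$ yields $e(\rho) \le \rho^2 e_{\rm cl} + o(\rho^2)$; the matching bound for $f$ follows either from an analogous construction or from Legendre duality (Theorem~\ref{thm:thermo}) evaluated at $\rho = \mu/(2e_{\rm cl}) \to \infty$.

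\emph{Compactness of $\nu_n := |u_n|^2\,\dx/\mu_n$ and free-energy density of the limit.}  For $u_n>0$ any infinite ground states at $\mu_n \to \infty$, Theorem~\ref{thm:prop_GS} together with \eqref{eq:bound_mu_rho} gives $\int_{B(\tau,R)}|u_n|^2 \le C\mu_n |B_R|$ uniformly in $\tau$ and in $n$, for any fixed large $R$, so the rescaled measures $\nu_n$ are uniformly locally bounded in total variation. A subsequence therefore converges locally weakly-$*$ to some $\nu \ge 0$. Dividing $\cF_{\mu_n, B_R}(u_n \mathbf{1}_{B_R})/(\mu_n^2 |B_R|)$ and invoking Theorem~\ref{thm:prop_GS} together with the asymptotic $f(\mu_n)/\mu_n^2 \to -1/(4e_{\rm cl})$ proved above, the limit of this ratio equals $-1/(4e_{\rm cl})$. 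Since the interaction and mass terms converge (by weak convergence of $\nu_n \otimes \nu_n$ and continuity of $w$) to the classical grand-canonical free energy of $\nu$ at chemical potential $1$, the rescaled kinetic energy $\mu_n^{-2}\int_{B_R}|\nabla u_n|^2$ must vanish in the limit, and $\nu$ attains the minimal classical free-energy density per unit volume.

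\emph{Local minimization and main obstacle.}  To upgrade this energy equality to the local variational property defining a classical infinite ground state, I would show, for every positive Borel test measure $\nu'$ supported in $B_R$, the corresponding classical local inequality. The quantum competitor is $v_n := \sqrt{\mu_n (\chi_\eps*\nu')}$ on $B_{R-\delta}$, $v_n := u_n$ on $\R^d\setminus B_R$, smoothly interpolated on the annulus $B_R \setminus B_{R-\delta}$ via a cutoff. Applying the local minimality \eqref{eq:local_energy_Omega} of $u_n$ in $B_R$, dividing by $\mu_n^2$, and passing to $n\to\infty$ first, then $\eps \to 0$ and $\delta \to 0$, transfers the inequality to the classical setting. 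The main obstacle will be precisely this interpolation step: the kinetic energy of $v_n$ on the annulus must stay $o(\mu_n^2)$, which is achieved by keeping $\delta$ and $\eps$ fixed while $n\to\infty$ so that the transition cost is $O(\mu_n)$, negligible after dividing by $\mu_n^2$; moreover the cross-interaction $\iint_{B_R\times (\R^d\setminus B_R)} w(x-y)|v_n(x)|^2|u_n(y)|^2\,\dx\,\dy$ must converge to the classical cross-term $\iint w\,\rd\nu'(x)\,\rd\nu(y)$ despite the modification of $u_n$ in a thin shell near $\partial B_R$. Both points are controlled using the continuity of $w$ and the uniform local bound on $\nu_n$, the shell having vanishing measure in the correct limiting order.
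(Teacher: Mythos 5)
Your overall strategy is in the same spirit as the paper. The final competitor $v_n=\sqrt{\mu_n(\chi_\eps*\nu')}$ inside $B_{R-\delta}$, glued to $u_n$ across a thin shell, is essentially the paper's trial state $\eta_\delta u_n+\sqrt{\mu_n}h$ with $h\in C^\infty_c$ (the paper approximates the finite measure by $h^2$ only at the very end), and your proof of the energy asymptotics via a mollified classical minimizer is a natural complement, since the paper's written proof handles only the convergence statement. However, your middle step has a genuine gap. You deduce the vanishing of $\mu_n^{-2}\int_{B_R}|\nabla u_n|^2$ by dividing $\cF_{\mu_n,B_R}(u_n\1_{B_R})/|B_R|$ by $\mu_n^2$ and invoking Theorem~\ref{thm:prop_GS} together with $f(\mu_n)/\mu_n^2\to-1/(4e_{\rm cl})$. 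But Theorem~\ref{thm:prop_GS} gives $\cF_{\mu,B(\tau,R)}(u\1)/|B_R|\to f(\mu)$ only as $R\to\infty$ for a \emph{fixed} ground state at \emph{fixed} $\mu$, with no uniformity in $\mu$; here you need $n\to\infty$ at fixed $R$. This interchange of limits is unjustified and it is not clear how to repair it without extra uniform-in-$\mu$ estimates. The paper instead obtains the same conclusion by a purely local argument: from the second-order condition in Lemma~\ref{lem:1st_2nd_order}(iv) one first shows $\nu*w\geq1$ everywhere, then tests the GP equation against $\chi v_n$ (with $v_n=u_n/\sqrt{\mu_n}$) to get
\begin{equation*}
\frac1{\mu_n}\int\chi|\nabla v_n|^2+\int(v_n^2*w-1)\chi v_n^2=\frac1{2\mu_n}\int v_n^2\Delta\chi\to0,
\end{equation*}
and since both left-hand terms are asymptotically non-negative, each tends to zero, yielding at once $\nabla v_n/\sqrt{\mu_n}\to0$ in $L^2_{\rm loc}$ and $\nu*w=1$ $\nu$-a.e. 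This requires neither Theorem~\ref{thm:prop_GS} nor the energy asymptotics and is the argument to use.

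A smaller technical point: your trial functions $\sqrt{\rho\,\chi_\eps*\nu_\Omega}$ (and $v_n$ in the last step) may fail to lie in $H^1$ if $\chi_\eps$ vanishes linearly at the edge of its support; take a mollifier with $|\nabla\chi_\eps|^2/\chi_\eps\in L^1(\R^d)$ so that by Cauchy--Schwarz $\int|\nabla\sqrt{\chi_\eps*\nu}|^2\le\frac14\nu(\R^d)\int|\nabla\chi_\eps|^2/\chi_\eps$, or follow the paper and work with a smooth $h$, approximating the measure by $h^2$ at the very end. The cross-interaction convergence and the order of limits $n\to\infty$, then $\eps,\delta\to0$ are handled correctly in your outline.
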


The proof is provided in Section~\ref{sec:proof_high_density}. The statement means that, at high density, infinite ground states have a fixed profile multiplied by the large constant $\mu$. A similar effect occurs in systems of interacting classical particles, which tend to pile up into big clusters when the density is increased~\cite{VenNij-79-2,VenNij-79-1,NijRui-85b}. This phenomenon can also be understood using the mean-field classical model~\cite{Klein_etal-94,LikWatLow-98,LikLanWatLow-01,Likos-07,TorSti-08,BatStiTor-09,Suto-11b}.

The rest of the article is devoted to the proof of all the results collected in this section.

\section{Local and global bounds}\label{sec:proof_bounds}

In this section, we prove Theorem~\ref{thm:uniform_bound} and Corollary~\ref{cor:existence-1}. We need to divide the proofs into many intermediate steps. First we prove universal local bounds for $H^1_{\rm unif}(\Omega)$ solutions in any domain $\Omega$ (possibly the whole space), with either Dirichlet or Neumann boundary conditions. This is the content of Theorem~\ref{thm:local_bound} below. We also consider in Section~\ref{sec:proof_inhomogeneous} inhomogeneous conditions which will be useful for later purposes. The local average bounds can easily be upgraded into $L^\ii$ bounds using standard elliptic techniques (Corollary~\ref{cor:pointwise_bounds_I}), but the result is not as good as expected for $\mu\ll1$ in dimensions $d\geq4$. These dimensions require a different treatment and, in particular, a proof in Section~\ref{sec:proof_uniform_bound_low_density} that the only solution of the GP equation for $\mu=0$ is the trivial solution $u\equiv0$. In Section~\ref{sec:proof_derivatives} we use the $L^\ii$ bound on $u$ to get similar estimates on all its derivatives and in Section~\ref{sec:proof_Harnack} we use a Harnack-type argument to show pointwise lower bounds for positive solutions.
Finally, we use the local bounds in Section~\ref{sec:proof_existence} to deduce the existence of infinite ground states by means of a thermodynamic limit, thus proving Corollary~\ref{cor:existence-1}.

\subsection{Local bounds}

This section contains the first main estimates of the article, which will allow us to derive all the other results. It says that any solution to the GP equation in $H^1_{\rm unif}(\R^d)$ satisfies universal local bounds depending only on $w$, $\mu$ and the space dimension $d$.

\begin{theorem}[Local bounds]\label{thm:local_bound}
Let $d\geq 1$ and $w$ be a potential satisfying Assumption~\ref{ass:w}. Let $\Omega\subset\R^d$ be a (bounded or unbounded) smooth domain. Let $\mu>0$ and $u\in H^1_{\rm unif}(\Omega)$ be an arbitrary solution of the GP equation
\begin{equation}
-\Delta u+w\ast|u|^2u=\mu\, u
 \label{eq:GP_local2}
\end{equation}
in the sense of distributions, satisfying the Dirichlet $u_{|\partial\Omega}\equiv0$ or Neumann $\partial_\nu u_{|\partial\Omega}=0$ boundary condition if $\Omega\neq\R^d$, and having a finite local interaction in the sense of~\eqref{eq:finite_interation_u}. Then there exists a constant $C>0$, depending only on $d$ and $w$, such that
\begin{equation}
\int_{B(z,\ell)}|u|^2\leq C\left(\mu+\frac1{\ell^2}\right)\ell^d
\label{eq:local_bound_mass}
\end{equation}
and
\begin{equation}
 \int_{B(z,\ell)}|\nabla u|^2+\iint_{B(z,\ell)^2}|u(x)|^2|u(y)|^2w_+(x-y)\,\dx\,\dy\leq C\left(\mu+\frac1{\ell^2}\right)^2\ell^d
 \label{eq:local_bound_kinetic_interaction}
\end{equation}
for any $z\in\R^d$ and any $\ell\geq C$.
\end{theorem}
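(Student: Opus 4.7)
The proof follows a strategy pioneered by Ruelle~\cite{Ruelle-70} for superstable interactions in classical statistical mechanics: localize the equation to a ball $B(z,\ell)$, exploit the $\eps$-part of the decomposition~\eqref{eq:decomp_w} to produce a \emph{coercive quartic} term in the local mass $m(z,\ell):=\int_{B(z,\ell)}|u|^2$, and iterate the resulting self-improving inequality. The initial crude bound needed to launch the iteration comes for free from the assumption $u\in H^1_{\rm unif}$.

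I would pick a smooth cutoff $\chi$ with $0\le\chi\le 1$, $\chi\equiv 1$ on $B(z,\ell/2)$, $\operatorname{supp}\chi\subset B(z,\ell)$, and $|\nabla\chi|\le C/\ell$. Testing~\eqref{eq:GP_local2} against $u\chi^2$ (boundary terms vanish by the Dirichlet or Neumann condition) yields
\begin{equation*}
\int|\nabla(u\chi)|^2 + \iint w(x-y)|u(x)|^2|u(y)|^2\chi(x)^2\,\dx\,\dy = \mu\int|u|^2\chi^2 + \int|u|^2|\nabla\chi|^2.
\end{equation*}
Writing $w=\eps\,\delta_r\ast\delta_r+w_2$, the $\eps$-part can be reorganized, via evenness of $\delta_r$ and the bound $\chi\le 1$, into
$$\eps\int(\delta_r\ast|u|^2)(\delta_r\ast(|u|^2\chi^2))\ge\eps\int(\delta_r\ast(|u|^2\chi^2))^2.$$
Since $\delta_r\ast(|u|^2\chi^2)$ is supported in a ball of volume $\lesssim\ell^d$ with $L^1$-norm at least $m(z,\ell/2)$, Cauchy--Schwarz gives the crucial coercive bound $\eps\int(\delta_r\ast(|u|^2\chi^2))^2 \geq (\eps/C\ell^d)\,m(z,\ell/2)^2$.

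The $w_2$-contribution is handled by separating near and far interactions: I would split $|u|^2=|u|^2\1_{B(z,R)}+|u|^2\1_{B(z,R)^c}$ for some $R\gg\ell$. The ``near'' piece is essentially nonnegative by the stability condition~\eqref{eq:w2_stable}, up to a boundary-layer correction coming from the asymmetry introduced by $\chi^2$; the ``far'' piece is dominated pointwise, using the decay~\eqref{eq:w2_upper}, by $CR^{-s}$ times moments of $|u|^2$ at larger scales. Discarding the nonnegative kinetic term leaves the self-improving inequality
\begin{equation*}
\frac{\eps\, m(z,\ell/2)^2}{C\ell^d}\ \leq\ \Bigl(\mu+\frac{C}{\ell^2}\Bigr)m(z,\ell)\ +\ \text{err}(R,\ell),
\end{equation*}
which, starting from the crude a priori bound $A_0:=\sup_z m(z,\ell)/\ell^d<\infty$ provided by $u\in H^1_{\rm unif}$, contracts under iteration toward the fixed point $A_\infty=C(\mu+\ell^{-2})$ and yields~\eqref{eq:local_bound_mass}.

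The main obstacle is exactly the tail error: bounding $m(z,\ell)$ at scale $\ell$ requires uniform control of $|u|^2$ at all larger scales, a circularity that cannot be broken one scale at a time. The resolution, following Ruelle, is to set up the estimate simultaneously at all centers and scales and exploit that $s>d$ in Assumption~\ref{ass:w} makes the tail series absolutely summable. This is the one place where the polynomial decay of $w_2$ is genuinely needed, and it is where the real work of the proof lies. Once~\eqref{eq:local_bound_mass} is established, the kinetic and positive-interaction estimate~\eqref{eq:local_bound_kinetic_interaction} follows by re-inserting the mass bound into the tested identity (now with $\chi\equiv 1$ on $B(z,\ell)$ and supported in $B(z,2\ell)$, to recover the full $\iint_{B(z,\ell)^2}$) and absorbing the negative $w_-$ part of the interaction using the lower regularity~\eqref{eq:w2_regular}.
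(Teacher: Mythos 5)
You have the right strategy (Ruelle's localization), the right source of coercivity (the $\eps\,\delta_r\ast\delta_r$ part of the decomposition and Cauchy--Schwarz on the smeared density $\delta_r\ast(|u|^2\chi^2)$), and you have correctly located where the difficulty sits: in the error term that couples the target ball to arbitrarily large scales. But the proposal leaves a genuine gap precisely there, and the phrase ``self-improving inequality contracting toward a fixed point'' misdescribes how that circularity is broken. An iteration of a single inequality at a single scale $\ell$ cannot close, for exactly the reason you state yourself: the tail error at scale $\ell$ already requires control at scales $\geq\ell$, and halving the scale does nothing to remove that dependence.

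The actual mechanism is a proof by contradiction over an exponentially growing family of balls $\ell_j=\ell_0(1+\alpha)^j$. One tracks the smeared squared mass $M_j:=\int\big((|u|^2\1_{B(z,\ell_j)})\ast\delta_r\big)^2$ (this is the right quadratic quantity; your $m(z,\ell/2)^2$ is recovered from $M$ by Cauchy--Schwarz, but $M$ itself is what makes the tail estimates close) against a slowly growing threshold $\psi_j:=\psi_0(1+\alpha)^{\eta j}$ with $0<\eta<s-d$. Supposing $M_j>V_j\psi_j$ for some $j$, one considers the \emph{largest} such $q$ (finite because $u\in H^1_{\rm unif}$ gives an a priori bound $M_j=O(V_j)$, albeit with a $u$-dependent constant). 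The key technical lemma — essentially all the work in the proof — is that the bound $M_j\le V_j\psi_j$ for $j>q$, which holds under the contradiction hypothesis, is exactly what is needed to absorb the tail error into $\tfrac{\eps}{2}M_q$, provided $\alpha$ is chosen small and $\ell_0$ large in a way that depends only on $d$ and $w$. Closing the inequality then forces $\psi_0\le C(\mu+\ell_0^{-2})^2$, and choosing $\psi_0$ larger gives the contradiction. The gesture that ``$s>d$ makes the tail series summable'' is necessary but far from sufficient: one also needs the delicate balance between the geometric rate $\alpha$ (chosen as a function of $\eps$ and the decay constant so that successive annuli contribute at most $\eps/4$ of $M_q$) and the slow exponent $\eta$. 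Without this error estimate and the contradiction scaffolding that makes it usable, the proof stops at the step you yourself flagged as ``the real work.'' Your sketch of \eqref{eq:local_bound_kinetic_interaction} from \eqref{eq:local_bound_mass} is essentially right once the latter is in hand.
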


As announced, our proof is only based on local energy comparison techniques and it is inspired by Ruelle~\cite{Ruelle-70} (see also~\cite[App.~A]{JexLewMad-24}). The properties of $w$ are crucial here, and in particular the fact that $w$ is positive at the origin.

\begin{proof}[Proof of Theorem~\ref{thm:local_bound}] We split the proof into several steps.

\subsubsection*{Step 1. Estimates on the potential.}
We start with the proof that $w\ast|u|^2\in L^1_{\rm unif}(\R^d)$ when $u\in H^1_{\rm unif}(\R^d)$.

\begin{lemma}\label{lem:estim_potential}
For any $z\in\R^d$, any $R\geq \kappa$ and any $u\in L^2_{\rm unif}(\R^d)$, we have
\begin{equation}
w_+\ast|u|^2\leq w_+\ast(|u|^2\1_{B(z,2R)})+\frac{C}{R^{s}}\sup_{y\in \R^d}\int_{B(y,R)}|u|^2
\label{eq:estim_w_u_L2unif}
\end{equation}
on the ball $B(z,R)$, where $C$ only depends on $w$. Similarly, we have
\begin{equation}
w_-\ast|u|^2\leq \kappa (1+|\cdot|^s)^{-1}\ast|u|^2\leq C\left(\int_{B(z,2R)}|u|^2+\frac1{R^s}\sup_{y\in \R^d}\int_{B(y,R)}|u|^2\right)
 \label{eq:estim_w2_Linfty}
\end{equation}
on $B(z,R)$.
\end{lemma}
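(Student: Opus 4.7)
The two bounds are direct consequences of the pointwise estimates on $w_\pm$ extracted from Assumption~\ref{ass:w}, combined with a dyadic annular decomposition.

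First I extract the key pointwise bounds. Since $\eps\,\delta_r\ast\delta_r$ is supported in $B_{2r}\subset B_\kappa$ (using $\kappa\geq 2r$), on the set $\{|x|\geq \kappa\}$ we have $w=w_2$, so the upper-regular bound \eqref{eq:w2_upper} gives
$w_+(x)\leq \kappa|x|^{-s}$ for $|x|\geq \kappa$. For $w_-$, the non-negativity of the first summand together with the lower-regular bound \eqref{eq:w2_regular} yields
$w_-(x)\leq (w_2)_-(x)\leq \kappa(1+|x|^s)^{-1}$ for all $x\in\R^d$, which immediately gives the first inequality in \eqref{eq:estim_w2_Linfty}.

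For the first inequality of the lemma, fix $x\in B(z,R)$ with $R\geq\kappa$ and split
$$w_+\ast|u|^2(x)=\int_{B(z,2R)}w_+(x-y)|u(y)|^2\dy+\int_{\R^d\setminus B(z,2R)}w_+(x-y)|u(y)|^2\dy.$$
The first term is at most $w_+\ast(|u|^2\1_{B(z,2R)})(x)$. On the second, $y\notin B(z,2R)$ and $x\in B(z,R)$ force $|x-y|\geq R\geq\kappa$, so $w_+(x-y)\leq \kappa|x-y|^{-s}$. Write $M_R:=\sup_{y\in\R^d}\int_{B(y,R)}|u|^2$ and decompose $\{y:|x-y|\geq R\}$ into dyadic annuli $A_n=\{2^nR\leq |x-y|<2^{n+1}R\}$ for $n\geq 0$. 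Each $A_n$ lies in a ball of radius $2^{n+1}R$, which can be covered by at most $C_d 2^{nd}$ balls of radius $R$, hence $\int_{A_n}|u|^2\leq C_d 2^{nd}M_R$ and
$$\int_{|x-y|\geq R}\kappa|x-y|^{-s}|u(y)|^2\dy\leq C\kappa\,R^{-s}M_R\sum_{n\geq 0}2^{n(d-s)}.$$
The series converges since $s>d$, giving the desired $CR^{-s}M_R$ tail and proving \eqref{eq:estim_w_u_L2unif}.

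For the second bound in \eqref{eq:estim_w2_Linfty}, split the convolution $\kappa(1+|\cdot|^s)^{-1}\ast|u|^2$ at $x\in B(z,R)$ in the same way: on $B(z,2R)$ use the trivial bound $(1+|x-y|^s)^{-1}\leq 1$ to get $\int_{B(z,2R)}|u|^2$, and on the complement use $|x-y|\geq R$, hence $(1+|x-y|^s)^{-1}\leq |x-y|^{-s}$, and apply exactly the same dyadic annular argument as above to produce the $CR^{-s}M_R$ contribution. There is no real obstacle: the whole argument is bookkeeping around the covering estimate $|B(x,2^{n+1}R)|\leq C_d 2^{nd}|B(x,R)|$ and the convergence $\sum 2^{n(d-s)}<\infty$, which is where the assumption $s>d$ enters.
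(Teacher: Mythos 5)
Your proof is correct, and it reaches the same estimate by a somewhat different route than the paper. You both begin by splitting $w_+\ast|u|^2$ at $x\in B(z,R)$ into the near part (contribution of $|u|^2\1_{B(z,2R)}$) and the tail over $B(z,2R)^c$, where $|x-y|\geq R$ and the pointwise bound $w_+\leq\kappa|\cdot|^{-s}$ applies; the difference is in how the tail is controlled by $\sup_y\int_{B(y,R)}|u|^2$. You sum over dyadic annuli $\{2^nR\leq|x-y|<2^{n+1}R\}$, using a covering count $\lesssim 2^{nd}$ and the convergence of $\sum 2^{n(d-s)}$ (this is where $s>d$ enters). The paper instead dominates the truncated kernel by a constant times its own smearing by $\1_{B_{R/4}}$,
$\1(|y|\geq R)\,|y|^{-s}\leq A(\kappa/R)^d\bigl(\1(|\cdot|\geq R)\,|\cdot|^{-s}\ast\1_{B_{R/4}}\bigr)(y)$,
so that after integrating against $|u|^2$ the quantity $\sup_y\int_{B(y,R/4)}|u|^2$ appears in a single step, with $\int_{|y|\geq R}|y|^{-s}\dy\sim R^{d-s}$ supplying the remaining $R^{-s}$. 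Both arguments are elementary and equivalent in strength; your dyadic version is perhaps more transparent about exactly where $s>d$ is used, while the paper's smearing trick avoids any covering or series and keeps the tail bound in closed form. Your treatment of $w_-$ via $w_-\leq(w_2)_-\leq\kappa(1+|\cdot|^s)^{-1}$ and the repetition of the same split is also exactly what the paper does.
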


The lemma implies that
$$w\ast|u|^2=\underbrace{w_+\ast(|u|^2\1_{B(z,2R)})}_{=:w_{z,R}}+\widetilde{w}_{z,R},$$
on the ball $B(z,R)$, where $\|\widetilde{w}_{z,R}\|_{L^\infty(\R^d)}$ is bounded uniformly in $z$. The first potential $w_{z,R}$ is in $L^1(\R^d)$ and therefore $w\ast|u|^2$ belongs to $L^1_{\rm unif}(\R^d)$.
When $u$ has a finite interaction energy as in~\eqref{eq:finite_interation_u}, we have $w_{z,R} (|u|^2\1_{B(z,R)} ) \in L^1(\R^d)$. Thus we deduce that
$$u(w\ast|u|^2)\in L^1_{\rm loc}(\R^d).$$
In low dimensions, the situation is easier and we can use $u\in H^1_{\rm unif}(\R^d)$ instead of~\eqref{eq:finite_interation_u}. For instance, if $d\leq 4$ we have $u\in L^4_{\rm unif}(\R^d)$ by Sobolev and therefore $w\ast|u|^2\in L^2_{\rm unif}(\R^d)$. In dimension $d\leq 6$ we can use that $u\in L^{p}_{\rm unif}(\R^d)$, with $p=\ii$ in $d=1$, $p<\ii$ in $d=2$ and $p=2d/(d-2)$. This implies that $w\ast|u|^2\in L^{p/2}_{\rm unif}(\R^d)$ and hence $u(w\ast|u|^2)\in L^{p/3}$ by the Cauchy-Schwarz inequality, where $p\geq 3$ only in dimensions $d\leq6$.

\begin{proof}[Proof of Lemma~\ref{lem:estim_potential}]
We use that $w(x)=w_2(x)\leq \kappa|x|^{-s}$ for $|x|\geq R\geq \kappa\geq 2r$ by Assumption~\ref{ass:w}.
We write
\begin{equation}
w_+\ast |u|^2=w_+\ast (|u|^2\1_{B(z,2R)})+w_+\ast (|u|^2\1_{B(z,2R)^c}).
 \label{eq:decomp_w_+}
\end{equation}
For the second term, we use that there exists a constant $A$ (depending on $\kappa$) such that
$$\frac{\1(|y|\geq \kappa)}{|y|^s}\leq A \left(\frac{1(|\cdot|\geq \kappa)}{|\cdot|^s}\ast \1_{B_{\kappa/4}}\right)(y)$$
since the term on the right behaves like $|x|^{-s}$ at infinity. By scaling this gives
$$\frac{\1(|y|\geq R)}{|y|^s}\leq A\left(\frac\kappa R\right)^{d} \left(\frac{1(|\cdot|\geq R)}{|\cdot|^s}\ast \1_{B_{R/4}}\right)(y)$$
and hence
$$\int_{|x-y|\geq R} \frac{|u(y)|^2}{|x-y|^s}\,\dy\leq A\left(\frac\kappa R\right)^{d}\left(\sup_{y\in \R^d}\int_{B(y,R/4)}|u|^2\right)\int_{|y|\geq R}\frac{\dy}{|y|^s}$$
as we wanted. The proof for the bounded potentials $w_-$ and $(1+|x|^s)^{-1}$ is similar, using that the first term on the right side of~\eqref{eq:decomp_w_+} is now bounded.
\end{proof}

\subsubsection*{Step 2. A localization method}

Let us integrate the GP equation~\eqref{eq:GP_local} against $\chi^2\overline{u}$ for some $\chi\in W^{1,\ii}_c(\R^d,\R)$. We obtain
\begin{equation}
-\int_{\Omega}\chi^2\overline{u}\Delta u+D(\chi^2|u|^2,|u|^2)=\mu\int_{\Omega}\chi^2|u|^2,
 \label{eq:GP_localized_chi_pre}
\end{equation}
with the notation
\begin{equation}
\boxed{D(\rho_1,\rho_2)=\iint_{\R^d\times\R^d}\rho_1(x)\rho_2(y)w(x-y)\,\dx\,\dy.}
 \label{eq:def_D_notation}
\end{equation}
Since the second and third terms in~\eqref{eq:GP_localized_chi_pre} are real, we can freely take the real part of the first term. For any $u\in H^1(\Omega)$ satisfying either the Dirichlet or Neumann boundary condition, we have the IMS formula
\begin{equation}
-\Re\int_{\Omega}\chi^2\overline{u}\Delta u=\int_{\Omega}|\nabla (\chi u)|^2-\int_{\Omega}|\nabla \chi|^2|u|^2.
 \label{eq:IMS}
\end{equation}
Therefore we arrive at
\begin{equation}
\int_{\Omega}|\nabla (\chi u)|^2+D(\chi^2|u|^2,|u|^2)=\mu\int_{\Omega}\chi^2|u|^2+\int_{\Omega}|\nabla \chi|^2|u|^2,
\label{eq:GP_localized_chi}
\end{equation}
for all $\chi\in W^{1,\ii}_c(\R^d,\R)$. Let $z\in\R^d$ and $\ell_0\geq r$. Our goal is to estimate the mass $\int_{B(z,\ell_0)}|u|^2$ uniformly in $z$.

\medskip

\paragraph{\it Positive interaction} When $w_2\geq0$, the argument is rather easy. In this case, using~\eqref{eq:decomp_w} we find from~\eqref{eq:GP_localized_chi}
$$\eps\int\left(\chi^2|u|^2\ast\delta_r\right)^2\leq \mu\int_{\Omega}\chi^2|u|^2+\int_{\Omega}|\nabla \chi|^2|u|^2.$$
We choose $\chi$ so that $\chi\equiv1$ on $B(z,\ell_0)$ and $\chi\equiv0$ outside of $B(z,2\ell_0)$, with $|\nabla \chi|\leq 1/\ell_0$, and find
\begin{align}
\int_{B(z,\ell_0)}|u|^2\leq\int_{\R^d}\chi^2|u|^2&=\int_{\R^d}(\chi^2|u|^2)\ast\delta_r\nn\\
&\leq |B_1|^{\frac12}(3\ell_0)^{\frac{d}2}\left(\int_{\R^d}\left(\chi^2|u|^2\ast\delta_r\right)^2\right)^{\frac12}\nn\\
&\leq \frac{|B_1|^{\frac12}(3\ell_0)^{\frac{d}2}}{\sqrt\eps}\left(\mu+\ell_0^{-2}\right)^{\frac12}\left(\int_{B(z,2\ell_0)}|u|^2\right)^{\frac12}.
\label{eq:local_bd_w_positive_proof}
\end{align}
In the second line we have used the Cauchy-Schwarz inequality and the fact that the convolution $\chi^2|u|^2\ast\delta_r$ has its support in the ball $B(z,2\ell_0+r)\subset B(z,3\ell_0)$.
Next we call $M:=\sup_{z'\in\R^d}\int_{B(z',\ell_0)}|u|^2$ and $J$ the minimal number of balls of radius 1 that can be used to cover a ball of radius 2 ($J$ depends only on the dimension). By scaling we can cover $B(z,2\ell_0)$ by $J$ balls of radius $\ell_0$ and~\eqref{eq:local_bd_w_positive_proof} provides
$$M\leq \frac{|B_1|^{\frac12}(3\ell_0)^{\frac{d}2}}{\sqrt\eps}\left(\mu+\ell_0^{-2}\right)^{\frac12}\sqrt{JM}$$
or equivalently
\begin{equation}
\sup_{z\in\R^d}\int_{B(z,\ell_0)}|u|^2=M\leq \frac{J|B_1|3^d}{\eps}\ell_0^d\left(\mu+\ell_0^{-2}\right).
 \label{eq:local_bd_w_positive}
\end{equation}
Inserting back in~\eqref{eq:GP_localized_chi} we get the claimed inequality~\eqref{eq:local_bound_kinetic_interaction} on the kinetic and interaction energies.

\medskip

\paragraph{\it General interaction}
The previous simple proof cannot be easily extended to a general interaction $w$. More information on the way that the mass is distributed outside of the ball $B(z,\ell_0)$ is needed. We follow a localization method due to Ruelle in~\cite{Ruelle-70}. We fix an $\alpha\in(0,1]$ and define the exponentially growing sequence
$$\ell_j=\ell_0(1+\alpha)^j.$$
We will need to take $\alpha$ small enough and $\ell_0$ large enough. The explicit conditions will be given later. We denote by $\cB_j:=B(z,\ell_j)$ the ball of radius $\ell_j$ and by $V_j:=|\cB_j|=\ell_j^d|B_1|$ its volume. It will be important that we estimate the local \emph{smeared squared mass} defined as
\begin{equation}
M_j:=\int_{\R^d}\rho_j(x)^2\,\dx,\qquad \rho_j:=\big(|u|^2\1_{\cB_j}\big)\ast\delta_{r},
 \label{eq:def_M_j_rho_j}
\end{equation}
for all $j\geq0$, with $\delta_r$ the function in~\eqref{eq:def_delta_r}. Except in the case $r=0$ (contact interaction), it is unclear if $u\in L^4_{\rm loc}(\R^d)$ in high dimensions, hence the need of a smearing. Note that $M_j$ depends on the chosen center $z\in\R^d$ and that $\rho_j$ has its support in the ball $B(z,\ell_j+r)$.

Next, we apply~\eqref{eq:GP_localized_chi} to a smeared version of $\1_{\cB_j}$. We choose
$$\chi_j(x)=\begin{cases}
1&\text{for $|x-z|\leq \ell_j$,}\\
\frac{\ell_{j+1}-|x-z|}{\ell_{j+1}-\ell_j}&\text{for $\ell_j\leq |x-z|\leq \ell_{j+1}$,}\\
0&\text{for $|x-z|\geq \ell_{j+1}$,}
\end{cases}$$
and obtain
\begin{equation}
\int_{\Omega}|\nabla (\chi_j u)|^2+D(\chi_j^2|u|^2,|u|^2)\leq \left(\mu+\frac1{\alpha^2\ell_0^2}\right)\int_{\cB_{j+1}}|u|^2
\label{eq:estim_mu_begin}
\end{equation}
since $\ell_{j+1}-\ell_j=\alpha\ell_j\geq\alpha\ell_0$.
If $r>0$, we can use our assumptions on $w$ and bound
\begin{align}
D(\chi_j^2|u|^2,|u|^2)&=D(\chi_j^2|u|^2,\chi_j^2|u|^2)+D(\chi_j^2|u|^2,(1-\chi_j^2)|u|^2)\nn\\
&\geq \eps\int_{\R^d}\left(\chi_j^2|u|^2\ast\delta_r\right)^2-D_2(\chi_j^2|u|^2,(1-\chi_j^2)|u|^2)\nn\\
&\geq \eps M_j-D_2(\1_{\cB_{j+1}}|u|^2,\1_{\cB^c_j}|u|^2),
\label{eq:estim_interaction_below}
\end{align}
where
$$D_2(\rho_1,\rho_2):=\kappa\iint_{\R^d\times\R^d}\frac{\rho_1(x)\rho_2(y)}{1+|x-y|^s}\,\dx\,\dy.$$
If $r=0$, we write instead
\begin{align*}
D(\chi_j^2|u|^2,|u|^2)&=\eps \int\chi_j^2|u|^4+\iint\chi_j(x)^2|u(x)|^2|u(y)|^2w_2(x-y)\,\dx\,\dy\\
&\geq \eps \int\chi_j^4|u|^4-D_2(\chi_j^2|u|^2,(1-\chi_j^2)|u|^2)\\
&\geq \eps M_j-D_2(\1_{\cB_{j+1}}|u|^2,\1_{\cB^c_j}|u|^2),
\end{align*}
which is the same as~\eqref{eq:estim_interaction_below}. Many of the following estimates are easier and sharper for $r=0$, but we do not distinguish this case for shortness. We end up with
\begin{equation}
\int_{\cB_j}|\nabla u|^2+\eps M_j\leq \left(\mu+\frac1{\alpha^2\ell_0^2}\right)\int_{\cB_{j+1}}|u|^2+D_2(\1_{\cB_{j+1}}|u|^2,\1_{\cB^c_j}|u|^2)
 \label{eq:estim_M_j}
\end{equation}
for all $j\geq0$.

Since $u\in L^2_{\rm unif}(\R^d)$ and $(1+|\cdot|^s)^{-1}\ast|u|^2\in L^\ii(\R^d)$ by Lemma~\ref{lem:estim_potential}, the right side of~\eqref{eq:estim_M_j} can be estimated by a constant times $V_{j+1}=(1+\alpha)V_j\leq 2V_j$. Therefore we know that $M_j\leq CV_j$ for some constant $C$, which however depends on the unknown value of $\|u\|_{L^2_{\rm unif}}$. Our goal is to prove a universal upper bound depending on $w$ and $\mu$ but not on $u$. The problem is that the inequality~\eqref{eq:estim_M_j} furnishes a control on $M_j$ which depends on $u$ outside of $\cB_j$, that is, on all the other $M_k$'s. We will therefore be forced to control all the $M_j$'s. The idea is to first allow for a small deviation from a volume term. We thus define
$$\psi_j:=\psi_0(1+\alpha)^{\eta j}$$
for some small $0<\eta\leq1$ such that $s>d+\eta$, for instance
$$\eta:=\min\left(1,\frac{s-d}2\right),$$
and some $\psi_0$ (to be defined later). We then prove that necessarily $M_j\leq V_j\psi_j$ for all $j\geq0$, when $\psi_0,\ell_0$ are chosen large enough and $\alpha$ is small enough. Once this sub-optimal inequality is shown, we infer in particular $M_0\leq\psi_0V_0$ uniformly in $z$ and can then infer by tiling that $M_j$ is in fact controlled by the volume $V_j$.

The proof goes by contradiction, assuming that $M_j>V_j\psi_j$ for at least one $j\geq0$. Let $q$ be the largest integer such that
\begin{equation}
 M_q> V_q\psi_q.
 \label{eq:prop_q}
\end{equation}
We have seen that $M_j\leq CV_j$ for some $C$ depending on $u$ and we have $\psi_j\to\ii$. This implies that $q$ is always finite. 
We then have by definition 
\begin{equation}
M_j\leq V_j\psi_j,\qquad \forall j>q.
\label{eq:def_q}
\end{equation}
In other words, the mass is too large for $j=q$ but good for $j>q$. The idea is then to use~\eqref{eq:estim_M_j} for $j=q$ and control the outside using~\eqref{eq:def_q} to arrive at a contradiction. The following is our main technical estimate.

\begin{lemma}[Error estimate]\label{lem:error_estim}
There exists an $0<\alpha<1$ (depending only on $d$ and $w$) such that for all $\ell_0$ large enough (depending on $d,w,\alpha$), we have for the last term in~\eqref{eq:estim_interaction_below}
$$D_2(\1_{\cB_{q+1}}|u|^2,\1_{\cB^c_q}|u|^2)\leq \frac{\eps}{2}M_q$$
whenever $q$ defined by \eqref{eq:def_q} exists, that is, $M_j>V_j\psi_j$ for one $j\geq0$.
\end{lemma}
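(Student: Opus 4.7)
The plan is a Ruelle-type shell decomposition of the outer region. I would write
\[
\cB_q^c=\bigsqcup_{k\geq 0}A_{q+k},\qquad A_j:=\cB_{j+1}\setminus\cB_j,
\]
so that
\[
D_2(\1_{\cB_{q+1}}|u|^2,\1_{\cB_q^c}|u|^2)=\sum_{k\geq 0}I_k,\qquad I_k:=D_2(\1_{\cB_{q+1}}|u|^2,\1_{A_{q+k}}|u|^2),
\]
and estimate each $I_k$ using only the inductive hypothesis $M_j\leq V_j\psi_j$ for $j>q$ granted by the maximality of $q$ in~\eqref{eq:def_q}. The goal is to show that the total is bounded by $\tfrac\eps2 M_q$, exploiting the kernel decay $(1+|x-y|^s)^{-1}$ on far shells and the smearing by $\delta_r$ on near ones.

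For the far shells $k\geq 1$, for $x\in\cB_{q+1}$ and $y\in A_{q+k}$ the separation is at least $\ell_{q+k}-\ell_{q+1}\geq c_\alpha\ell_{q+k}$ (with $c_\alpha>0$ depending on $\alpha$), giving the pointwise bound $(1+|x-y|^s)^{-1}\leq C_\alpha\ell_{q+k}^{-s}$. Combined with the Cauchy--Schwarz-and-smearing bound $\int_{\cB_j}|u|^2\leq |\cB_j+B_r|^{1/2}M_j^{1/2}$, which uses $r\leq \alpha\ell_0$, and with the hypothesis $M_j\leq V_j\psi_j$ for $j>q$, this yields
\[
I_k\leq C_\alpha\kappa\,\ell_{q+k}^{-s}V_{q+1}V_{q+k+1}\psi_{q+1}^{1/2}\psi_{q+k+1}^{1/2}.
\]
Plugging in $V_j\sim (1+\alpha)^{dj}$, $\psi_j\sim(1+\alpha)^{\eta j}$ and using our choice $\eta\leq(s-d)/2$, the series in $k$ is geometric with ratio $<1$, so is dominated by its first term. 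The whole far sum is then bounded by $C(\alpha)\ell_0^{d-s}V_q\psi_q$; invoking $M_q>V_q\psi_q$ by the definition of $q$, this is $\leq C(\alpha)\ell_0^{d-s}M_q$, which is $\leq \tfrac{\eps}{4}M_q$ once $\ell_0$ is chosen large enough depending on $\alpha$, $d$, and $w$.

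The hard part will be the near shell $k=0$, i.e.\ the interaction of $\1_{\cB_{q+1}}|u|^2$ with $\1_{A_q}|u|^2$ across the thin boundary layer adjacent to $\partial\cB_q$, since the kernel carries no useful decay there. A naive bound based on the smearing comparison $(1+|x-y|^s)^{-1}\leq C(r,s)\,\bigl(\delta_r*\delta_r*(1+|\cdot|^s)^{-1}\bigr)(x-y)$ and Young's inequality yields only
\[
I_0\leq C(w)M_{q+1}\leq C(w)(1+\alpha)^{d+\eta}M_q,
\]
whose constant $C(w)$ is fixed by the potential and a priori not smaller than $\eps/2$. To overcome this, I would refine the shell partition further near $\partial\cB_q$ into $N(\alpha)$ sub-shells of width comparable to $r$ and use the intermediate-scale decay of the kernel on each sub-shell, combined with a completing-the-square argument on the positive semidefinite form $D_w=D_{\eps\delta_r*\delta_r}+D_{w_2}$ so as to absorb the leftover fixed constant into the $\eps M_q$ term already available on the left-hand side of~\eqref{eq:estim_M_j}. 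It is exactly this near-shell analysis that determines how small $\alpha$ (and correspondingly how large $\ell_0$) must be taken, in terms of $d$ and $w$ only; summing with the far estimate then yields $\sum_{k\geq 0}I_k\leq \tfrac{\eps}{2}M_q$, as claimed.
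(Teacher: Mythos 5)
Your far-shell estimate (the part you treat as ``$k\geq 1$'', which should really start at $k\geq 2$ since $A_{q+1}=\cB_{q+2}\setminus\cB_{q+1}$ is adjacent to $\cB_{q+1}$ and has zero separation) is sound and in fact slightly more direct than the paper's, which instead telescopes $\1_{\cB_{q+2}^c}|u|^2\ast\delta_r$ into shell differences and uses Abel summation against the pointwise kernel bound; both routes get $\ell_0^{d-s}\psi_0 V_q$ type errors that are killed by taking $\ell_0$ large relative to $\alpha$. But you correctly identify that the near shells are the crux, and there your proposal stalls and proposes a fix that does not work: sub-dividing $A_q$ into shells of width $\sim r$ buys you nothing because the kernel $\kappa(1+|x-y|^{s})^{-1}$ has no decay at scale $r$, so adjacent sub-shells still interact with $O(1)$ strength; and the ``completing the square / absorb into $\eps M_q$'' step requires the prefactor to already be $<\eps$, which is precisely what superstability does \emph{not} guarantee ($\kappa$ may far exceed $\eps$). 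Your naive bound $I_0\leq C(w)M_{q+1}$ is therefore not something that can be improved by localization alone.

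The paper's trick is of a different nature and is where the maximality of $q$ is really used. Group all the near interaction (both $A_q$ and $A_{q+1}$) into a single term $D_2(\1_{\cB_{q+1}}|u|^2,\1_{\cB_{q+2}\setminus\cB_q}|u|^2)$. By Young's inequality this is bounded by $S\sqrt{M_{q+1}}\,\bigl(\int(\rho_{q+2}-\rho_q)^2\bigr)^{1/2}$, where $\rho_j=(\1_{\cB_j}|u|^2)\ast\delta_r$. Now the crucial algebraic observation is that $0\leq\rho_q\leq\rho_{q+2}$ pointwise, hence $(\rho_{q+2}-\rho_q)^2\leq\rho_{q+2}^2-\rho_q^2$ and therefore
\[
\int(\rho_{q+2}-\rho_q)^2\leq M_{q+2}-M_q.
\]
This converts the near-shell smeared squared mass into a \emph{difference} of the $M_j$'s, and that difference is small not because of any kernel decay but because of the definition of $q$: maximality gives $M_{q+2}\leq V_{q+2}\psi_{q+2}=(1+\alpha)^{2(d+\eta)}V_q\psi_q$, while $M_q>V_q\psi_q$ by assumption, so
\[
M_{q+2}-M_q\leq\bigl((1+\alpha)^{2(d+\eta)}-1\bigr)M_q.
\]
The factor $(1+\alpha)^{2(d+\eta)}-1=O(\alpha)$ is where the smallness comes from. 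Combining with $M_{q+1}\leq(1+\alpha)^{d+\eta}M_q\leq 2^{d+1}M_q$ yields the near-shell bound $\leq S\,2^{(d+1)/2}\bigl((1+\alpha)^{2(d+\eta)}-1\bigr)^{1/2}M_q$, and choosing $\alpha$ small enough (depending only on $d$ and $w$ through $S,\eps$) makes this $\leq\frac{\eps}{4}M_q$. This is the idea your proposal is missing: the near region is controlled not through the kernel, but through the telescoping $(a-b)^2\leq a^2-b^2$ together with the slow geometric growth $(1+\alpha)^{d+\eta}$ of the tolerance $V_j\psi_j$ sandwiched against the definition of $q$.
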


It is important that the conditions on $\alpha$ and $\ell_0$ do not depend on $\psi_0$ (but of course the value of $q$ does when it exists). It is because the error term $D_2(\1_{\cB_{q+1}}|u|^2,\1_{\cB^c_q}|u|^2)$ is quartic in $u$ that we have to estimate a (smeared) squared mass and not the local mass itself.

Before providing the proof of Lemma~\ref{lem:error_estim}, we explain how to use it in order to get a contradiction when $\psi_0$ is large enough. Inserting the estimate of Lemma \eqref{lem:error_estim}  into ~\eqref{eq:estim_mu_begin} for $j=q$, we obtain
\begin{equation}
\frac{\eps}{2}M_q\leq\left(\mu+\frac1{\alpha^2\ell_0^2}\right)\int_{\cB_{q+1}}|u|^2.
\label{eq:estim_u2}
\end{equation}
We write using the Cauchy-Schwarz inequality
$$\int_{\cB_{q+1}}|u|^2=\int\rho_{q+1}=\int\rho_{q+1}\ast\delta_r\leq \sqrt{M_{q+1}}|\cB_{q+1}+B_r|^{\frac12}\leq 3^{\frac{d}2}\sqrt{M_{q+1}V_q}$$
since $\ell_{q+1}=(1+\alpha)\ell_q\leq 2\ell_q$ and $r\leq\ell_0\leq \ell_q$, hence $\ell_{q+1}+r\leq 3\ell_q$. Here recall \eqref{eq:def_M_j_rho_j} for the definition of $\rho_{q+1}$. Now we have by definition of $q$
$$M_{q+1}\leq V_{q+1}\psi_{q+1}=(1+\alpha)^{d+\eta}V_q\psi_q\leq (1+\alpha)^{d+\eta}M_q\leq 2^{d+1}M_q$$
hence we arrive at
$$\int_{\cB_{q+1}}|u|^2\leq 3^{\frac{d}2}2^{\frac{d+1}2}\frac{M_q}{\sqrt{\psi_q}}\leq 3^{\frac{d}2}2^{\frac{d+1}2}\frac{M_q}{\sqrt{\psi_0}}.$$
Inserting in~\eqref{eq:estim_u2}, this gives
$$\psi_0\leq\frac{3^d2^{d+3}}{\eps^2}\left(\mu+\frac1{\alpha^2\ell_0^2}\right)^2.$$
If we choose $\psi_0$ to be twice the right side we arrive at a contradiction, namely we conclude as we wanted that $M_j\leq V_j\psi_j$ for all $j\geq0$.

For $j=0$, this gives us by the same argument as above
\begin{multline}
\int_{\cB_0}|u|^2=\int_{\cB_0+B_r}\rho_0
\leq \sqrt{M_0}|\cB_0+B_r|^{\frac12}\leq 2^{\frac{d}2}\sqrt{\psi_0}|\cB_0|\\
\leq\frac{3^{\frac{d}2}2^{d+2}|B_1|}{\eps}\left(\mu+\frac1{\alpha^2\ell_0^2}\right)\ell_0^d
\leq\frac{3^{\frac{d}2}2^{d+2}|B_1|}{\eps\alpha^2}\left(\mu+\frac1{\ell_0^2}\right)\ell_0^d.
 \label{eq:M0}
\end{multline}
This is the desired estimate on the mass in~\eqref{eq:local_bound_mass}. The constant only depends on $d,\eps$ and $\alpha^2$, but the conditions on $\ell_0$ depend on the other variables, as we will see in the proof of Lemma~\ref{lem:error_estim}.

\subsubsection*{Step 3. Proof of Lemma~\ref{lem:error_estim}}
We define
\begin{equation}
K:=\max_{x\in\R^d}\frac{(1+|x|^s)^{-1}}{(1+|\cdot|^s)^{-1}\ast\delta_r\ast\delta_r}.
\label{eq:def_K}
\end{equation}
We are using here that $(1+|\cdot|^s)^{-1}\ast\delta_r\ast\delta_r$ behaves like $|x|^{-s}$ at infinity, so that the function in the maximum is continuous, positive and vanishing at infinity. For $r=0$, we simply have $K=1$. We obtain
\begin{equation}
D_2(\nu_1,\nu_2)\leq KD_2(\nu_1\ast\delta_r,\nu_2\ast\delta_r),
 \label{eq:D2_smeared}
\end{equation}
for every $0\leq \nu_1,\nu_2\in L^1(\R^d)$.
Then, defining
\begin{equation}
S:=\kappa\,K\int_{\R^d}\frac{\dx}{1+|x|^s},
\label{eq:def_S}
\end{equation}
we obtain by Young's inequality
\begin{equation}
D_2(\nu_1,\nu_2)\leq S\left(\int_{\R^d}\big(\nu_1\ast\delta_r\big)^2\right)^{\frac12}\left(\int_{\R^d}\big(\nu_2\ast\delta_r\big)^2\right)^{\frac12},
\label{eq:D2_estim}
\end{equation}
for every $0\leq \nu_1,\nu_2\in L^1(\R^d)$. We split our error term as
\begin{multline}
D_2(\1_{\cB_{q+1}}|u|^2,\1_{\cB^c_q}|u|^2)=D_2(\1_{\cB_{q+1}}|u|^2,\1_{\cB_{q+2}\setminus\cB_q}|u|^2)\\
+D_2(\1_{\cB_{q+1}}|u|^2,\1_{\cB_{q+2}^c}|u|^2).
\label{eq:split_error}
\end{multline}
For the first term on the right of~\eqref{eq:split_error}, we use~\eqref{eq:D2_estim} and \eqref{eq:def_M_j_rho_j} and obtain
$$D_2(\1_{\cB_{q+1}}|u|^2,\1_{\cB_{q+2}\setminus\cB_q}|u|^2)\leq S\sqrt{M_{q+1}}\left(\int(\rho_{q+2}-\rho_q)^2\right)^{\frac12}$$
Using $(a-b)^2\leq a^2-b^2$ for $a\geq b \ge 0$, we obtain from the definition of $q$
\begin{multline*}
\int(\rho_{q+2}-\rho_q)^2\leq M_{q+2}-M_q\leq V_{q+2}\psi_{q+2}-M_q\\
=(1+\alpha)^{2d+2\eta}V_q\psi_q-M_q\leq\big((1+\alpha)^{2d+2\eta}-1\big)M_q\leq\big((1+\alpha)^{2d+2}-1\big)M_q.
\end{multline*}
Using finally that
$$M_{q+1}\leq V_{q+1}\psi_{q+1}= (1+\alpha)^{d+\eta}V_q\psi_q\leq 2^{d+1} M_q,$$
we obtain our final bound on the first term in~\eqref{eq:split_error}
$$D_2(\1_{\cB_{q+1}}|u|^2,\1_{\cB_{q+2}\setminus\cB_q}|u|^2)\leq S2^{\frac{d+1}2}\left((1+\alpha)^{2d+2}-1\right)^{\frac12}M_q.$$
At this point we can fix the value of $\alpha$. We want the above term to be controlled by the term $\eps M_q$ in~\eqref{eq:estim_interaction_below} and choose $\alpha\leq1$ so that
\begin{equation}
S2^{\frac{d+1}2}\left((1+\alpha)^{2d+2}-1\right)^{\frac12}\leq\frac{\eps}4.
 \label{eq:alpha_cons}
\end{equation}
In other words, we pick
\begin{equation}
\alpha:=\min\left(1\,,\, \left(1+\frac{\eps^2}{2^{d+5}S^2}\right)^{\frac1{2d+2}}-1\right)
\label{eq:formula_alpha}
\end{equation}
where we recall that $S$ is defined in~\eqref{eq:def_S} with $K$ as in~\eqref{eq:def_K}. This value of $\alpha$ appears in the final bound~\eqref{eq:M0}. We still have to explain how large $\ell_0$ has to be. As a conclusion, we have shown the bound
\begin{equation}
D_2(\1_{\cB_{q+1}}|u|^2,\1_{\cB_{q+2}\setminus\cB_q}|u|^2)\leq \frac{\eps}{4}M_q
\label{eq:final_split1}
\end{equation}
on the first term of~\eqref{eq:split_error}.

Next we consider the second term in~\eqref{eq:split_error}, which we estimate using \eqref{eq:D2_smeared}. This time we have
\begin{align*}
&D_2(\1_{\cB_{q+1}}|u|^2,\1_{\cB_{q+2}^c}|u|^2)\\
&\quad\leq \kappa K\iint\frac{\rho_{q+1}(x)(\1_{\cB^c_{q+2}}|u|^2\ast\delta_r)(y)}{|x-y|^s}\dx\,\dy\\
&\quad\leq \frac{\kappa K}2\int_{\cB_{q+1}+B_r}\int_{(\cB_{q+2}+B_r)^c}\frac{\rho_{q+1}(x)^2}{|x-y|^s}\dx\,\dy\\
&\quad \qquad+\frac{\kappa K}{2}\int_{\cB_{q+1}+B_r}\int_{(\cB_{q+2}+B_r)^c}\frac{(\1_{\cB^c_{q+2}}|u|^2\ast\delta_r)(y)^2}{|x-y|^s}\dx\,\dy.
\end{align*}
For the first term, we use that
$$|x-y|\geq \ell_{q+2}-\ell_{q+1}-2r=\alpha\ell_{q+1}-2r\geq \alpha\ell_0-2r$$
and obtain after scaling
$$\int_{\cB_{q+1}+B_r}\int_{(\cB_{q+2}+B_r)^c}\frac{\rho_{q+1}(x)^2}{|x-y|^s}\dx\,\dy\leq \frac{M_{q+1}L}{(\alpha\ell_0-2r)^{s-d}}\leq \frac{2^{d+1}M_{q}L}{(\alpha\ell_0-2r)^{s-d}}$$
with
$$L:=\int_{|x|\geq 1}\frac{\dx}{|x|^s}=\frac{|\bS^{d-1}|}{s-d}.$$
We can choose $\ell_0>2r/\alpha$ large enough so that
\begin{equation}
\frac{2^{d}\kappa KL}{(\alpha\ell_0-2r)^{s-d}}\leq \frac{\eps}{8}.
\label{eq:ell0_cons_1}
\end{equation}
Finally, for the last term we use that
\begin{align*}
(\1_{\cB^c_{q+2}}|u|^2\ast\delta_r)^2&=\left(\sum_{j\geq q+2}\1_{\cB_{j+1}\setminus\cB_j}|u|^2\ast\delta_r\right)^2\\
&\leq 2\sum_{j\geq q+2}\left(\1_{\cB_{j+1}\setminus\cB_j}|u|^2\ast\delta_r\right)^2\leq 2\sum_{j\geq q+2}\rho_{j+1}^2-\rho_j^2
\end{align*}
since in the square at most two functions have a non-intersecting support at every point of space. Recall that $\ell_0>2r/\alpha$, which implies $\ell_{j+1}-r>\ell_j+r$. We find
\begin{align*}
&\int_{\cB_{q+1}+B_r}\int_{(\cB_{q+2}+B_r)^c}\frac{(\1_{\cB^c_{q+2}}|u|^2\ast\delta_r)(y)^2}{|x-y|^s}\dx\,\dy\\
&\quad \leq 2|\cB_{q+1}+B_r|\sum_{j\geq q+2}\frac{M_{j+1}-M_j}{(\ell_j-\ell_{q+1}-2r)^s}\\
&\quad\leq 2|\cB_{q+1}+B_r|\sum_{j\geq q+3}M_j\left(\frac{1}{(\ell_{j-1}-\ell_{q+1}-2r)^s}-\frac{1}{(\ell_{j}-\ell_{q+1}-2r)^s}\right).
\end{align*}
Using $M_j\leq\psi_jV_j$ as well as $|\cB_{q+1}+B_r|\leq 2^{d}V_{q+1}\leq 2^{2d}V_q$, we find after factorizing by $\ell_{q+1}$
\begin{align}
&\int_{\cB_{q+1}+B_r}\int_{(\cB_{q+2}+B_r)^c}\frac{(\1_{\cB^c_{q+2}}|u|^2\ast\delta_r)(y)^2}{|x-y|^s}\dx\,\dy\nn\\
&\quad\leq \frac{2^{2d+1}V_q\psi_0}{\ell_{q+1}^{s-d-\eta}}\sum_{j\geq 2}(1+\alpha)^{(d+\eta)j}\Bigg(\frac{1}{\left((1+\alpha)^{j-1}-1-\frac{2r}{\ell_{q+1}}\right)^s}\nn\\
&\qquad \qquad\qquad\qquad\qquad\qquad\qquad\qquad\qquad\qquad    -\frac{1}{\left((1+\alpha)^{j}-1-\frac{2r}{\ell_{q+1}}\right)^s}\Bigg)\nn\\
&\quad\leq \frac{2^{2d+1}M_q}{\ell_{0}^{s-d-\eta}}\sum_{j\geq 2}(1+\alpha)^{(d+\eta)j}\Bigg(\frac{1}{\left((1+\alpha)^{j-1}-1-\frac{2r}{\ell_0}\right)^s}-\frac{1}{\left((1+\alpha)^{j}-1\right)^s}\Bigg).\label{eq:ell0_cons_2}
\end{align}
We choose $\ell_0$ large enough so that the right side of~\eqref{eq:ell0_cons_2} is $\leq \frac{\eps}{4\kappa K}M_q$. Recall that $\eta=(s-d)/2$, hence the previous series is convergent. As a conclusion, we have shown that the second term in~\eqref{eq:split_error} can be controlled by
\begin{equation}
D_2(\1_{\cB_{q+1}}|u|^2,\1_{\cB_{q+2}^c}|u|^2)\leq\frac{\eps}{4}M_q.
\label{eq:final_split2}
\end{equation}
Inserting all this in~\eqref{eq:estim_interaction_below} concludes the proof of Lemma~\ref{lem:error_estim}.\qed

\subsubsection*{Step 4. Estimates on kinetic and interaction energies}
Let us deal with the gradient term and the local interaction. Taking $j=0$ in~\eqref{eq:estim_interaction_below} and~\eqref{eq:estim_M_j} and decomposing $|u|^2=\1_{\cB_2}|u|^2+\1_{\cB_2^c}|u|^2$, we find
\begin{multline}
\int_{\cB_0}|\nabla u|^2+D(\chi_j^2|u|^2,\chi_j^2|u|^2)\leq \left(\mu+\frac1{\alpha^2\ell_0^2}\right)\int_{\cB_{1}}|u|^2+D_2(\1_{\cB_{1}}|u|^2,\1_{\cB_2}|u|^2)\\+D_2(\1_{\cB_{1}}|u|^2,\1_{\cB_2^c}|u|^2).
\label{eq:estim_kin_int}
\end{multline}
Instead of involving $M_0$, we write
\begin{multline*}
D(\chi_j^2|u|^2,\chi_j^2|u|^2)\geq \iint_{\cB_0^2}|u(x)|^2|u(y)|^2w_+(x-y)\,\dx\,\dy\\
-D_2(\1_{\cB_1}|u|^2,\1_{\cB_1}|u|^2)
\end{multline*}
where we have used that $w_-\leq (w_2)_-\leq \kappa(1+|x|^s)^{-1}$.
As before we have
$$\int_{\cB_{1}}|u|^2\leq 2^{\frac{3d+1}2} V_0\sqrt{\psi_0}$$
and
\begin{align*}
D_2(\1_{\cB_1}|u|^2,\1_{\cB_1}|u|^2)&\leq D_2(\1_{\cB_{1}}|u|^2,\1_{\cB_2}|u|^2)\\
&\leq S\sqrt{M_1M_2}\leq S\sqrt{V_1\psi_1V_2\psi_2}\leq S2^{\frac{3d}2}\psi_0V_0.
\end{align*}
For the last term in~\eqref{eq:estim_kin_int}, we use~\eqref{eq:estim_w2_Linfty} in Lemma~\ref{lem:estim_potential} for $R=\ell_1$. This provides
\begin{equation*}
D_2(\1_{\cB_{1}}|u|^2,\1_{\cB_2^c}|u|^2)\leq \frac{C}{\ell_0^{s}}\left(\int_{\cB_1}|u|^2\right)\sup_{z\in\R^d}\left(\int_{B(z,\ell_0)}|u|^2\right)\leq C\psi_0\ell_0^{2d-s}.
\end{equation*}
Hence we have shown that
$$\int_{\cB_0}|\nabla u|^2+\iint_{\cB_0^2}|u(x)|^2|u(y)|^2w_+(x-y)\,\dx\,\dy\leq C\left(\mu+\frac1{\alpha^2\ell_0^2}\right)^2\ell_0^d.$$
This concludes the proof of~\eqref{eq:local_bound_kinetic_interaction}, hence of Theorem~\ref{thm:local_bound}.
\end{proof}

Our first corollary is a lower bound on the local mass for an infinite ground state, which was stated in~\eqref{eq:lower_bound} of Theorem~\ref{thm:uniform_bound}.

\begin{corollary}[Lower bound on the local mass for ground states]\label{cor:lower_bd_mass}
Let $d\geq1$ and $w$ satisfy Assumption~\ref{ass:w}. Let $\mu>0$. We assume that
\begin{itemize}
 \item either $\Omega=\R^d$ and $u$ is an infinite ground state,
 \item or $\Omega$ is a regular bounded domain and $u$ is a minimizer of the grand-canonical GP energy $\cF_{\mu,\Omega}$ in~\eqref{eq:cF_GP}, with either Dirichlet or Neumann boundary condition.
\end{itemize}
Then there exists a constant $C$ (depending only on $d$ and $w$) such that
$$\int_{B(z,\ell)}|u|^2\geq C^{-1}\left(\mu-\frac{C^2}{\ell^{2}}\right)_+\ell^d,\quad \forall \ell \ge C$$
for any ball $B(z,\ell)\subset\Omega$. In particular, if $\Omega$ contains a sufficiently large
ball, then $u$ is not identically zero. 
\end{corollary}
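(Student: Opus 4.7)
My plan is to use the second-order stability condition from Lemma~\ref{lem:1st_2nd_order}: for any infinite ground state $u$ (and, by the same derivation applied to inner compact perturbations, for any grand-canonical minimizer on a bounded domain with Dirichlet or Neumann boundary condition), we have
$$Q(h) := \int |\nabla h|^2 + \int |h|^2 (w\ast|u|^2) - \mu\int|h|^2 + 2\iint \Re(\bar u h)(x)\Re(\bar u h)(y)\, w(x-y)\,dx\,dy \geq 0$$
for every compactly supported $h\in H^1$ with finite interaction energy (and with $\mathrm{supp}\, h\subset\Omega$ in the bounded case). First I would apply Young's inequality for convolutions to the last term, obtaining $|\iint \Re(\bar u h)(x)\Re(\bar u h)(y)\,w(x-y)\,dx\,dy| \leq \|w\|_{L^1}\int|u|^2|h|^2$; this is the key estimate that simultaneously absorbs the sign of $w$ and the complex phase of $u$, reducing the problem to a scalar one.

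Next I would pick $h$ real with $h\equiv 1$ on $B(z,\ell/2)$, supported in $B(z,\ell-R_0)$ for a fixed $R_0\geq\kappa$ depending only on $w$, and satisfying $0\leq h\leq 1$ and $|\nabla h|\leq C/\ell$. This yields $\int h^2 \geq c\ell^d$, $\int|\nabla h|^2 \leq C\ell^{d-2}$, and $\int|u|^2 h^2 \leq \int_{B(z,\ell)}|u|^2$. The critical quantity is $\int h^2(w\ast|u|^2) = \int|u|^2(w\ast h^2)$, which I would split into contributions from $B(z,\ell)$ and $B(z,\ell)^c$. The interior part is bounded by $\|w\|_{L^1}\int_{B(z,\ell)}|u|^2$ via Young. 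For the tail, the separation $R_0$ between $\mathrm{supp}\, h$ and $B(z,\ell)^c$, combined with the decay $|w(x)|\leq \kappa|x|^{-s}$ from Assumption~\ref{ass:w} and the uniform bound $\int_{B(y,1)}|u|^2 \leq C(1+\mu)$ from Theorem~\ref{thm:local_bound} (applied shell by shell), yields a tail contribution $\lesssim (1+\mu)\ell^{2d-s}$, which is a lower-order error for $s>d$ and $\ell$ large.

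Assembling these pieces in $Q(h)\geq 0$ gives
$$\mu c\ell^d \leq C\ell^{d-2} + 3\|w\|_{L^1}\int_{B(z,\ell)}|u|^2 + O\!\left((1+\mu)\ell^{2d-s}\right),$$
which rearranges to $\int_{B(z,\ell)}|u|^2 \geq C^{-1}(\mu-C^2/\ell^2)_+\ell^d$ after absorbing error terms into the constants and requiring $\ell\geq C(w,d)$. The final non-vanishing assertion follows by selecting any ball $B(z,\ell)\subset\Omega$ with $\ell$ large enough that the right-hand side is strictly positive. The main obstacle is the tail estimate on $w\ast h^2$: it must combine the decay of $w$ with the local uniform $L^2$ bound on $u$ to produce an error absorbable into a pure $C^2/\ell^2$ correction, without introducing $\mu$-dependent constants that would contaminate the clean form of the stated inequality.
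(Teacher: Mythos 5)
Your overall strategy is the same as the paper's: invoke the second–order stability condition $Q(h)\geq 0$, dominate the off--diagonal Hessian term to reduce to $\mu\int|h|^2\leq\int|\nabla h|^2+C\int|h|^2(|u|^2\ast|w|)$, insert a test function concentrated in $B(z,\ell)$, and control the tail $|w|\ast(|u|^2\1_{B(z,\ell)^c})$ via the local $L^2$ bound of Theorem~\ref{thm:local_bound}. The paper uses the pointwise inequality $2|\Re(\overline uh)(x)\Re(\overline uh)(y)|\leq|u(x)|^2|h(y)|^2+|u(y)|^2|h(x)|^2$ where you use Young's inequality, and it picks $h$ the scaled first Dirichlet eigenfunction of $B(z,\ell/2)$ where you pick a plateau; these differences are cosmetic.

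The gap is in the tail bound, which you yourself flag as the main obstacle. Two problems.\ (1)~With $\supp h\subset B(z,\ell-R_0)$ for a \emph{fixed} $R_0$, a point $x$ at depth $\rho(x)=\ell-|x-z|$ sees an exterior potential $|w|\ast(|u|^2\1_{B(z,\ell)^c})(x)\lesssim(\mu+\rho(x)^{-2})\rho(x)^{d-s}$, which near the boundary ($\rho\approx R_0$) is a $\mu$-dependent \emph{constant}, not a quantity decaying in $\ell$. Integrating $h^2$ against this over the boundary layer $\{R_0\lesssim\rho\lesssim\ell\}$ and dividing by $|B_\ell|$ gives a per-unit-volume tail of order $(1+\mu)\ell^{-1}$ when $s>d+1$ (not your $(1+\mu)\ell^{d-s}$), unless one explicitly uses that $|\nabla h|\leq C/\ell$ forces $h(x)\lesssim\rho(x)/\ell$ near $\partial\supp h$, an extra factor $\rho^2/\ell^2$ that you never write down.\ (2)~Even taking your stated bound at face value, $(1+\mu)\ell^{2d-s}/\ell^d=(1+\mu)\ell^{d-s}$ has a $\mu$-independent part $\ell^{d-s}$ which is $\gg\ell^{-2}$ whenever $d<s<d+2$; rearranged, this would only yield $\int_{B(z,\ell)}|u|^2\geq C^{-1}(\mu-C\ell^{d-s})_+\ell^d$, strictly weaker than the $(\mu-C^2/\ell^2)_+$ in the statement. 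What you need is a tail of the form $\mu\cdot o(1)_\ell+O(\ell^{-2})$, i.e.\ $(\mu+\ell^{-2})\ell^{d-s}$, not $(1+\mu)\ell^{d-s}$.

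The clean fix is to make the buffer proportional to $\ell$: take $\supp h\subset B(z,\ell/2)$ (the condition $\ell\geq C(w)$ already guarantees the separation to $B(z,\ell)^c$ exceeds $\kappa$), and apply Lemma~\ref{lem:estim_potential} at scale $R=\ell/2$, which gives on $B(z,\ell/2)$
\[
|w|\ast|u|^2\leq|w|\ast(|u|^2\1_{B(z,\ell)})+\frac{C}{\ell^{s}}\sup_{y}\int_{B(y,\ell/2)}|u|^2\leq|w|\ast(|u|^2\1_{B(z,\ell)})+C\big(\mu+\ell^{-2}\big)\ell^{d-s}.
\]
The last term divides into $\mu\,O(\ell^{d-s})$ (absorbed into $\mu/2$ once $\ell\geq C$) and $O(\ell^{d-s-2})\leq O(\ell^{-2})$, and the argument closes. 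This is exactly what the paper does.
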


\begin{proof}
From Lemma~\ref{lem:1st_2nd_order} if $\Omega=\R^d$ and standard methods if $\Omega$ is bounded, we know from the second-order variation that
$$\int_{\Omega}|\nabla h|^2+\int_\Omega (|u|^2\ast w-\mu)|h|^2+2\iint_{\Omega\times\Omega}\Re(\overline{u}h)(x)\Re(\overline{u}h)(y)w(x-y)\,\dx\,\dy\geq0$$
for every $h\in H^1(\Omega)\cap L^\ii(\Omega)$ having compact support. Writing
$$2|\Re(\overline{u}h)(x)\Re(\overline{u}h)(y)|\leq |u(x)|^2|h(y)|^2+|u(y)|^2|h(x)|^2$$
we obtain
\begin{equation}
\mu\int_{\Omega}|h|^2\leq \int_{\Omega}|\nabla h|^2+3\int_\Omega (|u|^2\ast |w|)|h|^2.
 \label{eq:estim_lower_mass_pre}
\end{equation}
We assume that $h$ is supported in the ball $B(z,\ell/2)$. Using~\eqref{eq:estim_w_u_L2unif} and the local bound~\eqref{eq:local_bound_mass}, we obtain
\begin{multline*}
\int_\Omega (|u|^2\ast |w|)|h|^2\leq  \left(\int_{\R^d} |w|\right)\left(\int_{B(z,\ell)} |u|^2\right)\|h\|^2_{L^\ii}\\
+\frac{C}{\ell^{s-d}}\left(\int_{\R^d}|h|^2\right)\left(\mu+\frac1{\ell^2}\right).
\end{multline*}
At this point we choose $h=2^{d/2}\ell^{-d/2}\phi_1(2(x-z)/\ell)$ to be the first Dirichlet eigenfunction of the ball $B(z,\ell/2)\subset B(z,\ell)\subset \Omega$ (here $\phi_1$ is the normalized first eigenfunction of the unit ball). Inserting in~\eqref{eq:estim_lower_mass_pre} we obtain
\begin{equation*}
\mu\leq \frac{4\lambda_1(B_1)}{\ell^2}+\frac{C}{\ell^d}\int_{B(z,\ell)} |u|^2+\frac{C}{\ell^{s-d}}\left(\mu+\frac1{\ell^2}\right).
\end{equation*}
This gives the claimed inequality for $\ell$ large enough.
\end{proof}

Using standard elliptic techniques, the average bounds from Theorem~\ref{thm:local_bound} provide pointwise bounds. The average bound~\eqref{eq:local_bound_mass} suggests that $|u(x)|\leq C\sqrt{\mu}$ for small $\mu$. 
This is achieved in dimensions $d\in\{1,2,3\}$ in Corollary \ref{cor:pointwise_bounds_I} below. The case of higher dimensions requires a different argument, which we provide later in Section~\ref{sec:proof_uniform_bound_low_density}.

For simplicity, we only state the corresponding bounds for the Dirichlet boundary condition and make later a remark about the Neumann case.

\begin{corollary}[Uniform bounds I]\label{cor:pointwise_bounds_I}
Let $d\geq 1$ and $w$ be a potential satisfying Assumption~\ref{ass:w}. Let $\Omega\subset\R^d$ be a (bounded or unbounded) smooth domain. Let $\mu>0$ and $u\in H^1_{\rm unif}(\Omega)$ be an arbitrary solution of the GP equation~\eqref{eq:GP_local} in the sense of distributions, satisfying the Dirichlet boundary condition $u_{|\partial\Omega}\equiv0$ if $\Omega\neq\R^d$ and having a finite local interaction in the sense of~\eqref{eq:finite_interation_u}. Then we have
\begin{equation}
\|u\|_{L^\ii(\Omega)}\leq \begin{cases}
C\sqrt\mu \left(1+\mu^{\frac{d}{4}}\right)&\text{if $d\leq 3$,}\\
C\left(1+\mu^{\frac{d+2}{4}}\right)&\text{if $d\geq4$.}
\end{cases}
\label{eq:pointwise_bound}
\end{equation}
In particular, \eqref{eq:pointwise_bound_main} holds if $\mu$ is large or $d\le 3$. 
\end{corollary}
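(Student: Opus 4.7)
The strategy is to combine Kato's inequality with local elliptic regularity applied to the averaged $L^2$ bound \eqref{eq:local_bound_mass} from Theorem~\ref{thm:local_bound}. The proof proceeds in three steps.

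First, Kato's inequality applied to~\eqref{eq:GP_local} yields the distributional subsolution inequality
\[
-\Delta |u| \leq \bigl(\mu + w_- \ast |u|^2\bigr)|u|.
\]
Lemma~\ref{lem:estim_potential} together with \eqref{eq:local_bound_mass} applied at a scale $\ell$ of order one gives $\|w_- \ast |u|^2\|_{L^\infty} \leq C(1+\mu)$, so $v:=|u|$ is a nonnegative subsolution of $-\Delta v \leq Kv$ with $K = C(1+\mu)$. In the Dirichlet case the subsolution property extends across $\partial\Omega$ via the zero-extension of $u$.

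Second, the standard Moser $L^\infty$–$L^2$ local boundedness estimate gives, on any ball $B(z, 2r)\subset\Omega$ (or in $\R^d$ after extension),
\[
\|v\|_{L^\infty(B(z,r))} \leq C\,(1 + Kr^2)^{d/4}\, r^{-d/2}\, \|v\|_{L^2(B(z,2r))}.
\]
Inserting $\|v\|_{L^2(B(z,2r))}^2 \leq C(\mu + r^{-2})r^d$ from Theorem~\ref{thm:local_bound} and choosing $r$ of order one (large enough that the theorem applies) we obtain
\[
\|u\|_{L^\infty(\Omega)} \leq C(1+\mu)^{(d+2)/4} \leq C\bigl(1+\mu^{(d+2)/4}\bigr).
\]
This is the stated bound for $d \geq 4$, and since $\sqrt{\mu}\bigl(1+\mu^{d/4}\bigr) \geq \mu^{(d+2)/4}$ for $\mu\geq 1$, it also gives the stated low-dimensional bound in the large-$\mu$ regime.

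Third, for $d\leq 3$ and small $\mu$ the bound above only yields $O(1)$, while we want $O(\sqrt\mu)$. We replace Moser by a Sobolev embedding, exploiting the fact that~\eqref{eq:local_bound_mass}--\eqref{eq:local_bound_kinetic_interaction} become significantly better when the scale $r$ is chosen of order $1/\sqrt\mu$. In dimension $d=1$ the scale-invariant Gagliardo--Nirenberg inequality
\[
\|u\|_{L^\infty(B(z,r))} \leq C\Bigl( r^{-1/2}\|u\|_{L^2(B(z,r))} + r^{1/2}\|\nabla u\|_{L^2(B(z,r))}\Bigr)
\]
combined with \eqref{eq:local_bound_mass}--\eqref{eq:local_bound_kinetic_interaction} yields $\|u\|_{L^\infty} \leq C(\mu + r^{-2})^{1/2}\bigl(1 + r(\mu+r^{-2})^{1/2}\bigr)$. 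Choosing $r=1/\sqrt\mu$ (legal when $\mu$ is so small that $1/\sqrt\mu \geq C$) gives $\|u\|_{L^\infty} \leq C\sqrt\mu$. For $d=2,3$ the argument is analogous using $H^2\hookrightarrow L^\infty$: $\|\Delta u\|_{L^2(B_r)}$ is controlled from the GP equation using the bound on $u$ obtained in Step~2 and the fact that $\|w\ast|u|^2\|_{L^\infty}\leq \|w\|_1\|u\|_\infty^2$ is of order one for $\mu\leq 1$, and one again optimizes the scale to recover the $\sqrt\mu$ asymptotics.

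The main obstacle is the small-$\mu$ refinement in low dimensions. Moser's constant $K\sim 1+\mu$ does not vanish as $\mu\to 0$, so Moser alone cannot deliver the correct $\sqrt\mu$ asymptotic; it is crucial to exploit a Sobolev-type embedding at the large scale $r\sim 1/\sqrt\mu$ where the averaged bound~\eqref{eq:local_bound_mass} is saturated at the natural order $\mu$.
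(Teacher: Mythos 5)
Your Steps~1 and~2 (Kato plus a Moser $L^\infty$--$L^2$ estimate with $\|w_-\ast|u|^2\|_{L^\infty}\leq C(1+\mu)$ from Lemma~\ref{lem:estim_potential} and~\eqref{eq:local_bound_mass}) are exactly the paper's route for $d\geq4$ and for large $\mu$ in low dimensions. Your $d=1$ refinement at scale $\ell\sim\mu^{-1/2}$ via the 1D Gagliardo--Nirenberg inequality also matches the paper.

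The gap is in Step~3 for $d\in\{2,3\}$. The $H^2\hookrightarrow L^\infty$ argument at scale $\ell$ reads
\[
\|u\|_{L^\ii(B_\ell)}\lesssim \ell^{-d/2}\|u\|_{L^2(B_{2\ell})}+\ell^{2-d/2}\|\Delta u\|_{L^2(B_{2\ell})},
\]
and all you know from Step~2 is $\|w\ast|u|^2\|_{L^\ii}=O(1)$, hence only $\|\Delta u\|_{L^2(B_\ell)}=O(\|u\|_{L^2(B_\ell)})$. With $\ell=\mu^{-1/2}$ the local mass bound gives $\|u\|_{L^2(B_\ell)}\leq C\mu^{(2-d)/4}$, so the second term is $\ell^{2-d/2}\cdot C\mu^{(2-d)/4}=C\mu^{-1/2}$, which \emph{blows up} as $\mu\to0$ rather than giving $O(\sqrt\mu)$; choosing $\ell$ smaller only brings you back to the $O(1)$ bound of Step~2. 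In effect, to make the $H^2$ route close you would need $\|w\ast|u|^2\|_{L^\ii}=O(\mu)$, which is precisely what you are trying to prove, and the resulting bootstrap does not contract from the initial bound $\|u\|_{L^\ii}=O(1)$. The paper avoids this by keeping the Moser estimate but measuring the potential only in $L^q$ with $d/2<q\leq d/(d-2)$ (which is nonempty exactly when $d\leq3$), and bounding $\|w_-\ast(|u|^2\1)\|_{L^q(B_\ell)}\leq \|w_-\|_{L^1}\|u\|^2_{L^{2q}(B_\ell)}$, where $\|u\|_{L^{2q}(B_\ell)}$ is controlled by the subcritical Sobolev embedding from the local $H^1$ bound of Theorem~\ref{thm:local_bound}. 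At $\ell=\mu^{-1/2}$ the exponents $q\mapsto 2q/(2q-d)$ in the Moser prefactor cancel the $\ell$-dependence exactly, yielding $\ell^2\|V\|_{L^q}^{2q/(2q-d)}=O(1)$ and hence $\|u\|_{L^\ii}^2\leq C\mu$. So the correct move is not to upgrade the regularity to $H^2$ but to \emph{downgrade} the integrability of the potential to a subcritical $L^q$.
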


\begin{proof}
In dimension $d=1$, we just use our local bounds~\eqref{eq:local_bound_mass} and~\eqref{eq:local_bound_kinetic_interaction} and the fact that $H^1_{\rm unif}$ is embedded into $L^\ii$. Precisely, using $(|u|^2)'=2\Re(\bar u u')$, we obtain over any interval $I\subset\R$ of length $2\ell$
$$\max_I|u(x)|^2\leq \frac1{2\ell}\int_I|u|^2+2\left(\int_I |u|^2\right)^{\frac12}\left(\int_I|u'|^2\right)^{\frac12}.$$
Inserting our local bounds from Theorem~\ref{thm:local_bound} gives
$$\max_I|u(x)|^2\leq C\left(\mu+\ell^{-2}\right)+C\ell\left(\mu+\ell^{-2}\right)^{\frac32}\leq C
\left(1+\sqrt{\mu}\ell\right)\left(\mu+\ell^{-2}\right).$$
Taking $\ell=\max(1/\sqrt\mu,C)$ for a sufficiently large constant $C$ so as to be able to apply Theorem~\ref{thm:local_bound}, we obtain~\eqref{eq:pointwise_bound} in $d=1$.

Next we turn to larger dimensions. By Kato's inequality (see, e.g., ~\cite[Thm.~X.27]{ReeSim2}), we have
\begin{equation}
 -\Delta |u|\leq (\mu-w\ast|u|^2)|u|\leq (\mu+w_-\ast|u|^2)|u|.
 \label{eq:Kato}
 \end{equation}
We now recall the inequality
\begin{equation}
\norm{v}^2_{L^\ii(B(z,\ell))}\leq C\left(1+\ell^2\norm{V}_{L^q(B(z,2\ell))}^{\frac{2q}{2q-d}}\right)^{\frac{d}2}\left(\frac1{\ell^d}\int_{B(z,2\ell)}v^2\right)
 \label{eq:Harnack}
\end{equation}
for any Dirichlet non-negative subsolution $-\Delta v\leq Vv$ on $\Omega$, with $0\leq V\in L^q(B(z,2\ell))$ and $d/2<q\leq\ii$. We only need the Dirichlet boundary condition if $B(z,2\ell)\cap\partial\Omega\neq\emptyset$. The proof is for instance in~\cite{Trudinger-67,Trudinger-73,Agmon} and uses the Moser iteration technique~\cite{Moser-60,Moser-61}. The constant $C$ only depends on $q$ and $d$. If $V=V_1+V_2$ with $V_1\in L^{q}_{\rm loc}(\Omega)$ and $V_2\in L^\ii_{\rm loc}(\Omega)$, one can replace $\norm{V}_{L^q}^{\frac{2q}{2q-d}}$ by $\norm{V_1}_{L^q}^{\frac{2q}{2q-d}}+\norm{V_2}_{L^\ii}$ in~\eqref{eq:Harnack}. In the rest of the proof we assume for simplicity of notation that the center of the ball is $z=0$. Next we split
$$w_-\ast |u|^2=w_-\ast(|u|^2\1_{B_{4\ell}})+w_-\ast(|u|^2\1_{B^c_{4\ell}})$$
on $B_{2\ell}$ and estimate the last term as in Lemma~\ref{lem:estim_potential}. Using our local bound~\eqref{eq:local_bound_mass} we deduce that
\begin{equation}
\norm{u}_{L^\ii(B_\ell)}^2\leq C\left(1+\ell^2\mu+\ell^2\norm{w_-\ast(|u|^2\1_{B_{4\ell}})}_{L^q(B_{2\ell})}^{\frac{2q}{2q-d}}\right)^{\frac{d}2}\left(\mu+ \ell^{-2}\right).
\label{eq:bound_u_Harnack}
\end{equation}
Like in Lemma~\ref{lem:estim_potential} we can use
\begin{equation}
\norm{w_-\ast (|u|^2\1_{B_{4\ell}})}_{L^\ii}\leq \|w_-\|_{L^\ii}\int_{B_{4\ell}}|u|^2\leq C\ell^d(\mu+\ell^{-2}).
 \label{eq:estim_potential_Harnack}
\end{equation}
Thus, after taking $\ell=C$ a large enough constant and $q=+\ii$, we obtain $\|u\|_{L^\infty}\le C (1+\mu)^{\frac {d +2 } 4}$ in all dimensions $d\geq2$. In particular, we have proved the stated bound~\eqref{eq:pointwise_bound} for $d\ge 4$ with all $\mu>0$ and for $d\in \{2,3\}$ with $\mu\geq C$. 

It remains to handle small $\mu$'s in dimensions $d\in\{2,3\}$. Like in 1D, we take $\ell=1/\sqrt{\mu}$ and bound, this time,
\begin{equation}
\norm{w_-\ast(|u|^2\1_{B_{4\ell}})}_{L^{q}}\leq \|w_-\|_{L^1}\norm{u}_{L^{2q}(B_{4\ell})}^2.
 \label{eq:estim_potential_Harnack_2}
\end{equation}
The Sobolev embedding $H^1(B_{4\ell})\subset L^{2q}(B_{4\ell})$ and our local bounds from Theorem~\ref{thm:local_bound} give
\begin{align*}
\|u\|^2_{L^{2q}(B_{4\ell})}&\leq C\left(\ell^{\frac{d}{q}-d+2}\int_{B_{4\ell}}|\nabla u|^2+\ell^{\frac{d}{q}-d}\int_{B_{4\ell}}|u|^2\right)\\
&\leq C\ell^{\frac{d}{q}}\mu=C\ell^{\frac{d-2q}{q}}
\end{align*}
for all $1\leq q<\ii$ if $d=2$ and for all $1\leq q\leq \frac{d}{d-2}$ if $d\geq3$. Recall that we also need $q>d/2$ in order to apply the estimate~\eqref{eq:bound_u_Harnack}. This is only possible if $\frac{d}{d-2}>\frac{d}2$, which means $d\leq3$. The potential term in~\eqref{eq:bound_u_Harnack} is bounded by
$$\ell^2\norm{w_-\ast|u|^2\1_{B_{4\ell}}}_{L^q(B_{2\ell})}^{\frac{2q}{2q-d}}\leq C\ell^2\ell^{\frac{d-2q}{2q}\frac{4q}{2q-d}}=C.$$
Inserting in~\eqref{eq:bound_u_Harnack} provides the bound $|u|\leq C\sqrt\mu$ for $\mu$ small enough.
\end{proof}

\begin{remark}[Neumann case]\label{rmk:Neumann_pointwise}
For Neumann, we apply~\eqref{eq:bound_u_Harnack} in a ball $B_\ell$ so that $B_{4\ell}\subset \Omega$. We therefore get the same upper bound~\eqref{eq:pointwise_bound} but only at a distance $\max(C,\mu^{-1/2})$ to the boundary $\partial\Omega$. Close to the boundary the bound~\eqref{eq:Harnack} is still valid, but with a constant depending on the regularity of the domain $\Omega$~\cite{Trudinger-67,Trudinger-73}, hence we also get a bound on $u$ close to the boundary which depends on $\Omega$.
\end{remark}

\subsection{Local bounds for inhomogeneous boundary conditions}\label{sec:proof_inhomogeneous}

Let $\Omega$ be a smooth open set (bounded or unbounded) and let $g\in C^{1}_b(\partial\Omega)$. Let $u\in H^1_{\rm unif}(\Omega)$ be a solution of
\begin{equation}
\begin{cases}
-\Delta u+w\ast|u|^2\,u=\mu\,u,&\text{in $\Omega$,}\\
u=g&\text{on $\partial\Omega$.}
\end{cases}
\label{eq:GP_BC}
\end{equation}
We also assume that $u$ has a finite local interaction, as in~\eqref{eq:finite_interation_u}.
We will provide local bounds on $u$ that are similar to Theorem~\ref{thm:local_bound} and depend on~$g$. To state our result, it is easier to give ourselves any extension $G\in C^1_b(\overline \Omega)$ of $g$ to the whole domain $\Omega$. The function $G$ can be defined in various ways so as to have
$$\norm{G}_{L^\ii(\Omega)}+\norm{\nabla G}_{L^\ii(\Omega)}\leq C_\Omega\left(\norm{g}_{L^\ii(\partial\Omega)}+\norm{\nabla g}_{L^\ii(\partial\Omega)}\right)$$
but the constant $C_\Omega$ typically depends on the regularity properties of $\Omega$. Our estimate in terms of $G$ will, on the contrary, be completely independent of $\Omega$.

The following is proved by following the method we used for Theorem~\ref{thm:local_bound} and we thus state it as a corollary.

\begin{corollary}[Local bounds for inhomogeneous boundary conditions] \label{cor:inh-bdr}
There exists a constant $C>0$, depending only on $d$ and $w$, such that
\begin{equation}
\int_{B(z,\ell)}|u|^2\leq C\left(\mu+\frac1{\ell^2}+\|G\|_{L^\ii(\Omega)}^2+\|\nabla G\|_{L^\ii(\Omega)}\right)\ell^d
\label{eq:local_bound_mass_BC}
\end{equation}
and
\begin{multline}
 \int_{B(z,\ell)}|\nabla u|^2+\iint_{B(z,\ell)^2}|u(x)|^2|u(y)|^2w_+(x-y)\,\dx\,\dy\\
 \leq C\left(\mu+\frac1{\ell^2}+\|G\|_{L^\ii(\Omega)}^2+\|\nabla G\|_{L^\ii(\Omega)}\right)^2\ell^d
 \label{eq:local_bound_kinetic_interaction_BC}
\end{multline}
for any $z\in\R^d$ and any $\ell\geq C$.
\end{corollary}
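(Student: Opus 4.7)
The plan is to reduce to the homogeneous Dirichlet case of Theorem~\ref{thm:local_bound} via the substitution $v:=u-G$. Since $v|_{\partial\Omega}\equiv 0$, extending $v$ by zero to $\R^d$ yields a function in $H^1_{\rm unif}(\R^d)$ with Dirichlet trace, and inside $\Omega$ it satisfies the source-modified equation
$$-\Delta v+(w\ast|u|^2)v-\mu v=\Delta G-(w\ast|u|^2)G+\mu G.$$
The payoff is that one may now test the original GP equation against $\chi^2\bar v$, for $\chi\in W^{1,\ii}_c(\R^d,\R)$ chosen as in the proof of Theorem~\ref{thm:local_bound}, without collecting any boundary integral when integrating by parts; the IMS identity $\Re\int_\Omega\chi^2\bar v(-\Delta v)=\int_\Omega|\nabla(\chi v)|^2-\int_\Omega|\nabla\chi|^2|v|^2$ is legitimate because $v$ has zero trace on $\partial\Omega$.

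After one further integration by parts to move $\Delta G$ back onto $\chi^2\bar v$ and the usual reorganization, one obtains the analog of~\eqref{eq:GP_localized_chi}:
\begin{multline*}
\int_\Omega|\nabla(\chi v)|^2+\int_\Omega\chi^2(w\ast|u|^2)|v|^2=\mu\int_\Omega\chi^2|v|^2+\int_\Omega|\nabla\chi|^2|v|^2\\
+\mu\Re\!\int_\Omega\!\chi^2\bar vG-\Re\!\int_\Omega\!\chi^2\nabla\bar v\cdot\nabla G-2\Re\!\int_\Omega\!\chi\bar v\nabla\chi\cdot\nabla G-\Re\!\int_\Omega\!\chi^2(w\ast|u|^2)\bar vG.
\end{multline*}
Each of the four $G$-dependent remainder terms on the right is then handled by Cauchy--Schwarz: a fraction of $\int|\nabla(\chi v)|^2$ is absorbed back into the left hand side, leaving additive errors controlled by the norms of $G$ and $\nabla G$ on the support of $\chi$. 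For the nonlocal cross term I would use $|\bar vG|\leq\tfrac12|v|^2+\tfrac12\|G\|_{L^\ii}^2$, absorbing half into the interaction on the left and estimating the remainder via $|u|^2\leq 2|v|^2+2\|G\|_{L^\ii}^2$ together with Lemma~\ref{lem:estim_potential}.

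Once these absorptions are performed, the stability estimate~\eqref{eq:estim_interaction_below} applied to $v$ reproduces the lower bound $\eps M_j[v]-D_2(\1_{\cB_{j+1}}|v|^2,\1_{\cB_j^c}|v|^2)$, and Lemma~\ref{lem:error_estim} applies verbatim, so Ruelle's iteration $M_j[v]\leq V_j\psi_j$ runs exactly as in the proof of Theorem~\ref{thm:local_bound}, but with the initial $\psi_0$ chosen to dominate the enlarged right hand side. Unfolding gives~\eqref{eq:local_bound_mass_BC} for $\int_{B(z,\ell)}|v|^2$, and the inequality $|u|^2\leq 2|v|^2+2\|G\|_{L^\ii}^2$ converts it to the claimed bound on $\int_{B(z,\ell)}|u|^2$. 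The estimate~\eqref{eq:local_bound_kinetic_interaction_BC} is obtained by revisiting the above identity at $j=0$, exactly as in Step~4 of the proof of Theorem~\ref{thm:local_bound}, and using $|\nabla u|^2\leq 2|\nabla v|^2+2\|\nabla G\|_{L^\ii}^2$. The hard part is that testing against $\chi^2\bar v$ generates the interaction $(w\ast|u|^2)|v|^2$ with a mismatch between the convolution argument $|u|^2$ and the pointwise weight $|v|^2$, so the stability of $w$ does not bite directly; expanding $|u|^2=|v|^2+2\Re(\bar vG)+|G|^2$ and controlling the cross pieces in terms of $\|G\|_{L^\ii}$ without spoiling the stability margin $\eps$ is the essential technical point.
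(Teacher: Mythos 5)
Your overall scaffolding is right: test the GP equation against $\chi^2(\bar u-\bar G)$ (equivalently $\chi^2\bar v$), note that the boundary term vanishes, absorb the kinetic cross terms with $\nabla G$ by Cauchy--Schwarz, and then run Ruelle's iteration with a $\psi_0$ enlarged to account for $\|G\|_{L^\ii},\|\nabla G\|_{L^\ii}$. This is also the paper's plan (the paper does not rename $u-G$ as $v$, but that is cosmetic). You have also correctly located the hard point yourself, in your final sentence: the interaction term one lands on is $D(|u|^2,\chi^2|v|^2)$, whose first argument is $|u|^2$ while the second involves $|v|^2$, and one must apply the stability of $w$ without spoiling the margin $\eps$.

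The gap is that your proposed resolution of that hard point does not work. You want to expand $|u|^2=|v|^2+2\Re(\bar vG)+|G|^2$, keep $D_{w}(|v|^2,\chi^2|v|^2)$ as the ``good'' stability term, and handle the cross piece $D_{w}(2\Re(\bar vG),\chi^2|v|^2)$ via $|\bar vG|\leq\tfrac12|v|^2+\tfrac12\|G\|_\ii^2$ and Young's inequality. But for the superstable part $w_2$ in the decomposition~\eqref{eq:decomp_w} this fails for two linked reasons. First, the stability hypothesis~\eqref{eq:w2_stable} is a non-negativity of the quadratic form $D_{w_2}(\rho,\rho)$ for \emph{non-negative} $\rho$ only; it is not a positive-semidefiniteness statement on signed densities, so you may not run a Cauchy--Schwarz inequality $|D_{w_2}(\rho,\sigma)|\lesssim\sqrt{D_{w_2}(\rho,\rho)D_{w_2}(\sigma,\sigma)}$, and the density $2\Re(\bar vG)$ has no sign. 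Second, once you replace $w*|u|^2$ by $|w|*|u|^2$ to make the pointwise bound on $\bar vG$ usable, you are estimating $\tau D_{|w|}(|v|^2,\chi^2|v|^2)$ against $D_w(|v|^2,\chi^2|v|^2)$; since $D_{|w|}=D_{w_+}+D_{w_-}$ while $D_w=D_{w_+}-D_{w_-}$, the former is not dominated by the latter when $w_2$ has a sizable negative part, and no choice of $\tau$ salvages this. (A blunt $L^2$--Young bound would require $u\in L^4_{\rm loc}$, which is precisely what fails in $d\geq 5$ and what the $\delta_r$-smearing is there to avoid.)

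The missing ingredient in the paper is not a Cauchy--Schwarz argument at all. To control the $w_2$-cross term $D_{w_2}\bigl(\chi^2|u|^2,\chi^2\Re(\bar uG)\bigr)$, the proof uses the identity
$$D_{w_2}\bigl(\chi^2|u-\tfrac14G|^2,\ \chi^2|u-\tfrac14G|^2\bigr)\geq 0,$$
which is legitimate because $\chi^2|u-\tfrac14G|^2\geq0$ is a genuine non-negative density in the stability hypothesis; expanding the square produces exactly $D_{w_2}\bigl(\chi^2|u|^2,\chi^2|u|^2-\chi^2\Re(\bar uG)\bigr)$ plus lower-order pieces with at least two factors of $G$, and those are bounded by $C\|G\|_\ii^2\int\chi^2|u|^2+C\|G\|_\ii^4\int\chi^2$. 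For the soft positive part $\eps\,\delta_r\ast\delta_r$ one instead uses the pointwise inequality $\Re(\bar uG)\leq\tfrac14|u|^2+|G|^2$. Once these two replacements are made, the off-diagonal tail $D_{w_2}\bigl((1-\chi^2)|u|^2,\chi^2(|u|^2-\Re(\bar uG)+|G|^2)\bigr)$ is controlled by the pointwise lower bound on $w_2$, and one obtains the same Ruelle-type inequality as in Theorem~\ref{thm:local_bound} \emph{plus} one new tail term $D_2(\1_{\cB_{j+1}}|G|^2,\1_{\cB_{j+2}^c}|u|^2)$, which must also be fed through the machinery of Lemma~\ref{lem:error_estim} before choosing $\psi_0$. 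Your write-up flags all of this as ``the essential technical point'' without supplying it; it is precisely the step that distinguishes the inhomogeneous case from a formal transcription of the Dirichlet proof.
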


\begin{proof}
We integrate~\eqref{eq:GP_BC} against $\chi^2(\overline{u}-\overline{G})$ for some $\chi\in C^\ii_c(\R^d,[0,1])$ and take the real part. Since $u-G$ vanishes at the boundary, we have no boundary term and obtain
\begin{multline*}
\int_\Omega |\nabla (\chi u)|^2-\int_\Omega|\nabla \chi|^2|u|^2+D\big(|u|^2,\chi^2|u|^2-\chi^2\Re(\overline u G)\big)\\
=\mu\Re\int_\Omega\chi^2\overline{u}(u-G)+\Re\int_\Omega\nabla\overline u\cdot \nabla (\chi^2G).
\end{multline*}
Our goal is to estimate the terms involving $G$ to reduce the proof to that of Theorem~\ref{thm:local_bound}. We have
\begin{equation}
\mu\Re\int_\Omega\chi^2\overline{u}(u-G)\leq \frac{3\mu}2\int_\Omega\chi^2|u|^2+\frac{\mu}2\int_\Omega\chi^2|G|^2
 \label{eq:estim_mu_inhomog}
\end{equation}
and
\begin{align*}
&\int_\Omega\nabla\overline u\cdot \nabla (\chi^2G)\\
&\quad=\int_\Omega\nabla\overline u\cdot \big(2\chi G\nabla\chi+\chi^2\nabla G\big)\\
&\quad=\int_\Omega\big(\nabla(\chi\overline u)-\overline u\nabla \chi\big)\cdot \big(2 G\nabla\chi+\chi\nabla G\big)\\
&\quad\leq \frac12\int_\Omega|\nabla\chi u|^2+\frac12\int_\Omega |u|^2|\nabla \chi|^2+4\int_\Omega |\nabla\chi|^2|G|^2+\int_\Omega \chi^2|\nabla G|^2.
\end{align*}
Thus we obtain
\begin{multline*}
\frac12 \int_\Omega |\nabla (\chi u)|^2+D\big(|u|^2,\chi^2|u|^2-\chi^2\Re(\overline u G)\big)\\
\leq \frac{3\mu}2\int_\Omega\chi^2|u|^2+\frac32\int_\Omega|\nabla \chi|^2|u|^2
+4\|G\|_\ii^2\int_\Omega |\nabla\chi|^2+\|\nabla G\|_\ii^2\int_\Omega \chi^2+\frac\mu2\|G\|_\ii^2\int_\Omega\chi^2.
\end{multline*}
We now estimate the interaction term
\begin{multline*}
D\big(|u|^2,\chi^2|u|^2-\chi^2\Re(\overline u G)\big)\\
=\eps\int (\delta_r\ast|u|^2)\delta_r\ast\big(\chi^2|u|^2-\chi^2\Re(\overline u G)\big)+D_{w_2}\big(|u|^2,\chi^2|u|^2-\chi^2\Re(\overline u G)\big).
\end{multline*}
For the first term, we use $\Re(\bar u G)\leq |u|^2/4+|G|^2$ and obtain
\begin{align*}
&\int (\delta_r\ast|u|^2)\delta_r\ast\big(\chi^2|u|^2-\chi^2\Re(\overline u G)\big)\\
\qquad &\geq\int (\delta_r\ast|u|^2)\delta_r\ast\left(\frac34\chi^2|u|^2-\chi^2|G|^2\right)\\
\qquad &\geq\frac34\int (\delta_r\ast\chi^2|u|^2)^2-\|G\|_\ii^2\int_\Omega \chi^2|u|^2-\|G\|_\ii^2\int_\Omega(1-\chi^2)|u|^2(\delta_r\ast\delta_r\ast\chi^2)
\end{align*}
The stability assumption on $w_2$ implies
$$D_{w_2}(\chi^2|u-G/4|^2,\chi^2|u-G/4|^2)\geq0$$
which, after expanding, provides a lower bound in the form
$$D_{w_2}\big(\chi^2|u|^2,\chi^2|u|^2-\chi^2\Re(\overline u G)\big)\geq -C\|G\|_\ii^2\int_\Omega \chi^2|u|^2-C\norm{G}_\ii^4\int_\Omega \chi^2$$
where $C$ is proportional to $|w_2|(\R^d)$. On the other hand, since $|u|^2-\Re(\overline u G)+|G|^2\geq0$ we can use the pointwise lower bound on $w_2$ and obtain
\begin{align*}
&D_{w_2}\big((1-\chi^2)|u|^2,\chi^2|u|^2-\chi^2\Re(\overline u G)+\chi^2|G|^2\big)\\
&\qquad\geq -D_{2}\big((1-\chi^2)|u|^2,\chi^2|u|^2-\chi^2\Re(\overline u G)+\chi^2|G|^2\big)\\
&\qquad\geq -D_{2}\big((1-\chi^2)|u|^2,5\chi^2|u|^2/4+2\chi^2|G|^2\big).
\end{align*}
As a conclusion, we have proved a bound in the form
\begin{multline*}
\frac12 \int_\Omega |\nabla (\chi u)|^2+\frac{3\eps}4\int (\delta_r\ast\chi^2|u|^2)^2\\
\leq  \frac54 D_2((1-\chi^2)|u|^2,\chi^2|u|^2)+2D_2((1-\chi^2)|u|^2,\chi^2|G|^2)\\
+D_{|w_2|}((1-\chi^2)|u|^2,\chi^2|G|^2)+\eps\|G\|_\ii^2\int_\Omega(1-\chi^2)|u|^2(\delta_r\ast\delta_r\ast\chi^2)\\
+\frac{3\mu+C\|G\|_\ii^2}2\int_\Omega\chi^2|u|^2+\frac32\int_\Omega|\nabla \chi|^2|u|^2\\
+4\|G\|_\ii^2\int_\Omega |\nabla\chi|^2+\|\nabla G\|_\ii^2\int_\Omega \chi^2+\frac{\mu\|G\|_\ii^2+C\|G\|_\ii^4}2\int_\Omega\chi^2.
\end{multline*}
We apply it to the same $\chi$ as in the proof of Theorem~\ref{thm:local_bound} and obtain (using $\ell_0>2r$)
\begin{align*}
\frac12 \int_{\cB_j} |\nabla u|^2+\frac{3\eps}4M_j\leq &\frac32 D_2(\1_{\cB_{j+1}}|u|^2,\1_{\cB_j^c}|u|^2)+3 D_2(\1_{\cB_{j+1}}|G|^2,\1_{\cB_{j+2}^c}|u|^2)\\
&+C\left(\mu+\|G\|_\ii^2+\frac{1}{\ell_0^2}\right)\int_{\cB_{j+2}}|u|^2\\
&+C\left(\ell_0^{-2}\|G\|_\ii^2+\|\nabla G\|_\ii^2+\mu \|G\|_\ii^2+\|G\|_\ii^4\right)V_{j+1}.
\end{align*}
The second term is new but when $j=q>0$ it can be treated as in~\eqref{eq:ell0_cons_2} in the proof of Lemma~\ref{lem:error_estim}. Choosing $\ell_0$ large enough, we obtain in the case that $q>0$
$$\frac32 D_2(\1_{\cB_{j+1}}|u|^2,\1_{\cB_j^c}|u|^2)+3 D_2(\1_{\cB_{j+1}}|G|^2,\1_{\cB_{j+2}^c}|u|^2)\leq \frac{\eps}{2}M_q,$$
which implies
$$\frac{\eps}{4}\leq C\frac{\mu+\|G\|_\ii^2+\ell_0^{-2}}{\sqrt{\psi_0}}+C\frac{\ell_0^{-4}+\|\nabla G\|_\ii^2+\mu^2+\|G\|_\ii^4}{\psi_0}.$$
This is a contradiction if we choose
$$\psi_0=C'\left(\mu+\|G\|_\ii^2+\ell_0^{-2}+\|\nabla G\|_\ii\right)^2$$
for a large enough constant $C'$. We therefore deduce that $q=0$ and can conclude similarly as for Theorem~\ref{thm:local_bound}.
\end{proof}

The local bounds provide uniform bounds away from the boundary by arguing as in Corollary~\ref{cor:pointwise_bounds_I}. Close to the boundary, the bounds again depend on $g$ and the regularity of $\Omega$. We do not state these bounds here, since we will only need them in dimension $d=2$ for $\Omega$ a disk and for a very special~$g$, see Lemma~\ref{lem:uniform_bound_u_L} below.

\subsection{Triviality of GP solutions for $\mu=0$}\label{sec:mu_0}

In dimensions $d\geq4$ our uniform bound~\eqref{eq:pointwise_bound} is not as good as we would like for small $\mu$. 
In order to prove the desired bound $|u|\leq C\sqrt\mu$, our argument is to argue by contradiction and study how would the solution behave in the limit $\mu\to0$ in case this estimate fails. For simplicity we only consider the case of the whole space $\Omega=\R^d$.

The first step in the proof is to show that, under our assumption on $w$, the GP equation admits no solution at all when $\mu=0$. The local bounds in Theorem~\ref{thm:local_bound} are perfectly valid for $\mu=0$. In dimension $d=1$ we find immediately $u\equiv0$ after letting $\ell\to\ii$ in~\eqref{eq:local_bound_mass}, since $\int_{B(z,1)}|u|^2\leq \int_{B(z,\ell)}|u|^2\leq C\ell^{d-2}=C\ell^{-1}\to0$. In dimensions $d\in\{2,3\}$ the same argument works with the kinetic energy~\eqref{eq:local_bound_kinetic_interaction}. In dimensions $d \geq 4$ the situation is more complicated and we need the following.

\begin{proposition}[Triviality of solutions for $\mu=0$]\label{prop:mu_0}
Let $d\geq 1$ and $w$ be a potential satisfying Assumption~\ref{ass:w}. If $\mu=0$, then the unique solution to the GP equation~\eqref{eq:GP_local} in $H^1_{\rm unif}(\R^d)$ of finite local interaction energy is $u\equiv0$.
\end{proposition}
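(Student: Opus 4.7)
The proof splits by dimension, since the local bounds of Theorem~\ref{thm:local_bound} at $\mu=0$ become progressively weaker as $d$ grows.

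\emph{Case $d\leq 3$.} Evaluating \eqref{eq:local_bound_mass}--\eqref{eq:local_bound_kinetic_interaction} at $\mu=0$ yields
$$\int_{B(z,\ell)}|u|^2\leq C\ell^{d-2},\qquad \int_{B(z,\ell)}|\nabla u|^2\leq C\ell^{d-4},$$
uniformly in $z\in\R^d$ for all $\ell\geq C$. Fixing $z$ and letting $\ell\to\infty$: in $d=1$ the first bound tends to $0$, so $u\equiv 0$ directly; in $d\in\{2,3\}$ the second bound tends to $0$, so $\nabla u\equiv 0$ and $u$ is a constant $c$. Inserting a constant $c$ in the GP equation gives $c|c|^2\int_{\R^d}w=0$, and since Assumption~\ref{ass:w} implies $\int_{\R^d}w>0$, we conclude $c=0$.

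\emph{Case $d\geq 4$.} Here the local bounds alone no longer suffice and the argument is more delicate. I would proceed in three steps.

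\textbf{Step 1} (regularity and uniform bounds). Corollary~\ref{cor:pointwise_bounds_I} provides $u\in L^\infty$ with a constant depending only on $w$ and $d$, and elliptic bootstrap on $-\Delta u=-(w*|u|^2)u$ (whose right-hand side is bounded) yields uniform bounds on $u$ and all its derivatives. In particular, $|u|^2$ is uniformly Lipschitz.

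\textbf{Step 2} (concentration-compactness). Assume by contradiction that $M:=\sup_{\R^d}|u|^2>0$ and pick $z_n\in\R^d$ with $|u(z_n)|^2\to M$. By Arzel\`a--Ascoli applied to the translates $u_n:=u(\cdot+z_n)$ (using Step~1), extract a subsequence converging locally in $C^\infty$ to some $u_\infty\in H^1_{\mathrm{unif}}(\R^d,\C)$ that is still a solution of the GP equation at $\mu=0$ and satisfies $|u_\infty(0)|^2=M=\sup|u_\infty|^2$.

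\textbf{Step 3} (contradiction at the maximum). From the equation $\Delta u_\infty=(w*|u_\infty|^2)u_\infty$, a direct computation yields the Kato-type identity
$$\tfrac{1}{2}\Delta|u_\infty|^2=|\nabla u_\infty|^2+(w*|u_\infty|^2)|u_\infty|^2.$$
Since $|u_\infty|^2$ attains its maximum at $0$, $\Delta|u_\infty|^2(0)\leq 0$, so $(w*|u_\infty|^2)(0)\leq 0$. On the other hand, the uniform Lipschitz bound from Step~1 gives $|u_\infty|^2\geq M/2$ on a fixed ball $B(0,\delta)$. Exploiting the decomposition $w=\varepsilon\,\delta_r*\delta_r+w_2$ of Assumption~\ref{ass:w}, the positivity $\varepsilon>0$ of the contact piece, and the stability together with polynomial decay of $w_2$ to control cross terms, one derives the reverse strict inequality $(w*|u_\infty|^2)(0)>0$, a contradiction, forcing $M=0$.

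\emph{Main obstacle.} Step~3 is where the difficulty lies: for general stable $w_2$ with no definite sign, the pointwise sign of $(w*|u_\infty|^2)(0)$ is not automatic. The key is to tightly couple the local lower bound on $|u_\infty|^2$ near the origin with the strictly positive $\varepsilon\,\delta_r*\delta_r$ bump of $w$, and absorb the residual $w_2$-contribution using its stability and polynomial decay at infinity. A cleaner alternative, which probably underlies the actual write-up, is a rescaling argument: on scale $L\to\infty$ the function $v_L(x):=u(Lx)$ formally satisfies a local defocusing equation of the form $-\Delta v+c_0|v|^2v=0$ with $c_0=\int_{\R^d}w>0$, whose only bounded solution on $\R^d$ is $v\equiv 0$ by a classical Pohozaev/Farina-type argument; making this passage to the limit rigorous sidesteps the delicate pointwise estimate at the maximum.
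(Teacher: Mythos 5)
Your treatment of $d\leq 3$ is correct and is exactly the remark the paper makes just before the proposition. Steps~1--2 for $d\geq 4$ are also sound: once $u\in L^\infty$ with uniformly bounded derivatives, translating to near-maxima of $|u|^2$ and passing to a locally uniform limit produces a solution $u_\infty$ attaining its supremum $M>0$ at the origin, and the identity $\tfrac12\Delta|u_\infty|^2 = |\nabla u_\infty|^2 + (w*|u_\infty|^2)|u_\infty|^2$ there gives $(w*|u_\infty|^2)(0)\leq 0$. The gap is in closing Step~3 --- you correctly sense where the difficulty lies, but neither of the two escape routes you suggest actually works.

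The first route fails quantitatively: the positive contribution you extract from the $\varepsilon\,\delta_r*\delta_r$ bump via $|u_\infty|^2\geq M/2$ on a fixed small ball is of order $\varepsilon M$, whereas the negative contribution $(w_-*|u_\infty|^2)(0)$ may be as large as $\kappa M\int_{\R^d}(1+|x|^s)^{-1}\dx$. Assumption~\ref{ass:w} imposes no relation between $\varepsilon$ and $\kappa$, and the paper stresses that handling arbitrary $\kappa$ is precisely the difficulty that makes Ruelle's method necessary. Stability of $w_2$ is a quadratic-form statement and yields no pointwise control on the sign of $(w_2*|u_\infty|^2)$ at a single point, even at a maximum of $|u_\infty|^2$. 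The second route rests on a computational slip: with $v_L(x):=u(Lx)$ the $\mu=0$ equation becomes
\begin{equation*}
\Delta v_L(x)=L^2\,\big(w_L*|v_L|^2\big)(x)\,v_L(x),\qquad w_L(z):=L^d w(Lz),
\end{equation*}
so while $w_L\wto\big(\int_{\R^d}w\big)\delta_0$, the prefactor $L^2$ does not cancel and $v_L$ cannot converge to a solution of $-\Delta v + c_0|v|^2v=0$. Indeed the $\mu=0$ GP equation is scale-covariant under $u\mapsto L\,u(L\,\cdot)$; since $u$ is merely bounded and not small, this scaling cannot be normalized to a fixed amplitude while zooming out.

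The paper's proof for $d\geq 4$ proceeds by a fundamentally different principle: it establishes the \emph{integrability} $u\in L^2(\R^d)$, not a pointwise contradiction. From the same subharmonicity $\Delta|u|^2\geq 2|u|^2(w*|u|^2)$ it derives a differential inequality for $R\mapsto R^{1-d}\int_{\partial B_R}|u|^2$, exploiting the superstability of $w_2$ on nested annuli so that the $D_2$-type cross terms are absorbed by the positive $\|(\cdot)*\delta_r\|_{L^2}^2$ contributions, with a long-range error of order $L^{d-s-2}$. Integrating in $R$ and bootstrapping the decay exponent of $R^{-d}\int_{B(z,R)}|u|^2$ from the initial $R^{-2}$ down to $R^{-d}$ yields $u\in L^2(\R^d)$. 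Pairing the GP equation with $u$ then gives $\int_{\R^d}|\nabla u|^2+\iint_{\R^d\times\R^d} w(x-y)|u(x)|^2|u(y)|^2\,\dx\,\dy=0$; both terms are non-negative by stability, forcing $u\equiv 0$.
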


\begin{proof} Our goal is to show that $u\in L^2(\R^d)$, and then derive $u\equiv 0$ using the GP equation and the integrability. Let $B_R=B(0,R)\subset \R^d$ and $S_R=\partial B(0,R)$ with $R>0$ large. Let us compute
\begin{align} \label{eq:BRu-BRcu-compact-0}
\frac{d}{dR} \left( R^{1-d}\int_{S_R} |u|^2 \right) = R^{1-d} \int_{B_R} \Delta (|u|^2) \ge 2 R^{1-d}  \int_{B_R}  |u|^2 (w*|u|^2)
\end{align}
where we used the GP equation ~\eqref{eq:GP_local} to get
$$
\Delta (|u|^2) = 2 \overline{u} \Delta u + 2|\nabla u|^2 = 2 |u|^2 (w*|u|^2) + 2|\nabla u|^2 \ge 2 |u|^2 (w*|u|^2).
$$
To estimate further the right-hand side of \eqref{eq:BRu-BRcu-compact-0}, let us recall that $w=\eps \delta_r * \delta_r + w_2$ with $w_2$ satisfying \eqref{eq:w2_regular} for some $s>d$.   Then using the upper bound $\int_{B_r} |u|^2 \le Cr^{d-2}$ from \eqref{eq:local_bound_mass},  we have for every $L>0$ large enough,
  \begin{align*}
  & \|\1_{B_R} ((w_2)_-* (\1_{B_{R+L}^c} |u|^2))\|_{L^\infty} \nn\\
  & \le C \int_{|x|\ge L} \frac{|u(x)|^2}{|x|^s} \dx = C \sum_{k=1}^N  \int_{2^{k+1}L \ge |x|\ge 2^k L} \frac{|u(x)|^2}{|x|^s} \dx\nn\\
   &\le C \sum_{k=1}^N   \frac{(2^{k+1}L)^{d-2}}{(2^k L)^s} \le C L ^{d-s-2}.
\end{align*}
Combining with the stability of $w_2$ and recalling the definition of $K$ from \eqref{eq:def_K}, we obtain
  \begin{align*}  
&\int_{\R^d} \1_{B_R} |u|^2 (w_2*|u|^2) \ge \int_{\R^d} (\1_{B_R} |u|^2) (w_2* (\1_{B_{R}^c}|u|^2) ) \nn\\
&\quad = \int_{\R^d} (\1_{B_R} |u|^2) (w_2* (\1_{B_{R+L}\backslash B_R}|u|^2) ) +  \int_{\R^d} (\1_{B_R} |u|^2) (w_2* (\1_{B_{R+L}^c} |u|^2) )   \nn\\
&\quad \ge -  K  \int_{\R^d} ((\1_{B_R} |u|^2) *\delta_r)  ( (1+|x|^s)^{-1} * \delta_r* (\1_{B_{R+L}\backslash B_R}|u|^2) )  \nn\\
& \qquad - \|\1_{B_R} ((w_2)_-* (\1_{B_{R+L}^c} |u|^2))\|_{L^\infty}   \int_{B_R} |u|^2 \nn\\
&\quad\ge - \frac{\eps}{2} \|(\1_{B_R} |u|^2) *\delta_r \|_{L^2}^2 - \frac{C}{\eps} \|  \delta_r* (\1_{B_{R+L}\backslash B_R}|u|^2)\|_{L^2}^2 - CL ^{d-s-2} \int_{B_R} |u|^2.
\end{align*}
Thus we deduce from \eqref{eq:BRu-BRcu-compact-0} that
\begin{align}\label{eq:SR-BR-integral-pre-general}
&\frac{d}{dR} \left( R^{1-d}\int_{S_R} |u|^2 \right) \ge - C L ^{d-s-2}  R^{1-d}  \int_{B_R} |u|^2 \\
&\qquad \qquad \qquad +  R^{1-d} \left( \eps \| (\1_{B_R} |u|^2) * \delta_r \|_{L^2}^2 - C\eps^{-1} \| (\1_{B_{R+L}\backslash B_R}|u|^2)* \delta_r \|_{L^2}^2 \right) \nn.
\end{align}
Next, let $\eta>0$ be sufficiently small such that $(1-\eta) (d-s-2) \le -2-\eta$ and choose
\begin{align}\label{eq:R0-L}
  R_0= L^{1/(1-\eta)}.
\end{align}
Using again $\int_{B_r} |u|^2 \le Cr^{d-2}$, we can find $R_1 \in [R_0^{d+1},R_0^{d+2}]$ such that
\begin{align}\label{eq:Rn-choice}
R_1^{1-d}\int_{S_{R_1}} |u|^2 \le C R_1^{-2} \le CR_0^{-2(d+1)}.
\end{align}
Then integrating  \eqref{eq:SR-BR-integral-pre-general} over $R\in [R_0,R_1]$, we get
\begin{align}\label{eq:SR-BR-integral}
& C R_0^{-2(d+1)} - R_0^{1-d}\int_{S_{R_0}} |u|^2 \ge - C R_0^{-2-\eta}  \int_{R_0}^{R_1} R^{1-d} \left( \int_{B_R} |u|^2 \right)  \rd R \\
&\quad + \int_{R_0}^{R_1}  R^{1-d} \left( \eps \| (\1_{B_R} |u|^2) * \delta_r \|_{L^2}^2 - C\eps^{-1} \| (\1_{B_{R+L}\backslash B_R}|u|^2)* \delta_r \|_{L^2}^2 \right) \rd R, \nn
\end{align}
where we have used \eqref{eq:Rn-choice} for the first term on the left-hand side and used \eqref{eq:R0-L} for the first term on the right-hand side.

To estimate further, let us take $M\in \mathbb{N}$ such that
$MR_0 \le R_1 \le (M+1)R_0$.
Then we can decompose
\begin{align*}
\int_{R_0}^{R_1}  R^{1-d} \| (\1_{B_R} |u|^2) * \delta_r \|_{L^2}^2\, \rd R & \ge \sum_{k=1}^{M-1} \int_{kR_0}^{(k+1)R_0}  R^{1-d} \| (\1_{B_R} |u|^2) * \delta_r \|_{L^2}^2\, \rd R \nn\\
&\ge  \sum_{k=1}^{M-1} R_0  ((k+1)R)^{1-d} \| (\1_{B_{kR_0}} |u|^2) * \delta_r \|_{L^2}^2 \nn\\
&\ge C^{-1} R_0 \sum_{k=1}^{M-1}    (kR)^{1-d} \| (\1_{B_{kR_0}} |u|^2) * \delta_r \|_{L^2}^2.
 \end{align*}
On the other hand, we have
\begin{align*}
&\int_{R_0}^{R_1}  R^{1-d}   \| (\1_{B_{R+L}\backslash B_R}|u|^2)* \delta_r \|_{L^2}^2 \, \rd R \nn\\
&\qquad \le \sum_{k=1}^M \sum_{0\le \ell \le R_0/L}  \int_{kR_0+\ell L}^{kR_0+ (\ell +1)L}  R^{1-d}   \| (\1_{B_{R+L}\backslash B_R}|u|^2)* \delta_r \|_{L^2}^2 \, \rd R \nn\\
&\qquad \le \sum_{k=1}^M \sum_{0\le \ell \le R_0/L} L (kR_0)^{1-d}   \| (\1_{B_{kR_0+(\ell+2) L}\backslash B_{kR_0+\ell L} }|u|^2)* \delta_r \|_{L^2}^2 \nn\\
&\qquad \le \sum_{k=1}^M L (kR_0)^{1-d}   \bigg\| \sum_{0\le \ell \le R_0/L}  (\1_{B_{kR_0+(\ell+2) L}\backslash B_{kR_0+\ell L} }|u|^2)* \delta_r \bigg\|_{L^2}^2.
\end{align*}
Using the pointwise estimate
 $$
\sum_{0\le \ell \le R_0/L} \1_{B_{kR_0+(\ell+2) L}\backslash B_{kR_0+\ell L} } \le 3  \1_{B_{(k+1)R_0+2L}} \le 3 \1_{B_{(k+2)R_0}}
$$
we can continue and obtain
\begin{align*}
&\int_{R_0}^{R_1}  R^{1-d}   \| (\1_{B_{R+L}\backslash B_R}|u|^2)* \delta_r \|_{L^2}^2 \, \rd R \nn\\
&\le C L \sum_{k=1}^M   ((k+2)R_0)^{1-d}   \| (\1_{B_{(k+2)R_0} }|u|^2)* \delta_r \|_{L^2}^2 \nn\\
&\le CL \sum_{k=1}^{M-1}   (kR_0)^{1-d}   \| (\1_{B_{kR_0} }|u|^2)* \delta_r \|_{L^2}^2 - CL R_1^{1-d} \| (\1_{B_{2R_1} }|u|^2)* \delta_r \|_{L^2}^2.
\end{align*}
Putting these bounds together, we can estimate the second term on the right-hand side of \eqref{eq:SR-BR-integral} as
\begin{align} \label{eq:SR-BR-integral-secondterm-1}
&\int_{R_0}^{R_1}  R^{1-d} \left( \eps \| (\1_{B_R} |u|^2) * \delta_r \|_{L^2}^2 - C\eps^{-1} \| (\1_{B_{R+L}\backslash B_R}|u|^2)* \delta_r \|_{L^2}^2 \right)\, \rd R\nn\\
&\ge \left( C^{-1} \eps R_0 - C\eps^{-1}L\right) \sum_{k=1}^{M-1}     (kR_0)^{1-d} \| (\1_{B_{kR}} |u|^2) * \delta_r \|_{L^2}^2 \nn\\
&\quad - C\eps^{-1}L R_1^{1-d} \| (\1_{B_{2R_1} }|u|^2)* \delta_r \|_{L^2}^2  \ge - C \eps^{-1} R_0^{-d}
\end{align}
for $\eps>0$ small but fixed. Here we have used $C^{-1} \eps R_0 - C\eps^{-1}L \ge 0$ as $L=R_0^{1-\eta}\ll R_0$ for $R_0$ large  and
$$
L R_1^{1-d} \| (\1_{B_{2R_1} }|u|^2)* \delta_r \|_{L^2}^2 \le C L R_1^{-1} \le C R_0^{-d}
$$
as $\int_{B_r} |u|^2 \le C r^{d-2}$. Inserting \eqref{eq:SR-BR-integral-secondterm-1} in \eqref{eq:SR-BR-integral} and using also $R_1\le R_0^{d+2}$ we conclude that
\begin{align}\label{eq:SR-BR-integral-final-general-v0}
R_0^{1-d}\int_{S_{R_0}} |u|^2 \le CR_0^{-d} + C R_0^{-2-\eta}  \int_{R_0}^{R_0^{d+2}} R^{1-d} \left( \int_{B_R} |u|^2 \right)  \,\rd R.
\end{align}
Since our proof of \eqref{eq:SR-BR-integral-final-general-v0} works equally well with $B_{R}=B (0,R)$ replaced by $z+B_R=B (z,R)$, we also have
\begin{align}\label{eq:SR-BR-integral-final-general}
R_0^{1-d}\int_{z+S_{R_0}} |u|^2 \le CR_0^{-d} + C R_0^{-2-\eta}  \int_{R_0}^{R_0^{d+2}} R^{1-d} \left( \int_{B(z,R)} |u|^2 \right)  \,\rd R
\end{align}
for all $z\in \R^d$ and $R_0$ large (the constant $C$ is independent of $z$ and $R_0$).

Finally, let us conclude that $u\in L^2(\R^d)$ using \eqref{eq:SR-BR-integral-final-general} and a bootstrap argument. Assume that we have proved for some constant $\alpha \ge 0$, the bound
\begin{equation} \label{eq:bootstrap-SR-BR}
R^{-d}\int_{B(z,R)} |u|^2 \le C R^{-d}+ C R^{-2-\alpha}
\end{equation}
for all $z\in \R^d$ and $R$ large (we know that \eqref{eq:bootstrap-SR-BR} holds for $\alpha=0$ by \eqref{eq:local_bound_mass}). Then
inserting \eqref{eq:bootstrap-SR-BR} in \eqref{eq:SR-BR-integral-final-general} we get
$$
R_0^{1-d}\int_{z+S_{R_0}} |u|^2 \le CR_0^{-d} + C R_0^{-2-\alpha-\eta/2}
$$
for all $z\in \R^d$ and $R_0$ large. Averaging over $z$ and relabeling $R_0$ by $R$ we obtain
\begin{equation} \label{eq:bootstrap-SR-BR2}
R^{-d}\int_{B(z,R)} |u|^2 \le  C R^{-d} + C R^{-2-\alpha-\eta/2}.
\end{equation}
for all $z\in \R^d$ and $R$ large. This allows us to increase $\alpha$ in \eqref{eq:bootstrap-SR-BR} to get
\begin{equation} \label{eq:bootstrap-SR-BR3}
R^{-d}\int_{B(z,R)} |u|^2 \le C R^{-d}
\end{equation}
for all $z\in \R^d$ and $R$ large, namely $u\in L^2(\R^d)$. We conclude that  $u\equiv 0$ by integrating the GP equation ~\eqref{eq:GP_local} over $\R^d$ against $u$.
\end{proof}

\subsection{Uniform bound at low density}\label{sec:proof_uniform_bound_low_density}

We are now able to complete the proof of our uniform bounds with the following

\begin{corollary}[Uniform bounds II]\label{cor:pointwise_bounds_II}
Let $d\geq 4$ and $w$ be a potential satisfying Assumption~\ref{ass:w}. Let $\mu>0$ and $u\in H^1_{\rm unif}(\R^d)$ be an arbitrary solution of the GP equation~\eqref{eq:GP_local}, having a finite local interaction in the sense of~\eqref{eq:finite_interation_u}. Then we have
\begin{equation}
\|u\|_{L^\ii(\R^d)}\leq C\sqrt\mu \left(1+\mu^{\frac{d}{4}}\right)
\label{eq:pointwise_bound_d4}
\end{equation}
where $C$ only depends on $d$ and $w$.
\end{corollary}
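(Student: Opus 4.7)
The plan is to combine Corollary~\ref{cor:pointwise_bounds_I} (which already yields the claimed bound in the large--$\mu$ regime) with a blow--up/rescaling argument at small $\mu$, using Proposition~\ref{prop:mu_0} (triviality of $\mu=0$ solutions) to rule out any loss of the factor $\sqrt\mu$. First I would note that when $\mu\geq 1$, the estimate \eqref{eq:pointwise_bound_d4} is immediate from Corollary~\ref{cor:pointwise_bounds_I} since $1+\mu^{(d+2)/4}\leq 2\mu^{(d+2)/4}=2\sqrt{\mu}\,\mu^{d/4}\leq 2\sqrt\mu(1+\mu^{d/4})$. Thus the real task is to prove $\|u\|_{L^\infty}\leq C\sqrt\mu$ for $\mu\leq 1$, with $C$ depending only on $w$ and $d$.

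I argue by contradiction. Suppose there are sequences $(\mu_n,u_n)$, with $u_n$ solving~\eqref{eq:GP_local} for $\mu=\mu_n$, such that $M_n:=\|u_n\|_{L^\infty}$ satisfies $M_n/\sqrt{\mu_n}\to\infty$. A quick check using Corollary~\ref{cor:pointwise_bounds_I} shows this forces $\mu_n\to 0$, and then $M_n$ is bounded. Pick $x_n$ with $|u_n(x_n)|\geq M_n/2$. Applying Lemma~\ref{lem:estim_potential} together with the local bound~\eqref{eq:local_bound_mass} shows $w\ast|u_n|^2\in L^\infty(\R^d)$ with a universal bound, so $\Delta u_n$ is uniformly bounded and standard interior Schauder estimates give uniform local $C^{1,\alpha}$ bounds. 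I then split according to whether $M_n$ stays bounded below or not.

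In the first case, extract $M_n\to m>0$ and set $v_n(x):=u_n(x_n+x)$. The $C^{1,\alpha}_{\mathrm{loc}}$ compactness furnishes a subsequence $v_n\to v_\infty$ locally uniformly, and the local bounds of Theorem~\ref{thm:local_bound} pass to the limit (via Fatou) to give $v_\infty\in H^1_{\mathrm{unif}}(\R^d)$ with finite local interaction. The distributional equation passes to the limit using the decay of $w$ at infinity, yielding $-\Delta v_\infty+(w\ast|v_\infty|^2)v_\infty=0$. Proposition~\ref{prop:mu_0} then forces $v_\infty\equiv 0$, contradicting $|v_\infty(0)|\geq m/2$. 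In the second case ($M_n\to 0$), I rescale both amplitude and space: set
\begin{equation*}
v_n(x):=\frac{1}{M_n}\,u_n\!\left(x_n+\frac{x}{M_n}\right),\qquad \hat w_n(y):=M_n^{-d}w(y/M_n),
\end{equation*}
so that $\|v_n\|_{L^\infty}=1$, $|v_n(0)|\geq 1/2$, and a direct chain--rule computation shows
\begin{equation*}
-\Delta v_n+(\hat w_n\ast|v_n|^2)\,v_n=\frac{\mu_n}{M_n^2}\,v_n \quad \text{in } \mathcal{D}'(\R^d),
\end{equation*}
where the right-hand side vanishes by hypothesis. Since $\|\hat w_n\|_{\mathcal{M}}=\|w\|_{\mathcal{M}}$ is uniformly bounded, $\Delta v_n$ stays bounded in $L^\infty$, giving $C^{1,\alpha}_{\mathrm{loc}}$ compactness, so $v_n\to v_\infty$ locally uniformly along a subsequence. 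A change of variables $z=y/M_n$ rewrites $(\hat w_n\ast f)(x)=\int f(x-M_n z)\,w(z)\,\rd z$, which converges by dominated convergence to $(\int w)\,f(x)$ for any bounded continuous $f$. Passing to the limit yields $-\Delta v_\infty+(\int w)\,|v_\infty|^2 v_\infty=0$, which is the GP equation at $\mu=0$ for the contact interaction $\tilde w=(\int w)\delta_0$ (which satisfies Assumption~\ref{ass:w} since $\int w>0$). Proposition~\ref{prop:mu_0} again forces $v_\infty\equiv 0$, contradicting $|v_\infty(0)|\geq 1/2$.

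The main delicate point is the limit in the nonlocal term in the blow--up case $M_n\to 0$: one must justify that $(\hat w_n\ast|v_n|^2)v_n\to(\int w)|v_\infty|^2 v_\infty$ in $\mathcal{D}'(\R^d)$, combining the local uniform convergence $v_n\to v_\infty$ with the integrable decay of $w$ at infinity to control the tail of the convolution, and keeping track of the possible Dirac part in $w$. Once this convergence is secured, Proposition~\ref{prop:mu_0} (whose nontriviality in dimensions $d\geq 4$ is the substantive input) closes both cases and yields the missing bound $\|u\|_{L^\infty}\leq C\sqrt\mu$ for $\mu\leq 1$, completing the proof of \eqref{eq:pointwise_bound_d4}.
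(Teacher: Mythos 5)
Your proposal is correct and follows essentially the same blow-up argument as the paper: reduce to small $\mu$ via Corollary~\ref{cor:pointwise_bounds_I}, argue by contradiction, split according to whether $\|u_n\|_{L^\infty}$ stays bounded below or tends to $0$, and in the latter case rescale $\tilde u_n(x)=\eps_n^{-1}u_n(x/\eps_n)$ so that $\tilde w_n\rightharpoonup\delta_0$, invoking Proposition~\ref{prop:mu_0} in both cases (in the second for the contact interaction $(\int w)\,\delta_0$). The only superficial difference is that the paper obtains the needed local compactness from the Landau--Kolmogorov inequality~\eqref{eq:Landau-Kolmogorov} rather than Schauder estimates.
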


\begin{proof}
By Corollary~\ref {cor:pointwise_bounds_I} we only have to consider the case of small $\mu$. Let us argue by contradiction and assume that there exists a sequence $\mu_n\to0$ and a GP solution $u_n$ satisfying $\|u_n\|_{L^\ii(\R^d)}/\sqrt{\mu_n}\to\ii$. Recall that we already know that $u_n\in L^\ii$ with the bound~\eqref{eq:pointwise_bound}. Upon changing the origin of space, we can assume that
$$\frac{\|u_n\|_{L^\ii(\R^d)}}{2}\leq |u_n(0)|\leq \|u_n\|_{L^\ii(\R^d)}.$$
If $\|u_n\|_{L^\ii}$ does not tend to 0, then using our uniform bounds we can pass to the limit and we find a nontrivial solution in $H^1_{\rm unif}(\R^d)$ of the GP equation with $\mu=0$. These do not exist by Proposition~\ref{prop:mu_0} hence we conclude that necessarily $\|u_n\|_{L^\ii}\to0$. Now, let us introduce $\eps_n:=\|u_n\|_{L^\ii}$ and
$$\tilde u_n(x)=\eps_n^{-1}u_n(x/\eps_n),$$
which is so that $\|\tilde u_n\|_{L^\ii}=1$ and  $ |\tilde u_n(0)|\geq1/2$. After changing variables we see that
\begin{equation}
-\Delta \tilde u_n+\left(\tilde w_n\ast|\tilde u_n|^2-\frac{\mu_n}{\eps_n^2}\right)\tilde u_n=0
\label{eq:rescaled_mu_0}
\end{equation}
with $\tilde w_n(x):=\eps_n^{-d}w(x/\eps_n)\wto \delta_0$. Next we recall the Landau-Kolmogorov inequality
\begin{equation}
 \|\nabla f\|^2_{L^\ii(\R^d)}\leq C\|\Delta f\|_{L^\ii(\R^d)}\|f\|_{L^\ii(\R^d)}.
 \label{eq:Landau-Kolmogorov}
\end{equation}
The latter can be proved by writing, for instance,
\begin{equation}
f=(-\Delta+M^2)^{-1}(-\Delta f+M^2f)= (2\pi)^{-\frac{d}2}Y_M\ast (-\Delta f+M^2f)
\label{eq:Landau-Kolmogorov_proof}
\end{equation}
where $Y_M(x)=M^{d-2}Y_1(Mx)$ is the Yukawa potential, that is, $\widehat{Y_M}(k)=(|k|^2+M^2)^{-1}$. Taking the gradient and using that $\nabla Y_1\in L^1(\R^d)$ yields
 $$\|\nabla f\|_{L^\ii(\R^d)}\leq (2\pi)^{-\frac{d}2}\norm{\nabla Y_1}_{L^1(\R^d)}\left(M^{-1}\|\Delta f\|_{L^\ii(\R^d)}+M\|f\|_{L^\ii(\R^d)}\right)$$
which gives~\eqref{eq:Landau-Kolmogorov} after optimizing over $M$. Using~\eqref{eq:Landau-Kolmogorov} and Equation~\eqref{eq:rescaled_mu_0}, we obtain
 $$ \|\nabla \tilde u_n\|_{L^\ii}\leq C\left(\frac{\mu_n}{\eps_n^2}+\int_{\R^d} |w|\right)^{\frac12}.$$
Recall that $\mu_n/\eps_n^2\to0$ by assumption. Thus $\tilde u_n$ is uniformly Lipschitz and we can find a uniform local limit $u$ by Ascoli, after extraction. Passing to the limit $n\to\ii$ in~\eqref{eq:rescaled_mu_0} we obtain a non-trivial solution of the Ginzburg-Landau equation $\Delta u=|u|^2u\int_{\R^d}w$. But these do not exist by Proposition~\ref{prop:mu_0} again, this time for $w=(\int_{\R^d} w)\delta$ (the argument also follows easily from the maximum principle). Hence we have arrived at a contradiction and conclude, as we wanted, that~\eqref{eq:pointwise_bound_d4} holds (however with an unknown constant).
\end{proof}

\subsection{Bounds on higher derivatives}\label{sec:proof_derivatives}

We can now deduce bounds on the derivatives using the uniform bounds on $u$ and the GP equation.

\begin{corollary}[Bounds on higher derivatives]\label{cor:derivatives}
Let $d\geq 1$ and $w$ be a potential satisfying Assumption~\ref{ass:w}. Let $\mu>0$ and $u\in H^1_{\rm unif}(\R^d)$ be an arbitrary solution of the GP equation~\eqref{eq:GP_local} in $\R^d$, having a finite local interaction in the sense of~\eqref{eq:finite_interation_u}. Then $u$ is real-analytic and all its derivatives are bounded with
\begin{equation}
\norm{\partial^\alpha u}_{L^\ii}\leq C^{|\alpha|}\mu^{\frac{|\alpha|+1}{2}}(1+\mu^{\frac{d}4})^{1+|\alpha|}
\label{eq:uniform_bound_derivatives}
\end{equation}
for some $C$ depending only on $w$ and $d$.
\end{corollary}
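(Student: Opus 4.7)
The plan is to prove~\eqref{eq:uniform_bound_derivatives} by induction on $|\alpha|$, using the GP equation together with the Landau--Kolmogorov inequality~\eqref{eq:Landau-Kolmogorov}. The base case $|\alpha|=0$ is Corollary~\ref{cor:pointwise_bounds_II}. Writing $M:=\sqrt{\mu}(1+\mu^{d/4})$ for brevity (so $\|u\|_\infty\leq CM$), the essential auxiliary fact is that, by Assumption~\ref{ass:w}, $w\in\R_+\delta_0+L^1(\R^d)$ has finite total variation, so $\|w\ast f\|_\infty\leq \|w\|_{\mathcal{M}}\|f\|_\infty$ for every bounded $f$. In particular $\|w\ast|u|^2\|_\infty\leq CM^2$.

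For $|\alpha|=1$, the GP equation yields $\Delta u=(w\ast|u|^2-\mu)u$, hence $\|\Delta u\|_\infty\leq C(M^2+\mu)M\leq CM^3$ since $\mu\leq M^2$. Landau--Kolmogorov then gives
\begin{equation*}
\|\nabla u\|_\infty\leq C\sqrt{\|\Delta u\|_\infty\|u\|_\infty}\leq CM^2,
\end{equation*}
which matches~\eqref{eq:uniform_bound_derivatives} for $|\alpha|=1$. For the inductive step at $|\alpha|=k+1$, I write $\alpha=\alpha'+e_i$ with $|\alpha'|=k$ and apply $\partial^{\alpha'}$ to the GP equation,
\begin{equation*}
\Delta\partial^{\alpha'}u=\mu\,\partial^{\alpha'}u-\partial^{\alpha'}\bigl[(w\ast|u|^2)\,u\bigr],
\end{equation*}
then expand the nonlinear term by Leibniz using $\partial^\beta(w\ast|u|^2)=w\ast\partial^\beta|u|^2$ together with $\partial^\beta|u|^2=\sum_{\gamma\leq\beta}\binom{\beta}{\gamma}(\partial^\gamma u)\,\overline{\partial^{\beta-\gamma}u}$. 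Each summand is a triple product of derivatives of $u$ of total order at most $k$ (with one factor hit by $w\ast$), so the inductive hypothesis together with the convolution bound control $\|\Delta\partial^{\alpha'}u\|_\infty$ by a geometric factor $C_1^{k}$ times $M^{k+3}$. A second application of Landau--Kolmogorov to $\partial^{\alpha'}u$ yields $\|\partial^\alpha u\|_\infty\leq C\sqrt{\|\Delta\partial^{\alpha'}u\|_\infty\,\|\partial^{\alpha'}u\|_\infty}\leq C_1^{k+1}M^{k+2}$, closing the induction.

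The main bookkeeping obstacle is to verify that the constants accumulate geometrically rather than super-exponentially: the Leibniz multinomial coefficients sum to $3^{|\alpha'|}$, and the square-root in each Landau--Kolmogorov step is essential to keep the base $C_1$ independent of $k$. For the real-analyticity assertion I would either sharpen the induction to obtain factorial bounds of the form $\|\partial^\alpha u\|_\infty\leq R^{|\alpha|+1}|\alpha|!$ by a Cauchy majorant argument (requiring a more delicate tracking of the multinomials), or invoke classical analytic regularity for semilinear elliptic equations: once $u$ is known to be $C^\infty$, the nonlocal coefficient $V:=w\ast|u|^2$ is as smooth as $u$, and a Yukawa fixed-point representation $u=(2\pi)^{-d/2}Y_M\ast[(\mu+M^2-V)u]$ (as in~\eqref{eq:Landau-Kolmogorov_proof}) with $M$ large allows one to propagate analyticity through the equation.
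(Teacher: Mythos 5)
Your route --- induction on $|\alpha|$ coupling the GP equation with the Landau--Kolmogorov inequality~\eqref{eq:Landau-Kolmogorov}, together with the convolution bound $\|w\ast f\|_\infty\le\bigl(\int_{\R^d}|w|\bigr)\|f\|_\infty$ --- is exactly the one used in the paper. However, the step you single out as the ``main bookkeeping obstacle'' is not actually closed, and it cannot be closed in the geometric form you propose. With the inductive hypothesis $\|\partial^\gamma u\|_\infty\le C_1^{|\gamma|}M^{|\gamma|+1}$ for all $|\gamma|\le k:=|\alpha'|$ and the identity $\sum_{\beta\le\alpha'}\binom{\alpha'}{\beta}2^{|\beta|}=3^{k}$, the Leibniz expansion gives
\[
\|\Delta\partial^{\alpha'}u\|_\infty\le C\,3^{k}\,C_1^{k}M^{k+3},
\]
so Landau--Kolmogorov yields $\|\partial^{\alpha}u\|_\infty\le C'\sqrt{3}^{\,k}C_1^{k}M^{k+2}$. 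The square root halves the exponential excess but does not remove it: this exceeds $C_1^{k+1}M^{k+2}$ as soon as $C'\sqrt3^{\,k}>C_1$, so no fixed $C_1$ closes the induction.

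Nor should it close: the purely geometric bound~\eqref{eq:uniform_bound_derivatives}, with $C$ independent of $|\alpha|$, is false for general solutions of~\eqref{eq:GP_local}. For $w=\delta_0$, $d=1$, $\mu=1$ (which satisfies Assumption~\ref{ass:w}) the kink $u(x)=\tanh(x/\sqrt2)\in H^1_{\rm unif}(\R)$ solves the equation, its nearest complex singularity is at $x=i\pi/\sqrt2$, and therefore $\limsup_N\bigl(\|\partial^Nu\|_\infty/N!\bigr)^{1/N}=\sqrt2/\pi>0$: the derivatives grow factorially, not geometrically. What does close --- and is the right form for real-analyticity --- is precisely the factorial ansatz you mention at the very end, $\|\partial^\alpha u\|_\infty\le|\alpha|!\,R^{|\alpha|}M^{|\alpha|+1}$. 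With $b_m=m!R^mM^{m+1}$ the multinomial coefficients are \emph{exactly} absorbed (use $\binom{j}{\ell}\ell!(j-\ell)!=j!$ and $\binom{k}{j}(j+1)!(k-j)!=k!\,(j+1)$), giving $\|\Delta\partial^{\alpha'}u\|_\infty\le C\,(k+2)^{2}\,k!\,R^{k}M^{k+3}$, and the Landau--Kolmogorov step then requires only $R\ge 2C_{LK}\sqrt{C}$, uniformly in $k$. You should therefore run the factorial induction from the outset; it proves analyticity and the honest version of~\eqref{eq:uniform_bound_derivatives} (with the $|\alpha|!$) in one stroke. Note that the paper's one-line proof sketch has the same unaddressed Leibniz bookkeeping, and the $|\alpha|!$ appears to be missing from the statement as printed.
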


\begin{proof}
By the Landau-Kolmogorov inequality~\eqref{eq:Landau-Kolmogorov} and the GP equation, we have
$$ \norm{\nabla u}_{L^\ii}\leq C\left(\mu+\|u\|_{L^\ii}^2\int_{\R^d}|w|\right)^{\frac12}\|u\|_{L^\ii}\leq C\mu(1+\mu^{\frac{d}4})^2.$$
We can then iterate the argument, using that
$$ \norm{\partial_{j_1}\cdots\partial_{j_N} u}_{L^\ii}\leq C\norm{\partial_{j_1}\cdots\partial_{j_{N-1}}(2\mu-w\ast |u|^2)u }_{L^\ii}^{\frac12}\norm{\partial_{j_1}\cdots\partial_{j_{N-1}} u}^{\frac12}_{L^\ii}$$
by~\eqref{eq:Landau-Kolmogorov} and we obtain~\eqref{eq:uniform_bound_derivatives} by induction.
\end{proof}

\begin{remark}
If $w\in L^\ii(\R^d)$, we can use the local bound in Theorem~\ref{thm:local_bound} and Lemma~\ref{lem:estim_potential} to infer $\|w\ast|u|^2\|_{L^\ii}\leq C\mu$ and thus get the better estimate
\begin{equation}
\norm{\nabla u}_{L^\ii}\leq C\mu(1+\mu^{\frac{d}4}).
\label{eq:uniform_bound_gradient_better}
\end{equation}
\end{remark}

In a bounded domain it is possible to estimate higher derivatives as well, but the constants depend on the regularity of the domain. For instance, let us consider a solution $u$ of the GP equation satisfying the Dirichlet boundary condition in a ball $B_R$. Then we get
\begin{equation}
\norm{\nabla u}_{L^\ii(B_R)}\leq C\mu(1+\mu^{\frac{d}4})^2,
\label{eq:Dirichlet_estim_derivatives}
\end{equation}
after using the equivalent of the Landau-Kolmogorov inequality in a finite ball, stated later in Lemma~\ref{lem:Landau-Kolomogorov_ball} in Appendix~\ref{app:Landau-Kolmogorov}.

\subsection{Lower bound for positive solutions}\label{sec:proof_Harnack}

It remains to show the lower bound~\eqref{eq:pointwise_lower_bound} for positive solutions, which will conclude the proof of Theorem~\ref{thm:uniform_bound}.

\begin{corollary}[Harnack's inequality for positive solutions in $\R^d$]\label{cor:lower_bound}
Let $d\geq 1$ and $w$ be a potential satisfying Assumption~\ref{ass:w}. Let $\mu>0$ and $u\in H^1_{\rm unif}(\R^d)$ be an arbitrary \textbf{non-negative} solution of the GP equation~\eqref{eq:GP_local} in $\R^d$, having a finite local interaction in the sense of~\eqref{eq:finite_interation_u} and satisfying the lower bound~\eqref{eq:lower_bound} on its local mass. Then we have the lower bound
\begin{equation}
u(x)\geq C^{-1}e^{-C\mu^{\frac{d+2}{4}}}\sqrt\mu.
\label{eq:pointwise_lower_bound2}
\end{equation}
\end{corollary}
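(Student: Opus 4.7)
The plan is to cast the GP equation in linear Schr\"odinger form $(-\Delta + V)u = 0$ with $V := w\ast|u|^2 - \mu$, and combine a uniform bound on $\|V\|_{L^\infty}$ with a local Harnack inequality and the mass lower bound \eqref{eq:lower_bound}. First, since $w \in \R_+\delta_0 + L^1(\R^d)$, Young's inequality (applied to the $L^1$ part, with the Dirac part treated pointwise) together with the upper bound \eqref{eq:pointwise_bound_main} yields
$$\|V\|_{L^\infty(\R^d)} \leq \mu + C\|u\|_{L^\infty}^2 \leq C\bigl(\mu + \mu^{(d+2)/2}\bigr).$$
In particular $\|V\|_{L^\infty}^{1/2} \leq C(1 + \mu^{(d+2)/4})$.

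Next, I would invoke the classical Moser-type Harnack inequality: for any nonnegative distributional solution of $(-\Delta + V)v = 0$ in $B(z,2R)$ with $V \in L^\infty$, one has
$$\sup_{B(z,R)} v \leq C_0 \exp\bigl(C_0(1 + R\|V\|_{L^\infty}^{1/2})\bigr) \inf_{B(z,R)} v,$$
with $C_0 = C_0(d)$. The crucial point is the exponential dependence in $R\sqrt{\|V\|_\infty}$ rather than in $R^2\|V\|_\infty$; this sharper form is well-known and can be obtained either from a Cheng--Yau-type logarithmic gradient estimate on scaled balls, or from a careful Moser iteration tracking the exponents of the bootstrap. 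A weaker exponent $R^2\|V\|_\infty$ would only deliver $e^{-C\mu^{(d+2)/2}}\sqrt\mu$, which is strictly worse than the announced bound.

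The third step is the choice of radius: take $R = \max(R_0,\,R_0/\sqrt{\mu})$ with $R_0$ a constant large enough that both \eqref{eq:lower_bound} applies and $\mu - C^2/R^2 \geq \mu/2$. For $\mu \geq 1$ the radius is $R_0$ and $R\|V\|_\infty^{1/2} \leq C\mu^{(d+2)/4}$; for $\mu \leq 1$ the radius is $R_0/\sqrt\mu$ and $R\|V\|_\infty^{1/2} \leq R_0\sqrt{C\mu/\mu} = CR_0$ is bounded. In either regime the Harnack factor is at most $C e^{C\mu^{(d+2)/4}}$. Meanwhile, \eqref{eq:lower_bound} gives $\int_{B(z,R)} u^2 \geq c\mu R^d$, so
$$\sup_{B(z,R)} u^2 \;\geq\; \frac{1}{|B(z,R)|} \int_{B(z,R)} u^2 \;\geq\; c'\mu,$$
hence $\sup_{B(z,R)} u \geq c''\sqrt\mu$. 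Combining with Harnack on $B(z,2R)$ (which makes sense since the equation holds on all of $\R^d$), for every $z \in \R^d$,
$$u(z) \;\geq\; \inf_{B(z,R)} u \;\geq\; C^{-1} e^{-C\mu^{(d+2)/4}}\sup_{B(z,R)} u \;\geq\; C^{-1} e^{-C\mu^{(d+2)/4}}\sqrt\mu,$$
which is \eqref{eq:pointwise_lower_bound2}.

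The main obstacle is really the identification of the correct Harnack estimate: one must use a version whose constant is \emph{exponential in $R\sqrt{\|V\|_\infty}$}, and simultaneously choose $R \sim \min(1, 1/\sqrt\mu)$ so as to balance (i) the requirement $R \geq C$ coming from \eqref{eq:lower_bound}, (ii) the need that $R^2\mu \gtrsim 1$ for the mass bound to be nontrivial, and (iii) the requirement that $R\|V\|_\infty^{1/2} \lesssim 1 + \mu^{(d+2)/4}$. Once this balancing is set up, the remaining steps are book-keeping using ingredients already established in Theorem~\ref{thm:uniform_bound}~(i) and Corollary~\ref{cor:lower_bd_mass}.
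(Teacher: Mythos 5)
Your proposal is correct and follows essentially the same route as the paper. The one technical difference is precisely at the point you single out as the main obstacle: rather than invoke a Moser--Harnack inequality with the sharp constant $\exp(CR\sqrt{\|V\|_\infty})$ for a general bounded potential $V$, the paper sets $M := (\int_{\R^d}w_+)^{1/2}\|u\|_{L^\infty}$ so that $(-\Delta+M^2)u=(M^2+\mu-w\ast|u|^2)u\geq0$, turning $u$ into a non-negative supersolution of a Yukawa operator and applying the Harnack inequality of \cite[Thm.~9.9]{LieLos-01}, whose constant $e^{-cM\ell}$ delivers the same $\exp(-C\ell\sqrt{\|V\|_\infty})$ scaling with a clean citation; the radius choice $\ell=\max(C,2/\sqrt{\mu})$, the use of \eqref{eq:lower_bound}, and the final balancing are identical to what you wrote.
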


\begin{proof}
We rewrite the GP equation in the form
$$(-\Delta+M^2)u=(M^2+\mu-w\ast |u|^2)u$$
and pick
$$M= \left(\int_{\R^d}w_+\right)^{\frac12}\|u\|_\ii\leq C\sqrt\mu \left(1+\mu^{\frac{d}{4}}\right)$$
to ensure $M^2+\mu-w\ast |u|^2\geq 0$. Then we have $(-\Delta+M^2)u\geq0$
and we can use Harnack's inequality in the form of~\cite[Thm.~9.9]{LieLos-01}, leading to
$$u(x)\geq \frac{e^{-cM\ell}}{|B_{\ell}|}\int_{B(x,\ell)}u\geq \frac{e^{-cM\ell}}{\|u\|_{L^\ii}|B_{\ell}|}\int_{B(x,\ell)}u^2\geq \frac{e^{-cM\ell}}{CM}(\mu-\ell^{-2})_+.$$
We obtain~\eqref{eq:pointwise_lower_bound} after choosing as usual $\ell=\max(C,2/\sqrt\mu)$.
\end{proof}
The proof of Theorem~\ref{thm:uniform_bound} is now complete.

\subsection{Existence of positive ground states: Proof of Corollary~\ref{cor:existence-1}}\label{sec:proof_existence}
Let us now prove Corollary~\ref{cor:existence-1} concerning the existence of positive infinite ground states through a thermodynamic limit. We consider a positive Dirichlet or Neumann minimizer $u_n$ in a growing domain $\Omega_n\supset B(0,R_n)$ as in the statement. Since it is a minimizer it solves the GP equation in $\Omega_n$. From the universal bounds in Theorem~\ref{thm:local_bound}, $u_n$ is  bounded in $H^1_{\rm unif}(\Omega)$. In the Dirichlet case, $u_n$ is even bounded in $L^\ii(\Omega_n)$ by Corollary~\ref{cor:pointwise_bounds_I}. This implies that $\Delta u_n$ is bounded and then $\nabla u_n$ is bounded far away from the boundary, by the Landau-Kolmogorov inequality~\eqref{eq:Landau-Kolmogorov} applied to $\chi u_n$. The same holds in the Neumann case, far away from the boundary. Therefore, after extraction of a subsequence, we have $u_n\to u$ uniformly locally.
Using the same decomposition as in Lemma~\ref{lem:estim_potential}, we see that $w\ast u_n^2\to w\ast u^2$ locally uniformly. This allows us to pass to the limit in the GP equation, in the sense of distributions. Similarly, we know that $u_n$ satisfies~\eqref{eq:local_min} for all $h\in H^1(\R^d)$ having compact support and finite interaction energy, as soon as its support is contained in $\Omega_n$. We can pass to the limit in this relation and obtain~\eqref{eq:local_min}. This concludes the proof of Corollary~\ref{cor:existence-1}.\qed

\section{Thermodynamic properties of infinite ground states}\label{sec:thermo_limit}

\subsection{Free energy per unit volume: Proof of Theorem \ref{thm:thermo}}\label{sec:proof_thermo}
In this section we provide the proof of Theorem~\ref{thm:thermo} concerning the properties of the thermodynamic functions $f(\mu)$ and $e(\rho)$. The main difficulty is to show that $f$ is strictly concave.

\subsubsection*{Step 1: Convexity of $e$.}  Let $\rho_1,\rho_2>0$ and $t\in(0,1)$. We split a large cube $C_n$ as
$$C_n=(0,n)^d=(0,tn)\times (0,n)^{d-1}\cup[tn,n)\times (0,n)^{d-1}=:A_n\cup B_n.$$
In $A_n$ we place a minimizer $a_n\geq0$ for $E_{\rm D}(\rho_1|A_n|,A_n)$ and in $B_n$ we place a minimizer $b_n\geq0$ for $E_{\rm D}(\rho_2|B_n|,B_n)$. We then take as trial state the function $u_n:=a_n+b_n\in H^1_0(C_n)$, which has the mass
$$\int u_n^2=\rho_1|A_n|+\rho_2|B_n|=(t\rho_1+(1-t)\rho_2)|C_n|.$$
This gives
\begin{multline*}
E_{\rm D}\big((t\rho_1+(1-t)\rho_2)|C_n|,C_n\big)\leq E_{\rm D}(\rho_1|A_n|,A_n)+E_{\rm D}(\rho_2|B_n|,B_n)\\
+\int_{A_n}\int_{B_n}a_n(x)^2b_n(y)^2w(x-y)\,\dx\,\dy.
\end{multline*}
From Corollary~\ref{cor:pointwise_bounds_I} we know that $a_n$ and $b_n$ are uniformly bounded.
Next we distinguish whether $x\in A_n$ is at a distance $\leq \delta$ to $B_n$ or not, and obtain
\begin{equation*}
\iint_{A_n\times B_n}a_n(x)^2b_n(y)^2|w(x-y)|\,\dx\,\dy
\leq Cn^d\left(\int_{|z|\geq \delta}|w|\,\rd z+C\frac\delta{n}\right)=o(n^d)
\end{equation*}
if we choose $1\ll\delta\ll n$. After dividing by $|C_n|=n^d$ and passing to the limit we obtain
$$e\big(t\rho_1+(1-t)\rho_2\big)\leq te(\rho_1)+(1-t)e(\rho_2).$$
We have used here that $A_n$ and $B_n$ satisfy the Fischer regularity conditions~\eqref{eq:Fisher-0} and \eqref{eq:Fisher}, up to a translation.

\subsubsection*{Step 2: Proof of the Legendre relations.}
Let again $C_n=(0,n)^d$ be the cube of side length $n$ and call $u_n$ the Dirichlet ground state in $C_n$ of mass $\int_{C_n}u_n^2=\lambda_n=\rho|C_n|$. For any $\mu>0$, we have
$$E_{\rm D}(\lambda_n,C_n)=\cE_{C_n}(u_n)=\cF_{\mu,C_n}(u_n)+\mu\lambda_n\geq F_{\rm D}(\mu,C_n)+\mu\lambda_n.$$
Dividing by $|C_n|$ and passing to the thermodynamic limit provides
\begin{equation}
e(\rho)\geq\sup_{\mu>0}\{f(\mu)+\mu\rho\}=:f^*(\rho).
\label{eq:e_lower_f}
\end{equation}
Let now $\mu>0$ and $v_n$ be a minimizer for $F_{\rm D}(\mu,C_n)$, that is,
$$F_{\rm D}(\mu,C_n)=\cE_{C_n}(v_n)-\mu\lambda_n,\qquad\lambda_n:=\int_{C_n}v_n^2.$$
From the local bounds in Theorem~\ref{thm:local_bound} we know that
$$c^{-1}|C_n|\leq\lambda_n\leq c|C_n|$$
and therefore we can assume after extraction of a subsequence that $|C_n|^{-1}\lambda_n\to\bar\rho$ for some $\bar\rho>0$. After passing to the thermodynamic limit, this gives
\begin{equation}
f(\mu)\geq e(\bar\rho)-\mu\bar\rho\geq \inf_{\rho>0}\{e(\rho)-\mu\rho\}=:e_*(\mu).
\label{eq:f_lower_e}
\end{equation}
Together with~\eqref{eq:e_lower_f}, we have proved that $e\geq f^*\geq (e_*)^*$. Since $e$ is convex, it is the Legendre transform of itself, that is, $(e_*)^*=e$. The series of inequalities then implies $e=f^*$ and thus $e_*=f$, as claimed.

In particular, we deduce, as mentioned after the statement, that $e$ and $f$ are $C^1$, except possibly on a countable set where their left and right derivatives do not coincide. A jump in the derivative of $e$ corresponds to an interval where $f$ is linear and conversely. In order to show that $e$ is $C^1$, we thus have to show that $f$ is strictly concave. We do this in Step 4.

\subsubsection*{Step 3: Optimizers have the asymptotic dual density and multiplier.}
We will need the following.

\begin{lemma}\label{lem:dual_mass}
Let $\Omega_n$ be a sequence of domains satisfying~\eqref{eq:Fisher-0} and \eqref{eq:Fisher}. Let $u_n$ be a sequence of minimizers for $E_{\rm D/N}(\lambda_n,\Omega_n)$, with $\lambda_n|\Omega_n|^{-1}\to \rho$. Then the corresponding Lagrange multiplier $\mu_n$ satisfies
\begin{equation}
e'(\rho)_-\leq \liminf_{n\to\ii}\mu_n\leq \limsup_{\to\ii}\mu_n\leq e'(\rho)_+.
 \label{eq:asymp_mu_n}
\end{equation}
Similarly, let $v_n$ be a sequence of quasi-minimizers for $F_{\rm D/N}(\mu,\Omega_n)$, in the sense that
$$\cF_{\mu,\Omega_n}(v_n)=F_{\rm D/N}(\mu,\Omega_n)+o(|\Omega_n|).$$
Then its mass satisfies
$$-f'(\mu)_-\leq \liminf_{n\to\ii}\frac1{|\Omega_n|}\int_{\Omega_n}|v_n|^2\leq\limsup_{n\to\ii}\frac1{|\Omega_n|}\int_{\Omega_n}|v_n|^2\leq-f'(\mu)_+.$$
\end{lemma}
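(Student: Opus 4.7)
For part (i), my plan is to use the scaling trial state $u_n^\theta := \sqrt{1+\theta}\,u_n$, $\theta>-1$, which satisfies the same boundary condition as $u_n$ and has mass $(1+\theta)\lambda_n$. Testing the Gross--Pitaevskii equation against $\overline{u_n}$ and integrating by parts (the boundary term vanishes under either the Dirichlet or the Neumann condition) yields the first-order identity $T_n+2W_n=\mu_n\lambda_n$, where $T_n:=\int_{\Omega_n}|\nabla u_n|^2$ and $W_n$ is the interaction energy. A direct expansion of $\cE_{\Omega_n}(u_n^\theta)$ then gives
$$E_{\rm D/N}\big((1+\theta)\lambda_n,\Omega_n\big)\leq \cE_{\Omega_n}(u_n^\theta)=E_n+\theta\mu_n\lambda_n+\theta^2 W_n.$$
By Theorem~\ref{thm:local_bound}, $W_n/|\Omega_n|$ is uniformly bounded, and consequently so is $\mu_n=(T_n+2W_n)/\lambda_n$ since $\lambda_n/|\Omega_n|\to\rho>0$. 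After extracting a subsequence with $\mu_n\to\mu^\ast$ and $W_n/|\Omega_n|\to w^\ast$, dividing by $|\Omega_n|$ and invoking the thermodynamic limit $|\Omega_n|^{-1}E_{\rm D/N}((1+\theta)\lambda_n,\Omega_n)\to e((1+\theta)\rho)$, I would arrive at
$$e\big((1+\theta)\rho\big)\leq e(\rho)+\theta\rho\,\mu^\ast+\theta^2 w^\ast,\qquad\theta>-1.$$
Dividing by $\theta\rho$ (with the appropriate sign flip when $\theta<0$) and using that the secant slopes of the convex function $e$ converge to $e'(\rho)_\pm$, the limit $\theta\to 0^+$ forces $e'(\rho)_+\leq\mu^\ast$ while $\theta\to 0^-$ forces $\mu^\ast\leq e'(\rho)_-$. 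Combined with the convexity bound $e'(\rho)_-\leq e'(\rho)_+$, these two inequalities imply $\mu^\ast=e'(\rho)_-=e'(\rho)_+$, so the whole sequence $\mu_n$ converges to $e'(\rho)$ and the stated bracketing follows; as a byproduct this argument also yields the $C^1$ regularity of $e$ (equivalently, the strict concavity of $f$) announced in Theorem~\ref{thm:thermo}.

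For part (ii), I would use a dual Legendre perturbation: the quasi-minimizer $v_n$ itself is a trial state for $F_{\rm D/N}(\mu',\Omega_n)$ with $\mu'\neq\mu$, and the affine dependence of $\cF_\mu$ on $\mu$ gives
$$F_{\rm D/N}(\mu',\Omega_n)\leq \cF_{\mu'}(v_n)=\cF_\mu(v_n)+(\mu-\mu')\lambda_n,\qquad \lambda_n:=\int_{\Omega_n}|v_n|^2.$$
The quasi-minimality assumption yields $\cF_\mu(v_n)/|\Omega_n|\to f(\mu)$, and Theorem~\ref{thm:local_bound} guarantees that $\lambda_n/|\Omega_n|$ is uniformly bounded. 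Dividing by $|\Omega_n|$ and extracting a subsequential limit $\lambda_n/|\Omega_n|\to\ell^\ast$, one obtains
$$f(\mu')\leq f(\mu)+(\mu-\mu')\,\ell^\ast.$$
Dividing by $\mu'-\mu$ and sending $\mu'\to\mu^\pm$, and using that the secant slopes of the concave function $f$ converge to $f'(\mu)_\pm$, I conclude $-f'(\mu)_-\leq\ell^\ast\leq -f'(\mu)_+$ for every subsequential limit, which is exactly the claim.

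The main technical obstacle common to both parts is the extraction of convergent subsequences for $\mu_n$, $W_n/|\Omega_n|$ and $\lambda_n/|\Omega_n|$; uniform control on all three follows from the local bounds of Theorem~\ref{thm:local_bound} together with the first-order condition on $u_n$. Apart from that, the two arguments are perfectly symmetric: a one-parameter scaling of the $L^2$-mass on the canonical side, a one-parameter shift of the chemical potential on the grand-canonical side. Each encodes, already at the level of the finite-volume problems, the duality between mass constraints and chemical potentials that is formalized in Theorem~\ref{thm:thermo}.
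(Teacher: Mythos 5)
Your proposal is correct, and part (i) follows the paper's Step 3 argument essentially verbatim: same trial state $\sqrt{1+\theta}\,u_n$, same first-order identity $T_n+2W_n=\mu_n\lambda_n$ from the Euler--Lagrange equation, same passage to the thermodynamic limit, and same convexity argument to extract the one-sided derivatives. You were also right to observe that the two-sided limit yields $e'(\rho)_+\leq\mu^\ast\leq e'(\rho)_-$, which together with the standing convexity inequality $e'(\rho)_-\leq e'(\rho)_+$ forces equality; this matches the remark immediately after the lemma in the paper, where the authors note that $\mu_n\to e'(\rho)$ once $e$ is known to be $C^1$. Your reading makes explicit that the perturbation argument already contains this information. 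The authors nonetheless give an independent proof of strict concavity of $f$ in their Step 4, which relies only on part (ii) of this lemma, so there is no circularity on either route.

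For part (ii) your argument is a genuine, if modest, variant. The paper extracts a convergent density $\bar\rho$, appeals to the Legendre relation $f(\mu)=\min_\rho\{e(\rho)-\mu\rho\}$ (established in Step 2), and uses the characterization of minimizers of $\rho\mapsto e(\rho)-\mu\rho$ as the interval $[-f'(\mu)_-,-f'(\mu)_+]$, i.e.\ it goes through the convex-analytic description of the subdifferential. You instead perturb the chemical potential directly: the quasi-minimizer $v_n$ is an admissible trial state at any $\mu'\neq\mu$, and the affine dependence of $\cF_{\mu',\Omega_n}$ on $\mu'$ gives the finite-volume inequality $F_{\rm D/N}(\mu',\Omega_n)\leq\cF_{\mu,\Omega_n}(v_n)+(\mu-\mu')\lambda_n$; dividing by $\mu'-\mu$ and sending $\mu'\to\mu^\pm$ produces the one-sided derivatives of $f$. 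This bypasses the Legendre relation entirely and uses only the existence of $f(\mu)$ and its concavity (as an infimum of affine functions). Both routes are valid; yours is the one more cleanly dual to your part (i) argument, with a one-parameter shift in $\mu$ on the grand-canonical side mirroring the one-parameter mass scaling on the canonical side.
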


\begin{remark}
Later we will prove that $e$ is $C^1$, in which case~\eqref{eq:asymp_mu_n} simply becomes $\mu_n\to\mu(\rho):=e'(\rho)$.
\end{remark}

\begin{proof}
We start with $v_n$ and denote $\lambda_n:=\int_{\Omega_n}v_n^2$. From the local bounds in Theorem~\ref{thm:local_bound} we know that $c^{-1}\leq|\Omega_n|^{-1}\lambda_n\leq c$. Upon extraction of a subsequence, we assume $\lambda_n|\Omega_n|^{-1}\to\bar\rho>0$. We have from the thermodynamic limit proved in Step 1
$$f(\mu)=\lim_{n\to\ii}\frac{\cE_{\Omega_n}(v_n)-\mu\lambda_n}{|\Omega_n|}\geq e(\bar\rho)-\mu\bar\rho.$$
However $f(\mu)=\min_{\rho>0}\{e(\rho)-\mu\rho\}$, as we have seen, and therefore we conclude that $\bar\rho$ realizes the minimum. Equivalently, $\bar\rho\in [-f'(\mu)_-,-f'(\mu)_+]$ as desired. We also deduce that $|\Omega_n|^{-1}\cE_{\Omega_n}(v_n)\to e(\bar\rho)$.

Let us now consider $u_n$ and assume similarly that $\mu_n\to\bar\mu$. We then use $(1+t)^{1/2}u_n$ as a trial state for $E_{\rm D/N}((1+t)\lambda_n,\Omega_n)$, for some fixed $t$. We obtain, using the Gross-Pitaevskii equation for the linear term in $t$,
\begin{align}
E_{\rm D/N}((1+t)\lambda_n,\Omega_n)&\leq \cE_{\Omega_n}(\sqrt{1+t}u_n)\nn\\
&=E_{\rm D/N}(\lambda_n,\Omega_n)+t\mu_n\lambda_n\nn\\
&\qquad+\frac{t^2}{2}\iint_{\Omega_n^2}w(x-y)|u_n(x)|^2|u_n(y)|^2\,\dx\,\dy.\label{eq:perturb_can}
\end{align}
Passing to the thermodynamic limit provides
$$e(\rho(1+t))-e(\rho)\leq t\bar\mu\rho+O(t^2).$$
Taking $t\to0^+$ and then $t\to 0^-$ provides the claimed bounds.
\end{proof}

\subsubsection*{Step 4: Strict concavity of $f$.}
Let us work in the cube $C_n=(0,n)^d$ and call $u_n\geq0$ a solution of the Neumann grand-canonical problem $F_{\rm N}(\mu,C_n)$. After extraction of a subsequence we know from Lemma~\ref{lem:dual_mass} that $|C_n|^{-1}\lambda_n\to \bar\rho\in [-f'(\mu)_-,-f'(\mu)_+]$ for $\lambda_n=\int_{C_n}u_n^2$. We now perturb $u_n$ by adding a constant in the form $u_n+\eta c\1_{C_n}$ for two  small constants $\eta,c\in(-1,1)$ and use it as a trial state for the chemical potential $\mu+\eta$. Using the equation for $u_n$, all the linear terms in $\eta$ vanish except the one coming from $\mu+\eta$ and we obtain
\begin{align*}
&F_{\rm N}(\mu+\eta,C_n)\\
&\leq\int_{C_n}|\nabla u_n|^2+\frac12\iint_{C_n^2}w(x-y)|u_n(x)+c\eta |^2|u_n(y)+c\eta |^2\,\dx\,\dy\\
&\qquad-(\mu+\eta)\int_{C_n}|u_n+c\eta |^2\\
&=F_{\rm N}(\mu,{C_n})-\eta\lambda_n+\eta^2\bigg(2c^2\int_{C_n} u_n(w\ast u_n)+c^2\int_{C_n} w\ast |u_n|^2\\
&\qquad -\mu c^2|{C_n}|-2\mu c\int_{C_n}u_n\bigg)+\eta^3c^3\int_{C_n}w\ast u_n+\eta^4c^4\iint_{C_n^2}w(x-y)\,\dx\,\dy\\
&\leq F_{\rm N}(\mu,{C_n})-\eta\lambda_n+\eta^2\bigg(3c^2\left(\int_{\R^d} w_+\right)\int_{C_n}|u_n|^2-2c\int_{C_n} u_n\bigg)\\
&\qquad+\eta^3c^3\left(\int_{\R^d} w_+\int_{C_n}u_n+\eta c|C_n|\right).
\end{align*}
From our local bounds we have $\int_{C_n}u_n^2\leq \kappa|C_n|$ and
$$\int_{C_n}u_n\leq |C_n|^{\frac12}\left(\int_{C_n}u_n^2\right)^{\frac12}\leq \kappa |C_n|.$$
There exists also a lower bound, which can be proved using a Harnack inequality inside $C_n$ like in Corollary~\ref{cor:lower_bound}.
Since we have not considered a bounded set there, we provide a different argument. Let $2<p< 2^*$ be any subcritical exponent. By Hölder's inequality, we have
$$\int_{B(z,R)}u_n^2\leq \left(\int_{B(z,R)}u_n\right)^{\frac{p-2}{p-1}}\norm{u_n}_{L^p(B(z,R))}^{\frac{p}{p-1}}\leq \left(\int_{B(z,R)}u\right)^{\frac{p-2}{p-1}}\norm{u_n}_{H^1(B(z,R))}^{\frac{p}{p-1}}.$$
Our local upper bounds on $\int_{B(z,R)}u_n^2+|\nabla u_n|^2$ and our lower bounds on $\int_{B(z,R)}u_n^2$ imply a uniform positive lower bound on $\int_{B(z,R)}u_n$, when $B(z,R)\subset{C_n}$. Therefore, after choosing $c$ small enough and passing to the thermodynamic limit, we find
$$f(\mu+\eta)\leq f(\mu)-\eta\bar\rho -\alpha\eta^2+O(\eta^3)$$
for some $\alpha>0$ depending on $\mu$. In principle $\bar\rho$ is not uniquely defined for a $\mu$. But it is always uniquely defined in places where $f(\mu)$ is linear. The above estimate therefore shows that $f$ cannot be linear at all. Hence it is strictly concave and $e$ is $C^1$. For any $\rho$ there is a unique $\mu(\rho)=e'(\rho)$ so that $e(\rho)=f(\mu(\rho))+\mu(\rho)\,\rho$. It satisfies the bound~\eqref{eq:bound_mu_rho} by Theorem~\ref{thm:local_bound}.

\subsubsection*{Step 5: Concavity of $\rho\mapsto e(\rho)/\rho$.}
After letting $u=\sqrt{\lambda}v$, we can also write the Dirichlet problem in~\eqref{eq:def_I} as
$$E_{\rm D}(\lambda,\Omega):=\lambda\,\min_{\substack{v\in H^1_0(\Omega)\\ \int_\Omega|v|^2=1}}\left\{\int_\Omega|\nabla v|^2+\frac\lambda2\iint_{\Omega\times\Omega}|v(x)|^2|v(y)|^2w(x-y)\dx\,\dy\right\}.$$
This proves that $E_{\rm D}(\lambda,\Omega)/\lambda$ is a concave non-decreasing function of $\lambda$ and thus, after passing to the thermodynamic limit, that $\rho\mapsto e(\rho)/\rho$ is concave non-decreasing.
This concludes the proof of Theorem~\ref{thm:thermo}.
\qed

\subsection{Convergence of canonical minimizers: Proof of Theorem \ref{thm:canonical}}\label{sec:proof_canonical}

We provide the proof that canonical minimizers converge in the thermodynamic limit to a (grand-canonical) infinite ground state of the corresponding $\mu(\rho)$. The proof is based on the following important lemma, inspired by Dobru\v{s}in and Minlos~\cite{DobMin-67}.

\begin{lemma}[First order perturbations]\label{lem:DobMin}
Let $\Omega_n$ be a sequence satisfying the Fischer regularity assumptions~\eqref{eq:Fisher-0} and \eqref{eq:Fisher}. Consider two sequences $\lambda_n$ and $a_n$ such that
$$\frac{\lambda_n}{|\Omega_n|}\to\rho>0,\qquad a_n\to a\in\R.$$
Then we have
$$\lim_{n\to\ii}\Big(E_{\rm D/N}(\lambda_n+a_n,\Omega_n)-E_{\rm D/N}(\lambda_n,\Omega_n)\Big)=\mu(\rho)\,a.$$
\end{lemma}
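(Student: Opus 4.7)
The plan is to derive matching upper and lower bounds on the difference $E_{\rm D/N}(\lambda_n+a_n,\Omega_n)-E_{\rm D/N}(\lambda_n,\Omega_n)$ by using a canonical minimizer on one side as a (rescaled) trial state for the other side. The key inputs are the perturbation inequality~\eqref{eq:perturb_can} already established in the proof of Lemma~\ref{lem:dual_mass}, together with the $C^1$ character of $e$ proved in Theorem~\ref{thm:thermo}, which upgrades the bounds in Lemma~\ref{lem:dual_mass} to genuine convergence $\mu_n\to\mu(\rho):=e'(\rho)$.

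For the upper bound, I would let $u_n\ge 0$ be a minimizer of $E_{\rm D/N}(\lambda_n,\Omega_n)$ with Lagrange multiplier $\mu_n$, set $t_n:=a_n/\lambda_n$, and use $(1+t_n)^{1/2}u_n$ as a trial state for $E_{\rm D/N}(\lambda_n+a_n,\Omega_n)$. Since $\lambda_n\sim\rho|\Omega_n|\to\infty$ and $a_n\to a$ is bounded, $t_n\to 0$ and $1+t_n>0$ for $n$ large. The inequality~\eqref{eq:perturb_can} gives
\begin{equation*}
E_{\rm D/N}(\lambda_n+a_n,\Omega_n)-E_{\rm D/N}(\lambda_n,\Omega_n)\leq \mu_n a_n+\frac{a_n^2}{2\lambda_n^2}D(|u_n|^2,|u_n|^2).
\end{equation*}
The stability of $w_2$ together with the GP equation integrated against $u_n$ and the local bound~\eqref{eq:local_bound_kinetic_interaction} yield $0\leq D(|u_n|^2,|u_n|^2)\leq \mu_n\lambda_n\leq C|\Omega_n|$, so the quadratic term is $O(a_n^2/|\Omega_n|)=o(1)$. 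By Lemma~\ref{lem:dual_mass}, together with the $C^1$ regularity of $e$ from Theorem~\ref{thm:thermo}, we have $\mu_n\to\mu(\rho)$ (without extraction), so the upper bound $\mu(\rho)a$ follows after taking $\limsup$.

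For the matching lower bound, I would argue symmetrically: take $v_n\ge 0$ a minimizer of $E_{\rm D/N}(\lambda_n+a_n,\Omega_n)$ with Lagrange multiplier $\tilde\mu_n$, and use $(1+s_n)^{1/2}v_n$ with $s_n:=-a_n/(\lambda_n+a_n)\to 0$ as a trial state for $E_{\rm D/N}(\lambda_n,\Omega_n)$ (this is admissible since $(1+s_n)(\lambda_n+a_n)=\lambda_n$). Since $(\lambda_n+a_n)/|\Omega_n|\to\rho$ as well, Lemma~\ref{lem:dual_mass} and the $C^1$ property of $e$ give $\tilde\mu_n\to\mu(\rho)$. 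Applying~\eqref{eq:perturb_can} with the roles of $\lambda_n$ and $\lambda_n+a_n$ swapped yields
\begin{equation*}
E_{\rm D/N}(\lambda_n,\Omega_n)-E_{\rm D/N}(\lambda_n+a_n,\Omega_n)\leq -\tilde\mu_n a_n+\frac{a_n^2}{2(\lambda_n+a_n)^2}D(|v_n|^2,|v_n|^2)=-\tilde\mu_n a_n+o(1),
\end{equation*}
and taking $\liminf$ gives the reverse inequality $\mu(\rho)a$, which matches the upper bound.

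The main conceptual point, rather than a computational obstacle, is that one really needs the strict concavity of $f$ (equivalently, that $e$ is $C^1$) to conclude $\mu_n\to\mu(\rho)$ in full convergence; without it, Lemma~\ref{lem:dual_mass} would only yield two-sided inequalities with $e'(\rho)_\pm$ and the limit in the statement would generally fail to exist. Once this is granted, the only mildly technical point is the uniform bound $D(|u_n|^2,|u_n|^2)\leq C|\Omega_n|$, which follows directly from the local bound~\eqref{eq:local_bound_kinetic_interaction} by summing over a tiling of $\Omega_n$ by unit-scale balls (using Fischer regularity of $\Omega_n$ to absorb the boundary layer).
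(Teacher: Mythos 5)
Your proof is correct and follows essentially the same route as the paper: use $(1+a_n/\lambda_n)^{1/2}u_n$ as a trial state and the perturbation inequality~\eqref{eq:perturb_can}, bound the quadratic correction by $O(|\Omega_n|/\lambda_n^2)=o(1)$ via the local bounds, invoke Lemma~\ref{lem:dual_mass} together with the $C^1$ property of $e$ from Theorem~\ref{thm:thermo} to get $\mu_n\to\mu(\rho)$, and obtain the matching lower bound by exchanging $(\lambda_n,a_n)\leftrightarrow(\lambda_n+a_n,-a_n)$. Your explicit remark that full convergence of $\mu_n$ (and not just two-sided bounds by $e'(\rho)_\pm$) requires the strict concavity of $f$ correctly identifies a point the paper leaves implicit after Lemma~\ref{lem:dual_mass}.
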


\begin{proof}[Proof of Lemma~\ref{lem:DobMin}]
Let $u_n$ be a minimizer for $E_{\rm D/N}(\lambda_n,\Omega_n)$, which solves the GP equation
$$(-\Delta+w\ast |u_n|^2)u_n=\mu_n\,u_n.$$
We know by Lemma~\ref{lem:dual_mass} that $\mu_n\to\mu(\rho)>0$. We estimate $E_{\rm D/N}(\lambda_n+a_n,\Omega_n)$ using $(1+a_n/\lambda_n)^{1/2}u_n$ as trial state and obtain from~\eqref{eq:perturb_can}
\begin{align*}
&E_{\rm D/N}(\lambda_n+a_n,\Omega_n)\\
&\qquad \leq E_{\rm D/N}(\lambda_n,\Omega_n)+\mu_n a_n
+\frac{a_n^2}{2\lambda_n^2}\iint_{\Omega_n^2}w(x-y)|u_n(x)|^2|u_n(y)|^2\,\dx\,\dy\\
&\qquad \leq E_{\rm D/N}(\lambda_n,\Omega_n)+\mu_n a_n
+\frac{C}{\lambda_n^2}|\Omega_n|.
\end{align*}
In the last estimate we have used the local bounds on $u_n$. Since the last term goes to zero, we have proved that
$$\limsup_{n\to\ii}\Big(E_{\rm D/N}(\lambda_n+a_n,\Omega_n)- E_{\rm D/N}(\lambda_n,\Omega_n)\Big)\leq \mu(\rho) a.$$
To obtain the reverse inequality on the liminf, we replace $\lambda_n$ by $\lambda_n'=\lambda_n+a_n$ and $a_n$ by $a_n'=-a_n$.
\end{proof}

\begin{proof}[Proof of Theorem~\ref{thm:canonical}]
Let $u_n$ be as in the statement. From the local bounds in Theorem~\ref{thm:local_bound} we can pass to the limit exactly as in the proof of Theorem~\ref{cor:existence-1}. We also know from Lemma~\ref{lem:dual_mass} that $\mu_n\to\mu(\rho)$. The only difficulty is to prove that~\eqref{eq:local_min} holds for all $v$ and not only for those such that $\int(|v|^2-|u|^2)=0$ as is required in $\Omega_n$. Let thus $h$ be any function in $H^1(\R^d)$ having compact support and finite interaction energy. With
$$a_n:=\int_{\Omega_n}(2u_n\Re(h)+|h|^2)\to \int_{\R^d}(2u_\ii\Re(h)+|h|^2)=:a,$$
we write
$$\cE_{\Omega_n}(u_n+h)-\cE_{\Omega_n}(u_n)\geq E_{\rm D/N}(\lambda_n+a_n,\Omega_n)-E_{\rm D/N}(\lambda_n,\Omega_n)=\mu(\rho)a+o(1)$$
by Lemma~\ref{lem:DobMin}. Passing to the limit provides exactly the ground state characterization~\eqref{eq:local_min}.
\end{proof}

\subsection{Properties of ground states: Proof of Theorem \ref{thm:prop_GS}}\label{sec:prop_GS}
Let $u$ be any infinite ground state. For simplicity  we work in the ball $B_R=B(0,R)$, that is, take $\tau=0$. The estimates below are completely independent of the center of the considered ball. We recall that $u$ solves the minimization problem~\eqref{eq:min_u_v} locally in $B_R$, that is, solves the minimization problem
\begin{equation}
F_{u}(\mu,B_R):=\min_{v_{|\partial B_R}=u_{|\partial B_R}}\cF_{\mu,B_R,u}(v)
 \label{eq:F_DLR}
\end{equation}
with
\begin{equation}
\cF_{\mu,B_R,u}(v)=\cF_{\mu,B_R}(v)+\int_{B_R}|v|^2 (|u|^2\1_{\R^d\setminus B_R})\ast w.
 \label{eq:local_energy_bis}
\end{equation}
This takes the same form as $F_{\rm D/N}(\mu,B_R)$ except for the additional potential induced by $|u|^2\1_{\R^d\setminus B_R}$ and the boundary condition that $v=u$ on $\partial B_R$. Using that $u\in L^\ii(\R^d)$ by Theorem~\ref{thm:uniform_bound} we see that
$$\int_{B_R}|u|^2 (|u|^2\1_{\R^d\setminus B_R})\ast w=o(R^d),$$
which then implies
$$F_{u}(\mu,B_R)=\cF_{\mu,B_R,u}(u)=\cF_{\mu,B_R}(u)+o(R^d)\geq F_{\rm N}(\mu,B_R)+o(R^d).$$
To get the reversed bound, we use the trial state $v+\eta u$ where $v$ is a Dirichlet minimizer for $F_{\rm D}(\mu,B_{R})$ and $\eta\in C^\ii$ is a localization function which equals $0$ in $B_{R-1}$ and $1$ outside of $B_R$. Using the local bounds for both $v$ and $u$ we obtain
$$F_{u}(\mu,B_R)\leq F_{\rm D}(\mu,B_R)+o(R^d).$$
Since the thermodynamic limit is the same for Dirichlet and Neumann, this proves that $F_{u}(\mu,B_R)=f(\mu)|B_R|+o(R^d)$ and therefore that $\cF_{\mu,B_R}(u)=f(\mu)|B_R|+o(R^d)$, as was stated in~\eqref{eq:local_free_energy}. Also, we obtain $u\1_{B_R}$ is a sequence of approximate minimizers for the Neumann problem $F_{\rm N}(\mu,B_R)$ and the claims of the statement concerning the mass and GP energy follow from Lemma~\ref{lem:dual_mass}.

It remains to prove that the momentum per unit volume vanishes as in~\eqref{eq:momentum}. The idea is to introduce a momentum in the $j$th direction using
$$\int_{B_R}|\nabla (e^{it x_j}u)|^2=\int_{B_R}|\nabla u|^2+t^2\int_{B_R}|u|^2-2t\Im\left(\int_{B_R}\overline{u}\partial_ju\right).$$
We obtain
\begin{align*}
F_N(\mu,B_R)&=\cF_{\mu,B_R}(u)+o(R^d)\\
&=\cF_{\mu+t^2,B_R}(e^{it x_j}u)+2t\Im\left(\int_{B_R}\overline{u}\partial_ju\right)+o(R^d)\\
&\geq F_{\rm N}(\mu+t^2,B_R)+2t\Im\left(\int_{B_R}\overline{u}\partial_ju\right)+o(R^d).
\end{align*}
After changing $t$ into $-t$, this gives
$$\left|\Im\left(\int_{B_R}\overline{u}\partial_ju\right)\right|\leq \frac{F_{\rm N}(\mu,B_R)-F_{\rm N}(\mu+t^2,B_R)}{|t|}+o(R^d/|t|).$$
After taking first $R\to\ii$ and then $t\to0$ we obtain the result for the imaginary part. An integration by parts shows that the real part equals
$$\Re\left(\int_{B_R}\overline{u}\partial_ju\right)=\frac12\int_{B_R}\partial_j|u|^2=\frac12\int_{\partial B_R}|u|^2\frac{x_j}{R}= O(R^{d-1})$$
where the last bound is because $u\in L^\ii(\R^d)$. \qed

\section{Phase transition}\label{sec:phase_transitions}

In this section we study the fluid-solid transition in Theorem~\ref{thm:phase_transitions}. We first discuss the upper and lower bounds of $\mu_c$, then derive variance and kinetic energy estimates for infinite ground states, before we can conclude the proof of Theorem~\ref{thm:phase_transitions}. The improved bounds in Proposition \ref{prop:improved_mu_c} are explained at the end of this section.

\subsection{Upper bound on $\mu_c$} Let us prove Theorem~\ref{thm:phase_transitions} (i).

We first assume that $\widehat{w}\geq0$. Then we have already seen in Corollary~\ref{cor:cnst} that the constant function is an infinite ground state at all chemical potential $\mu>0$. By Theorem~\ref{thm:prop_GS} we know that any infinite ground state must have the free energy per unit volume $f(\mu)$ and thus conclude that $f(\mu)=-\mu^2/(2\int_{\R^d}w)$. The formulas for $e(\rho)$ and $\mu(\rho)=e'(\rho)$ follow from the Legendre relations in Theorem~\ref{thm:thermo}.

Let us now assume that $\widehat w$ takes negative values. We know that the constant function becomes linearly unstable (that is~\eqref{eq:linearization_cnst_Fourier} fails at least for one $k$) when $\mu$ satisfies the condition in~\eqref{eq:estim_mu_c}, namely $\mu<\min_{k\in\R^d}\frac{|k|^2\widehat{w}(0)}{2\widehat{w}_-(k)}$. Hence the constant function~\eqref{eq:cnst} is not an infinite ground state. In principle, it could however still happen that the free energy equals $f_{\rm cnst}(\mu)$, so we have to use the linear instability to show the claimed strict inequalities in~\eqref{eq:f_not_cnst}. We go back to finite volume and consider the trial state
$$u(x)=\sqrt{\rho(1+t\cos(k_0\cdot x))},\qquad \rho=\frac{\mu}{\int_{\R^d}w},$$
for $t$ small enough. Plugging it in the Neumann problem and taking the thermodynamic limit gives after a calculation
\begin{align*}
f(\mu)&\leq-\frac{\mu^2}{2\int_{\R^d} w}+t^2\left(\frac{|k_0|^2\rho}{8\pi}\int_0^{2\pi}\frac{\sin^2(t)}{1+t\cos(t)}\rd t+\frac{\rho^2}{4}\widehat{w}(k_0)\right)\\
&\leq -\frac{\mu^2}{2\int_{\R^d} w}+\frac{t^2\rho}8\left(\frac{|k_0|^2}{1-t}+\frac{2\mu\widehat{w}(k_0)}{\widehat{w}(0)}\right).
\end{align*}
The correction term is negative for $t$ small enough, when $\mu$ is as in the statement. This concludes the proof of (i) in Theorem~\ref{thm:phase_transitions}. \qed

\subsection{Lower bound on $\mu_c$}
Now we prove  Theorem~\ref{thm:phase_transitions} (ii).  By Lemma~\ref{lem:1st_2nd_order} (iii), the constant function $u_{\rm cnst}=(\mu/\int_{\R^d} w)^{1/2}$ is an infinite ground state if
\begin{equation} \label{eq:const-GS-condition}
\int_{\R^d} |\nabla h|^2 + \frac{1}{2} \iint (2u h(x) + h^2(x)) (2u h(y) + h^2(y)) w(x-y) \dx\, \dy \ge 0
\end{equation}
for every real-valued function $h\in H^1(\R^d)$ having compact support and satisfying $h\ge -u$. Let us decompose $h= h_+ - h_-$, $0\le h_- \le u$
and
$$
f= 2u h+ h^2 = f_+ - f_-, \quad f_+=  2u h_++ h_+^2, \quad f_-= 2u h_- - h_-^2.
$$
Using  $0\le h_- \le u$ we find that
\begin{align} \label{eq:f-vs.h-}
0  \le f_- \le u^2,\quad |\nabla f_-|\le 2 u |\nabla h_-|.
\end{align}
The interaction term in \eqref{eq:const-GS-condition} can be split into three parts as
\begin{align} \label{eq:const-GS-condition-interaction}
\frac{1}{2} \iint f(x) &  f(y) w(x-y) \dx\, \dy = \frac{1}{2} \iint f_+(x) f_+(y) w(x-y) \dx\, \dy\\
&+ \frac{1}{2} \iint f_-(x) f_-(y) w(x-y) \dx\, \dy -  \iint f_-(x) f_+(y) w(x-y) \dx\, \dy. \nn
\end{align}
The diagonal parts in \eqref{eq:const-GS-condition-interaction} are non-negative due to the stability property of $w$.
In order to estimate the off-diagonal integral, we further decompose $f_+$ into two parts that, loosely speaking, correspond to $h\geq 2\alpha u$ or $0\leq h<2\alpha u$, for some large constant $\alpha$ to be determined later. We cannot cut sharply due to the kinetic energy, however, and instead write
$$
f_+ = g_1 + g_2^2, \quad g_1= 2u h_+ + \frac{4\alpha^2 u^2 h_+^2}{4\alpha^2 u^2 +h_+^2}, \quad g_2= \frac{h_+^2}{\sqrt{4\alpha^2 u^2 +h_+^2}}
$$
as well as
\begin{align}\label{eq:const-GS-condition-off-diagonal}
\iint f_-(x) f_+(y) &w(x-y) \dx\, \dy = \iint f_-(x)  g_1(y)  w(x-y)  \dx\, \dy \nn \\
&+  \iint f_-(x)  g_2(y)  w(x-y)  \dx\, \dy.
\end{align}
For the part involving $g_1$, we use the orthogonality of $h_-$ and $g_1$ to estimate
\begin{align}\label{eq:const-GS-condition-off-diagonal-g1}
&\left| \iint f_-(x)  g_1(y)  w(x-y)  \dx\, \dy \right| = (2\pi)^{\frac{d}2} \left| \int \overline{\hat f_-(k)} \hat g_1(k) ( \hat w(0) - \hat w(k) ) \rd k \right| \nn\\
& \le (2\pi)^{\frac{d}2} \int  |\hat f_-(k)|  |\hat g_1(k) | |\hat w(0) - \hat w(k)|  \rd k .
\end{align}
For the part in \eqref{eq:const-GS-condition-off-diagonal} involving $g_2$, we use $|f_-|\le u^2$ and bound
\begin{align}\label{eq:const-GS-condition-off-diagonal-g22}
\left| \iint f_-(x)  g^2_2(y)  w(x-y)  \dx\, \dy \right| \le  u^2 \|w\|_{L^1} \int |\hat g_2(k)|^2 {\rm d}k.
\end{align}

Now let us consider more carefully the diagonal parts in \eqref{eq:const-GS-condition-interaction} and use them to control the off-diagonal term, using the superstability property of $w$. Recalling that $w=\eps \delta_r*\delta_r +w_2$ with $w_2$ stable, we have
\begin{align} \label{eq:const-GS-condition-diagonal-1}
\frac{1}{2} \iint f_{-}(x) f_{-}(y) w(x-y) \dx\, \dy &\ge \frac{\eps}{2} \int |f_{-}* \delta_r|^2 \nn\\
& =  \frac{\eps}{2} (2\pi)^d \int |\hat f_{-} (k)|^2 |\hat \delta_r(k)|^2 {\rm d}k.
\end{align}
Moreover, note that $g_1 \le 2 u (1+\alpha) h_+$ and $g_2 \le \frac{h_+^2}{2 u\alpha}$, so that
$$
f_+ = 2u h_+ + h_+^2 \ge \frac{1}{ 1 + \alpha} g_1 + 2u\alpha g_2.
$$
Therefore,
\begin{align} \label{eq:const-GS-condition-diagonal-2}
&\frac{1}{2} \iint f_{+}(x) f_{+}(y) w(x-y) \dx\, \dy \ge \frac{\eps}{2} \int |f_{+}* \delta_r|^2 \\
&\qquad \ge  \frac{\eps}{2(1 +  \alpha)^2}  \int |g_1* \delta_r|^2 + 2 \eps u^2 \alpha^2   \int |g_2* \delta_r|^2\nn\\
&\qquad = \frac{\eps(2\pi)^d}{2 (1 +  \alpha)^2}   \int |\hat g_1(k)|^2 |\hat \delta_r(k)|^2 {\rm d} k  + 2  \eps u^2 \alpha^2  (2\pi)^d   \int |\hat g_2(k)|^2 |\hat \delta_r(k)|^2   {\rm d} k .\nn
\end{align}
Finally, we consider the kinetic term
$$
\int_{\R^d} |\nabla h|^2 = \int |\nabla h_+|^2 + \int |\nabla h_-|^2.
$$
Since $|\nabla f_-|\le 2 u |\nabla h_-|$, we have
\begin{align} \label{eq:const-GS-condition-kinetic-1}
\int |\nabla h_-|^2 \ge \frac{1}{4u^2} \int |\nabla f_-|^2 = \frac{1}{4u^2} \int |k|^2 |\hat f_-(k)|^2 \rd k.
\end{align}
Moreover, a calculation gives
$$
|\nabla g_1| = \left| 2u \nabla h_+ + \frac{8\alpha^2 u^2 h_+ \nabla h_+ }{4\alpha^2 u^2 +h_+^2} -   \frac{8\alpha^2 u^2 h_+^3  \nabla h_+ }{(4\alpha^2 u^2 +h_+^2)^2} \right| \le 4u (1+\alpha) |\nabla h_+|
$$
and
$$
|\nabla g_2|= \left| \frac{2h_+ \nabla h_+}{\sqrt{4\alpha^2 u^2 +h_+^2}} - \frac{h_+^3 \nabla h_+}{(4\alpha^2 u^2+h_+^2)^{3/2}}  \right| \le \frac{|\nabla h_+|}{u \alpha} ,
$$
so that
\begin{align} \label{eq:const-GS-condition-kinetic-2}
\int |\nabla h_+|^2  \ge \frac{1}{32 u^2(1+\alpha)^2} \int |k|^2 |\hat g_1(k)|^2 + \frac{\alpha^2 u^2}{2} \int |k|^2 |\hat g_2(k)|^2.
\end{align}

In summary, collecting the estimates \eqref{eq:const-GS-condition-off-diagonal-g1}, \eqref{eq:const-GS-condition-off-diagonal-g22}, \eqref{eq:const-GS-condition-diagonal-1},  \eqref{eq:const-GS-condition-diagonal-2} and \eqref{eq:const-GS-condition-kinetic-1} we have proved
\begin{align}\label{eq:const-GS-condition-final-1}
&\int_{\R^d} |\nabla h|^2 + \frac{1}{2} \iint (2u h(x) + h^2(x)) (2u h(y) + h^2(y)) w(x-y) \dx\, \dy \nn\\
&\ge   \int_{\R^d} \Bigg\{ \left( \frac{1}{4u^2} |k|^2 + \frac{\eps}{2} (2\pi)^d |\hat \delta_r(k)|^2\right)   |\hat f_-(k)|^2  \nn\\
&\qquad  \qquad  + \frac{1}{(1+\alpha)^2} \Big( \frac{1}{32 u^2}  |k|^2 +\frac{\eps}{2} (2\pi)^{d} |\hat \delta_r(k)|^2 \Big)  |\hat g_1(k)|^2  \nn\\
&\qquad \qquad - (2\pi)^{\frac{d}2} |\hat f_-(k)|  |\hat g_1(k)|  |\hat w(0) - \hat w(k)| \Bigg\} {\rm d}k \nn\\
&\quad + u^2 \int_{\R^d} \left (  \frac{\alpha^2}{2}  |k|^2 + 2 \alpha^2  \eps (2\pi)^{d}  |\hat \delta_r(k)|^2 - \|w\|_{L^1}  \right) |\hat g_2(k)|^2 {\rm d}k.
\end{align}
It remains to choose $\alpha>0$ to make all relevant integrands non-negative when $\mu$ is small enough. Although we could keep $\widehat{\delta_r}(k)$ everywhere and get a tighter bound, we instead provide a simpler and more explicit estimate. Using \eqref{eq:Taylor-expansion-fourier} we have
\begin{align} \label{eq:deltar-fourier}
|1-(2\pi)^{d/2} \hat \delta_r(k)| \le \frac{|k|^2}{2} \frac{1}{d} \int |x|^2 \delta_r(x) \dx = \frac{|k|^2 r^2}{2 (d+2)} \le  \frac{|k|^2 r^2}{6} .
\end{align}
Hence,
\begin{align*}
|k|^2 + 4 \eps (2\pi)^{d} |\hat \delta_r(k)| ^2 \ge |k|^2 + 4 \eps \left[ 1-  \frac{|k|^2r^2}{6}  \right]_+^2
\ge 2\min (\eps, r^{-2}).
\end{align*}
(the non-optimal factor $2$ was chosen to simplify the final bound). Thus we can choose
$$\alpha^2=\frac{\|w\|_{L^1}}{\min(\eps,r^{-2})}=\|w\|_{L^1} \max (\eps^{-1},r^{2})$$
such that the last integrand involving $|g_2(k)|$ in \eqref{eq:const-GS-condition-final-1} is non-negative. Next, by the Cauchy--Schwarz inequality and \eqref{eq:deltar-fourier}, we find
\begin{align*}
&\left( \frac{|k|^2}{4u^2} + \frac{\eps}{2} (2\pi)^d |\hat \delta_r(k)|^2\right)   |\hat f_-(k)|^2 + \left( \frac{|k|^2}{32 u^2}   +\frac{\eps}{2} (2\pi)^{d} |\hat \delta_r(k)|^2 \right) \frac{ |\hat g_1(k)|^2}{(1+\alpha)^2} \nn\\
&\qquad\ge  \left( \frac{|k|^2}{4\sqrt{2} u^2}  + \eps (2\pi)^d |\hat \delta_r(k)|^2\right)  \frac{|\hat f_-(k)|  |\hat g_1(k)|}{1+\alpha} \nn\\
&\qquad\ge  \left( \frac{|k|^2}{4\sqrt{2} u^2} + \eps \Big[ 1-  \frac{|k|^2r^2}{6}  \Big]_+^2 \right)\frac{|\hat f_-(k)|  |\hat g_1(k)|}{1+\alpha}.
\end{align*}
Therefore, the integrand involving $|g_1(k)|$ and $|f_-(k)|$ in \eqref{eq:const-GS-condition-final-1} is non-negative if
$(2\pi)^{d/2} \hat w(0) u^2=\mu\le \mu_0$ with
$$
\mu_0 = \inf_{k\in \R^d} \frac{\hat w(0) |k|^2}{4\sqrt{2} \Big( (1+\alpha) |\hat w(0) - \hat w(k)| -  \eps (2\pi)^{-\frac{d}2}  \big[ 1- |k|^2r^2/6 \big]_+^2  \Big)_+}>0.
$$
In this case we have the desired inequality \eqref{eq:const-GS-condition}, namely the constant function $u$ is an infinite ground state when $\mu\le \mu_0$.\qed

\subsection{Variance and kinetic energy estimates for infinite ground states}\label{sec:variance-kinetic}
In order to prove the last part of Theorem~\ref{thm:phase_transitions}, we start with the following

\begin{lemma}[Variance and kinetic energy]\label{lem:variance_kinetic}
Let $u$ be any infinite ground state of chemical potential $\mu$. If $\mu<\mu_c$ (or $\mu=\mu_c$ and $f$ is $C^1$ at $\mu_c$), we have for $\rho=\mu/\int_{\R^d}w$
\begin{equation}
\lim_{R\to\ii}\sup_{\tau\in \R^d}\frac1{|B_R|}\int_{B(\tau,R)}\left(|u|^2-\rho\right)^2
=\lim_{R\to\ii}\sup_{\tau\in \R^d}\frac1{|B_R|}\int_{B(\tau,R)}|\nabla u|^2=0.
\label{eq:variance_fluid}
\end{equation}
If $\mu>\mu_c$, we have for $\rho:=\rho(\mu)_-$
\begin{multline}
\liminf_{R\to\ii}\inf_{\tau\in \R^d}\frac1{|B_R|}\int_{B(\tau,R)}\left(|u|^2-\frac1{|B_R|}\int_{B(\tau,R)}|u|^2\right)^2\\
\geq \frac{\rho}{\int_{\R^d}|w|}\left(\rho\int_{\R^d}w-\mu\right)>0
\label{eq:variance}
\end{multline}
and
\begin{equation}
\liminf_{R\to\ii}\inf_{\tau\in \R^d}\frac1{|B_R|}\int_{B(\tau,R)}|\nabla u|^2 \ge 2f(\mu)+\mu\rho >0.
 \label{eq:kinetic_solid}
\end{equation}
\end{lemma}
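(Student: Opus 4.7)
The strategy is to derive two identities, uniform in $\tau$ as $R\to\infty$, by combining Theorem~\ref{thm:prop_GS} with the GP equation. Multiplying~\eqref{eq:GP_infinite} by $\bar u$, integrating over $B=B(\tau,R)$, and using the pointwise bounds from Theorem~\ref{thm:uniform_bound} and Corollary~\ref{cor:derivatives} to handle the boundary integral yields
$$\int_B|\nabla u|^2+\int_B (w\ast|u|^2)|u|^2=\mu\int_B|u|^2+O(R^{d-1}),$$
which together with the local free-energy convergence of Theorem~\ref{thm:prop_GS} produces the uniform identity
\begin{equation}\label{eq:plan-kin}
\frac{1}{|B_R|}\int_B|\nabla u|^2 = 2f(\mu)+\mu\bar\rho+o(1),\qquad \bar\rho:=|B_R|^{-1}\int_B|u|^2.
\end{equation}
Writing $|u|^2=\bar\rho+\sigma$ inside the interaction term (the linear-in-$\sigma$ contribution is $O(R^{d-1})$ thanks to $\int_B\sigma=0$ and the fact that $w\ast\1_B$ differs from $\int w$ only in a boundary layer) gives the companion identity
\begin{equation}\label{eq:plan-int}
\frac{1}{|B_R|}\iint_{B\times B}w(x-y)\sigma(x)\sigma(y)\,\dx\,\dy=-2f(\mu)-\bar\rho^{\,2}\int_{\R^d}w+o(1).
\end{equation}

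The kinetic and solid-variance estimates follow from these identities and elementary inequalities. In the fluid case \eqref{eq:plan-kin} evaluates to $2f_{\rm cnst}(\mu)+\mu\rho=0$ uniformly; for $\mu>\mu_c$ the liminf--infimum in \eqref{eq:plan-kin} yields the lower bound $2f(\mu)+\mu\rho_-(\mu)=2e(\rho)-\mu\rho$ with $\rho=\rho_-(\mu)$, and strict positivity of this quantity for $\rho>\rho_c$ reduces to $\phi(\rho)>\rho\phi'(\rho)$ for $\phi(\rho):=e(\rho)/\rho-\rho\int w/2$; since $\phi$ is concave with $\phi\equiv 0$ on $[0,\rho_c]$ and $\phi'<0$ on $(\rho_c,\infty)$, the concavity gives $\phi(\rho)\geq(\rho-\rho_c)\phi'(\rho)$ and hence $2e(\rho)-\mu\rho=\rho(\phi-\rho\phi')\geq-\rho\,\rho_c\,\phi'(\rho)>0$. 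For the variance at $\mu>\mu_c$, Young's inequality $|\iint w\sigma\sigma|\leq\|w\|_{L^1}\int\sigma^2$ applied to \eqref{eq:plan-int}, followed by minimization of the right-hand side over the possible limits $\bar\rho\in[\rho_-,\rho_+]$ (attained at $\bar\rho=\rho_-$), gives
$$\|w\|_{L^1}\liminf_{R\to\infty}\inf_\tau\frac{\int_B\sigma^2}{|B_R|}\geq 2f(\mu)+\rho^{\,2}\int_{\R^d} w=\rho\Big(\rho\int_{\R^d}w-\mu\Big)+\big(2e(\rho)-\mu\rho\big),$$
which is $\geq\rho(\rho\int w-\mu)$ by the inequality just established.

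The main obstacle is the variance upper bound in the fluid regime. Plugging the fluid values into \eqref{eq:plan-kin} and \eqref{eq:plan-int} already yields $|B_R|^{-1}\int_B|\nabla u|^2\to 0$ and $|B_R|^{-1}\iint w\sigma\sigma\to 0$ uniformly, and from $|\nabla\sigma|\leq 2\|u\|_\infty|\nabla u|$ also $|B_R|^{-1}\int_B|\nabla\sigma|^2\to 0$. The plan is to localize $\sigma$ by a smooth cutoff $\chi$ equal to $1$ on $B(\tau,R-1)$ and supported in $B$, and to pass to Fourier variables on $\R^d$: for $\mu<\mu_c$ strictly, the linear stability condition $|k|^2+2\mu\widehat w(k)/\widehat w(0)\geq c_0>0$ holds uniformly in $k$ (its minimum is attained because $\widehat w(k)\to 0$ at infinity, and the left-hand side is linear in $\mu$ and nonnegative at $\mu=\mu_c$ by the second-order condition at the constant infinite ground state, see~\eqref{eq:linearization_cnst_Fourier}), so Plancherel applied to $\chi\sigma$ forces $\int|\widehat{\chi\sigma}|^2=o(|B_R|)$. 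After absorbing the $O(R^{d-1})$ cutoff corrections using the pointwise bounds on $u$, this gives $|B_R|^{-1}\int_B(|u|^2-\rho)^2\to 0$ after also noting $(\bar\rho-\rho)^2\to 0$. The truly delicate case is $\mu=\mu_c$ with $f$ of class $C^1$ and $c_0=0$, where the Fourier inequality becomes degenerate; this subcase should be handled either by a continuity argument as $\mu\to\mu_c^-$ or by a refinement showing that the instability is concentrated at isolated frequencies that can be separated from the full spectrum.
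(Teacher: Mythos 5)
Your derivation of the two key identities (what you call~\eqref{eq:plan-kin} and~\eqref{eq:plan-int}), the concavity argument yielding $\phi(\rho)-\rho\phi'(\rho)\geq -\rho_c\,\phi'(\rho)>0$ for $\rho>\rho_c$, and the Young-inequality treatment of the solid variance are all correct and coincide in substance with the paper's proof, which organizes the computation slightly differently (it extracts three limits $\rho$, $K$, $I$ along arbitrary sequences $\tau_n,R_n$, uses the integrated GP equation and Theorem~\ref{thm:prop_GS} to identify $K=2f(\mu)+\mu\rho$ and $I=-2f(\mu)$, and then invokes Lemma~\ref{lem:e_rho}, which is your concavity step).

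The genuine gap is in the fluid variance upper bound. You correctly deduce from~\eqref{eq:plan-kin}--\eqref{eq:plan-int} that both
$$\frac1{|B_R|}\int_B|\nabla\sigma|^2\to 0
\qquad\text{and}\qquad
\frac1{|B_R|}\iint_{B\times B}w(x-y)\sigma(x)\sigma(y)\,\dx\,\dy\to 0$$
uniformly in $\tau$, so that after a smooth cutoff and Plancherel you control both $\int|k|^2|\widehat{\chi\sigma}|^2$ and $\int\widehat w(k)|\widehat{\chi\sigma}|^2$. But you then invoke the linear-stability quadratic form $|k|^2+\tfrac{2\mu}{\widehat w(0)}\widehat w(k)\geq c_0>0$, whose positivity constant $c_0$ degenerates exactly at $\mu=\mu_c$ --- and the case $\mu=\mu_c$ with $f\in C^1$ is part of the statement you are asked to prove. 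You flag this as ``delicate'' and propose a limiting argument as $\mu\to\mu_c^-$ or a frequency-separation refinement, but neither is carried out, and neither is obviously straightforward (for instance $u$ is held fixed, so there is no natural continuity in $\mu$ to exploit). The paper avoids the issue entirely: since it has \emph{both} vanishing limits at its disposal, it needs only the crude, $\mu$-independent inequality $C|k|^2+\widehat w(k)\geq C^{-1}$ for some large $C$, which holds for any $w$ satisfying Assumption~\ref{ass:w} because $\widehat w$ is bounded, continuous, and $\widehat w(0)>0$; this is strictly weaker than linear stability and valid at $\mu=\mu_c$ as well. You already have exactly the two ingredients required to run that simpler argument, so the fix is immediate once you abandon the linear-stability bound in favor of the cheap one; but as written, your proof does not establish the borderline case and thus does not prove the full statement of the lemma.
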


\begin{proof}[Proof of Lemma~\ref{lem:variance_kinetic}]
Let $u$ be an infinite ground state for some fixed $\mu>0$. Consider any $\tau_n\in\R^d$ and $R_n\to\ii$ such that
the following limits exist and are finite:
$$\fint_{B(\tau_n,R_n)}|u|^2\to\rho,\qquad \fint_{B(\tau_n,R_n)}|\nabla u|^2\to K,\qquad \fint_{B(\tau_n,R_n)}|u|^2(|u|^2\ast w)\to I,$$
with $\fint_\Omega:=|\Omega|^{-1}\int_\Omega$ denoting the average. We obtain
\begin{multline}
\lim_{n\to\ii}\frac1{|B_{R_n}|}\iint_{B(\tau_n,R_n)^2}(|u(x)|^2-\rho)(|u(y)|^2-\rho)w(x-y)\,\dx\,\dy\\
=I-\rho^2\int_{\R^d} w.
\label{eq:lower_bd_for_variance}
\end{multline}
By Theorem~\ref{thm:prop_GS}, we have $\rho\in[\rho_-(\mu),\rho_+(\mu)]$ and
$$K+\frac{I}{2}=f(\mu)+\mu\rho=e(\rho).$$
Integrating the GP equation against $u$ in the same ball $B(\tau_n,R_n)$ and using that the boundary term is a $O(R_n^{d-1})$ due to the uniform bounds on $u$ and $\nabla u$, we find $K+I-\mu\rho=0$ and therefore we deduce that
$$I=-2e(\rho)+2\mu \rho,\qquad K=2e(\rho)-\mu\rho=2e(\rho)-e'(\rho)\rho=2f(\mu)+\mu\rho.$$
If $\mu<\mu_c$ (or $\mu=\mu_c$ and $\rho_-(\mu_c)=\rho_+(\mu_c)=\rho_c$) then $\rho=\mu/\int_{\R^d}w$ and $e'(\rho)=\mu=\rho\int_{\R^d}w$ and thus we obtain $K=0$ as claimed. If $\mu>\mu_c$ we deduce that $K>0$ after using the following

\begin{lemma}\label{lem:e_rho}
For $\rho>\rho_c=-f'_-(\mu_c)$, we have $2e(\rho)>e'(\rho)\rho$.
\end{lemma}

\begin{proof}[Proof of Lemma~\ref{lem:e_rho}]
Recall that the function $\phi(\rho):=e(\rho)/\rho-\rho\int_{\R^d}w/2$ defined previously in~\eqref{eq:def_phi_critical_rho} is concave. It satisfies $\phi(0)=0$ and therefore, by concavity, $\phi'(\rho)<{\phi(\rho)}/{\rho}$ for $\rho>\rho_c$. This is exactly the claimed inequality.
\end{proof}

Next we turn to the variance. If $\mu>\mu_c$, we have
\begin{multline*}
\lim_{n\to\ii}\frac1{|B_{R_n}|}\iint_{B(\tau_n,R_n)^2}(|u(x)|^2-\rho)(|u(y)|^2-\rho)w(x-y)\,\dx\,\dy\\
= -2\left(e(\rho)-\mu \rho+\frac{\rho^2}{2}\int_{\R^d}w\right)<-\rho\left(\rho\int_{\R^d}w-\mu\right).
\end{multline*}
In the last inequality we have used Lemma~\ref{lem:e_rho} again.
We use Young's inequality in the form
\begin{multline}
\iint_{B(\tau_n,R_n)^2}(|u(x)|^2-\rho)(|u(y)|^2-\rho)w(x-y)\,\dx\,\dy\\
\geq-\int_{\R^d}|w|\int_{B(\tau_n,R_n)}(|u|^2-\rho)^2
\end{multline}
and deduce that
$$\liminf_{n\to\ii}\fint_{B(\tau_n,R_n)}(|u|^2-\rho)^2\geq \frac{\rho}{\int_{\R^d}|w|}\left(\rho\int_{\R^d}w-\mu\right)$$
as we wanted.

It remains to prove that the variance goes to 0 when $\mu<\mu_c$ (or $\mu=\mu_c$ and $f$ is $C^1$ at $\mu_c$). In this case we have $\mu=\rho\int_{\R^d}w$ and the limit in~\eqref{eq:lower_bd_for_variance} is zero. The uniform bounds allow us to replace the sharp cutoff by a smooth one so that
\begin{multline*}
\lim_{n\to\ii}\frac1{|B_{R_n}|}\int_{\R^d}|\nabla (\chi_nu)|^2=\lim_{n\to\ii}\frac1{|B_{R_n}|}\int_{\R^d}|\nabla (\chi_nF)|^2\\
=\lim_{n\to\ii}\frac{D(F\chi_n,F\chi_n)}{|B_{R_n}|}=0,
\end{multline*}
where $F=|u|^2-\rho$ and $\chi_n\in C^\ii_c(B(\tau_n,R_n))$ is such that $\chi\equiv1$ on $B(\tau_n,R_n)$ and $\chi\equiv0$ outside of $B(\tau_n,R_n+1)$, with $|\nabla\chi|\leq C$. The second limit is because $\nabla F=2\Re(\overline u\nabla u)$ and $u\in L^\ii(\R^d)$. Passing to the Fourier domain gives
$$\lim_{n\to\ii}\frac1{|B_{R_n}|}\int_{\R^d}|k|^2|\widehat{\chi_nF}(k)|^2\,\rd k=\lim_{n\to\ii}\frac1{|B_{R_n}|}\int_{\R^d}\widehat{w}(k)|\widehat{\chi_nF}(k)|^2\,\rd k=0.$$
Since $\widehat{w}(0)>0$, we have $C|k|^2+\widehat{w}(k)\geq C^{-1}$ for $C$ large enough and thus obtain the desired $L^2$ convergence
$$\lim_{n\to\ii}\frac1{|B_{R_n}|}\int_{\R^d}|\chi_nF|^2=\lim_{n\to\ii}\frac1{|B_{R_n}|}\int_{B(\tau_n,R_n)}|F|^2=0$$
as was claimed.
\end{proof}

\subsection{End of the proof of Theorem~\ref{thm:phase_transitions}}\label{sec:fluid-solid transition}

Now we are ready to conclude the proof of Theorem~\ref{thm:phase_transitions}. Let $u$ be any infinite ground state.

We start with (iv) and note that over any ball $B$ we have
$$\left||u|^2-\frac1{|B|}\int_B|u|^2\right|\leq \max_B|u|^2-\min_B|u|^2$$
so that the estimate ~\eqref{eq:variance} implies immediately the stated bound~\eqref{eq:min_max_periodic} in (iv) of Theorem~\ref{thm:phase_transitions}.

It remains to prove (iii). Assume that $\mu<\mu_c$, or that $\mu=\mu_c$ but $f$ is $C^1$ at $\mu_c$. Let $G=(|u|^2-\rho)^2 + |\nabla u|^2$. From Lemma~\ref{lem:variance_kinetic} we have
\begin{equation}
\lim_{R\to\ii}\sup_{\tau\in\R^d}\frac1{|B_R|}\int_{B(\tau,R)} G=0.
\label{eq:variance_zero_thermo}
\end{equation}
We use the following simple lemma, whose proof is given at the end of the argument.


\begin{lemma}\label{lem:cnst_in_one_ball}
Let $G\in W^{1,\ii}(\R^d)$ satisfy
\begin{equation}
\lim_{R\to\ii}\frac1{|B_R|}\int_{B(0,R)}|G|=0.
\label{eq:variance_zero_thermo2}
\end{equation}
Then, for almost every $\omega\in\bS^{d-1}$, we can find a sequence $s_n\to\ii$ such that
$$\norm{G}_{L^\ii(B(s_n\omega,n))}\to0.$$
\end{lemma}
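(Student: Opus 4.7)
The plan is to reduce the claim to a density-zero property of super-level sets of $G$, then contradict that property by a packing-and-thickening argument that relies crucially on the Lipschitz regularity encoded in $W^{1,\infty}$.

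First, fix a Lipschitz representative of $G$ with constant $L := \|\nabla G\|_{L^\infty}$ and set $A_\eta := \{|G|\geq \eta\}$, which is closed. Markov's inequality and the hypothesis yield $|A_\eta \cap B_R|/|B_R| \to 0$ for every $\eta>0$. Since $\|G\|_{L^\infty(B(s\omega,n))}<\eta$ iff $B(s\omega,n) \cap A_\eta = \emptyset$, a diagonal argument over $\eta = 1/k$ and $n \in \N$ reduces the lemma to showing that, for each fixed $\eta > 0$ and $n \geq 1$, the closed set
$$\Sigma := \big\{\omega \in \bS^{d-1} : d(s\omega, A_\eta)\leq n\ \text{ for all } s>0\big\}$$
has spherical surface measure zero.

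Argue by contradiction, assuming $\sigma(\Sigma)>0$. For $k \geq 1$ take $s_k := 3kn$, and for each $\omega \in \Sigma$ pick $y_{k,\omega} \in A_\eta$ with $|y_{k,\omega}-s_k\omega| \leq n$, so that $|y_{k,\omega}| \in [(3k-1)n,(3k+1)n]$. Setting $\rho := \min(\eta/(2L),n/3)$, the Lipschitz bound forces $B(y_{k,\omega},\rho) \subset A_{\eta/2}$. For each $k$, choose a maximal subset $\Sigma_k \subset \Sigma$ that is $5/(3k)$-separated in the Euclidean chord metric of $\bS^{d-1}$; a standard spherical covering estimate gives $|\Sigma_k| \geq C_d\, \sigma(\Sigma)\, k^{d-1}$. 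The chord separation yields $|s_k \omega - s_k\omega'| \geq 5n$ for distinct $\omega,\omega' \in \Sigma_k$, whence $|y_{k,\omega}-y_{k,\omega'}| \geq 3n > 2\rho$; and for $k \neq k'$ the radial separation $|y_{k,\omega}| \in [(3k-1)n,(3k+1)n]$ gives $|y_{k,\omega}-y_{k',\omega'}| \geq n > 2\rho$. Therefore the entire family $\{B(y_{k,\omega},\rho)\}_{k\geq 1,\,\omega\in\Sigma_k}$ consists of pairwise disjoint balls, each contained in $A_{\eta/2}$ and in $\overline{B_{(3K+1)n}}$ when $k\leq K$.

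Summing the volumes,
$$|A_{\eta/2}\cap B_{(3K+1)n}| \geq \sum_{k=1}^K |\Sigma_k|\,c_d\rho^d \geq C\,\sigma(\Sigma)\,\rho^d\, K^d,$$
while $|B_{(3K+1)n}| \sim (3n)^d K^d$, so $\liminf_{R\to\infty} |A_{\eta/2}\cap B_R|/|B_R| \geq C'\sigma(\Sigma)(\rho/n)^d > 0$. This contradicts the density-zero property of $A_{\eta/2}$ established at the outset and forces $\sigma(\Sigma)=0$. The main obstacle is that an $L^1$-averaging hypothesis cannot, by itself, control any $L^\infty$ norm on a ball---narrow tall spikes are invisible to averages---so the Lipschitz bound built into $W^{1,\infty}$ is indispensable: it thickens each point of $A_\eta$ into a fixed-size ball inside $A_{\eta/2}$, and the quantitative match between the $k^{d-1}$ growth of a maximal directional packing on $\bS^{d-1}$ at radius $s_k$ and the $K^d$ growth of the ambient ball volume is what closes the contradiction.
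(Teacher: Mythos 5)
Your proof is correct and takes a genuinely different route from the paper's. The paper replaces $|G|$ by $|G|*\1_{B_r}$, writes the averaged hypothesis in polar coordinates, and uses Fubini plus monotone convergence to get directly $\liminf_{s\to\infty}\int_{B(s\omega,r)}|G|=0$ for a.e.\ $\omega$, then upgrades $L^1$ to $L^\infty$ with a pointwise thickening inequality $\|G\|_{L^\infty(B)}^{d+1}\leq C M^d\|G\|_{L^1(B)}$. You instead pass directly to super-level sets $A_\eta$, note they have volume density zero by Markov, and argue by contradiction: a positive-measure set of ``bad'' directions, packed with $\sim k^{d-1}$ well-separated rays at each radius $\sim kn$ and thickened by the Lipschitz regularity, would force $A_{\eta/2}$ to have positive density. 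Both arguments use the Lipschitz thickening in an essential way, but you replace the Fubini/polar step by a spherical packing count; the paper's route is more direct (no contradiction), while yours gives a somewhat more quantitative geometric picture of why the conclusion holds.

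Two small points you should patch when writing this up cleanly. First, the diagonal reduction as stated only gives one good $s$ per pair $(\eta,n)$, which by itself need not be unbounded in $s$; either define $\Sigma$ with a cutoff ``for all $s>M$'' and take a countable union over $M$, or note (as your argument implicitly allows) that if $s_{n,n}$ were bounded then $G\equiv 0$, so $s_n\to\infty$ along a subsequence whenever $G\not\equiv 0$. Second, ``$\|G\|_{L^\infty(B(s\omega,n))}<\eta$ iff $B(s\omega,n)\cap A_\eta=\varnothing$'' needs the closed ball on the right (which is what $d(s\omega,A_\eta)>n$ actually gives you), since the supremum over an open ball of a continuous function can be attained only at the boundary. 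Both are cosmetic fixes; the packing computation itself is sound.
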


This gives $\norm{|u|^2-\rho}_{L^\ii(B(s_n \omega, n))}\to0$ and $\norm{\nabla u}_{L^\ii(B(s_n \omega, n))} \to0$ for almost every $\omega\in \bS^{d-1}$ and $s_n\to\ii$ as in the lemma (depending on $\omega$). After extraction a subsequence we can further assume that $u(\cdot+s_n \omega)\to u_0$ in $L^\ii(B(0,r))$ for any fixed $r$, where $|u_0|^2=\rho$. Since $u_0$ is then an infinite ground state, it must solve the GP equation and since $\mu=\rho\int_{\R^d}w$, we obtain $\Delta u_0=0$. Therefore $u_0=e^{i\theta_0}\sqrt\rho$ for some constant $\theta_0\in\R$ and we have shown that the constant function $u_{\rm cnst}=\sqrt\rho$ is an infinite ground state for all $\mu<\mu_c$. Passing to the limit $\mu\to\mu_c^-$ in the characterization~\eqref{eq:const-GS-condition} of an infinite ground state, we deduce that $u_{\rm cnst}$ is also an infinite ground state for $\mu=\mu_c$, in case $f$ was not $C^1$ at $\mu_c$.

Let us finally go back to our sequence $u(\cdot+s_n \omega)$, which converges to $e^{i\theta_0}u_{\rm cnst}$ on $B(s_n \omega,r)$ for any fixed $r>0$. Since $\eps_n:=\norm{\nabla u}_{L^\ii(B(s_n\omega,n))}$ tends to $0$, we can take $r_n=\min(n,\eps_n^{-1/2})\to\ii$ to conclude that
$$\|u- e^{i\theta_0}u_{\rm cnst}\|_{L^\infty(s_n \omega,r_n)}\to 0.$$
This concludes the proof of Theorem~\ref{thm:phase_transitions}.\qed

\medskip
Let us finally provide the

\begin{proof}[Proof of Lemma \ref{lem:cnst_in_one_ball}] For every fixed $r\in \mathbb{N}$, since $\1_{B_R}\ast\1_{B_r}\leq |B_r|\1_{B_{R+r}}$, we can replace $|G|$ by $|G|\ast\1_{B_r}$ in~\eqref{eq:variance_zero_thermo2}. We obtain
\begin{align*}
0&=\lim_{R\to\ii}\frac1{|B_R|}\int_{B_R}|G|\ast\1_{B_r}\\
&=\lim_{R\to\ii}\frac1{|B_R|}\int_{0}^R\int_{\bS^{d-1}}\left(\int_{B(s\omega,r)}|G|\right)\rd\omega\,s^{d-1}\rd s\\
&\geq\lim_{R\to\ii}\frac1{|B_R|}\int_{R/2}^R\int_{\bS^{d-1}}\left(\int_{B(s\omega,r)}|G|\right)\rd\omega\,s^{d-1}\rd s\\
&\geq\lim_{R\to\ii}\frac{(R/2)^d}{|B_R|}\int_{\bS^{d-1}}\left(\inf_{s\geq R/2}\int_{B(s\omega,r)}|G|\right)\rd\omega\\
&=\frac{2^{-d}}{|\bS^{d-1}|}\int_{\bS^{d-1}}\left(\liminf_{s\to\ii}\int_{B(s\omega,r)}|G|\right)\rd\omega.
\end{align*}
The last line is by monotone convergence. Consequently, for almost every $\omega\in\bS^{d-1}$,
\begin{equation}\label{eq:F2-L1-CV}
\liminf_{s\to\ii}\int_{B(s\omega,r)}|G|=0,\qquad\forall r\in\N.
\end{equation}
We now prove that this implies uniform convergence. Let $M:=\|G\|_{L^\ii(\R^d)}+\|\nabla G\|_{L^\ii(\R^d)}$. For any $x\in B(s\omega,r)$, we have
$|G(y)| \ge |G(x)|/2$ for all $y$ in the ball $B(x,r(x))$ of radius $r(x):= |G(x)|/(2M)\le 1/2 \le r$. In addition, we have
$$\Big|B(s\omega,r) \cap B\big(x,r(x)\big)\Big|\ge C_d^{-1}r(x)^d=\frac{|G(x)|^d}{C_d(2M)^d}$$
for some universal constant $C_d$. We have thus proved the inequality
\begin{equation}\label{eq:GN-type-inequality}
\|G\|^{d+1}_{L^\infty(B(s\omega,r))} \le C_d2^{d+1}M^d  \| G \|_{L^1(B(s\omega,r))}.
\end{equation}
From \eqref{eq:F2-L1-CV} and  \eqref{eq:GN-type-inequality},  we obtain
\begin{equation}\label{eq:F2-Lii-CV}
\liminf_{s\to\ii} \| G \|_{L^{\infty} ( B(s\omega,r))}=0,\qquad\forall r\in\N.
\end{equation}
We can thus choose a sequence $s_n\to \ii$ such that $\norm{G}_{L^\ii(B(s_n\omega,n))}\to0.$
\end{proof}

\subsection{Improved lower bounds of $\mu_c$: Proof of Proposition~\ref{prop:improved_mu_c}}\label{sec:improved_mu_c}

In this subsection we discuss the improved lower bound of $\mu_c$ under additional conditions on $w$. We will mimic the proof of  Theorem~\ref{thm:phase_transitions} (ii), with suitable modifications.

\subsubsection*{Case where $s>d+2$.} We assume first that $w$ decays fast enough but do not put any assumption on the sign on $w$. To prove \eqref{eq:const-GS-condition}, we proceed as in \eqref{eq:const-GS-condition-interaction} and ignore the diagonal parts which are non-negative. For the off-diagonal part in \eqref{eq:const-GS-condition-interaction}, there is no need to rearrange the linear and quadratic contributions of $h_+$. Arguing as in \eqref{eq:const-GS-condition-off-diagonal-g1} and \eqref{eq:const-GS-condition-off-diagonal-g22}, and  using also \eqref{eq:Taylor-expansion-fourier}, we have
\begin{multline}\label{eq:const-GS-condition-off-diagonal-special-1}
\left| 2u \iint f_-(x)  h_+(y)  w(x-y)  \dx\, \dy \right| \\
 =  2u (2\pi)^{\frac{d}2} \left| \int \overline{\hat f_-(k)} \hat h_+(k) ( \hat w(0) - \hat w(k) ) \rd k \right| \le u \beta  \int |k|^2 |\hat f_-| \int |\hat h_+|^2
\end{multline}
with
$$
\beta=   \max_{|\omega|=1} \int |\omega\cdot z|^2 |w(z)| {\rm d}z  .
$$
When $w$ is radial the right side is independent of $\omega$ and thus $\beta=d^{-1}\int |z|^2 |w(z)| {\rm d}z$.
For the part involving $h_+^2$, we use the fact that $f_-$ and $h_+$ have disjoint support and the uniform bound $|f_-|\le u^2$ to write
\begin{align*}
\left| \iint f_-(x)  h_+^2(y)  w(x-y) \dx\, \dy \right|
&=  \left| \iint f_-(x)  (h_+(x)-h_+(y))^2   w(x-y)  \dx\, \dy \right| \nn\\
&\le u^2 \iint |w(z)|  |h_+(x)-h_+(x-z)|^2 \dx\, {\rm d}z.
\end{align*}
By Plancherel's theorem and the bound $|1-e^{it}| \le |t|$, we have
$$
 \int_{\R^d} |h_+(x)-h_+(x-z)|^2 \dx
\le  \int_{\R^d} |\hat h_+(k)|^2 |k\cdot z|^2  {\rm d} k.$$
Therefore,
\begin{align}\label{eq:const-GS-condition-off-diagonal-special-2}
\left| \iint f_-(x)  h_+^2(y)  w(x-y)  \dx\, \dy \right| &\le  u^2 \iint  |\hat h_+(k)|^2 |k\cdot z|^2 |w(z)| {\rm d}z\, \rd k \nn\\
&\le  u^2 \beta \int |k|^2 |\hat h_+(k)|^2 {\rm d}k.
\end{align}
For the kinetic term, using \eqref{eq:const-GS-condition-kinetic-1} we have
\begin{align} \label{eq:const-GS-condition-kinetic-special-kinetic}
\int_{\R^d} |\nabla h|^2 \ge  \int |k|^2 |\hat h_+(k)|^2 \rd k  +  \frac{1}{4u^2} \int |k|^2 |\hat f_-(k)|^2 \rd k.
\end{align}
In summary, collecting \eqref{eq:const-GS-condition-off-diagonal-special-1}, \eqref{eq:const-GS-condition-off-diagonal-special-2} and \eqref{eq:const-GS-condition-kinetic-special-kinetic} we conclude that
\begin{align}\label{eq:const-GS-condition-final-special-1}
&\int_{\R^d} |\nabla h|^2 + \frac{1}{2} \iint (2u h(x) + h^2(x)) (2u h(y) + h^2(y)) w(x-y) \dx\, \dy \nn\\
&\ge   \int  |k|^2 \left( (1-u^2\beta)  |\hat h_+(k)|^2  +  \frac{1}{4u^2}   |\hat f_-(k)|^2  - u\beta |\hat f_-(k)||\hat h_+(k)| \right) {\rm d}k  .
\end{align}
The integrand of  the right-hand side of \eqref{eq:const-GS-condition-final-special-1} is non-negative if
$$
\frac{1-u^2\beta}{u^2} \ge (u \beta)^2
 $$
which is verified when $u^2\beta \le (\sqrt{5}-1)/2$, namely
 \begin{align}\label{eq:const-GS-condition-final-special-2}
 \mu =  u^2 \int_{\R^d} w \le  \frac{\sqrt{5}-1}{2 \beta} \int_{\R^d} w.
\end{align}
Thus if $s>d+2$, then the constant function $u$ is an infinite ground state for all $\mu$ as in~\eqref{eq:const-GS-condition-final-special-2} and we obtain the stated inequality~\eqref{eq:mc-mu1-off-factor}.

\subsubsection*{Case where $w\geq0$.}  Next we assume that $w\geq0$ (but we do not assume $s>d+2$). Using again \eqref{eq:const-GS-condition-interaction}, we can estimate the diagonal parts as
\begin{equation}\label{eq:const-GS-condition-positivew-1}
\frac{1}{2} \iint f_-(x) f_-(y) w(x-y) \dx\, \dy =  \frac{(2\pi)^{d/2}}{2} \int |\hat f_-(k)|^2 \hat w(k) {\rm d}k
\end{equation}
and
\begin{align}\label{eq:const-GS-condition-positivew-2}
\frac{1}{2}  \iint f_+(x) f_+(y) w(x-y) \dx\, \dy &\ge 2u^2 \iint h_+(x) h_+(y) w(x-y) \dx\, \dy  \nn\\
&=  2u^2 (2\pi)^{d/2} \int |\hat h_+(k)|^2 \hat w(k) {\rm d}k.
\end{align}
 In the latter bound, we used $f_+\ge 2u h_+$ and the assumption $w\ge 0$. For the off-diagonal term, we take $t\in (0,1)$ and estimate using the orthogonality
\begin{align}\label{eq:const-GS-condition-positivew-3}
&\left| 2u \iint f_-(x)  h_+(y)  w(x-y)  \dx\, \dy \right| \nn\\
&\qquad  = 2u (2\pi)^{d/2}   \left|  \int \overline{\hat f_-(k)} \hat h_+(k) ( \hat w(k) - t \hat w(0) )  {\rm d}k \right| \nn\\
&\qquad\le 2u (2\pi)^{d/2}   \int |\hat f_-(k) | |\hat h_+(k)|  |\hat w(k) -  t \hat w(0)|  {\rm d}k
\end{align}
and
\begin{align}\label{eq:const-GS-condition-positivew-4}
\left| \iint f_-(x)  h_+^2(y)  w(x-y)  \dx\, \dy \right|  \le  u^2 (2\pi)^{d/2} \hat w(0) \int |\hat h_+(k)|^2 {\rm d}k.
\end{align}
Here we used $|f_-|\le u^2$ and $\|w\|_{L^1}=(2\pi)^{d/2} \hat w(0)$ since $w\ge 0$. Putting \eqref{eq:const-GS-condition-positivew-1}, \eqref{eq:const-GS-condition-positivew-2}, \eqref{eq:const-GS-condition-positivew-3}, \eqref{eq:const-GS-condition-positivew-4} and the kinetic bound \eqref{eq:const-GS-condition-kinetic-special-kinetic} together, we arrive at
\begin{align}\label{eq:const-GS-condition-final-special-1aa}
&\int_{\R^d} |\nabla h|^2 + \frac{1}{2} \iint (2u h(x) + h^2(x)) (2u h(y) + h^2(y)) w(x-y) \dx\, \dy \nn\\
&\qquad\ge  \int   \Bigg\{  \frac{1}{4u^2} \Big( |k|^2  + 2u^2 (2\pi)^{d/2} \hat w(k) \Big)  |\hat f_-(k)|^2 \nn\\
&\qquad \qquad + \Big(|k|^2 + u^2 (2\pi)^{d/2} (2\hat w(k) -\hat w(0) )  \Big) |\hat h_+(k)|^2 \nn\\
&\qquad\qquad  - 2u (2\pi)^{d/2} |\hat f_-(k)| |\hat h_+(k)| |t\hat w(0)-\hat w(k)|  \Big\} {\rm d}k.
\end{align}
When
$$ (2\pi)^{d/2}\hat w(0) u^2 = \mu \le \mu_2=\inf_{k\in \R^d} \frac{|k|^2\hat w(0)}{2(\hat w(0) - 2 \hat w(k))_+},$$
we have
$$
|k|^2  + 2u^2 (2\pi)^{d/2} \hat w(k) \ge 2 u^2 (2\pi)^{d/2} \Big( (\hat w(0) - 2 \hat w(k))_+ + \hat w(k) \Big) \ge 0
$$
since $ \frac{1}{2} (\hat w(0) - 2 \hat w(k))_+ + \hat w(k) \ge \frac 1 2  \hat w(0)\ge 0$, and
$$
|k|^2  + 2u^2 (2\pi)^{d/2} \hat w(k) -u^2 \ge u^2 (2\pi)^{d/2} \Big( 2(\hat w(0) - 2 \hat w(k))_+ + 2\hat w(k) -\hat w(0) \Big).
$$
Therefore, when $t=1/2$, the integrand on the right hand side of \eqref{eq:const-GS-condition-final-special-1aa} is non-negative since
\begin{align}\label{eq:const-GS-condition-final-special-2bb}
\Big( (\hat w(0) - 2 \hat w(k))_+ + \hat w(k) \Big) \Big( 2(\hat w(0) - 2 \hat w(k))_+&  + 2\hat w(k) -\hat w(0) \Big)  \nn\\
&\ge \frac{1}{2} \Big( \hat w(0)-2\hat w(k) \Big)^2.
\end{align}
To verify \eqref{eq:const-GS-condition-final-special-2bb}, we may distinguish two cases. If $\hat w(0)-2\hat w(k)\ge 0$, then
\begin{align*}
&\Big( (\hat w(0) - 2 \hat w(k))_+ + \hat w(k) \Big) \times  \Big( 2(\hat w(0) -2 \hat w(k))_+ + 2\hat w(k) -\hat w(0) \Big)\nn\\
&\ge \frac{1}{2}  (\hat w(0) - 2 \hat w(k))_+ \times (\hat w(0) -2 \hat w(k))_+ = \frac{1}{2} \Big( \hat w(0)-2\hat w(k) \Big)^2.
\end{align*}
If $\hat w(0)-2\hat w(k)\le 0$, then $\hat w(k) \ge 2\hat w(k)-\hat w (0) \ge 0$ since $\hat w(0)\ge \hat w(k)$, and hence
\begin{multline*}
\Big( (\hat w(0) - 2 \hat w(k))_+ + \hat w(k) \Big) \Big( 2(\hat w(0) - 2 \hat w(k))_+  + 2\hat w(k) -\hat w(0) \Big) \\
= \hat w(k)  \Big( 2\hat w(k) -\hat w(0) \Big) \ge \Big( \hat w(0)-2\hat w(k) \Big)^2.
\end{multline*}
Thus if $w\ge 0$, then the constant function $u$ is an infinite ground state if $\mu\le \mu_2$. The proof of Proposition~\ref{prop:improved_mu_c} is complete.\qed

\section{Low and high density limits}

\subsection{Uniqueness of real-valued ground states for Ginzburg-Landau}
As a preparation for the proof of Theorem \ref{thm:phase_transitions-low-density}, we start with the special case of the delta interaction. The following is well known in dimensions $d\in\{1,2\}$ but we have not found it explicitly written in all space dimensions.

\begin{lemma}[Uniqueness of real-valued ground states for Ginzburg-Landau]\label{lem:uniqueness_GinzLan}
Let $d\geq1$, $w=\delta$ and $\mu=1$. The only real-valued infinite ground states are constants: $u(x)=\pm1$.
\end{lemma}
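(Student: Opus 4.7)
The plan is to show every real-valued infinite ground state $u$ satisfies $u\equiv\pm 1$ via four substeps: (a) the Farina pointwise bound $|u|\le 1$; (b) $u$ does not change sign; (c) after replacing $u$ by $-u$ if needed, $u\ge c>0$ strictly; and (d) $u\equiv 1$. The ingredients available are the equation $-\Delta u=u-u^3$, the two operator inequalities $L_1:=-\Delta+u^2-1\ge 0$ and $L_2:=-\Delta+3u^2-1\ge 0$ coming from Lemma~\ref{lem:1st_2nd_order}(i) applied to imaginary and real perturbations respectively, the uniform $C^2$ bounds of Theorem~\ref{thm:uniform_bound}(i), and the translation-invariant lower bound $\inf u\ge c>0$ of Theorem~\ref{thm:uniform_bound}(ii) for non-negative solutions.

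For steps (a) and (d) I would use a common translation-to-supremum argument combined with the strong maximum principle for subharmonic functions. For (a): the identity $\Delta(u^2)=2|\nabla u|^2+2u^2(u^2-1)$ shows that $w:=(u^2-1)_+\ge 0$ is Lipschitz, bounded, and satisfies $\Delta w\ge 2w$ in the distributional sense. Picking $x_n$ with $w(x_n)\to\sup w$ and extracting a locally uniform limit $\tilde w$ via Ascoli (using the uniform Lipschitz continuity of $w$), the strong maximum principle applied to the subharmonic $\tilde w$ at the interior maximum $\tilde w(0)=\sup w$ forces $\tilde w$ constant, hence $\Delta\tilde w=0\ge 2\sup w$, and thus $\sup w=0$, i.e.\ $|u|\le 1$. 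For (d): set $v:=1-u\in[0,1-c]$ and compute from the equation that $\Delta v=u(1+u)v\ge c(1+c)v$. If $\sup v=M>0$, the analogous translation-to-supremum yields a $C^2_{\rm loc}$ limit $\tilde v$ with $\tilde v(0)=M$ an interior maximum, while $\tilde u\ge c$ still holds (Theorem~\ref{thm:uniform_bound}(ii) being translation invariant), producing the contradiction $0\ge\Delta\tilde v(0)\ge c(1+c)M>0$. Step (c) is then immediate: after (b), either $u\ge 0$ or $u\le 0$, and Theorem~\ref{thm:uniform_bound}(ii) applied to $\pm u$ provides the strict positivity.

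The hard part will be step (b), since I must rule out a sign-changing bounded zero mode of the non-negative operator $L_1$. My approach would be an Allegretto--Piepenbrink-type argument: the non-negativity of $L_1$ produces a positive distributional solution $\phi>0$ of $L_1\phi=0$, and the associated ground-state transform gives the identity $\langle v,L_1v\rangle=\int\phi^2|\nabla(v/\phi)|^2$ for every compactly supported $v$. Testing the GP equation against $u\chi_R^2$ yields the independent relation $\langle u\chi_R,L_1(u\chi_R)\rangle=\int u^2|\nabla\chi_R|^2$, and combining the two identities gives $\int\phi^2|\nabla(u\chi_R/\phi)|^2=\int u^2|\nabla\chi_R|^2$. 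Choosing $\chi_R$ as the capacity-minimizing radial cutoff from $B_R$ to infinity sends the right-hand side to $0$ in dimensions $d\le 2$ (a ball has vanishing capacity there), forcing $\nabla(u/\phi)\equiv 0$ and hence $u=c\phi$, which contradicts $\phi>0$ and $u$ sign-changing. In dimensions $d\ge 3$ balls have positive capacity and this cutoff trick fails; overcoming this is the delicate point. I would try to exploit the averaged decay $\fint_{B_R}(u^2-1)^2\to 0$ from Theorem~\ref{thm:prop_GS} together with the Modica-type bound $|\nabla u|^2\le (1-u^2)^2/2$ to construct directly a compactly supported test function $\phi$ with $\langle\phi,L_1\phi\rangle<0$, concentrated on a thickened neighborhood of the nodal set $\{u=0\}$ where the potential $u^2-1$ is most negative, thereby contradicting $L_1\ge 0$.
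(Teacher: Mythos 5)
Your plan splits into (a) $|u|\le 1$, (b) $u$ does not change sign, (c) strict positivity via Theorem~\ref{thm:uniform_bound}(ii), (d) $u\equiv 1$ by a translated maximum-principle argument. Steps (a), (c) and (d) are sound as sketched (in (a) note that $\Delta(u^2-1)_+\ge 2(u^2-1)_+$ distributionally requires a Kato-type inequality, but this is standard). The genuine gap is step (b) in dimensions $d\ge 3$, and it is not a detail you can patch with the heuristic you propose. Your Allegretto--Piepenbrink/ground-state-transform argument uses the logarithmic cutoff whose Dirichlet integral vanishes, which is exactly the criticality of bounded Schr\"odinger potentials in $d\le 2$; this is the same mechanism the paper itself uses in the proof of Theorem~\ref{thm:real-valued}(ii), through Lemma~\ref{lem:critical}, and it genuinely fails for $d\ge 3$. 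Your fallback --- building a compactly supported test function with $\pscal{\phi,L_1\phi}<0$ near the nodal set --- is not a proof: the potential well near $\{u=0\}$ has depth at most $1$ and transversal width $O(1)$ (by the very Modica bound you invoke, $|\nabla u|\le (1-u^2)/\sqrt2$), while $\lambda_1$ of a ball of radius $\sqrt2$ already exceeds $1$; whether a negative direction exists depends on the global geometry of the nodal hypersurface and you have no control over it. Since the lemma is stated for all $d\ge 1$, the proposal as written does not prove it.

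The paper avoids this obstacle entirely by never asking whether $u$ changes sign. It works throughout with $f:=1-u^2$, which depends only on $u^2$. After $0\le f\le 1$, Modica's bound turns the equation into the differential inequality $\Delta f\ge 2f-3f^2$. Testing against $\chi_R$ gives $\int_{B_R} f\le CR^{-2}\int_{B_{2R}} f + C\int_{B_{2R}} f^3$, testing against $\chi_R^2 f$ and applying the stability condition $-\Delta - f = -\Delta + u^2-1\ge 0$ gives $\int_{B_R} f^3\le CR^{-2}\int_{B_{2R}} f^2\le CR^{-2}\int_{B_{2R}} f$, and combining yields $\int_{B_R} f\le CR^{-2}\int_{B_{4R}} f$, which iterated $d$ times gives $\int_{B_R} f\le CR^{-d}$ and hence $f\equiv 0$. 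Because everything is expressed through $f$, the conclusion $u^2\equiv 1$ comes first, and $u\equiv\pm1$ then follows trivially from continuity. If you want to keep a direct route, the key is to restructure around $u^2$ rather than $u$ so that the sign question never arises; a translated-maximum argument on $f$ itself stalls at the possibility $\sup f=1$ (a zero of $u$), which is precisely the scenario the paper's integral bootstrap handles without needing any pointwise lower bound on $|u|$.
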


Recall that the Ginzburg-Landau equation admits non constant real-valued solutions, for instance the 1D kink repeated in one direction. Lemma~\ref{lem:uniqueness_GinzLan} thus says that those are unstable with respect to the variations of the complex phase, in all dimensions. We refer to ~\cite{YeZho-96,Mironescu-96,Sandier-98,ComMir-99,PacRiv-00,MilPis-10,FarMir-13,IgnNguSlaZar-20} for various uniqueness results concerning the Ginzburg-Landau equation.

\begin{proof}
We have $-\Delta u+(u^2-1)u=0$
as well as the local stability condition
\begin{equation}
 -\Delta +(u^2-1)\geq 0
 \label{eq:GL_stability}
\end{equation}
in the sense of quadratic forms, by Lemma~\ref{lem:1st_2nd_order}. This is the only specific property of (real-valued) ground states we will use in the proof.
We will also use a pointwise bound of Modica~\cite{Modica-85} which states that for real-valued solutions
\begin{equation}
2|\nabla u|^2\leq (1-u^2)^2.
\label{eq:Modica}
\end{equation}
We then let $f:=1-u^2$ and aim at proving that $f\equiv0$. The function $f$ satisfies the equation
\begin{align*}
\Delta f=-\Delta u^2=-2|\nabla u|^2-2u\Delta u&=-2|\nabla u|^2+2u^2f\\
&=-2|\nabla u|^2+2f-2f^2.
\end{align*}
Using that $\Delta f\leq 0$ on the set $\{f<0\}$ one can see that necessarily $f>0$, that is, $|u|< 1$~\cite{BreMerRiv-94,HerHer-96,Farina-98}.
Using~\eqref{eq:Modica} to bound $|\nabla u|^2$, we have
\begin{equation}
\Delta f=-2|\nabla u|^2+2f-2f^2\geq 2f-3f^2.
\label{eq:GL_f}
\end{equation}

Let $\chi_R(x)=\chi(x/R)$ where $\chi\in C^\ii_c(\R^d,[0,1])$ with $\chi\equiv1$ on $B_1$ and $\chi\equiv0$ on $\R^d\setminus B_2$.
First we bound $2f-3f^2\geq f-Cf^3$ in~\eqref{eq:GL_f} and integrate against $\chi_R$ to obtain
\begin{equation}
\int_{B_R}f\leq \int f\Delta \chi_R +C\int\chi_Rf^3\leq \frac{C}{R^2}\int_{B_{2R}}f+C\int_{B_{2R}}f^3.
\label{eq:f_1_3}
\end{equation}
Next we rewrite~\eqref{eq:GL_f} in the form
$$-\Delta f-f^2+2fu^2\leq0$$
and integrate against $\chi_R^2f$ to obtain
$$\pscal{\chi_Rf(-\Delta-f)\chi_R f}+2\int\chi_R^2u^2f^2\leq \int|\nabla\chi_R|^2f^2\leq \frac{C}{R^2}\int_{B_{2R}}f^2$$
From the stability condition~\eqref{eq:GL_stability}, the first term is non-negative and hence we obtain
$$2\int\chi_R^2u^2f^2\leq \frac{C}{R^2}\int_{B_{2R}}f^2.$$
Noticing that $|\nabla f|^2=4u^2|\nabla u|^2\leq 2u^2f^2$, this provides
\begin{equation}
\int\chi_R^2|\nabla f|^2\leq \frac{C}{R^2}\int_{B_{2R}}f^2.
 \label{eq:gradient}
\end{equation}
However, the non-negativity of $-\Delta-f$ implies
$$\int\chi_R^2 f^3\leq \int|\nabla\chi_Rf|^2\leq \frac{C}{R^2}\int_{B_{2R}}f^2+2\int\chi_R^2|\nabla f|^2.$$
Inserting~\eqref{eq:gradient} we arrive at
\begin{equation}
\int_{B_R}f^3\leq \frac{C}{R^2}\int_{B_{2R}}f^2\leq \frac{C}{R^2}\int_{B_{2R}}f
 \label{eq:f_3_1}
\end{equation}
using that $f\leq1$. We can now insert~\eqref{eq:f_3_1} into~\eqref{eq:f_1_3}, which provides
\begin{equation}
 \int_{B_R}f \leq \frac{C}{R^2}\int_{B_{4R}}f.
 \label{eq:bootstrap}
\end{equation}
We can now easily bootstrap to deduce $f\equiv0$. In fact, iterating~\eqref{eq:bootstrap} $d$ times, we obtain
$$\int_{B_R}f \leq \frac{C^d}{R^{2d}}\int_{B_{4^dR}}f\leq CR^{-d}$$
since $f\leq1$. Taking $R\to\ii$, we conclude that $\int_{B_{R_0}}f =0$ for any given $R_0$ and therefore $f\equiv0$.
\end{proof}

\subsection{Low-density: Proof of Theorem \ref{thm:phase_transitions-low-density}}\label{sec:proof_low_density}

We study the limit $\mu\to0$ and prove convergence to the case of a delta interaction, after an appropriate rescaling. Let $u_\mu$ be any real-value infinite ground state for the chemical potential $\mu\ll1$ and write
$$u_\mu(x)=\left(\frac{\mu}{\int_{\R^d} w}\right)^{\frac12} \tilde u_\mu\big(\sqrt\mu x\big).$$
The GP equation \eqref{eq:GP_local} can be rewritten as
\begin{equation}
-\Delta \tilde u_\mu+\big(\tilde w_\mu\ast\tilde u_\mu^2-1\big)\tilde u_\mu=0
\label{eq:rescaled_GP}
\end{equation}
with the rescaled potential
$$\tilde w_\mu(x)=\frac{\mu^{-\frac{d}2}w(x/\sqrt\mu)}{\int_{\R^d}w}\wto\delta_0.$$
After scaling the test function $h$ properly in~\eqref{eq:GP_local}, we see that $\tilde u_\mu$ is an infinite ground state for the chemical potential $\mu=1$ and with the interaction $\tilde w_\mu$.

Next we prove that $\tilde u_\mu$ converges locally uniformly to a Ginzburg-Landau ground state for the $\delta$ interaction. By~\eqref{eq:pointwise_bound_main} and~\eqref{eq:pointwise_bound_derivatives_main} in Theorem~\ref{thm:uniform_bound} we have $\|\tilde u_\mu\|_{L^\ii}+\|\nabla \tilde u_\mu\|_{L^\ii}\leq C$ for some constant $C$ and $\mu$ small enough. By~\eqref{eq:lower_bound} in the same theorem, we have
\begin{align} \label{eq:lower-bound-tumu}
\int_{B(\tau,R)}|\tilde u_\mu|^2=\mu^{\frac{d}2-1}\int_{\R^d}w\int_{B(\tau,R/\sqrt\mu)}|u_\mu|^2\geq \left(C^{-1}-CR^{-2}\right)_+.
\end{align}
The right side is strictly positive for $R$ large enough, which proves that $u_\mu$ must have mass everywhere in space, uniformly with respect to $\mu$. Since $\tilde u_\mu$ is bounded and uniformly Lipschitz, after extraction of a subsequence, we can assume that $\tilde u_\mu\to u$ locally uniformly. The limit $u$ is nontrivial due to \eqref{eq:lower-bound-tumu}, and it solves the Ginzburg-Landau equation
\begin{equation}
-\Delta u+(u^2-1)u=0.
\label{eq:Ginzburg-Landau}
\end{equation}
We can also pass to the limit in the definition of infinite ground states with a fixed $h$, and obtain that $u$ is a ground state for the delta interaction. From Lemma~\ref{lem:uniqueness_GinzLan} we deduce that $u\equiv\pm1$ and thus $\tilde u_\mu\to\pm1$ locally uniformly. Since $\nabla \tilde u_\mu$ is bounded, one sees that the convergence must be uniform over the whole of $\R^d$ and not just local (if for instance $u_\mu(0)\to1$ and $u_\mu(x_\mu)\to -1$ then there must exist $y_\mu\in[0,x_\mu]$ such that $u_\mu(y_\mu)=0$, which cannot be). Upon changing $\tilde u_\mu$ into $-\tilde u_\mu$ we can thus assume
\begin{equation}
\lim_{\mu\to0}\norm{\tilde u_\mu-1}_{L^\ii}=0.
\label{eq:limit_to_cnst}
\end{equation}
In other words, $u_\mu=(\mu/\int_{\R^d}w)^{1/2}+o(\sqrt\mu)_{L^\ii}$.

Our last step is to prove that necessarily $\tilde u_\mu\equiv1$ by a perturbation argument around the constant function.  We let $v_\mu:=\tilde u_\mu-1$ so that
$$-\Delta v_\mu+\Big(\tilde w_\mu\ast(2v_\mu+v_\mu^2\big)\Big)(1+v_\mu)=0,$$
which we can also rewrite in the form
\begin{align}\label{eq:inverted_GP}
v_\mu &=(-\Delta+2)^{-1}\bigg(2(\tilde w_\mu-\delta_0)\ast v_\mu-(\tilde w_\mu\ast v_\mu^2) (1+v_\mu)-2(\tilde w_\mu\ast v_\mu)v_\mu\bigg) \nn\\
&= (2\pi)^{-\frac{d}2}\;Y_2 \ast \bigg(2(\tilde w_\mu-\delta_0)\ast v_\mu-(\tilde w_\mu\ast v_\mu^2) (1+v_\mu)-2(\tilde w_\mu\ast v_\mu)v_\mu\bigg).
\end{align}
Recall that $Y_M$ is the Yukawa potential, that is, $\widehat{Y_M}(k)=(|k|^2+M^2)^{-1}$.
Using $Y_2\in L^1(\R^d)$, this implies
\begin{equation}
\|v_\mu\|_{L^\ii}\leq C\norm{(\tilde w_\mu-\delta_0)\ast v_\mu}_{L^\ii}+C\|v_\mu\|_{L^\ii}^2.
\label{eq:fixed_point}
\end{equation}
We write the first term as
\begin{multline*}
(\tilde w_\mu-\delta_0)\ast v_\mu(x)=\int_{|x-y|\leq R}\tilde w_\mu(x-y)\big(v_\mu(y)-v_\mu(x)\big)\,\dy\\
+\int_{|x-y|\geq R}\tilde w_\mu(x-y)v_\mu(y)\,\dy-v_\mu(x)\int_{|y|\geq R}\tilde w_\mu
\end{multline*}
and obtain
\begin{align*}
\norm{(\tilde w_\mu-\delta_0)\ast v_\mu}_{L^\ii}&\leq CR\norm{\nabla v_\mu}_{L^\ii}
+2\frac{\int_{|y|\geq R/\sqrt\mu}|w(y)|\,\dy}{\int_{\R^d} w}\norm{v_\mu}_{L^\ii}\\
&\leq CR\norm{\nabla v_\mu}_{L^\ii}
+\frac{C\mu^{\frac{s-d}{2}}}{R^{s-d}}\norm{v_\mu}_{L^\ii}.
\end{align*}
Taking the gradient of Equation~\eqref{eq:inverted_GP} and using that $\nabla Y_2\in L^1(\R^d)$, we obtain by a similar reasoning
$$\norm{\nabla v_\mu}_{L^\ii}\leq C\norm{v_\mu}_{L^\ii}.$$
Inserting in~\eqref{eq:fixed_point} we have proved that
$$\|v_\mu\|_{L^\ii}\leq C\left(R+\frac{\mu^{\frac{s-d}{2}}}{R^{s-d}}+\|v_\mu\|_{L^\ii}\right)\|v_\mu\|_{L^\ii}.$$
This suggests to take $R=\mu^{1/4}\to0$. This gives $v_\mu\equiv0$ when $\mu$ is small enough. This concludes the proof that the constant is the unique ground state for sufficiently small values of $\mu$. This concludes the proof of  Theorem \ref{thm:phase_transitions-low-density}. \qed

\subsection{High-density: Proof of Theorem~\ref{thm:high_density}}\label{sec:proof_high_density}
Let $\mu_n\to\ii$ and $u_n$ be any positive infinite ground state. We introduce $v_n:=u_n/\sqrt{\mu_n}$, which satisfies the local bounds
\begin{equation}
\int_{B(z,\ell)}|v_n|^2\leq C,\qquad \int_{B(z,\ell)}|\nabla v_n|^2\leq C\mu_n,
 \label{eq:local_bd_high_density}
\end{equation}
by Theorem~\ref{thm:local_bound}, for any fixed (large enough) $\ell$ and any $z\in\R^d$. Upon extraction of a subsequence, we can thus assume that $|v_n|^2\wto\nu$ locally in the sense of measures. The previous uniform local bound and the continuity of $w$ imply by an adaptation of Lemma~\ref{lem:estim_potential} that
$$|v_n|^2\ast w\to \nu\ast w\qquad\text{locally uniformly.}$$
Since $u_n$ is real-valued, we have the operator inequality $-\Delta+|u_n|^2\ast w-\mu_n\geq0$,  by Lemma~\ref{lem:1st_2nd_order}. Taking the scalar product against some $h\in H^1(\R^d)$ and dividing by $\mu_n$, we find
$$\frac1{\mu_n}\int_{\R^d}|\nabla h|^2+\int_{\R^d}|h|^2\big(|v_n|^2\ast w-1\big)\geq0.$$
Passing to the limit we obtain
\begin{equation}
\nu\ast w\geq1\qquad\text{on $\R^d$,}
 \label{eq:inequality_nu_classical}
\end{equation}
where we recall that $\nu\ast w$ is a continuous function.

Next we prove that $\nu\ast w=1$ on the support of $\nu$ and get a better bound on the local kinetic energy. The GP equation can be written as
\begin{equation}
\left(-\frac\Delta{\mu_n}+|v_n|^2\ast w-1\right)v_n=0.
 \label{eq:GP_high_density}
\end{equation}
Taking the scalar product against $\chi v_n$ for some $0\leq\chi\in C^\ii_c$, we obtain
$$\frac1{\mu_n}\int_{\R^d} \chi|\nabla v_n|^2+\int_{\R^d}\big(v_n^2\ast w-1\big)\chi v_n^2=\frac1{2\mu_n}\int_{\R^d} v_n^2\Delta\chi.$$
The right side converges to 0 by~\eqref{eq:local_bd_high_density} since $\chi$ has compact support. On the other hand, we have from the local uniform convergence
$$\lim_{n\to\ii}\int_{\R^d}\big(v_n^2\ast w-1\big)\chi v_n^2=\int_{\R^d}\big(\nu\ast w-1\big)\chi \rd\nu\geq0$$
by~\eqref{eq:inequality_nu_classical}. Thus we conclude that
$$\lim_{n\to\ii}\frac1{\mu_n}\int_{\R^d} \chi|\nabla v_n|^2=\int_{\R^d}\big(\nu\ast w-1\big)\chi \rd\nu=0.$$
In particular we have $\nu\ast w=1$, $\nu$--almost everywhere, as claimed. We have also shown that $\nabla v_n/\sqrt{\mu_n}$ converges to 0 locally in $L^2$.

Now we are ready to show that $\nu$ is a classical infinite ground state. For any fixed $R>0$ we consider a fixed function $h\in C^\ii_c(B_{R})$. Let $\delta>0$ be such that $h$ is supported on $B_{R-\delta}$ and consider a function $\eta_\delta\in C^\ii(\R^d,[0,1])$ so that $\eta_\delta\equiv1$ outside of the ball $B_R$ and $\eta_\delta\equiv0$ in $B_{R-\delta}$. We finally plug the trial state $\eta_\delta u_n+\sqrt{\mu_n}h$ in the characterization~\eqref{eq:local_min} of an infinite ground states in $B_R$ and obtain
\begin{multline*}
\frac1{\mu_n}\int_{B_R}(|\nabla (\eta_\delta v_n)|^2+|\nabla h|^2-|\nabla v_n|^2)-\int_{\R^d}\big((\eta_\delta^2-1)v_n^2+h^2\big)\\
+\frac12D(h^2,h^2)-\frac12D\big((1-\eta_\delta^2)v_n^2,(1-\eta_\delta^2)v_n^2\big)+D\big(\eta_\delta^2v_n^2,h^2\big)\geq0.
\end{multline*}
Using that both $\eta$ and $w$ are continuous, as well as the previous convergence $\nabla v_n/\sqrt{\mu_n}\to0$ in $L^2(B_R)$, we can pass to the limit and obtain
\begin{multline*}
-\int_{\R^d}h^2-\int_{\R^d}(\eta_\delta^2-1)\rd\nu+\frac12D(h^2,h^2)\\
-\frac12\iint \big(1-\eta_\delta(x)^2\big)\big(1-\eta_\delta(y)^2\big)w(x-y)\,\rd\nu(x)\rd\nu(y)\\
+\int_{\R^d}\eta_\delta(x)^2h(y)^2w(x-y)\,\rd\nu(x)\,\dy\geq0.
\end{multline*}
Taking $\delta\to0$ and recalling that $B_R$ is the open ball, we get
\begin{multline*}
-\int_{\R^d}h^2+\nu(B_R)+\frac12D(h^2,h^2)\\
-\frac12\iint_{(B_R)^2}w(x-y)\,\rd\nu(x)\rd\nu(y)+\int_{\R^d\setminus B_R}h(y)^2w(x-y)\,\rd\nu(x)\,\dy\geq0.
\end{multline*}
After letting now $h^2$ approximate a finite measure in $B_R$, we obtain the characterization of a classical infinite ground state.\qed

\section{Breaking of $U(1)$ symmetry and the 2D vortex}

\subsection{Real-valued ground states: Proof of Theorem~\ref{thm:real-valued}}\label{sec:proof_real-valued}

In this section we prove Theorem~\ref{thm:real-valued}, which states that all infinite ground states are real-value and positive in 1D, whereas in 2D real-value infinite ground states are positive.

\subsubsection*{\textnormal{(i)} Infinite ground states are real-value and positive in 1D}
Let $u$ be an infinite ground state in dimension $d=1$. Let $R>0$. We write $u(-R)=|u(-R)|e^{i\theta_-}$ and $u(R)=|u(R)|e^{i\theta_+}$ for some $\theta_\pm\in[0,2\pi)$. Consider then the function
$$v(x)=\begin{cases}
u(x)&\text{for $|x|\geq2R$,}\\
|u(x)|&\text{for $|x|\leq R$,}\\
u(x)e^{i\theta(x)}&\text{for $R\leq |x|\leq2R$,}
\end{cases}$$
where
$$\theta(x):=\begin{cases}
\frac{x+2R}{R}\theta_-&\text{for $-2R\leq x\leq -R$,}\\
\frac{2R-x}{R}\theta_+&\text{for $R\leq x\leq 2R$.}
\end{cases}$$
Since $|v|=|u|$ everywhere, the infinite ground state property implies
\begin{align*}
0&\leq \int_{\R^d}|v'|^2-|u'|^2\\
&=\int_{-R}^R(|(|u|)'|^2-|u'|^2)+\int_{(-2R,-R)\cup(R,2R)}(|\theta'|^2|u|^2-2\Im \theta'\overline u u')\\
&\leq\int_{-R}^R(|(|u|)'|^2-|u'|^2)+\frac{C}{R}-\frac{2\theta_-}{R}\Im\left(\int_{-2R}^{-R}\overline u u'\right)+\frac{2\theta_+}{R}\Im\left(\int_{R}^{2R}\overline u u'\right).
\end{align*}
In the last estimate we have used that $u$ is bounded by~\eqref{eq:pointwise_bound_main} in Theorem~\ref{thm:uniform_bound}.
From~\eqref{eq:momentum} in Theorem~\ref{thm:prop_GS} we know that the average momentum per unit length vanishes,
$$\int_{R}^{2R}\overline u u'=\int_{-2R}^{-R}\overline u u'=o(R)_{R\to\ii}$$
and therefore obtain
$$\int_{-R}^R(|u'|^2-|(|u|)'|^2)\leq o(1)_{R\to\ii}.$$
Since the function in the parenthesis is non-negative, this implies that $|(|u|)'|=|u'|$ almost everywhere. The regularity of $u$ then gives $u=e^{i\phi}|u|$ for some constant $\phi$. This concludes the proof of (i) in Theorem~\ref{thm:real-valued}.\qed

\subsubsection*{\textnormal{(ii)} Real infinite ground states are positive in 2D}
Let $u$ be a \emph{real-value} infinite ground state in dimension $d=2$. Then we know that $-\Delta+V\geq0$, where $V=w\ast|u|^2-\mu\in L^\ii(\R^2)$, by~\eqref{eq:linearly_stable} in Lemma~\ref{lem:1st_2nd_order}. The following is well known and says that any bounded potential is critical in dimensions $d\in\{1,2\}$, in the sense of~\cite{Simon-81,Murata-86}. The result fails in higher dimensions. 

\begin{lemma}[Criticality]\label{lem:critical}
Let $u\in W^{2,\ii}(\R^d)$ and $V\in L^\ii(\R^d)$ be such that $(-\Delta +V)u=0$. If $d\in\{1,2\}$, there exists a sequence $u_n\in H^1(\R^d)$ of compact support such that $u_n\to u$ locally uniformly and $\pscal{u_n,(-\Delta+V)u_n}\to0$.
\end{lemma}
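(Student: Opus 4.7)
\medskip

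\noindent\textbf{Proof proposal.} The natural trial sequence is $u_n := \chi_n u$ for a family of cutoffs $\chi_n \in C^\infty_c(\R^d,[0,1])$ converging to $1$ locally uniformly. Since $u$ solves $(-\Delta+V)u=0$ and $\chi_n$ is real and compactly supported, integrating by parts exactly as in the IMS formula gives the clean identity
\begin{equation*}
\pscal{u_n,(-\Delta+V)u_n}=\int_{\R^d}\chi_n^2\,u\,(-\Delta+V)u+\int_{\R^d}|\nabla\chi_n|^2\,u^2=\int_{\R^d}|\nabla\chi_n|^2\,u^2.
\end{equation*}
Since $u\in L^\infty(\R^d)$, it therefore suffices to construct $\chi_n$ with $\chi_n\to 1$ locally uniformly and $\int_{\R^d}|\nabla\chi_n|^2\to 0$; this is precisely where the dimensional restriction $d\in\{1,2\}$ enters.

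For $d=1$, I would simply take a tent function equal to $1$ on $[-n,n]$, linearly decaying to $0$ on $[n,n+a_n]\cup[-n-a_n,-n]$, so that $\int|\chi_n'|^2=2/a_n$. Choosing any $a_n\to\infty$ yields the required vanishing, while $\chi_n\equiv 1$ on $[-n,n]$ guarantees local uniform convergence to $1$.

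For $d=2$, the linear cutoff is no longer admissible (the gradient integral is of order $1$, independent of the width), and one must use the logarithmic capacity of the plane. I would set
\begin{equation*}
\chi_n(x):=\begin{cases}1&\text{if }|x|\leq R_n,\\[1pt] \dfrac{\log(R_n^2/|x|)}{\log R_n}&\text{if }R_n\leq|x|\leq R_n^2,\\[4pt] 0&\text{if }|x|\geq R_n^2,\end{cases}
\end{equation*}
for some $R_n\to\infty$. A direct polar-coordinates computation gives
\begin{equation*}
\int_{\R^2}|\nabla\chi_n|^2=\frac{2\pi}{(\log R_n)^2}\int_{R_n}^{R_n^2}\frac{dr}{r}=\frac{2\pi}{\log R_n}\underset{n\to\infty}\longrightarrow 0,
\end{equation*}
and $\chi_n\equiv 1$ on $B(0,R_n)$, giving local uniform convergence. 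Then $u_n=\chi_n u\in H^1(\R^2)$ is compactly supported (recall $u,\nabla u\in L^\infty$ by assumption), $u_n\to u$ locally uniformly, and $\pscal{u_n,(-\Delta+V)u_n}\leq\|u\|_{L^\infty}^2\,(2\pi/\log R_n)\to 0$.

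The main (mild) obstacle is not the gradient estimate, which is classical, but ensuring that $u_n$ genuinely lies in $H^1(\R^d)$ with compact support. This is immediate from $u\in W^{2,\infty}\subset W^{1,\infty}$, which gives $\nabla u_n=\chi_n\nabla u+u\nabla\chi_n\in L^2(\R^d)$ thanks to the compact support of $\chi_n$. The argument clearly breaks in $d\geq 3$ since, by the equivalence between logarithmic capacity vanishing and criticality, no such cutoff sequence exists for a generic bounded potential, in line with the classical Simon--Murata criticality theory.
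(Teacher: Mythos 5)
Your proof is correct and follows essentially the same approach as the paper: both use the IMS identity $\pscal{\chi u,(-\Delta+V)\chi u}=\int|\nabla\chi|^2\,u^2$ and then construct cutoffs with vanishing $L^2$-norm of the gradient, namely a widening tent function in $d=1$ and a logarithmic cutoff in $d=2$ (yours from $R_n$ to $R_n^2$, the paper's from $n$ to $n+n^2$, which are equivalent choices). The only cosmetic difference is in $d=1$, where the paper dilates a fixed bump $\chi(x/n)$ rather than widening the interpolation region explicitly; both give $\int|\chi_n'|^2\to 0$.
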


\begin{proof}[Proof of Lemma~\ref{lem:critical}]
We have for every $\chi\in C^\ii_c(\R^d)$
$$\pscal{\chi u,(-\Delta+V)\chi u}=\int_{\R^d}|\nabla \chi|^2|u|^2\leq \|u\|^2_{L^\ii}\int_{\R^d}|\nabla \chi|^2.$$
We are therefore looking for a sequence $\chi_n\in C^\ii_c(\R^d)$ with $\chi_n\to1$ locally, such that $\nabla\chi_n\to0$ in $L^2(\R^d)$. Such a $\chi_n$ cannot exist in dimension $d\geq3$. In dimension $d=1$, we can simply take $\chi_n(x)=\chi(x/n)$ with any fixed $0\leq \chi\leq 1$ such that $\chi\equiv1$ on $[-1,1]$ and $\chi\equiv0$ on $\R\setminus(-2,2)$. In dimension $d=2$, we take for instance
$$\chi_n(x)=\begin{cases}
1&\text{for $|x|\leq n$,}\\
\frac{\log\left(\frac{n+n^2}{|x|}\right)}{\log(1+n)}&\text{for $n\leq |x|\leq n+n^2$,}\\
0&\text{for $|x|\geq n+n^2$}
\end{cases}$$
and obtain $\int_{\R^2}|\nabla \chi_n|^2\dx=2\pi/\log(1+n)\to0$.
\end{proof}

For the sequence $u_n$ from Lemma~\ref{lem:critical} we obtain using $-\Delta+V\geq0$
$$0\leq \pscal{|u_n|,(-\Delta+V)|u_n|}\leq \pscal{u_n,(-\Delta+V)u_n}\to0$$
and therefore $\pscal{|u_n|,(-\Delta+V)|u_n|}\to0$. For every $h\in C^\ii_c(\R^d)$, this gives
\begin{align*}
0&\leq \pscal{h+|u_n|,(-\Delta+V)(h+|u_n|)}\\
&=\pscal{h(-\Delta+V)h}+2\Re\pscal{(-\Delta+V)h,|u_n|}+o(1).
\end{align*}
and hence, after passing to the limit we find
$$\pscal{h(-\Delta+V)h}+2\Re\pscal{(-\Delta+V)h,|u|}\geq0,\qquad \forall h\in C^\ii_c(\R^d).$$
This implies that $(-\Delta+V)|u|=0$, that is, $|u|$ is also a solution of the GP equation. Since $u$ is a ground states we have the lower bound~\eqref{eq:local_bound_mass} on the local mass and then Corollary~\ref{cor:lower_bound} implies $|u|\geq c$ for some $c>0$. In other words, $u$ does not vanish and hence has a constant sign. This concludes the proof of (ii) in Theorem~\ref{thm:real-valued}.\qed

\subsection{Uniqueness in 1D: Proof of Theorem~\ref{thm:uniqueness_1D}}\label{sec:proof_uniqueness_1D}
We assume that $\widehat{w}\geq0$. Let $u$ be an infinite ground state for some $\mu>0$. From Theorem~\ref{thm:real-valued} proved in the previous section, we can assume that $u>0$ everywhere. From Corollary~\ref{cor:cnst_e_f}, we have $\rho:=\rho_\pm(\mu)=\mu/\int_\R w$ and from Theorem~\ref{thm:prop_GS} we deduce that
$$\lim_{n\to\ii}\frac1n\int_{\tau_n}^{\tau_n+n}u^2=\rho$$
for any sequence $\tau_n\in\R$. This implies that we can find $x_n\to-\ii$ and $y_n\to+\ii$ such that $u(x_n)\to \sqrt\rho$ and $u(y_n)\to \sqrt\rho$. We let $I_n:=(x_n,y_n)$ and define the new function
$$c_n(x):=\begin{cases}
u(x)&\text{if $x\notin I_n$,}\\
u(x_n)+\big(\sqrt\rho-u(x_n)\big)(x-x_n)&\text{if $x_n\leq x\leq x_n+1$,}\\
u(y_n)+\big(\sqrt\rho-u(y_n)\big)(y_n-x)&\text{if $y_n-1\leq x\leq y_n$,}\\
\sqrt\rho&\text{if $x_n+1\leq x\leq y_n-1$,}
\end{cases}$$
that is, we replace $u$ by a constant over $[x_n+1,y_n-1]$ and interpolate linearly in the two transition regions. By definition of $x_n$ and $y_n$, we have $|c_n-\sqrt\rho|\to0$ and $c_n'\to0$ uniformly on $[x_n,y_n]$. The local ground state property implies that
\begin{multline}
\int_{I_n}|u'|^2+\frac12\iint_{I_n^2}(u(x)^2-\rho)(u(y)^2-\rho)w(x-y)\,\dx\,\dy\\
\leq-\int_{I_n}\int_{\R\setminus I_n}(u(x)^2-\rho)(u(y)^2-\rho)w(x-y)\,\dx\,\dy+o(1).
\label{eq:1D_replace_cnst}
\end{multline}
The error $o(1)$ is due to the two finite transitions regions $[x_n,x_n+1]$ and $[y_n-1,y_n]$ where $c_n$ is almost constant. It turns out that the right side of~\eqref{eq:1D_replace_cnst} is bounded whenever $w$ decays fast enough at infinity. Indeed, we have $u\in L^\ii$ by Theorem~\ref{thm:uniform_bound} and, from Assumption~\ref{ass:w},
\begin{align*}
&\int_{x_n}^{y_n}\int_{y_n}^\ii|w(x-y)|\,\dx\,\dy\\
&\qquad \leq\int_{y_n-\kappa}^{y_n}\int_{y_n}^\ii|w(x-y)|\,\dx\,\dy+\kappa\int_{x_n}^{y_n-\kappa}\left(\int_{y_n}^\ii \frac{\dy}{(y-x)^s}\right)\dx\\
&\qquad \leq\kappa \int_\R|w|+\frac{\kappa}{s-1}\int_{x_n}^{y_n-\kappa}\frac{\dx}{(y_n-x)^{s-1}}\leq \kappa \int_\R|w|+\frac{\kappa^{3-s}}{(s-1)(s-2)}.
\end{align*}
This is where we use that $s>d+1=2$. There is a similar estimate for $y\in(-\ii,x_n)$. Using that $\widehat{w}\geq0$, the inequality~\eqref{eq:1D_replace_cnst} thus provides $u'\in L^2(\R)$ and
\begin{equation}
0\leq \iint_{I_n^2}(u(x)^2-\rho)(u(y)^2-\rho)w(x-y)\,\dx\,\dy\leq C.
\label{eq:interaction_bd_1D}
\end{equation}
Let us introduce the function $f:=u^2-\rho$. Then we have $f'=2uu'\in L^2(\R)$ since $u$ is bounded and $u'\in L^2(\R)$. The Fourier transform of the Schwartz distribution $f$ satisfies
$$\int_\R|k|^2|\widehat{f}(k)|^2<\ii.$$
In particular, $\widehat{f}\in L^2((-\ii,-\eps)\cup(\eps,\ii))$ for any $\eps>0$ and $\widehat{f}$ can only be singular at the origin. On the other hand, passing to Fourier coordinates, we can rewrite~\eqref{eq:interaction_bd_1D} as
$$\int_{\R}\widehat{w}(k))|\widehat{f_n}(k)|^2\rd k\leq C,\qquad f_n:=f\1_{I_n}.$$
Since $\widehat{w}(0)>0$ and $\widehat{w}$ is continuous, this gives
$$\int_{-\eps}^\eps|\widehat{f_n}(k)|^2\,\rd k\leq C.$$
Given that $\widehat{f_n}=\widehat{f}\ast \widehat{\1_{I_n}}\wto \widehat{f}$, we conclude that $f$ belongs to $L^2(\R)$, hence in $H^1(\R)$. In particular, $f\to0$ at infinity, that is, $u\to\sqrt\rho$.

Using the fact that $f\in L^2(\R)$, we get from Young's inequality that the right side of~\eqref{eq:1D_replace_cnst} tends to 0:
\begin{multline*}
-\int_{I_n}\int_{\R\setminus I_n}(u(x)^2-\rho)(u(y)^2-\rho)w(x-y)\,\dx\,\dy\\
=-\int(f\1_{I_n})(f\1_{\R\setminus I_n})\ast w\leq \norm{f\1_{I_n}}_{L^2}\norm{f\1_{\R\setminus I_n}}_{L^2}\int_\R|w|\to0.
\end{multline*}
Therefore, passing to the limit in~\eqref{eq:1D_replace_cnst}, we end up with
$$\int_{\R}|u'|^2=D(f,f)=0.$$
This implies that $u\equiv\sqrt\rho$, as was claimed.\qed

\subsection{Existence of vortex solutions in 2D: Proof of Theorem~\ref{thm:exist_vortex}}\label{sec:proof_vortex}

In order to prove the existence of a vortex in 2D, we need first a result which states that solutions of the GP equation with a topological degree $D$ at infinity can never be infinite ground states when $|D|\geq 2$. This is the content of the following theorem, which is a generalization of several results from~\cite{BetBreHel-94,BreMerRiv-94,Sandier-98,Shafrir-94,Mironescu-96}.

\begin{theorem}[Topological degree of 2D infinite ground states]\label{thm:degree_one}
Let $w$ satisfying Assumption~\ref{ass:w} in dimension $d=2$ and $\mu>0$. Let $u$ be an infinite ground state over $\R^2$, satisfying
\begin{equation}
\rho_{\rm cnst}(1-\eps)\leq |u(x)|^2\leq\rho_{\rm cnst}(1+\eps),\qquad \forall |x|\geq R_0
\label{eq:u_not_vanish_infinity}
\end{equation}
for some $R_0$ and some $\eps<1/3$, where $\rho_{\rm cnst}:=\mu/\int_{\R^2}w$.
Then the degree at infinity of $u$
\begin{equation}
D:=\frac{1}{2\pi R_0}\int_{R_0\bS^1}\frac{\Im\big(\overline{u}\partial_\theta u\big)}{|u|^2}\in \Z
\label{eq:def_degree}
\end{equation}
satisfies
$$D\in\{0,-1,1\}.$$
If $u$ is an infinite ground state on the half plane $\R^2_+=\{x_1>0\}$ satisfying $u\equiv \sqrt{\rho_{\rm cnst}}$ at the boundary $\{x_1=0\}$, the same hold  for the extension $\tilde u$ with $\tilde u\equiv\sqrt{\rho_{\rm cnst}}$ on $\{x_1<0\}$, but this time we obtain $D=0$.
\end{theorem}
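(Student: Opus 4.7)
My plan is to adapt the classical Ginzburg--Landau strategy of Brezis--Merle--Rivi\`ere, Shafrir, Sandier, and Mironescu to the nonlocal setting. The core idea is an energy comparison: an infinite ground state with topological degree $D$ at infinity has kinetic energy at least $2\pi D^2\rho_{\rm cnst}(1-\eps)\log R$ on a large ball $B_R$, whereas a carefully designed competitor with the same total degree split into $|D|$ well-separated unit vortices has kinetic energy at most $2\pi|D|\rho_{\rm cnst}\log R+O(1)$. When $|D|\geq 2$ and $\eps<1/3$, the elementary inequality $|D|(1-\eps)\geq 2\cdot\tfrac23>1$ produces a $c\log R$ gap which I would then need to show outpaces all nonlocal corrections, delivering a strict contradiction with the infinite ground-state property.

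First I would set up the degree. The hypothesis $|u|^2\geq\rho_{\rm cnst}(1-\eps)>0$ on $\R^2\setminus B_{R_0}$ permits writing $u=|u|e^{i\varphi}$ locally, and $D=\frac{1}{2\pi}\int_{\partial B_r}d\varphi$ is an integer independent of $r\geq R_0$. On each such circle, Cauchy--Schwarz yields $\int_0^{2\pi}|\partial_\theta\varphi|^2\,d\theta\geq 2\pi D^2$, and combined with $|\nabla u|^2\geq r^{-2}|u|^2|\partial_\theta\varphi|^2$ this gives
\[
\int_{B_R\setminus B_{R_0}}|\nabla u|^2\geq\int_{R_0}^R\frac{2\pi D^2\rho_{\rm cnst}(1-\eps)}{r}\,dr=2\pi D^2\rho_{\rm cnst}(1-\eps)\log(R/R_0).
\]
The competitor $v$ is built as follows: set $v=u$ outside $B_R$ and, inside, place $|D|$ unit-degree vortices at positions $a_1,\dots,a_{|D|}$ with pairwise distances of order $R$, with modulus $\sqrt{\rho_{\rm cnst}}$ outside small cores and a thin annular region near $\partial B_R$ devoted to matching the phase of $u$ (possible since both sides carry degree $D$). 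The divergence-theorem identity
\[
\int_{B_R\setminus\bigcup_j B(a_j,\eta)}\bigl|\nabla\sum_j\arg(x-a_j)\bigr|^2 = 2\pi|D|^2\log R - 2\pi|D|\log\eta -4\pi\sum_{i<j}\log|a_i-a_j|+O(1),
\]
optimized over the $a_j$, yields $\int_{B_R}|\nabla v|^2\leq 2\pi|D|\rho_{\rm cnst}\log R+C$. The resulting kinetic difference is bounded above by $-2\pi\rho_{\rm cnst}[D^2(1-\eps)-|D|]\log R+C$, a negative quantity of size $\log R$.

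The main obstacle lies in controlling the nonlocal corrections to $\cF_\mu(v)-\cF_\mu(u)$. Writing the interaction difference as $\tfrac12\int F\cdot((|v|^2+|u|^2)\ast w)$ with $F=|v|^2-|u|^2$, a crude bound only delivers $O(\eps R^2)$, which would swallow the kinetic gain. To beat this I would (i) impose $v=u$ in a buffer annulus $B_R\setminus B_{R-R'}$ with $R'$ exceeding the effective range of $w$, eliminating any cross-interaction with the exterior bath; (ii) arrange $\int F=0$ so that the bulk contribution $\int F\cdot 2\rho_{\rm cnst}\int w$ cancels against $-\mu\int F$ through $\mu=\rho_{\rm cnst}\int w$; and (iii) exploit the non-negativity of $\langle F,w\ast F\rangle$ coming from $\widehat{w}\geq 0$ (itself implied by~\eqref{eq:assumption_Fourier_w_Pohozaev}) together with the pointwise smallness of $|u|^2-\rho_{\rm cnst}$ outside $B_{R_0}$ to absorb the remaining quadratic piece into $O(1)$ contributions concentrated in the vortex cores and inside $B_{R_0}$. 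The specific threshold $\eps<1/3$ appears to be exactly the margin needed for these estimates to preserve a net kinetic gain. Altogether $\cF_\mu(v)-\cF_\mu(u)<0$ for large $R$, contradicting the ground-state property of $u$ and forcing $|D|\leq 1$.

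For the half-plane case, $\tilde u=u\,\1_{\R^2_+}+\sqrt{\rho_{\rm cnst}}\,\1_{\R^2_-}$ lies in $H^1_{\rm unif}(\R^2)$ thanks to the matching boundary value, and satisfies the modulus condition everywhere with a possibly larger radius $R_0'$. One must first verify that $\tilde u$ is an infinite ground state on $\R^2$: after a preliminary reduction to perturbations preserving the boundary value $\sqrt{\rho_{\rm cnst}}$ on $\{x_1=0\}$ (using the diamagnetic inequality together with a cut-and-paste near the boundary), a compactly supported perturbation of $\tilde u$ splits into admissible perturbations on each half-plane, both energy-increasing by the ground-state properties of $u$ on $\R^2_+$ and of the constant on $\R^2_-$. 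The main part then gives $D\in\{-1,0,1\}$ for $\tilde u$. The sharper conclusion $D=0$ follows from a boundary argument: any interior vortex of $u$ in $\R^2_+$ adjacent to the Dirichlet wall $u=\sqrt{\rho_{\rm cnst}}$ can be removed by a reflection-type construction lowering the energy, thereby excluding $|D|=1$ and yielding $D=0$.
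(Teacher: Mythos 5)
Your high-level strategy — compare a $2\pi D^{2}\varrho(1-\eps)\log R$ kinetic lower bound against a perforated competitor with $|D|$ unit vortices achieving roughly $2\pi|D|\varrho\log R$ — is the same Bethuel--Brezis--H\'elein/Shafrir scheme the paper uses. But the choice of competitor diverges at the decisive point. You set the modulus of $v$ to $\sqrt{\rho_{\rm cnst}}$ throughout $B_R$ (outside small cores), so the density difference $F=|v|^{2}-|u|^{2}$ is supported on the whole of $B_R$ and is only bounded pointwise by $\rho_{\rm cnst}\eps$. The nonlocal correction
$\tfrac12 D(F,F)+D(F,|u|^{2}-\rho_{\rm cnst})$
is then of size $\eps^{2}\rho_{\rm cnst}^{2}\|w\|_{L^1}R^{2}$; since $\eps\in(0,1/3)$ is a fixed constant, this is $O(R^{2})$ and dwarfs any $O(\log R)$ kinetic gain. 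Your proposed fix (iii) invokes $\widehat w\ge 0$ via condition~\eqref{eq:assumption_Fourier_w_Pohozaev}, but that hypothesis belongs to Theorem~\ref{thm:exist_vortex}, not to Theorem~\ref{thm:degree_one}, which is stated for \emph{any} $w$ satisfying Assumption~\ref{ass:w}; and even granting $\widehat w\ge 0$, the term $\tfrac12 D(F,F)\ge 0$ is added to the energy of $v$, so it \emph{hurts} rather than helps, while $D(F,|u|^{2}-\rho_{\rm cnst})$ has no sign. The paper's competitor $\tilde u=|u|\prod_{j}\eta(\cdot-X_j)e^{i\tilde\phi}$ sidesteps the whole difficulty: it keeps $|\tilde u|=|u|$ except inside a bounded number of unit balls, so $F$ has bounded support and the nonlocal correction is $O(1)$ uniformly in $R$, with no positivity assumption on $\widehat w$.

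A second genuine omission is the treatment of the regular part $\psi$ of the phase. You relegate the matching of $u$'s phase on $\partial B_R$ to a ``thin annular region'' without estimating its cost, which generically involves a term of order $R\int_{\partial B_R}|\nabla\psi|^{2}$ and can be comparable to $\log R$ or worse. The paper controls this via the Brezis--Merle--Rivi\`ere ODE lemma: a cheap replacement of $u$ by $|u|$ gives the linear a priori bound $\int_{B_R}|\nabla u|^{2}\le CR+\int_{B_R}|\nabla|u||^{2}$, hence $\zeta(R):=\int_{B_R\setminus B_{R_0}}|\nabla\psi|^{2}\le CR$; the perforation estimate yields $\zeta\le\beta R\zeta'+C$ with $\beta<1$ (this is where $\eps<1/3$ enters), from which $\zeta$ is bounded and good radii $R_n$ with $\zeta'(R_n)\le C/R_n$ deliver $2\pi D\bigl(D-\tfrac{1+\eps}{1-\eps}\bigr)\log R_n\le C$ and the contradiction. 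Your outline has nothing playing the role of this bootstrap, and without it the log-comparison does not close. The same two gaps carry over to your half-plane argument; the paper instead observes that the reflected lower bound has coefficient $4\pi D^{2}$ so that already $D=1$ is excluded.
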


Note that the statement involves the density $\rho_{\rm cnst}=\mu/\int_{\R^2}w$ of the constant function, which could in principle be different from the thermodynamic density $\rho$ satisfying $\mu(\rho)=\mu$. The latter is also the density of $u$ by Theorem~\ref{thm:prop_GS}. We only know that the two coincide when $\widehat{w}\geq0$ or when $\mu<\mu_c$, by Theorem~\ref {thm:phase_transitions}. Assumption~\eqref{eq:u_not_vanish_infinity} means that $|u|$ is not too far from being constant, but it could still oscillate quite a bit, for instance be periodic. In most works on the Ginzburg-Landau model $w=\delta$, it is assumed (or proved) that $|u|^2\to\rho_{\rm cnst}=\rho$ at infinity. We do not need such a strong condition to infer that the degree is $0$, $-1$ or $1$.

\begin{remark}
If~\eqref{eq:u_not_vanish_infinity} holds for some $1/3\leq \eps<1$ then our proof still applies but it gives $|D|\leq\frac{1+\eps}{1-\eps}$ in the whole space and $|D|\leq\frac{1+\eps}{2(1-\eps)}$ in a half space.
\end{remark}

\begin{proof}[Proof of Theorem~\ref{thm:degree_one}]
We only provide the argument in the whole space and quickly discuss the half space in the end, following~\cite{Sandier-98}. To lighten the notation, in the whole proof we denote
$$\varrho:=\rho_{\rm cnst}.$$
In $\R^2$ the proof is inspired of~\cite{BreMerRiv-94,Sandier-98b} to obtain bounds on the kinetic energy and then~\cite{BetBreHel-94,Shafrir-94} to get a contradiction when $|D|\geq2$. We provide all the details for the convenience of the reader. We start with a simple lower bound on the full kinetic energy.

\begin{lemma}[Kinetic energy lower bound]\label{lem:estim_lower_full_kinetic}
Let $u\in C^1(\R^2)$ be so that $|u(x)|^2\geq \varrho(1-\eps)$ for all $R_0\leq |x|\leq R$ and  some $\eps<1$. Write $u=|u|e^{i(D\theta+\psi)}$ where $D$ is the topological degree~\eqref{eq:def_degree} of $u$ in this region, $\theta$ is the polar angle and $\psi$ is $C^1$ and single-valued. Then we have
\begin{multline}
\int_{B_R}|\nabla u(x)|^2\,\dx\geq \int_{B_R}\big|\nabla |u|(x)\big|^2\dx+\varrho(1-\eps)2\pi D^2\log\frac{R}{R_0}\\
+\varrho(1-\eps)\int_{B_R}|\nabla \psi(x)|^2\dx.
\label{eq:estim_lower_full_kinetic}
\end{multline}
\end{lemma}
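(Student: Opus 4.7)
The plan is to split the integral over $B_R$ into the inner disk $B_{R_0}$ and the annulus $A:=B_R\setminus B_{R_0}$, and bound the kinetic energy on each piece separately. On the inner disk we simply invoke the diamagnetic inequality $|\nabla u|^2\geq |\nabla |u||^2$ (see e.g.~\cite[Thm.~7.21]{LieLos-01}), which accounts for the contribution of $\int_{B_{R_0}}|\nabla |u||^2$ to the right-hand side.

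The real work is on $A$, where $|u|\geq \sqrt{\varrho(1-\eps)}>0$ makes the polar decomposition $u=|u|\,e^{i(D\theta+\psi)}$ smooth and justifies the identity
\begin{equation*}
|\nabla u|^2=\bigl|\nabla |u|\bigr|^2+|u|^2\,\bigl|D\nabla\theta+\nabla\psi\bigr|^2\qquad\text{on }A.
\end{equation*}
Since $|D\nabla\theta+\nabla\psi|^2\geq 0$, the pointwise bound $|u|^2\geq \varrho(1-\eps)$ yields
\begin{equation*}
|u|^2\,|D\nabla\theta+\nabla\psi|^2\geq \varrho(1-\eps)\Bigl(D^2|\nabla\theta|^2+2D\,\nabla\theta\cdot\nabla\psi+|\nabla\psi|^2\Bigr).
\end{equation*}

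A direct computation in polar coordinates gives $\int_A|\nabla\theta|^2\,dx=2\pi\log(R/R_0)$, which produces the logarithmic term. The crucial step is that the cross term integrates to zero: since $\nabla\theta\cdot\nabla\psi=r^{-2}\partial_\theta\psi$ in polar coordinates and $\psi$ is $C^1$ and single-valued (hence $2\pi$-periodic in $\theta$), Fubini gives
\begin{equation*}
\int_A\nabla\theta\cdot\nabla\psi\,dx=\int_{R_0}^R\frac{1}{r}\int_0^{2\pi}\partial_\theta\psi(r,\theta)\,d\theta\,dr=0.
\end{equation*}
Assembling the inner disk bound, the annulus identity, the lower bound on $|u|^2$, the value of $\int_A|\nabla\theta|^2$ and the vanishing of the cross term yields exactly~\eqref{eq:estim_lower_full_kinetic}, with the convention that $\int_{B_R}|\nabla\psi|^2$ denotes the integral over the set $A$ where $\psi$ is defined.

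The only conceptual subtlety is the cross term, which is not sign-definite pointwise; the trick is to first lower-bound $|u|^2$ by the constant $\varrho(1-\eps)$ inside the whole non-negative expression $|u|^2|\nabla\phi|^2$, and only then integrate and exploit the single-valuedness of $\psi$. No PDE argument or minimization is required: this is a pure computation plus the diamagnetic inequality.
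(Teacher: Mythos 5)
Your proof is correct and follows essentially the same path as the paper's: the pointwise identity $|\nabla u|^2=|\nabla|u||^2+|u|^2|\nabla\phi|^2$ on the annulus, the lower bound $|u|^2\geq\varrho(1-\eps)$ applied to the non-negative factor $|\nabla\phi|^2$, and the vanishing of the angular cross term upon integration because $\psi$ is single-valued. The only addition you make is to say explicitly that the inner disk $B_{R_0}$ is handled by the diamagnetic inequality, which the paper leaves implicit.
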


\begin{proof}[Proof of Lemma~\ref{lem:estim_lower_full_kinetic}]
We follow~\cite[Thm.~3]{BreMerRiv-94}. We have with $\phi=D\theta+\psi$
\begin{equation}
|\nabla u|^2=|\nabla |u||^2+|u|^2|\nabla \phi|^2\geq |\nabla |u||^2+\varrho(1-\eps)|\nabla \phi|^2.
\label{eq:decompose_gradient_angle}
\end{equation}
On the other hand, we can compute
\begin{equation}
|\nabla \phi|^2=\left|D\frac{x^\perp}{|x|^2}+\nabla \psi\right|^2=\frac{D^2}{|x|^2}+2D\partial_\theta\psi+|\nabla \psi|^2
\label{eq:decompose_vortex}
\end{equation}
where $x^\perp=(x_2,-x_1)$ and $\partial_\theta\psi=x^\perp\cdot\nabla\psi$ denotes the angular derivative. When we integrate over the annulus $\{R_0\leq |x|\leq R\}$, the angular integral of $\partial_\theta\psi$ vanishes and we are left with
\begin{align}
&\int_{B_R\setminus B_{R_0}}|\nabla u|^2\nn\\
&\quad\geq  \int_{B_R\setminus B_{R_0}}|\nabla |u||^2+\varrho(1-\eps)D^2\int_{B_R\setminus B_{R_0}}\frac{\rd x}{|x|^2}+\varrho(1-\eps)\int_{B_R\setminus B_{R_0}}|\nabla \psi|^2\nn\\
&\quad =\int_{B_R\setminus B_{R_0}}|\nabla |u||^2+\varrho(1-\eps)2\pi D^2\log\frac{R}{R_0}+\varrho(1-\eps)\int_{B_R\setminus B_{R_0}}|\nabla \psi|^2\label{eq:estim_lower_with_psi}
\end{align}
as desired.
\end{proof}

Next we derive a preliminary (and far from optimal) upper bound, which is sufficient for our purposes.

\begin{lemma}[Kinetic energy upper bound]\label{lem:estim_upper_full_kinetic}
Let $u$ be an infinite ground state. Then we have
$$\int_{B_R}|\nabla u(x)|^2\,\dx\leq CR+\int_{B_R}\big|\nabla |u|(x)\big|^2\dx$$
for $R$ large enough. If $|u|^2\geq \varrho(1-\eps)$ for all $R_0\leq |x|\leq R$ and some $\eps<1$, we obtain with the same notation as in Lemma~\ref{lem:estim_lower_full_kinetic}
\begin{equation}
\int_{B_R\setminus B_{R_0}}|\nabla \psi(x)|^2\,\dx\leq \frac{C}{\varrho(1-\eps)}R.
\label{eq:estim_linear_psi}
\end{equation}
\end{lemma}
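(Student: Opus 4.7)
My plan is to exploit the infinite ground state inequality~\eqref{eq:local_min} with the trial state obtained by replacing $u$ by its modulus $|u|$ inside $B_R$, using a narrow transition annulus of width~$1$ to match the boundary condition on $\partial B_{R+1}$. Concretely, I pick $\eta\in C^\infty(\R^2)$ with $\eta\equiv 0$ on $B_R$, $\eta\equiv 1$ on $\R^2\setminus B_{R+1}$, and $\|\nabla\eta\|_\infty\le C$, and set
$$
v := \eta\, u + (1-\eta)\,|u|,
$$
so that $v=u$ outside $B_{R+1}$, $v=|u|$ on $B_R$, and (using $|v|\le|u|$ together with the diamagnetic bound) $v\in H^1_{\rm unif}(\R^2)$ with finite local interaction energy. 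The ground state property then forces $\cF_\mu(v)-\cF_\mu(u)\ge 0$, and I want to show that all terms other than the kinetic contribution from $B_R$ are $O(R)$.

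The key point is that $v-u$ and $|v|^2-|u|^2$ are supported in the annulus $B_{R+1}\setminus B_R$ (whose area is $\le CR$) and are bounded uniformly in $R$. Indeed, Theorem~\ref{thm:uniform_bound} and Corollary~\ref{cor:derivatives} give $u,\nabla u\in L^\infty(\R^2)$, and the diamagnetic inequality $|\nabla|u||\le|\nabla u|$ yields $\nabla|u|\in L^\infty$ as well. Consequently both the mass defect $\mu\int(|v|^2-|u|^2)$ and the annular kinetic contribution $\int_{B_{R+1}\setminus B_R}(|\nabla v|^2-|\nabla u|^2)$ are $O(R)$. Writing $a^2b^2-c^2d^2=(a^2-c^2)b^2+c^2(b^2-d^2)$ and splitting $w=c\delta_0+\tilde w$ with $c\ge 0$ and $\tilde w\in L^1(\R^2)$ as allowed by Assumption~\ref{ass:w}, the interaction difference
$$
\frac{1}{2}\iint\big(|v(x)|^2|v(y)|^2-|u(x)|^2|u(y)|^2\big)w(x-y)\,\dx\,\dy
$$
is also $O(R)$: the $\delta_0$-piece is $\tfrac{c}{2}\int_{B_{R+1}\setminus B_R}(|v|^4-|u|^4)$, bounded via $L^\infty$ bounds on $|v|^2,|u|^2$; the $\tilde w$-piece is controlled by Fubini/Young through $\||v|^2-|u|^2\|_{L^1(\R^2)}\cdot\big(\||v|^2*\tilde w\|_\infty+\||u|^2*\tilde w\|_\infty\big)$, both factors finite since $\|u\|_\infty,\|v\|_\infty<\infty$ and $\tilde w\in L^1$.

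Since $v=|u|$ on $B_R$, one has $\int_{B_R}|\nabla v|^2=\int_{B_R}|\nabla|u||^2$, and rearranging $\cF_\mu(v)\ge\cF_\mu(u)$ yields the first claim
$$
\int_{B_R}|\nabla u|^2\le\int_{B_R}|\nabla|u||^2+CR
$$
for $R$ large enough. The second claim~\eqref{eq:estim_linear_psi} then follows immediately by subtracting the lower bound of Lemma~\ref{lem:estim_lower_full_kinetic}:
$$
\varrho(1-\eps)\int_{B_R\setminus B_{R_0}}|\nabla\psi|^2\le CR-\varrho(1-\eps)\,2\pi D^2\log(R/R_0)\le CR.
$$
The only (modest) technical care is in the interaction bound above: a purely $L^1$ potential would make the estimate instantaneous, but the possible Dirac mass in $w$ combined with the absence of any sign assumption on $w$ is what forces the splitting $w=c\delta_0+\tilde w$; everything else is a routine transition-layer estimate on an annulus of perimeter $O(R)$.
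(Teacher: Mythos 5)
Your argument is correct and is essentially the paper's proof: both use a trial state that equals $|u|$ in the bulk of $B_R$ and interpolates linearly to $u$ in an annulus of width $O(1)$, bounding all transition-layer errors by $O(R)$ using the $L^\ii$ bounds on $u$, $\nabla u$, $\nabla|u|$ (the only cosmetic difference is that you interpolate outward over $B_{R+1}\setminus B_R$ while the paper interpolates inward over $B_R\setminus B_{R-\ell}$). Your explicit splitting $w=c\,\delta_0+\tilde w$ to handle the possible Dirac mass in the interaction term is a useful clarification of a point the paper treats informally, and deducing~\eqref{eq:estim_linear_psi} by subtracting Lemma~\ref{lem:estim_lower_full_kinetic} is the same as the paper's invocation of~\eqref{eq:decompose_gradient_angle}--\eqref{eq:decompose_vortex}.
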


\begin{proof}[Proof of Lemma~\ref{lem:estim_upper_full_kinetic}]
Recall the characterization~\eqref{eq:min_u_v} of an infinite ground state $u$, in terms of the local energy $\cF_{\mu,B_R,u}$ in~\eqref{eq:local_energy_Omega} containing the potential generated by the outside. For simplicity, in this proof we denote it by
\begin{multline}
\cF_R(v):=\cF_{\mu,B_R,u}(v)
=\cE_{B_R}(v)-\mu\int_{B_R}|v|^2\\+\iint_{B_R\times (B_R)^c}|v(x)|^2|u(y)|^2w(x-y)\dx\,\dy.
\label{eq:energy_variation}
\end{multline}
As a trial state we take a function which equals $|u|$ over the ball $B_{R-\ell}$ and then smoothly interpolate to $u$ in the annulus $\{R-\ell\leq|x|\leq R\}$:
$$v(x)=\begin{cases}
|u(x)|&\text{for $|x|\leq R-\ell$,}\\
u(x)\frac{|x|-R+\ell}{\ell}+|u(x)|\frac{R-|x|}{\ell}&\text{for $R-\ell\leq |x|\leq R$.}
\end{cases}$$
Using that $w\in L^1(\R^2)$, $u,\nabla u,\nabla|u|\in L^\ii(\R^2)$, we see that the error terms can all be estimated by the volume of the transition region, which is of order $R\ell$:
$$0\leq \cF_R(v)-\cF_R(u)\leq \int_{B_R}\big|\nabla |u|\big|^2-\int_{B_R}|\nabla u|^2+CR\ell.$$
This is the claimed estimate if we take $\ell$ or order one. To obtain the bound on $|\nabla \psi|$ we use again~\eqref{eq:decompose_gradient_angle} and~\eqref{eq:decompose_vortex}.
\end{proof}

With the previous bounds on the kinetic energy, we are able to conclude the proof of Theorem~\ref{thm:degree_one}, arguing similarly as in~\cite{BetBreHel-94,Shafrir-94}. For $R$ large enough the idea is to define a new function $\tilde u$ which is a perturbation of $u$, obtained by manipulating the phase and perforating $|D|$ holes of finite size. The argument goes as follows. First, after replacing $u$ by $\overline{u}$ we assume without loss of generality that $D\geq0$. Then, since we want to prove that $D\in\{0,1\}$, we can argue by contradiction and assume $D\geq2$ in the whole proof. For $0<\alpha<1$ a fixed number to be chosen later, we take $D$ points $X_1,...,X_D$ in the annulus $B_{\alpha R /2}\setminus B_{\alpha R /4}$, located at a distance of order $\alpha R$ to each other. We then define
$$\tilde u(x)=|u(x)|\prod_{j=1}^d\eta(x-X_j)e^{i\tilde \phi(x)}=:v(x)e^{i\tilde\phi(x)},$$
where $\eta\in C^\ii(\R^2,[0,1])$ vanishes on $B_1$ and equals 1 on $\R^2\setminus B_2$. We choose the phase $\tilde\phi$ so that
$$\tilde \phi(x)=\begin{cases}
\phi(x)&\text{for $|x|\geq R$,}\\
D\theta+\theta_R+\Big(\psi(Rx/|x|)-\theta_R\Big)\frac{\log\frac{|x|}{\alpha R}}{\log \frac1{\alpha}}&\text{for $\alpha R \leq |x|\leq R$,}
\end{cases}$$
where $\theta_R:=(2\pi R)^{-1}\int_{\partial B_R}\psi$
is the average of $\psi$ over the sphere of radius $R$. In other words, we slowly turn off the regular part $\psi$ of the phase $\phi$ in the normal direction, without touching the angular variations on $\partial B_R$. We also do not change the topological degree $D$. For $|x|\leq \alpha R $ we choose any real-valued function  $\tilde \phi$ such that $\tilde\phi=D\theta+\theta_R$ on the sphere $\partial B_{\alpha R }$ and
\begin{equation}
\int_{B_{\alpha R }\setminus\cup_{j=1}^dB(X_j,1)}|\nabla\tilde\phi|^2\leq 2\pi D\log R+C,\qquad  \int_{B_{R_0}}|\nabla\tilde\phi|^2\leq C.
 \label{eq:trial_phase_BBH}
\end{equation}
The existence of such a function is proved in~\cite[Thm.~I.8]{BetBreHel-94}. In fact, we do not need to define $\tilde\phi$ in the balls $B(X_j,1)$ since $v\equiv0$ there. One function that works is the Dirichlet minimizer with the boundary condition $\tilde \phi=\theta_j+\theta_R$ at $\partial B(X_j,1)$, where $\theta_j$ is the local angle around the center $X_j$.

Next we write that $u$ is an infinite ground state and obtain
\begin{equation*}
\cF_R(u)\leq \cF_R(\tilde u)= \cF_R(v)+\int_{B_R}|v|^2|\nabla\tilde\phi|^2\leq  \cF_R(|u|)+\int_{B_R}|v|^2|\nabla\tilde\phi|^2+C
\end{equation*}
with $\cF_R$ as in~\eqref{eq:energy_variation}. The last bound is because we have perforated finitely many holes in $|u|$, which costs a finite amount of energy. Everywhere $C$ is a generic constant that may depend on $u$ and $D$ but not on $R$. Simplifying the interaction terms, we end up with
\begin{equation*}
\int_{B_R}|\nabla u|^2\leq  \int_{B _R}|\nabla |u||^2+\int_{B_R}|v|^2|\nabla\tilde\phi|^2+C.
\end{equation*}
Inserting our lower bound~\eqref{eq:estim_lower_full_kinetic} we obtain
\begin{equation}
\varrho(1-\eps)2\pi D^2\log R+\varrho(1-\eps)\int_{B_R\setminus B_{R_0}}|\nabla \psi|^2\leq\int_{B_R}|v|^2|\nabla\tilde\phi|^2+C
\label{eq:GS_perforate}
\end{equation}
where $C$ now also contains the finite quantity $\int_{B _{R_0}}|\nabla|u||^2$.

To estimate the angular kinetic energy in~\eqref{eq:GS_perforate}, we note that
$$|v|^2\leq |u|^2\1_{B_R\setminus \cup_{j=1}^dB(X_j,1)}\leq \varrho(1+\eps)\1_{B_R\setminus \cup_{j=1}^dB(X_j,1)}\qquad\text{on $B_R\setminus B_{R_0}$.}$$
Using~\eqref{eq:trial_phase_BBH} and $u\in L^\ii$ we obtain
\begin{align}
\int_{B_R}|v|^2|\nabla\tilde\phi|^2 &\leq (1+\eps)\varrho\int_{B_{\alpha R }\setminus\cup_{j=1}^dB(X_j,1)}|\nabla\tilde\phi|^2+2\|u\|_\ii^2\int_{B_{R_0}}|\nabla\tilde\phi|^2\nn\\
&\qquad +(1+\eps)\varrho\int_{B_R\setminus B_{\alpha R }}|\nabla\tilde\phi|^2\nn\\
&\leq 2\pi \varrho(1+\eps)D\log R +(1+\eps)\varrho\int_{B_R\setminus B_{\alpha R }}|\nabla\tilde\phi|^2+C.\label{eq:GS_perforate2}
\end{align}
Finally we compute the new angular kinetic energy cost in the annulus $\{\alpha R \leq|x|\leq R\}$. We let $t(x):=\frac{\log(|x|/\alpha R)}{\log (1/\alpha)}$ and recall that $\nu=x/|x|$. By Hölder's inequality we have
\begin{align*}
& \int_{B_R\setminus B_{\alpha R }}|\nabla\tilde\phi|^2\\
 &\qquad\leq D^2(1+\alpha^{-1})\int_{B_R\setminus B_{\alpha R }}\frac{\dx}{|x|^2}+(1+\alpha)\int_{B_R\setminus B_{\alpha R }}|\nabla t(\psi(R\nu)-\theta_R)|^2\\
& \qquad=2\pi D^2\frac{1+\alpha}{\alpha}\log (\alpha^{-1})+(1+\alpha)\int_{B_R\setminus B_{\alpha R }}|\nabla t|^2(\psi(R\nu)-\theta_R)^2\\
&\qquad \qquad +(1+\alpha)\int_{B_R\setminus B_{\alpha R }}t^2|\nabla \psi(R\nu)|^2\\
& \qquad\leq 2\pi D^2\frac{1+\alpha}{\alpha}\log (\alpha^{-1})+\frac{1+\alpha}{\log(\alpha^{-1})R}\int_{\partial B_R}(\psi(R\nu)-\theta_R)^2\\
&\qquad \qquad +\frac{(1+\alpha)(1-\alpha^2)}{2}R\int_{\partial B_R}|\nabla \psi|^2.
\end{align*}
In the last inequality, we have just used $t^2\leq1$. Using Poincaré's inequality on the circle of radius $R$
$$\int_{\partial B_R}(\psi-\theta_R)^2\leq R^2\int_{\partial B_R}|\tau\cdot\nabla \psi|^2\leq R^2\int_{\partial B_R}|\nabla \psi|^2$$
and inserting in~\eqref{eq:GS_perforate}, we arrive at our final inequality
\begin{multline}
2\pi D\left(D-\frac{1+\eps}{1-\eps}\right)\log R+\int_{B_R\setminus B_{R_0}}|\nabla \psi|^2\\
\leq \frac{1+\eps}{1-\eps}(1+\alpha)\left(\frac{1}{\log(\alpha^{-1})}+\frac{1-\alpha^2}{2}\right) R\int_{\partial B_R}|\nabla \psi|^2+\frac{C}{\varrho(1-\eps)}.
 \label{eq:final_psi}
\end{multline}
At this step we require $\eps<1/3$ to ensure that $\frac{1+\eps}{1-\eps}<2 \le D$, and hence the logarithmic term on the left-hand side is then positive. 
For any such $\eps$ we can choose $\alpha$ small enough so that
$$\beta:=\frac{1+\eps}{1-\eps}(1+\alpha)\left(\frac{1}{\log(\alpha^{-1})}+\frac{1-\alpha^2}{2}\right)<1.$$
Next we argue as in~\cite{BreMerRiv-94}. We let
$$\zeta(R):=\int_{B_R\setminus B_{R_0}}|\nabla \psi|^2$$
so that~\eqref{eq:final_psi} provides
$$\zeta(R)\leq \beta R\zeta'(R)+C.$$
From Lemma~\ref{lem:estim_upper_full_kinetic}, we know that $\zeta(R)\leq CR$ but then we conclude immediately that $\zeta$ is bounded, by~\cite[Lem.~1]{BreMerRiv-94}. Hence there exists a sequence $R_n\to\ii$ such that
$$\zeta'(R_n)=\int_{\partial B_{R_n}}|\nabla\psi|^2\leq \frac{C}{R_n}$$
(otherwise we would get $\zeta'(R)>C/R$ at infinity and thus $\zeta\geq C\log R$). The right side of~\eqref{eq:final_psi} is thus finite and
$$2\pi D\left(D-\frac{1+\eps}{1-\eps}\right)\log R_n\leq C.$$
This is a contradiction which concludes the proof that $D\in\{0,-1,1\}$.

In the half space the proof is the same, except that in Lemma~\ref{lem:estim_lower_full_kinetic} the coefficient of $\log R$ is $2\pi (2D)^2/2=4\pi D^2$ (see~\cite[Lem.~XI.1]{BetBreHel-94} and~\cite[Thm.~1.3]{Sandier-98}) whereas the right side of~\eqref{eq:GS_perforate} is unchanged. Hence in this case we must have $D=0$.
\end{proof}

With Theorem~\ref{thm:degree_one} at hand, we are able to prove the existence of degree-one infinite ground states, using a thermodynamic limit. This was the content of Theorem~\ref{thm:exist_vortex}.

\begin{proof}[Proof of Theorem~\ref{thm:exist_vortex}]
Let us consider a minimizer $u_L$ of the following problem
$$\min_{\substack{u\in H^1(B_L)\\ u_{|\partial B_L}=g}}\int_{B_L}|\nabla u|^2+\frac12\iint_{B_L\times B_L}(|u(x)|^2-\rho)(|u(y)|^2-\rho)w(x-y)\,\dx\,\dy,$$
where
$$g(x)=\sqrt\rho\, \frac{x_1+ix_2}{L},\qquad \rho=\frac{\mu}{\int_{\R^2}w}.$$
Recall that $\widehat{w}\geq0$, hence $\varrho=\rho$ in this proof. We have
\begin{equation}
\begin{cases}
\big(-\Delta +|u|^2\ast w \big)u_L=\rho(\1_{B_L}\ast w )u_L&\text{on the disk $B_L$,}\\
u_L=g&\text{on the circle $\partial B_L$.}
  \end{cases}
 \label{eq:problem_g_disk}
\end{equation}
This is not exactly the GP equation in the disk, since the right side is only close to $\mu u_L$. However, we can obtain similar uniform bounds as before.

\begin{lemma}\label{lem:uniform_bound_u_L}
For $\mu\leq 1/C$ and $L\geq C/\sqrt\mu$ with $C$ large enough, we have
$$\|u_L\|_{L^\ii(B_L)}\leq C\sqrt\mu,\qquad \|\nabla u_L\|_{L^\ii(B_L)}\leq C\mu,\qquad \|\Delta u_L\|_{L^\ii(B_L)}\leq C\mu^{\frac32}.$$
\end{lemma}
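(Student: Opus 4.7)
The plan is to adapt the proof of Corollary~\ref{cor:inh-bdr} and the low-density branch of Corollary~\ref{cor:pointwise_bounds_I} to the present setting, and then to read off the gradient and Laplacian bounds from the equation together with the Landau--Kolmogorov inequality in a ball.

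First I would recast the Euler--Lagrange equation \eqref{eq:problem_g_disk} as
$$(-\Delta + V + w\ast|u_L|^2)u_L = \mu\, u_L, \qquad V(x):=\rho\,(\1_{B_L^c}\ast w)(x),$$
where $0\leq V\leq \rho\|w\|_{L^1}\leq C\mu$ and $V$ is supported in a tubular neighborhood of $\partial B_L$ of width equal to the range of $w$. Hence $u_L$ solves a GP-type equation with constant chemical potential $\mu$ and a favorable additional non-negative potential $V$. The natural extension of the boundary data is $G(x)=\sqrt{\rho}\,(x_1+ix_2)/L\in C^\infty(\overline{B_L})$, for which $\|G\|_{L^\infty(B_L)}=\sqrt{\rho}\leq C\sqrt{\mu}$ and $\|\nabla G\|_{L^\infty(B_L)}=\sqrt{\rho}/L\leq C\mu$ thanks to the standing hypothesis $L\geq C/\sqrt{\mu}$. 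I would then rerun the Ruelle-type superstability bootstrap used in the proof of Corollary~\ref{cor:inh-bdr}, testing the equation against $\chi^2(\overline{u_L}-\overline G)$: the extra term $\int\chi^2 V(|u_L|^2-\Re(\overline{u_L}G))$ is bounded in absolute value by $C\mu\bigl(\int\chi^2|u_L|^2+\|G\|_\infty^2\int\chi^2\bigr)$ and is absorbed exactly like the $\mu$ terms already present there, leaving the multiscale argument intact. This yields
$$\int_{B(z,\ell)\cap B_L}|u_L|^2\leq C\mu\,\ell^d, \qquad \int_{B(z,\ell)\cap B_L}|\nabla u_L|^2\leq C\mu^2\,\ell^d$$
for every $z\in B_L$ and every $\ell\geq C$.

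Next I would promote the local mass bound to the pointwise estimate $\|u_L\|_{L^\infty(B_L)}\leq C\sqrt{\mu}$ by following the low-density branch of Corollary~\ref{cor:pointwise_bounds_I}. Kato's inequality gives $-\Delta|u_L|\leq (\mu+(w\ast|u_L|^2)_-)|u_L|$ (the non-negative potential $V$ drops out in the inequality), and the subsolution estimate~\eqref{eq:Harnack} with $q$ slightly above $d/2=1$, combined with the Sobolev embedding $H^1\hookrightarrow L^{2q}$ and the two local bounds just derived, applied on balls of radius $\ell=1/\sqrt{\mu}$, yields the claim exactly as in dimension $d=2$ there. The boundary values $|u_L|=|g|\leq\sqrt{\rho}\leq C\sqrt{\mu}$ are themselves of the same order as the interior estimate, so no blow-up near $\partial B_L$ is produced.

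Once $\|u_L\|_\infty\leq C\sqrt{\mu}$ is known, the equation gives $\|\Delta u_L\|_{L^\infty(B_L)}\leq (\|w\ast|u_L|^2\|_\infty+\mu+\|V\|_\infty)\|u_L\|_\infty\leq C\mu\cdot\sqrt{\mu}=C\mu^{3/2}$, using $\|w\ast|u_L|^2\|_\infty\leq\|w\|_{L^1}\|u_L\|_\infty^2\leq C\mu$. Since $\Delta G=0$ and $u_L-G$ vanishes on $\partial B_L$, applying the Landau--Kolmogorov inequality in the ball (Lemma~\ref{lem:Landau-Kolomogorov_ball}) to $u_L-G$ provides
$$\|\nabla(u_L-G)\|_{L^\infty(B_L)}^2\leq C\bigl(\|\Delta u_L\|_\infty+L^{-2}\|u_L-G\|_\infty\bigr)\|u_L-G\|_\infty\leq C\mu^2,$$
using $L^{-2}\leq\mu$, and therefore $\|\nabla u_L\|_\infty\leq\|\nabla G\|_\infty+C\mu\leq C\mu$. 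The main obstacle is verifying that the Moser iteration of Corollary~\ref{cor:pointwise_bounds_I} extends up to $\partial B_L$ in the presence of the inhomogeneous, though small, boundary data $g$; the cleanest workaround is to apply the iteration to $(|u_L|-\sqrt{\rho})_+$, which has vanishing trace on $\partial B_L$ and satisfies a subsolution inequality analogous to Kato's, and then add back $\sqrt{\rho}\leq C\sqrt{\mu}$ at the end. Every other step is a routine modification of arguments already developed in Section~\ref{sec:proof_bounds}.
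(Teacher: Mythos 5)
Your overall plan agrees with the paper's proof in its bookkeeping but departs from it at the crucial $L^\infty$ step, and it is precisely there that the gap lies. You also start from the inhomogeneous local bounds (Corollary~\ref{cor:inh-bdr}), read $\|\Delta u_L\|_{L^\infty}$ off the equation once $\|u_L\|_{L^\infty}$ is known, and invoke Lemma~\ref{lem:Landau-Kolomogorov_ball} for the gradient; those parts match the paper's argument up to cosmetics. The difference is how $\|u_L\|_{L^\infty(B_L)}\leq C\sqrt\mu$ is actually reached. You propose to rerun the Moser iteration branch of Corollary~\ref{cor:pointwise_bounds_I}; the paper explicitly declines that route (``since we have to handle the boundary, we give a simpler argument specific to the space dimension $d=2$'') and instead bounds $\int_{B(z,\ell)}|\Delta u_L|^2$ locally by Gagliardo--Nirenberg and the equation, rescales, and then exploits the pointwise domination of the Dirichlet Green's function by the free Yukawa kernel applied to the function $u_L-G$, where $G$ is chosen with a cutoff so that $u_L-G$ has zero trace on $\partial B_L$.

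The gap in your argument is exactly the boundary obstruction the paper steps around. You want to apply the subsolution estimate~\eqref{eq:Harnack} to $v:=(|u_L|-\sqrt\rho)_+$, on the grounds that it has vanishing trace on $\partial B_L$. It does, but it is not a subsolution of the form $-\Delta v\leq V'v$ required by~\eqref{eq:Harnack}: Kato's inequality gives $-\Delta v\leq V'|u_L|\,\1_{\{|u_L|>\sqrt\rho\}} = V'(v+\sqrt\rho\,\1_{\{|u_L|>\sqrt\rho\}})$, so the differential inequality carries the inhomogeneous source $f:=\sqrt\rho\,V'\,\1_{\{|u_L|>\sqrt\rho\}}$. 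The version of the De~Giorgi--Moser local boundedness estimate cited in the proof of Corollary~\ref{cor:pointwise_bounds_I} has no right-hand side, so $v$ cannot be inserted into~\eqref{eq:Harnack} as written. One can instead invoke the estimate with a source term from the general elliptic literature, but that is an additional ingredient not set up in the paper, and the contribution of $f$ then has to be propagated through the iteration at scale $\ell=1/\sqrt\mu$ with $q$ slightly above $d/2=1$. This is a concrete missing step, not ``a routine modification'' of what is already on the page.

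Two smaller points. First, the potential $V:=\rho\,(\1_{\R^2\setminus B_L}\ast w)$ need not be non-negative under the hypotheses of Theorem~\ref{thm:exist_vortex}: only $\widehat w\geq0$ and Assumption~\ref{ass:w} are assumed, and $w$ itself may change sign, so $\1_{\R^2\setminus B_L}\ast w$ can be negative. It is however bounded, $|V|\leq\rho\|w\|_{L^1}\leq C\mu$, and that is all that is really used; you should drop the claim $V\geq0$ (and the remark that $V$ ``drops out'' of Kato's inequality) and instead absorb $V$ into the constant as an extra bounded potential, which is harmless. Second, the local mass bound arising from Corollary~\ref{cor:inh-bdr} is $C(\mu+\ell^{-2}+\|G\|_\infty^2+\|\nabla G\|_\infty)\ell^d$, not $C\mu\,\ell^d$ for every $\ell\geq C$: the $\ell^{-2}$ term dominates for $\ell$ of order one and small $\mu$. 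The statement becomes $C\mu\ell^d$ once $\ell=1/\sqrt\mu$, which is the only scale you actually use, so this imprecision is harmless but worth fixing.
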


\begin{proof}
We can argue as in Corollary~\ref{cor:pointwise_bounds_I} but since we have to handle the boundary, we give a simpler argument specific to the space dimension $d=2$. We take
$$G(x)=\sqrt\rho\, \frac{x_1+ix_2}{L}\chi\big(\sqrt\mu(L-|x|)\big),$$
where $\chi\in C^\ii_c(\R)$ satisfies $\chi(0)=1$. Then we have $|G|\leq C\sqrt\mu$, $|\nabla G|\leq C(\sqrt\mu/L+\mu)\leq C\mu$ and $|\Delta G|\leq C\mu^{3/2}$ since we assume that $L\geq C/\sqrt\mu$. Next we note that the proof of Corollary~\ref{cor:inh-bdr} applies without changes for $u_L$ and $G$, using $|\1_{B_L}\ast w|\leq \int_{\R^2}|w|$ in~\eqref{eq:estim_mu_inhomog}. This provides the local bounds
$$\int_{B(z,\ell)}|u_L|^2\leq C\left(\mu+\frac1{\ell^2}\right)\ell^2,\quad  \int_{B(z,\ell)}|\nabla u_L|^2
 \leq C\left(\mu+\frac1{\ell^2}\right)^2\ell^2$$
for any disk $B(z,\ell)$ with $\ell\geq C$. The Gagliardo-Nirenberg inequality in the disk $B=B(z,\ell)$ and the above local bounds give
$$\norm{u_L}_{L^4(B(z,\ell))} \leq C\ell^{-\frac12}\left(1+\ell^2\mu\right),\qquad \norm{u_L}_{L^8(B(z,\ell))} \leq C\ell^{-\frac34}\left(1+\ell^2\mu\right).$$
At this step we choose $\ell=1/\sqrt\mu$. Decomposing $|w\ast|u|^2|\leq |w|\ast (|u_L|^2\1_{B(z,2\ell)})+C\ell^{-s}\sup_y\int_{B(y,\ell)}|u_L|^2$ as in Lemma~\ref{lem:estim_potential}, we can conclude that
\begin{multline*}
\norm{u_L(w\ast|u_L|^2)}_{L^2(B(z,\ell))}\\
\leq C\norm{u_L}_{L^4(B(z,\ell))}\norm{u_L}_{L^8(B(z,2\ell))}^2+\frac{C}{\ell^s}\sup_y\left(\int_{B(y,\ell)}|u_L|^2\right)^{\frac32}\leq C\mu.
\end{multline*}
Using Equation~\eqref{eq:problem_g_disk} we conclude that
$$\int_{B(z,\ell)}|\Delta u_L|^2\leq C\mu^2,\qquad \ell=1/\sqrt\mu.$$
At this step it is easier to rescale everything by $\sqrt\mu$ and introduce
$$v_L(x)=\mu^{-\frac12}\Big(u(x/\sqrt\mu)-G(x/\sqrt\mu)\Big).$$
Then $v_L$ is defined on the ball $B_{L\sqrt\mu}$, satisfies the Dirichlet boundary condition and the bounds
\begin{equation}
\sup_{z\in\R^2}\int_{B(z,1)}\big(|v_L|^2+|\nabla v_L|^2+|\Delta v_L|^2\big)\leq C.
\label{eq:rescale_bound_v_L}
\end{equation}
We claim that this implies $|v_L|\leq C$. One simple way of proving this is to write as in the proof of Lemma~\ref{lem:Landau-Kolomogorov_ball}
$$v_L=(-\Delta_D+1)^{-1}(-\Delta_D+1)v_L$$
where $-\Delta_D$ denotes the Dirichlet Laplacian on the ball $B_{L\sqrt\mu}$, and to use the pointwise bound on the kernel of the resolvent
$$(-\Delta_D+1)^{-1}(x,y)\leq (2\pi)^{-1} Y_1(x-y)$$
where $Y_1$ is the Yukawa potential in $\R^2$. This implies the domination
$$|v_L(x)|\leq (2\pi)^{-1}\int_{\R^2}Y_1(y) \big(|\Delta v_L(x-y)|+|v_L(x-y)|\big)\,\dy,$$
which proves that $|v_L|\leq C$ after splitting space into cubes $C_j$ and using that $\sum_j \norm{Y_1}_{L^2(C_j)}<\ii$ since $Y_1$ decays exponentially fast. Thus we have proved that $|u_L|\leq C\sqrt\mu$ which, after inserting in the equation, implies $|\Delta u_L|\leq C\mu^{3/2}$ as claimed. Finally, the estimate on the gradient follows from Lemma~\ref{lem:Landau-Kolomogorov_ball}.
\end{proof}

Now that we have proved uniform bounds, we multiply Equation~\eqref{eq:problem_g_disk} by $x\cdot\nabla \overline{u_L}$ and integrate. After letting $f_L:=(\rho-|u_L|^2)\1_{B_L}\in H^1_0(B_L)\subset H^1(\R^2)$, we obtain the Pohozaev identity
\begin{equation}
\frac{L}2\int_{\partial B_L}|\nu\cdot \nabla u_L|^2-\frac12\int_{B_L} (w\ast f_L)x\cdot\nabla f_L=\frac{L}2\int_{\partial B_L}|\tau\cdot \nabla u_L|^2=\pi\rho.
\label{eq:Pohozaev_pre}
\end{equation}
Using that $f_L=0$ on the circle $\partial B_L$, we can write
\begin{align*}
 &-\Re\int_{B_L}(w\ast f_L)u_L\, x\cdot\nabla \overline{u_L}\\
 &\qquad\qquad =\frac12\int_{B_L} (w\ast f_L)x\cdot\nabla f_L\\
&\qquad\qquad =-\frac12\int_{B_L} f_L(\nabla\cdot x w)\ast f_L-\frac12\int_{B_L} f_Lw\ast (2+x\cdot \nabla f_L).
\end{align*}
Therefore we find the formula
\begin{equation*}
- \Re\int_{B_L}(w\ast f_L)u_L\, x\cdot\nabla \overline{u_L}=-\frac14\int_{B_L} f_L(\nabla\cdot x w)\ast f_L-\frac{1}2\int_{B_L} f_Lw\ast f_L
\end{equation*}
which, after inserting in~\eqref{eq:Pohozaev_pre}, provides us with
\begin{equation}
\frac{L}2\int_{\partial B_L}|\nu\cdot\nabla u_L|^2+\frac14\int_{B_L} f_L(2w+\nabla\cdot x w)\ast f_L=\pi\rho.
\label{eq:Pohozaev_w}
\end{equation}
The Fourier transform of the modified interaction equals
$$\widehat{(2w+\nabla\cdot xw)}(k)=2\widehat{w}(k)-k\cdot\nabla_k \widehat{w}(k)$$
which controls $c|\widehat{w}(k)|^2$ due to our assumption~\eqref{eq:assumption_Fourier_w_Pohozaev}. Therefore we have proved the uniform bound
\begin{equation}
\int_{\R^2}(f_L\ast w)^2=4\pi^2\int_{\R^2}|\widehat{w}(k)|^2|\widehat{f_L}(k)|^2\,\rd k\leq C\rho.
\label{eq:bound_f_Pohozaev}
\end{equation}

Next we prove that $f_L$ must be small everywhere in the ball $B_L$, except perhaps in finitely many bounded balls, this number depending on $\mu$ but not on $L$. Lemma~\ref{lem:uniform_bound_u_L} implies for $\mu$ small enough and $L$ large enough
$$\|\nabla f_L\|_{L^\ii}\leq 2\|u_L\|_{L^\ii}\|\nabla u_L\|_{L^\ii}\leq C\rho \sqrt\mu$$
and
$$\norm{\nabla f_L\ast w}_{L^\ii}\leq \|\nabla f_L\|_{L^\ii}\int_{\R^2}|w|\leq C\rho\sqrt \mu.$$
This gives
$$\norm{f_L\int_{\R^2} w-f_L\ast w}_{L^\ii}\leq\norm{\nabla f_L}_{L^\ii}\int_{\R^2}|x|\,|w(x)|\,\dx\leq C\rho\sqrt \mu.$$
Let us now assume that $\mu$ is small enough so that
$$\|\nabla f_L\|_{L^\ii}\leq \frac{\rho}{16},\qquad \norm{f_L\int_{\R^2} w-f_L\ast w}_{L^\ii}\leq\frac{\rho}{16}\int_{\R^2}w.$$
From the above bounds we conclude that if $|f_L(x_L^{(1)})|\geq \rho/4$ at one point $x_L^{(1)}$ in the ball $B_L$, then $|f(x)|\geq 3\rho/16$ over the ball $B(x_L^{(1)},1)$, and hence $|f_L\ast w(x)|\geq \rho/8\int_{\R^2}w$ over the same ball. In particular,
$$\int_{B(x_L^{(1)},1)}(f_L\ast w)^2\geq \frac{\pi\rho^2}{64}\left(\int_{\R^2}w\right)^2.$$
Next we iterate the procedure outside of $B(x_L^{(1)},2)$. If there exists a point $x_L^{(2)}$ in $B_L\setminus B(x_L^{(1)},2)$ such that $|f_L(x_L^{(2)})|\geq \rho/4$ we get again $\int_{B(x_L^{(2)},1)}(f_L\ast w)^2\geq \frac{\pi\rho^2}{64}(\int w)^2$ and we go on like this. Due to~\eqref{eq:bound_f_Pohozaev}, the procedure must terminate after at most $C/\rho$ steps since the balls $B(x_L^{(j)},1)$ are disjoint by construction. We have thus proved that the set $\{|f_L|\geq  \rho/4\}$ is contained in the finite union $\cup_{j=1}^JB(x_L^{(j)},2)$ with $J\leq C/\rho$. On the complement we have $|f_L|=|\rho-|u_L|^2|\leq \rho/4<\rho/3$.

Next we study the thermodynamic limit. We first group the $x^{(j)}_L$ into $K\leq J$ clusters so that, after extraction of a subsequence, $|x^{(j)}_L-x^{(j')}_L|$ stays bounded for two centers in the same cluster but diverges as $L\to\ii$ whenever they belong to different clusters. We pick $X^{(k)}_L$ to be any center $x_L^{(j)}$ in the $k$th cluster and let $R_0$ be large enough so that the large ball $B_L^{(k)}:=B(X_L^{(k)},R_0)$ contains the union of the balls $B(x_L^{(j)},2)$ within the clusters. Then we have $|f_L|=|\rho-|u_L|^2|\leq \rho/4<\rho/3$ outside of the $B_L^{(k)}$'s and $|X_L^{(k)}-X_L^{(k')}|\to\ii$ when $L\to\ii$ and $k\neq k'$.

We know that, after extraction, $u_L(\cdot+X_L^{(k)})$ converges locally uniformly, as well as all its derivative, to an infinite ground state $u^{(k)}$. From the local convergence we have $|\rho-|u^{(k)}|^2|\leq \rho/4$ outside of the ball $B_{R_0}$ and from the definition~\eqref{eq:def_degree} of the degree we see that ${\rm deg}(u_L,B_L^{(k)})\to {\rm deg}(u^{(k)},B_{R_0})=:D^{(k)}$. Due to the boundary condition, we know that $u_L$ has total degree $1$ and therefore obtain
$$\sum_{k=1}^K D^{(k)}=1$$
after passing to the limit. If $X_L^{(k)}$ stays at a finite distance to the boundary $\partial B_L$ in the limit $L\to\ii$, our infinite ground state $u^{(k)}$ lives over a half space with the boundary condition 1 (up to a global phase factor). By Theorem~\ref{thm:degree_one} we infer $D^{(k)}=0$. Therefore there must exist at least one non-zero $D^{(k)}$ corresponding to a center $X^{(k)}_L$ so that $\rd(X_L^{(k)},\partial B_L)\to\ii$, hence an infinite ground state in $\R^2$. From Theorem~\ref{thm:degree_one} the corresponding $u^{(k)}$ must have degree $D^{(k)}\in\{-1,1\}$. If $D^{(k)}=-1$ we can replace $u^{(k)}$ by its complex conjugate and the theorem is proved.
\end{proof}

\appendix
\section{Landau-Kolmogorov inequality in a ball}\label{app:Landau-Kolmogorov}

\begin{lemma}[Landau-Kolmogorov inequality in a ball]\label{lem:Landau-Kolomogorov_ball}
  There exists a constant $C$ such that
$$\norm{\nabla f}_{L^\ii(B_R)}^2\leq C\norm{\Delta f}_{L^\ii(B_R)}\norm{f}_{L^\ii(B_R)},$$
for any $R>0$ and any $f\in H^2(B_R)$ satisfying the Dirichlet condition $f_{|\partial B_R}=0$.
\end{lemma}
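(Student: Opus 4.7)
The approach is to mimic the whole-space proof~\eqref{eq:Landau-Kolmogorov_proof}, replacing the free Bessel kernel $Y_M$ by the Dirichlet Green's function of $-\Delta+M^2$ on $B_R$. First, under the scaling $f(\cdot) \mapsto g(\cdot) := f(R\,\cdot)$, the quantities $\|\nabla f\|_{L^\ii(B_R)}^2$ and $\|\Delta f\|_{L^\ii(B_R)} \|f\|_{L^\ii(B_R)}$ transform with the same power of $R$, so the inequality is scale-invariant and it suffices to establish it on $B_1$. I would then distinguish two regimes. When $\|\Delta f\|_{L^\ii} \leq \|f\|_{L^\ii}$, the bound follows directly from elliptic regularity: Calder\'on--Zygmund gives $\|f\|_{W^{2,p}(B_1)} \leq C_p \|\Delta f\|_{L^p(B_1)}$ for $f\in H^2\cap H^1_0(B_1)$, and Morrey embedding $W^{2,p}(B_1) \hookrightarrow C^{1,\alpha}(\overline{B_1})$ for $p>d$ yields $\|\nabla f\|_{L^\ii} \leq C_0 \|\Delta f\|_{L^\ii}$; squaring and using $\|\Delta f\|_{L^\ii} \leq \|f\|_{L^\ii}$ gives $\|\nabla f\|_{L^\ii}^2 \leq C_0^2 \|\Delta f\|_{L^\ii}\|f\|_{L^\ii}$.

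The core of the argument is the complementary regime $\|\Delta f\|_{L^\ii} \geq \|f\|_{L^\ii}$. For $M \geq 1$, writing $g := (-\Delta + M^2)f \in L^\ii(B_1)$ and invoking the Dirichlet condition, we have the Green's-function representation $f = \int_{B_1} G_{D,M}(\cdot,y)\, g(y)\,\rd y$, where $G_{D,M}$ is the Green's function of $(-\Delta_D + M^2)$ on $B_1$. The crux is the resolvent estimate
$$\sup_{x \in B_1} \int_{B_1} |\nabla_x G_{D,M}(x,y)|\,\rd y \leq \frac{C}{M}, \qquad M\geq 1,$$
which gives $\|\nabla f\|_{L^\ii} \leq C(M^{-1}\|\Delta f\|_{L^\ii} + M\|f\|_{L^\ii})$; choosing $M = (\|\Delta f\|_{L^\ii}/\|f\|_{L^\ii})^{1/2} \geq 1$ then produces the Landau--Kolmogorov inequality.

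To prove the resolvent estimate I would decompose $G_{D,M}(x,y) = (2\pi)^{-d/2} Y_M(x-y) - H_M(x,y)$, where $H_M(\cdot,y)$ solves $(-\Delta+M^2) H_M=0$ on $B_1$ with boundary data $(2\pi)^{-d/2} Y_M(\cdot - y)$ on $\partial B_1$. The free part contributes $\|\nabla Y_M\|_{L^1(\R^d)} = M^{-1}\|\nabla Y_1\|_{L^1}$ by scaling, exactly as in~\eqref{eq:Landau-Kolmogorov_proof}. For the boundary correction $H_M$, the maximum principle on $-\Delta + M^2$ provides a pointwise domination by $Y_M(\cdot-y)$, and the exponential decay $Y_M(r) \sim e^{-Mr}$ for large $Mr$ localizes the boundary data of $H_M$ to a $1/M$-neighborhood of the closest boundary point to $y$; standard boundary Schauder estimates on $H_M$ (which is smooth up to $\partial B_1$) then promote these pointwise bounds to the $L^1$ gradient bound $\int_{B_1} |\nabla_x H_M(x,y)|\,\rd y \leq C/M$ uniformly in $x$.

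The main obstacle is the uniform-in-$x$ gradient estimate on $H_M$ when $x$ lies within distance $O(1/M)$ of $\partial B_1$: both $\nabla_x Y_M(x-y)$ and $\nabla_x H_M(x,y)$ can be individually large there, while only their difference $\nabla_x G_{D,M}(x,y)$ admits the expected $O(1/M)$ $L^1$ bound. The cleanest way around this, which I would pursue, is to approximate $B_1$ locally by the tangent half-space at the nearest boundary point: on a half-space, the Dirichlet Green's function for $-\Delta+M^2$ is explicit via the method of images, $G_{D,M}^{H}(x,y) = (2\pi)^{-d/2}(Y_M(x-y) - Y_M(x-y^*))$, and the desired $L^1$ gradient estimate reduces immediately to two copies of the whole-space bound $\|\nabla Y_M\|_{L^1} \leq C/M$. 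The exponential decay of $Y_M$ makes the error between $B_1$ and the tangent half-space negligible and produces the required uniform estimate.
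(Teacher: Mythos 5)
Your proposal reaches the same analytic core as the paper: the resolvent gradient bound
\begin{equation*}
\bigl\|\nabla(-\Delta_D+M^2)^{-1}\bigr\|_{L^\ii(B_1)\to L^\ii(B_1)}\le \frac{C}{M},\qquad M\ge 1,
\end{equation*}
from which the Landau--Kolmogorov inequality follows by writing $\nabla f=\nabla(-\Delta_D+M^2)^{-1}(-\Delta_D+M^2)f$ and optimizing over $M$. Your scaling step and the optimization are the same as in the paper; the only cosmetic difference there is that you split into two regimes according to the sign of $\|\Delta f\|_\ii-\|f\|_\ii$, whereas the paper absorbs the leftover $\|f\|_\ii$ term via the separate estimate $\|f\|_{L^\ii(B_1)}\le C\|\Delta f\|_{L^\ii(B_1)}$, proved from the domination $0\le(-\Delta_D)^{-1}(x,y)\le C|x-y|^{2-d}$.

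Where you genuinely differ is in proving the resolvent gradient bound itself. The paper obtains it by citing Wang's gradient estimate for the Dirichlet heat semigroup, $\|\nabla e^{t\Delta_D}\|_{L^\ii\to L^\ii}\le C(t^{-1/2}+1)$, and integrating in $t$ against $e^{-tM^2}$; the hard boundary analysis is delegated to that reference. You instead propose a direct elliptic argument: split the Dirichlet Green's function of $-\Delta+M^2$ into the free Yukawa kernel plus a boundary correction, and near $\partial B_1$ compare with the explicit half-space kernel built by the method of images. This is a legitimate and more self-contained route, and the half-space model does give exactly the $O(M^{-1})$ gradient $L^1$ bound. What you should not elide, though, is the passage from a pointwise (maximum-principle) comparison of the Green's functions to a comparison of their \emph{gradients}: zeroth-order domination does not transfer to first derivatives, and near the boundary both the free kernel and the correction have gradients of size $O(M^{d-1})$, with only the difference small. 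Making this rigorous requires a quantitative argument such as a boundary Schauder or $C^{1,\alpha}$ estimate applied on a scale-$1/M$ neighborhood of the nearest boundary point, together with a bound on the mismatch between $\partial B_1$ and its tangent hyperplane (which is $O(M^{-2})$ on that scale). That is exactly the content that Wang's theorem packages for you; your proposal is correct in spirit but would have to supply this estimate explicitly to be complete.
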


\begin{proof}
By scaling we can assume that $R=1$. First we recall the inequality
\begin{equation}
\norm{f}_{L^\ii(B_1)}\leq C\norm{\Delta f}_{L^\ii(B_1)},
 \label{eq:Sobolev_infinity_ball}
\end{equation}
for any $f$ in $H^2(B_1)$ vanishing at the boundary $\partial B_1$, that is, $f$ in the operator domain of the Dirichlet Laplacian $\Delta_{\rm D}$. In dimensions $d\in\{1,2,3\}$ this immediately follows from the Sobolev embedding and elliptic regularity
$$\norm{f}_{L^\ii(B_1)}\leq C\norm{f}_{H^2(B_1)}\leq C\norm{\Delta f}_{L^2(B_1)}\leq C\norm{\Delta f}_{L^\ii(B_1)}.$$
In higher dimensions, we cannot use the $L^2$ norm and we instead use the pointwise bound on the Dirichlet heat kernel on $B_1$
$$0\leq e^{t\Delta_D}(x,y)\leq e^{t\Delta_{\R^d}}(x-y)$$
(see, e.g.,~\cite[Thm.~2.1.7]{Davies-90}). Therefore, using $x^{-1}=\int_0^\ii e^{-tx}\,\rd t$ we get after integrating over $t$,
$$0\leq (-\Delta_D)^{-1}(x,y)\leq (2\pi)^{-\frac{d}2} Y_0(x-y)=\frac{C}{|x-y|^{d-2}}.$$
We thus write $f=(-\Delta_{\rm D})^{-1}(-\Delta_{\rm D})f$ and obtain
$$|f(x)|\leq C\int_{B_1}\frac{|\Delta f(y)|}{|x-y|^{d-2}}\dy\leq C\norm{\Delta f}_{L^\ii(B_1)}\int_{B_2}\frac{\dy}{|y|^{d-2}}$$
as was claimed in~\eqref{eq:Sobolev_infinity_ball}.

Next we turn to the estimate on $\nabla f$. By~\cite{Wang-04} we know that the heat kernel of the Dirichlet Laplacian on $B_1$ satisfies
$$\norm{\nabla e^{t\Delta_D}}_{L^\ii(B_1)\to L^\ii(B_1)} \leq C\left(\frac1{\sqrt t}+1\right).$$
As before this gives after multiplying by $e^{-tM^2}$ and integrating over $t$,
$$\norm{\nabla (-\Delta_D+M^2)^{-1}}_{L^\ii(B_1)\to L^\ii(B_1)}\leq C\left(\frac1M+\frac1{M^2}\right).$$
Thus we obtain by writing $\nabla f=\nabla (-\Delta_{\rm D}+M^2)^{-1}(-\Delta_{\rm D}+M^2)f$
$$\norm{\nabla f}_{L^\ii(B_1)}\leq C\left(1+M^{-1}\right)\left(M^{-1}\norm{\Delta f}_{L^\ii(B_1)}+M\norm{f}_{L^\ii(B_1)}\right).$$
Taking $M^2=\norm{\Delta f}_{L^\ii(B_1)}/\norm{f}_{L^\ii(B_1)}$ gives
$$\norm{\nabla f}_{L^\ii(B_1)}\leq C\left(\norm{\Delta f}^{\frac12}_{L^\ii(B_1)}\norm{f}^{\frac12}_{L^\ii(B_1)}+\norm{f}_{L^\ii(B_1)}\right).$$
Using finally~\eqref{eq:Sobolev_infinity_ball} for the last term we obtain the claim.
\end{proof}

The same result holds for a smooth domain $\Omega$, with $R$ replaced by $|\Omega|^{1/d}$, for a constant $C$ depending on the shape of the normalized set $\omega:=|\Omega|^{-1/d}\Omega$. In this paper we only need the result for a ball.

\section{Thermodynamic limit}\label{app:existence-thermo}

In this section we prove the existence of the limits $e(\rho)$ and $f(\mu)$ in \eqref{eq:def_e_lambda} and \eqref{eq:def_f_mu}.

\begin{lemma}[Existence of thermodynamic limit]\label{lem:existence-thermo}
Let $w$ be a potential satisfying Assumption~\ref{ass:w}. Let $\Omega_n\subset\R^d$ be a sequence of domains satisfying~\eqref{eq:Fisher-0} and \eqref{eq:Fisher}.
Then for all $\rho,\mu>0$, the limits $e(\rho)$ and $f(\mu)$ in \eqref{eq:def_e_lambda} and \eqref{eq:def_f_mu}, respectively, exist  and are independent of the sequence $\{\Omega_n\}$ (as well as the constant $C$ in \eqref{eq:Fisher-0} and \eqref{eq:Fisher}). 
\end{lemma}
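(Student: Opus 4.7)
The strategy is the classical Fisher/Ruelle construction: first establish the limit along a nested sequence of cubes via a subadditivity argument, and then extend to arbitrary Fisher sequences by sandwiching. I would proceed as follows.

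First I would work in the grand-canonical Dirichlet setting and prove that $L \mapsto F_{\rm D}(\mu, C_L)/|C_L|$ converges as $L \to \infty$ along cubes $C_L = (0,L)^d$. Given $L$ large and an integer $N$, tile $C_{NL}$ by $N^d$ translates of the smaller cube $C_{L-\delta}$, separated by corridors of width $\delta$ (with $\delta \geq \kappa$ to exploit the decay \eqref{eq:w2_upper}). Place in each small cube a minimizer for $F_{\rm D}(\mu, C_{L-\delta})$, extended by zero in the corridors; this yields an admissible trial state for $F_{\rm D}(\mu,C_{NL})$. The energy splits as $N^d F_{\rm D}(\mu, C_{L-\delta})$ plus cross interaction terms. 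Using the local bounds of Theorem~\ref{thm:local_bound} applied to each Dirichlet minimizer together with the decay $|w_2(x)| \leq \kappa/|x|^s$ of \eqref{eq:w2_upper} (and Lemma~\ref{lem:estim_potential}), the cross terms are bounded by $C N^d L^{d-1}(\text{corridor contribution}) + C N^d L^d \delta^{d-s}$, which after dividing by $|C_{NL}| = (NL)^d$ vanishes by choosing first $N\to\infty$, then $L\to\infty$, then $\delta \to \infty$. A standard Fekete-type argument then yields convergence of $F_{\rm D}(\mu,C_L)/|C_L|$ to some finite limit $f(\mu)$; the finiteness uses the a priori lower bound \eqref{eq:F_bounded_below} and the trivial upper bound $F_{\rm D}(\mu, C_L) \leq \cF_{\mu, C_L}(0) = 0$.

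Next, to obtain the Neumann version and the independence from the specific sequence of cubes, I would show $F_{\rm N}(\mu, C_L)/|C_L| \to f(\mu)$ as well. Since $F_{\rm N} \leq F_{\rm D}$, one bound is automatic. For the reverse, given a Neumann quasi-minimizer $u_L$ on $C_L$, multiply by a cutoff $\chi_L$ equal to $1$ on $C_{L-\delta}$ and $0$ near $\partial C_L$; the resulting Dirichlet trial state has energy $F_{\rm D}(\mu,C_L) + O(L^{d-1}\delta)$ after using the local uniform bounds on $u_L$ (Corollary~\ref{cor:pointwise_bounds_I} and Remark~\ref{rmk:Neumann_pointwise}) to estimate the gradient, potential and interaction contributions of the transition layer. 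Dividing by $|C_L|$ and letting $L\to \infty$, then $\delta\to \infty$, yields the matching limit.

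To pass from cubes to a general Fisher sequence $\{\Omega_n\}$ satisfying \eqref{eq:Fisher-0}--\eqref{eq:Fisher}, I would sandwich $\Omega_n$ between an inner and an outer tiling by cubes of fixed side $L$. By \eqref{eq:Fisher}, the set of points within distance $L$ of $\partial \Omega_n$ has volume at most $CR_n^{d-1}L = o(|\Omega_n|)$, so the number of cubes needed inside and outside differ only by a $o(|\Omega_n|/L^d)$ term. Combining a Dirichlet upper bound (tile $\Omega_n$ from the inside by cubes, add a buffer, use the cube result) and a Neumann lower bound (cover $\Omega_n$ from the outside by cubes and invoke that Neumann energy on a union is $\leq$ sum of Neumann energies, modulo cross interactions controlled as above), we get
\[
\limsup_n \frac{F_{\rm D}(\mu,\Omega_n)}{|\Omega_n|} \leq f(\mu) \leq \liminf_n \frac{F_{\rm N}(\mu,\Omega_n)}{|\Omega_n|},
\]
and since $F_{\rm N}\leq F_{\rm D}$, all four quantities coincide and equal $f(\mu)$. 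The canonical limit $e(\rho)$ is treated analogously: when tiling, distribute the mass $\lambda_n$ proportionally to the volume of each cube so that each small cube has the target density $\rho$ (up to a vanishing error), and exploit that the minimal Dirichlet energy at density $\rho$ in a cube behaves well in $L$ by the same subadditivity construction.

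The main obstacle is the control of cross-interaction terms between adjacent cubes for a potentially sign-changing and possibly non-integrable $w_2$. The superstability bundled in Assumption~\ref{ass:w}, together with the uniform local bounds already established in Theorem~\ref{thm:local_bound} (which are the input making the cross-term bookkeeping independent of the trial state), is precisely what is needed to make the subadditivity/Fekete argument work; without those a priori bounds the boundary energies could in principle grow faster than $|\partial \Omega_n|$.
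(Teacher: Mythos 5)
Your proof plan follows the same classical Ruelle/Fisher construction as the paper's Appendix~\ref{app:existence-thermo}: tile by cubes, control cross-interactions between blocks using the a priori local bounds of Theorem~\ref{thm:local_bound} together with the polynomial decay of $w_2$ in~\eqref{eq:w2_upper}, pass from Dirichlet to Neumann with a boundary cutoff, and sandwich a general Fisher sequence between cube tilings using~\eqref{eq:Fisher}. The architecture is essentially the paper's (Fekete-type almost-subadditivity versus the paper's ``$\limsup \leq \liminf$'' shortcut is a cosmetic difference), and the canonical and grand-canonical cases are handled just as you say.

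There is, however, one genuinely misdirected step: your liminf argument for a general Fisher domain. You propose to cover $\Omega_n$ from the outside by cubes and invoke that ``Neumann energy on a union is $\leq$ sum of Neumann energies, modulo cross interactions.'' That inequality, $F_{\rm N}(\mu,\cup_j Q_j)\leq \sum_j F_{\rm N}(\mu,Q_j)+\text{cross}$, is true, but it points the wrong way: it gives an \emph{upper} bound on the Neumann energy of the outer union, whereas you need a \emph{lower} bound on $F_{\rm N}(\mu,\Omega_n)$, and there is no direct inequality $F_{\rm N}(\mu,\Omega_n)\geq F_{\rm N}(\mu,\cup_j Q_j)$ without further input (one would need a Sobolev extension operator for $\Omega_n$ with uniform constant, which the Fisher condition alone does not provide). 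The standard repair is to tile from the \emph{inside}: restrict the Neumann minimizer $u_n$ to the cubes $Q_j$ fully contained in $\Omega_n$, observe that for each such cube $\int_{Q_j}|\nabla u_n|^2+\tfrac12\iint_{Q_j^2}w\,|u_n|^2|u_n|^2-\mu\int_{Q_j}|u_n|^2\geq F_{\rm N}(\mu,Q_j)$ since $u_n|_{Q_j}\in H^1(Q_j)$ is admissible, and bound the boundary-layer mass and interblock interactions by $o(|\Omega_n|)$ using Theorem~\ref{thm:local_bound} and~\eqref{eq:Fisher}. An alternative, which is what the paper actually does for the liminf in the canonical case, is to fill $C_n'\setminus\Omega_n$ of a large circumscribing cube $C_n'$ with small Dirichlet minimizers and read off $F_{\rm D}(\mu,C_n')\leq F_{\rm D}(\mu,\Omega_n)+(\text{complement energy})+o(|C_n'|)$. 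Either fix slots into your outline without affecting the rest.
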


\begin{proof}
This is standard in statistical mechanics~\cite{Ruelle} and we only outline the proof. For simplicity we will use the local bounds from Theorems~\ref{thm:uniform_bound} and~\ref{thm:local_bound} to control the interactions between different sub-systems, but it is also possible to argue without using them.

We start with the Dirichlet boundary condition. Let us consider a tiling of $\R^d$ with cubes $C_j=\ell j+(-\ell/2,\ell/2)^d$, $j\in\Z^d$, of fixed side length $\ell$. We consider all the cubes $C_j\subset\Omega_n$. From the assumption~\eqref{eq:Fisher} on the boundary, the number of such cubes satisfies
$$K_n:=\#\{j\in\Z^d\ :\ C_j\subset\Omega_n\}=\ell^{-d}|\Omega_n|+O\big((R_n/\ell)^{d-1}\big).$$
Consider any sequence $\lambda_n\sim |\Omega_n|\rho$ and let
$\lambda'_n:=\lambda_n/K_n=\rho\ell^d+o(1)$.
Let $u'_n$ be a minimizer for $E_{\rm D}(\lambda'_n,C_0)$ in the cube $C_0=(-\ell/2,\ell/2)^d$. We use as trial state
$$u_n:=\sum_{\substack{j\in\Z^d\\C_j\subset\Omega_n}}u'_n(\cdot-j)\in H^1_0(\Omega_n)$$
and obtain
$$E_{\rm D}(\lambda_n,\Omega_n)\leq K_n E_{\rm D}(\lambda'_n,C_0)+\frac12\sum_{j\neq k}\int_{C_0\times C_0}w(x-y)u'_n(x+j)^2u'_n(y+k)^2.$$
Passing to the limit $n\to\ii$ provides
\begin{multline}
\limsup_{n\to\ii}\frac{E_{\rm D}(\lambda_n,\Omega_n)}{|\Omega_n|}\leq \frac{E_{\rm D}(\rho\ell^d,C_0)}{\ell^d}\\
+\frac12\sum_{j\in\Z^d\setminus\{0\}}\int_{C_0\times C_0}w(x-y)u'(x)^2u'(y+j)^2,
\label{eq:upper_thermo_lim}
\end{multline}
for some minimizer $u'$ of $E_{\rm D}(\rho\ell^d,C_0)$. We have used here that $\lambda\mapsto E_{\rm D}(\lambda,A)$ is continuous in a fixed bounded domain $A$ and that corresponding minimizers converge, up to extraction. The last term in~\eqref{eq:upper_thermo_lim} is the interaction of $u'$ with infinitely many copies of itself outside of $C_0$, which we claim is a $o(\ell^d)$. In fact, since $u'$ is uniformly bounded by Corollary~\ref{cor:pointwise_bounds_I}, we can bound it by
\begin{multline*}
\frac12\sum_{j\in\Z^d\setminus\{0\}}\int_{C_0\times C_0}w(x-y)u'(x)^2u'(y+j)^2\\
\leq C\int_{C_0}\int_{\R^d\setminus C_0}|w(x-y)|\,\dx\,\dy\leq C\left(\ell^{d-1}\delta+\frac{\ell^d}{\delta^{s-d}}\right)\leq C\ell^{d-\frac{s-d}{1+s-d}}.
\end{multline*}
where we have distinguished whether $x$ is at distance $\delta$ from the boundary of $C_0$ or not and then have optimized over $\delta$. Passing to the limit $\ell\to\ii$, we find
\begin{equation}
\limsup_{n\to\ii}\frac{E_{\rm D}(\lambda_n,\Omega_n)}{|\Omega_n|}\leq \liminf_{\ell\to\ii}\frac{E_{\rm D}(\rho \ell^d,(-\ell/2,\ell/2)^d)}{\ell^d}.
\label{eq:first_estim_upper_thermo}
\end{equation}
Taking for $\Omega_n$ a big cube $\Omega_n=(-\ell_n/2,\ell_n/2)^d$ and the mass $\lambda_n=\rho|\Omega_n|$, we conclude that the limit for cubes
$$e(\rho):=\lim_{\ell\to\ii}\frac{E_{\rm D}(\rho \ell^d,(-\ell/2,\ell/2)^d)}{\ell^d}$$
exists. The inequality~\eqref{eq:first_estim_upper_thermo} becomes
$$\limsup_{n\to\ii}\frac{E_{\rm D}(\lambda_n,\Omega_n)}{|\Omega_n|}\leq e(\rho).$$

To get the similar lower bound with the liminf, we choose a very large cube $C_n'$ of size proportional to $R_n$ containing $\Omega_n$ and fill the missing space using small cubes of side length $\ell$ as before. We use as trial state for the large cube the minimizer for $E_{\rm D}(\lambda_n,\Omega_n)$ to which we add the same minimizers $u'$ of the small cubes. The interaction is estimated as before (using that the Dirichlet minimizer in $\Omega_n$ is bounded) and this concludes the proof that
$$\lim_{n\to\ii}\frac{E_{\rm D}(\lambda_n,\Omega_n)}{|\Omega_n|}= e(\rho).$$
The proof for the grand-canonical problem $F_{\rm D}(\mu,\Omega)$ is similar and even easier since we do not have any mass constraint to be fulfilled for our trial states.

To prove that the Neumann problem has the same limit as Dirichlet, we choose the localization function
$$\chi(x)=\begin{cases}
1&\text{if $x\in\Omega_n$ and $\rd(x,\partial\Omega_n)\geq 2R$,}\\
\frac{\rd(x,\partial\Omega_n)-R}{R}&\text{if $x\in\Omega_n$ and $R\leq \rd(x,\partial\Omega_n)\leq 2R$,}\\
0&\text{otherwise.}
          \end{cases}$$
We then let $\eta:=\sqrt{1-\chi^2}$. From the IMS formula, we deduce that for a minimizer $u_n$ of the Neumann problem
\begin{multline*}
E_{\rm N}(\lambda_n,\Omega_n)\geq \cE_{\Omega_n}(u_n\chi)-\frac{C}{R^2}\int_{\rd(x,\partial\Omega_n)\leq 2R}u_n^2\\
+\int_{\Omega_n\times\Omega_n}u_n(x)^2\chi^2(x)u_n(y)^2\eta^2(y)w(x-y)\dx,\dy
\end{multline*}
due to the positivity of the kinetic and interaction energies. Using the local bounds from Theorem~\ref{thm:local_bound}, the errors are $o(R_n^d)$ and therefore
$$E_{\rm D}(\lambda_n,\Omega_n)\geq E_{\rm N}(\lambda_n\,,\,\Omega_n)\geq E_{\rm D}\left(\int\chi^2u_n^2,\Omega_n\right)+o(R_n^{d}).$$
Since $\int\chi^2u_n^2=\lambda_n+o(R_n^{d})=\rho|\Omega_n|+o(|\Omega_n|)$ we obtain the result~\eqref{eq:def_e_lambda}. The proof in the grand-canonical case is similar. The proof of Lemma~\ref{lem:existence-thermo} is complete.
\end{proof}


\begin{thebibliography}{100}

\bibitem{AftBlaJer-07}
{\sc A.~Aftalion, X.~Blanc, and R.~L. Jerrard}, {\em Nonclassical rotational
  inertia of a supersolid}, Phys. Rev. Lett., 99 (2007), p.~135301.

\bibitem{AftBlaJer-09}
{\sc A.~Aftalion, X.~Blanc, and R.~L. Jerrard}, {\em Mathematical issues in the
  modelling of supersolids}, Nonlinearity, 22 (2009), p.~1589.

\bibitem{Agmon}
{\sc S.~Agmon}, {\em Lectures on exponential decay of solutions of second-order
  elliptic equations}, Princeton University Press, 1982.

\bibitem{AmbCab-00}
{\sc L.~Ambrosio and X.~Cabr\'{e}}, {\em Entire solutions of semilinear
  elliptic equations in {$\bold R^3$} and a conjecture of {D}e {G}iorgi}, J.
  Amer. Math. Soc., 13 (2000), pp.~725--739.

\bibitem{AncRosToi-13}
{\sc F.~Ancilotto, M.~Rossi, and F.~Toigo}, {\em Supersolid structure and
  excitation spectrum of soft-core bosons in three dimensions}, Phys. Rev. A,
  88 (2013), p.~033618.

\bibitem{BagDelBerHol-13}
{\sc L.~Baguet, F.~Delyon, B.~Bernu, and M.~Holzmann}, {\em {Hartree-Fock
  Ground State Phase Diagram of Jellium}}, Phys. Rev. Lett., 111 (2013),
  p.~166402.

\bibitem{BagDelBerHol-14}
\leavevmode\vrule height 2pt depth -1.6pt width 23pt, {\em Properties of
  {Hartree-Fock} solutions of the three-dimensional electron gas}, Phys. Rev.
  B, 90 (2014), p.~165131.

\bibitem{BatStiTor-09}
{\sc R.~D. Batten, F.~H. Stillinger, and S.~Torquato}, {\em Novel
  low-temperature behavior in classical many-particle systems}, Phys. Rev.
  Lett., 103 (2009), p.~050602.

\bibitem{BelRadShl-10}
{\sc J.~Bellissard, C.~Radin, and S.~Shlosman}, {\em The characterization of
  ground states}, J. Phys. A: Math. Theor., 43 (2010), p.~305001.

\bibitem{BerDelDunHol-08}
{\sc B.~Bernu, F.~Delyon, M.~Duneau, and M.~Holzmann}, {\em Metal-insulator
  transition in the {Hartree-Fock} phase diagram of the fully polarized
  homogeneous electron gas in two dimensions}, Phys. Rev. B, 78 (2008),
  p.~245110.

\bibitem{BerDelHolBag-11}
{\sc B.~Bernu, F.~Delyon, M.~Holzmann, and L.~Baguet}, {\em {Hartree-Fock phase
  diagram of the two-dimensional electron gas}}, Phys. Rev. B, 84 (2011),
  p.~115115.

\bibitem{BetBreHel-94}
{\sc F.~Bethuel, H.~Brezis, and F.~H{\'e}lein}, {\em Ginzburg-{L}andau
  vortices}, Progress in Nonlinear Differential Equations and their
  Applications, 13, Birkh{\"a}user Boston, Inc., Boston, MA, 1994.

\bibitem{BetBreOrl-01}
{\sc F.~Bethuel, H.~Brezis, and G.~Orlandi}, {\em Asymptotics for the
  {G}inzburg-{L}andau equation in arbitrary dimensions}, J. Funct. Anal., 186
  (2001), pp.~432--520.

\bibitem{BlaBriLio-03}
{\sc X.~Blanc, C.~{Le Bris}, and P.-L. Lions}, {\em A definition of the ground
  state energy for systems composed of infinitely many particles}, Comm.
  Partial Differential Equations, 28 (2003), pp.~439--475.

\bibitem{BlaLew-15}
{\sc X.~Blanc and M.~Lewin}, {\em The crystallization conjecture: A review},
  EMS Surv. Math. Sci., 2 (2015), pp.~255--306.

\bibitem{BotSchWenHerGuoLanPfa-19}
{\sc F.~B\"ottcher, J.-N. Schmidt, M.~Wenzel, J.~Hertkorn, M.~Guo, T.~Langen,
  and T.~Pfau}, {\em Transient supersolid properties in an array of dipolar
  quantum droplets}, Phys. Rev. X, 9 (2019), p.~011051.

\bibitem{BreMerRiv-94}
{\sc H.~Brezis, F.~Merle, and T.~Rivi{\`e}re}, {\em {Quantization effects for
  $-\Delta u = u(1 -|u|^2)$ in $\mathbb{R}^2$}}, Arch. Rat. Mech. Anal., 126
  (1994), pp.~35--58.

\bibitem{ButLeb-05}
{\sc P.~Butt{\`a} and J.~L. Lebowitz}, {\em Local mean field models of uniform
  to nonuniform density fluid--crystal transitions}, The Journal of Physical
  Chemistry B, 109 (2005), pp.~6849--6854.

\bibitem{CahHil-58}
{\sc J.~W. Cahn and J.~E. Hilliard}, {\em {Free Energy of a Nonuniform System.
  I. Interfacial Free Energy}}, J. Chem. Phys., 28 (1958), pp.~258--267.

\bibitem{CanEhr-11}
{\sc E.~Canc\`es and V.~Ehrlacher}, {\em Local defects are always neutral in
  the {T}homas--{F}ermi--von {W}eisz\"{a}cker theory of crystals}, Arch.
  Ration. Mech. Anal., 202 (2011), pp.~933--973.

\bibitem{Casotti_etal-24}
{\sc E.~Casotti, E.~Poli, L.~Klaus, A.~Litvinov, C.~Ulm, C.~Politi, M.~J. Mark,
  T.~Bland, and F.~Ferlaino}, {\em Observation of vortices in a dipolar
  supersolid}, Nature, 635 (2024), pp.~327--331.

\bibitem{ChaWei-18}
{\sc H.~Chan and J.~Wei}, {\em {On De Giorgi's conjecture: Recent progress and
  open problems}}, Science China Mathematics, 61 (2018), pp.~1925--1946.

\bibitem{ChaHalRea-13}
{\sc M.~H.~W. Chan, R.~B. Hallock, and L.~Reatto}, {\em {Overview on Solid
  $^4$He and the Issue of Supersolidity}}, J. Low Temp. Phys., 172 (2013),
  pp.~317--363.

\bibitem{Chomaz_etal-19}
{\sc L.~Chomaz, D.~Petter, P.~Ilzh\"ofer, G.~Natale, A.~Trautmann, C.~Politi,
  G.~Durastante, R.~M.~W. van Bijnen, A.~Patscheider, M.~Sohmen, M.~J. Mark,
  and F.~Ferlaino}, {\em Long-lived and transient supersolid behaviors in
  dipolar quantum gases}, Phys. Rev. X, 9 (2019), p.~021012.

\bibitem{CinJaiBinMicZolPup-10}
{\sc F.~{Cinti}, P.~{Jain}, M.~{Boninsegni}, A.~{Micheli}, P.~{Zoller}, and
  G.~{Pupillo}}, {\em {Supersolid Droplet Crystal in a Dipole-Blockaded Gas}},
  Phys. Rev. Lett., 105 (2010), p.~135301.

\bibitem{CinMacLecPupPoh-14}
{\sc F.~{Cinti}, T.~{Macr{\`\i}}, W.~{Lechner}, G.~{Pupillo}, and T.~{Pohl}},
  {\em {Defect-induced supersolidity with soft-core bosons}}, Nat. Commun., 5
  (2014), p.~3235.

\bibitem{ComMir-99}
{\sc M.~Comte and P.~Mironescu}, {\em Minimizing properties of arbitrary
  solutions to the {G}inzburg-{L}andau equation}, Proc. Roy. Soc. Edinburgh
  Sect. A, 129 (1999), pp.~1157--1169.

\bibitem{ConJer-17}
{\sc A.~Contreras and R.~L. Jerrard}, {\em Nearly parallel vortex filaments in
  the 3{D} {G}inzburg-{L}andau equations}, Geom. Funct. Anal., 27 (2017),
  pp.~1161--1230.

\bibitem{Davies-74}
{\sc E.~B. Davies}, {\em Properties of the {G}reen's functions of some
  {S}chr\"{o}dinger operators}, J. London Math. Soc. (2), 7 (1974),
  pp.~483--491.

\bibitem{Davies-90}
\leavevmode\vrule height 2pt depth -1.6pt width 23pt, {\em Heat kernels and
  spectral theory}, vol.~92 of Cambridge Tracts in Mathematics, Cambridge
  University Press, Cambridge, 1990.

\bibitem{DeGiorgi-79}
{\sc E.~De~Giorgi}, {\em Convergence problems for functionals and operators},
  in Proceedings of the {I}nternational {M}eeting on {R}ecent {M}ethods in
  {N}onlinear {A}nalysis ({R}ome, 1978), 1979, pp.~131--188,.

\bibitem{DeLaire-12}
{\sc A.~de~Laire}, {\em Nonexistence of traveling waves for a nonlocal
  {G}ross-{P}itaevskii equation}, Indiana Univ. Math. J., 61 (2012),
  pp.~1451--1484.

\bibitem{DeLaiLop-22}
{\sc A.~de~Laire and S.~L\'{o}pez-Mart\'{\i}nez}, {\em Existence and decay of
  traveling waves for the nonlocal {G}ross-{P}itaevskii equation}, Comm.
  Partial Differential Equations, 47 (2022), pp.~1732--1794.

\bibitem{DeLaiMen-20}
{\sc A.~de~Laire and P.~Mennuni}, {\em Traveling waves for some nonlocal 1{D}
  {G}ross-{P}itaevskii equations with nonzero conditions at infinity}, Discrete
  Contin. Dyn. Syst., 40 (2020), pp.~635--682.

\bibitem{PinKowWei-11}
{\sc M.~del Pino, M.~Kowalczyk, and J.~Wei}, {\em On {D}e {G}iorgi's conjecture
  in dimension {$N\geq 9$}}, Ann. of Math. (2), 174 (2011), pp.~1485--1569.

\bibitem{Dobrushin-68b}
{\sc R.~Dobru\v{s}in}, {\em The problem of uniqueness of a gibbsian random
  field and the problem of phase transitions}, Funct. Anal. Appl., 2 (1968),
  pp.~302--312.

\bibitem{Dobrushin-68a}
{\sc R.~L. Dobru\v{s}in}, {\em Gibbsian random fields for lattice systems with
  pairwise interactions}, Funkcional. Anal. i Prilo\v{z}en., 2 (1968),
  pp.~31--43.

\bibitem{Dobrushin-69}
\leavevmode\vrule height 2pt depth -1.6pt width 23pt, {\em Gibbsian random
  fields. {G}eneral case}, Funkcional. Anal. i Prilo\v{z}en, 3 (1969),
  pp.~27--35.

\bibitem{DobMin-67}
{\sc R.~L. Dobru\v{s}in and R.~A. Minlos}, {\em Existence and continuity of
  pressure in classical statistical physics}, Teor. Verojatnost. i Primenen.,
  12 (1967), pp.~595--618.

\bibitem{EspNicPul-82}
{\sc R.~Esposito, F.~Nicol{\`o}, and M.~Pulvirenti}, {\em Superstable
  interactions in quantum statistical mechanics: {M}axwell-{B}oltzmann
  statistics}, Ann. Inst. H. Poincar{\'e} Sect. A (N.S.), 36 (1982),
  pp.~127--158.

\bibitem{Farina-98}
{\sc A.~Farina}, {\em Finite-energy solutions, quantization effects and
  {L}iouville-type results for a variant of the {G}inzburg-{L}andau systems in
  {${\bf R}^K$}}, Differential Integral Equations, 11 (1998), pp.~875--893.

\bibitem{FarMir-13}
{\sc A.~Farina and P.~Mironescu}, {\em Uniqueness of vortexless
  {G}inzburg-{L}andau type minimizers in two dimensions}, Calc. Var. Partial
  Differential Equations, 46 (2013), pp.~523--554.

\bibitem{FraHaiSeiSol-12}
{\sc R.~L. {Frank}, C.~{Hainzl}, R.~{Seiringer}, and J.~P. {Solovej}}, {\em
  Microscopic derivation of ginzburg-landau theory}, J. Amer. Math. Soc., 25
  (2012), pp.~667--713.

\bibitem{Gates-72}
{\sc D.~Gates}, {\em Rigorous results in the mean-field theory of freezing},
  Ann. Phys., 71 (1972), pp.~395 -- 420.

\bibitem{GatPen-69}
{\sc D.~J. Gates and O.~Penrose}, {\em The van der {W}aals limit for classical
  systems. {I}. {A} variational principle}, Comm. Math. Phys., 15 (1969),
  pp.~255--276.

\bibitem{GhoGui-98}
{\sc N.~Ghoussoub and C.~Gui}, {\em On a conjecture of {D}e {G}iorgi and some
  related problems}, Math. Ann., 311 (1998), pp.~481--491.

\bibitem{Ginibre-67}
{\sc J.~Ginibre}, {\em Rigorous lower bound on the compressibility of a
  classical system}, Phys. Lett. A, 24 (1967), pp.~223--224.

\bibitem{GinVel-80}
{\sc J.~Ginibre and G.~Velo}, {\em On a class of nonlinear {S}chr{\"o}dinger
  equations with nonlocal interaction}, Math. Z., 170 (1980), pp.~109--136.

\bibitem{GinLan-50}
{\sc V.~Ginzburg and L.~Landau}, {\em On the theory of superconductivity}, Zh.
  Eksp. Teor. Fiz., 20 (1950), pp.~1064--82.

\bibitem{GiuLebLie-09}
{\sc A.~Giuliani, J.~L. Lebowitz, and E.~H. Lieb}, {\em Periodic minimizers in
  1d local mean field theory}, Comm. Math. Phys., 286 (2009), pp.~163--177.

\bibitem{Gross-58}
{\sc E.~Gross}, {\em Classical theory of boson wave fields}, Ann. Phys., 4
  (1958), pp.~57--74.

\bibitem{Gross-61}
\leavevmode\vrule height 2pt depth -1.6pt width 23pt, {\em Structure of a
  quantized vortex in boson systems}, Nuovo Cimento, 20 (1961), pp.~454--477.

\bibitem{Gross-57}
{\sc E.~P. {Gross}}, {\em {Unified Theory of Interacting Bosons}}, Phys. Rev.,
  106 (1957), pp.~161--162.

\bibitem{Hallock-15}
{\sc R.~Hallock}, {\em Is solid helium a supersolid?}, Phys. Today, 68 (2015),
  pp.~30--35.

\bibitem{HenCinJaiPupPoh-12}
{\sc N.~Henkel, F.~Cinti, P.~Jain, G.~Pupillo, and T.~Pohl}, {\em {Supersolid
  Vortex Crystals in Rydberg-Dressed Bose-Einstein Condensates}}, Phys. Rev.
  Lett., 108 (2012), p.~265301.

\bibitem{HenNatPoh-10}
{\sc N.~Henkel, R.~Nath, and T.~Pohl}, {\em {Three-Dimensional Roton
  Excitations and Supersolid Formation in Rydberg-Excited Bose-Einstein
  Condensates}}, Phys. Rev. Lett., 104 (2010), p.~195302.

\bibitem{HerHer-96}
{\sc R.~M. Herv\'{e} and M.~Herv\'{e}}, {\em Quelques propri\'{e}t\'{e}s des
  solutions de l'\'{e}quation de {G}inzburg-{L}andau sur un ouvert de {${\bf
  R}^2$}}, Potential Anal., 5 (1996), pp.~591--609.

\bibitem{IgnNguSlaZar-20}
{\sc R.~Ignat, L.~Nguyen, V.~Slastikov, and A.~Zarnescu}, {\em On the
  uniqueness of minimisers of {G}inzburg-{L}andau functionals}, Ann. Sci.
  \'{E}c. Norm. Sup\'{e}r. (4), 53 (2020), pp.~589--613.

\bibitem{JexLewMad-24}
{\sc M.~Jex, M.~Lewin, and P.~Madsen}, {\em Classical {D}ensity {F}unctional
  {T}heory: {T}he {L}ocal {D}ensity {A}pproximation}, Rev. Math. Phys.,
  (2024).
\newblock in press.

\bibitem{JosPomRic-07b}
{\sc C.~Josserand, Y.~Pomeau, and S.~Rica}, {\em Patterns and supersolids},
  Eur. Phys. J. Special Topics, 146 (2007), pp.~47--61.

\bibitem{Kadau-etal-16}
{\sc H.~Kadau, M.~Schmitt, M.~Wenzel, C.~Wink, T.~Maier, I.~Ferrier-Barbut, and
  T.~Pfau}, {\em Observing the {R}osensweig instability of a quantum
  ferrofluid}, Nature, 530 (2016), pp.~194--197.

\bibitem{KirNep-71}
{\sc D.~A. {Kirzhnits} and Y.~A. {Nepomnyashchi{\v{i}}}}, {\em {Coherent
  Crystallization of Quantum Liquid}}, Soviet Journal of Experimental and
  Theoretical Physics, 32 (1971), p.~1191.

\bibitem{Klein_etal-94}
{\sc W.~Klein, H.~Gould, R.~A. Ramos, I.~Clejan, and A.~I. Mel'cuk}, {\em
  Repulsive potentials, clumps and the metastable glass phase}, Physica A, 205
  (1994), pp.~738--746.

\bibitem{KunKat-12}
{\sc M.~Kunimi and Y.~Kato}, {\em Mean-field and stability analyses of
  two-dimensional flowing soft-core bosons modeling a supersolid}, Phys. Rev.
  B, 86 (2012), p.~060510.

\bibitem{LanRue-69}
{\sc O.~E. Lanford, III and D.~Ruelle}, {\em Observables at infinity and states
  with short range correlations in statistical mechanics}, Comm. Math. Phys.,
  13 (1969), pp.~194--215.

\bibitem{Legett-70}
{\sc A.~J. Leggett}, {\em Can a solid be "superfluid"?}, Phys. Rev. Lett., 25
  (1970), pp.~1543--1546.

\bibitem{LeoMorZupEssDon-17}
{\sc J.~L\'eonard, A.~Morales, P.~Zupancic, T.~Esslinger, and T.~Donner}, {\em
  Supersolid formation in a quantum gas breaking a continuous translational
  symmetry}, Nature, 543 (2017), pp.~87--90.

\bibitem{Lewin-15}
{\sc M.~Lewin}, {\em Mean-field limit of {B}ose systems: rigorous results}, in
  Proceedings of the {I}nternational {C}ongress of {M}athematical {P}hysics,
  {S}antiago de {C}hile, 2015.
\newblock ArXiv e-prints.

\bibitem{Lewin-22}
\leavevmode\vrule height 2pt depth -1.6pt width 23pt, {\em Coulomb and {R}iesz
  gases: {T}he known and the unknown}, J. Math. Phys., 63 (2022), p.~061101.
\newblock Special collection in honor of Freeman Dyson.

\bibitem{LewNamRou-14}
{\sc M.~Lewin, P.~T. Nam, and N.~Rougerie}, {\em Derivation of {H}artree's
  theory for generic mean-field {B}ose systems}, Adv. Math., 254 (2014),
  pp.~570--621.

\bibitem{LiLeeBurShtTopJamKet-17}
{\sc J.-R. {Li}, J.~{Lee}, W.~{Huang}, S.~{Burchesky}, B.~{Shteynas},
  F.~{\c{C}}. {Top}, A.~O. {Jamison}, and W.~{Ketterle}}, {\em {A stripe phase
  with supersolid properties in spin-orbit-coupled Bose-Einstein condensates}},
  Nature, 543 (2017), pp.~91--94.

\bibitem{LieLos-01}
{\sc E.~H. Lieb and M.~Loss}, {\em Analysis}, vol.~14 of Graduate Studies in
  Mathematics, American Mathematical Society, Providence, RI, 2nd~ed., 2001.

\bibitem{LieSeiSolYng-05}
{\sc E.~H. Lieb, R.~Seiringer, J.~P. Solovej, and J.~Yngvason}, {\em The
  mathematics of the {B}ose gas and its condensation}, Oberwolfach {S}eminars,
  Birkh{\"a}user, 2005.

\bibitem{LieSeiYng-00}
{\sc E.~H. Lieb, R.~Seiringer, and J.~Yngvason}, {\em Bosons in a trap: A
  rigorous derivation of the {G}ross-{P}itaevskii energy functional}, Phys.
  Rev. A, 61 (2000), p.~043602.

\bibitem{LieThi-86}
{\sc E.~H. Lieb and W.~E. Thirring}, {\em Universal nature of {V}an {D}er
  {W}aals forces for {C}oulomb systems}, Phys. Rev. A, 34 (1986), pp.~40--46.

\bibitem{LieYng-98}
{\sc E.~H. Lieb and J.~Yngvason}, {\em Ground state energy of the low density
  {B}ose gas}, Phys. Rev. Lett., 80 (1998), pp.~2504--2507.

\bibitem{LikLanWatLow-01}
{\sc C.~N. Likos, A.~Lang, M.~Watzlawek, and H.~L{\"o}wen}, {\em Criterion for
  determining clustering versus reentrant melting behavior for bounded
  interaction potentials}, Phys. Rev. E, 63 (2001), p.~031206.

\bibitem{Likos-07}
{\sc C.~N. Likos, B.~M. Mladek, D.~Gottwald, and G.~Kahl}, {\em Why do
  ultrasoft repulsive particles cluster and crystallize? analytical results
  from density-functional theory}, J. Chem. Phys., 126 (2007).

\bibitem{LikWatLow-98}
{\sc C.~N. Likos, M.~Watzlawek, and H.~L\"owen}, {\em Freezing and clustering
  transitions for penetrable spheres}, Phys. Rev. E, 58 (1998), pp.~3135--3144.

\bibitem{MacMacCinPoh-13}
{\sc T.~{Macr{\`\i}}, F.~{Maucher}, F.~{Cinti}, and T.~{Pohl}}, {\em
  {Elementary excitations of ultracold soft-core bosons across the
  superfluid-supersolid phase transition}}, Phys. Rev. A, 87 (2013), p.~061602.

\bibitem{MasJos-13}
{\sc P.~Mason and C.~Josserand}, {\em Phase-jump nucleation in a
  one-dimensional model of a supersolid}, Phys. Rev. B, 88 (2013), p.~224506.

\bibitem{MatTsu-70}
{\sc H.~Matsuda and T.~Tsuneto}, {\em {Off-Diagonal Long-Range Order in
  Solids}}, Prog. Theor. Phys. Supp., 46 (1970), pp.~411--436.

\bibitem{CorNatLi-20}
{\sc G.~McCormack, R.~Nath, and W.~Li}, {\em {Dynamical excitation of maxon and
  roton modes in a Rydberg-dressed Bose-Einstein condensate}}, Phys. Rev. A,
  102 (2020), p.~023319.

\bibitem{MilPis-10}
{\sc V.~Millot and A.~Pisante}, {\em Symmetry of local minimizers for the
  three-dimensional {G}inzburg-{L}andau functional}, J. Eur. Math. Soc. (JEMS),
  12 (2010), pp.~1069--1096.

\bibitem{Miranville-19}
{\sc A.~Miranville}, {\em The {C}ahn-{H}illiard equation}, vol.~95 of CBMS-NSF
  Regional Conference Series in Applied Mathematics, Society for Industrial and
  Applied Mathematics (SIAM), Philadelphia, PA, 2019.
\newblock Recent advances and applications.

\bibitem{Mironescu-96}
{\sc P.~Mironescu}, {\em Les minimiseurs locaux pour l'\'{e}quation de
  {G}inzburg-{L}andau sont \`a sym\'{e}trie radiale}, C. R. Acad. Sci. Paris
  S\'{e}r. I Math., 323 (1996), pp.~593--598.

\bibitem{Modica-85}
{\sc L.~Modica}, {\em A gradient bound and a liouville theorem for nonlinear
  poisson equations}, Commun. Pure Appl. Math., 38 (1985), pp.~679--684.

\bibitem{Moser-60}
{\sc J.~Moser}, {\em A new proof of {D}e {G}iorgi's theorem concerning the
  regularity problem for elliptic differential equations}, Comm. Pure Appl.
  Math., 13 (1960), pp.~457--468.

\bibitem{Moser-61}
\leavevmode\vrule height 2pt depth -1.6pt width 23pt, {\em On {H}arnack's
  theorem for elliptic differential equations}, Comm. Pure Appl. Math., 14
  (1961), pp.~577--591.

\bibitem{Murata-86}
{\sc M.~Murata}, {\em Structure of positive solutions to {$(-\Delta+V)u=0$} in
  {${\bf R}^n$}}, Duke Math. J., 53 (1986), pp.~869--943.

\bibitem{Nepomnyashchii-71}
{\sc Y.~A. {Nepomnyashchi{\v{i}}}}, {\em {Coherent crystals with
  one-dimensional and cubic lattices}}, Theor. Math. Phys., 8 (1971),
  pp.~928--938.

\bibitem{NepNep-71}
{\sc Y.~A. {Nepomnyashchi{\v{i}}} and A.-A. {Nepomnyashchi{\v{i}}}}, {\em
  {Collective spectrum of coherent crystals}}, Theor. Math. Phys., 9 (1971),
  pp.~1033--1041.

\bibitem{NijRui-85b}
{\sc B.~Nijboer and T.~Ruijgrok}, {\em On the minimum-energy configuration of a
  one- dimensional system of particles interacting with the potential
  $\phi(x)=(1+x 4) {-1}$}, Physica A, 133 (1985), pp.~319--329.

\bibitem{Norcia_etal-21}
{\sc M.~A. {Norcia}, C.~{Politi}, L.~{Klaus}, E.~{Poli}, M.~{Sohmen}, M.~J.
  {Mark}, R.~N. {Bisset}, L.~{Santos}, and F.~{Ferlaino}}, {\em
  {Two-dimensional supersolidity in a dipolar quantum gas}}, Nature, 596
  (2021), pp.~357--361.

\bibitem{PacRiv-00}
{\sc F.~Pacard and T.~Rivi\`ere}, {\em Linear and nonlinear aspects of
  vortices}, vol.~39 of Progress in Nonlinear Differential Equations and their
  Applications, Birkh\"{a}user Boston, Inc., Boston, MA, 2000.
\newblock The Ginzburg-Landau model.

\bibitem{Park-84}
{\sc Y.~M. Park}, {\em Bounds on exponentials of local number operators in
  quantum statistical mechanics}, Comm. Math. Phys., 94 (1984), pp.~1--33.

\bibitem{Park-85}
\leavevmode\vrule height 2pt depth -1.6pt width 23pt, {\em Quantum statistical
  mechanics for superstable interactions: {B}ose-{E}instein statistics}, J.
  Statist. Phys., 40 (1985), pp.~259--302.

\bibitem{PetRot-18}
{\sc M.~Petrache and S.~Rota~Nodari}, {\em Equidistribution of jellium energy
  for {C}oulomb and {R}iesz interactions}, Constr. Approx., 47 (2018),
  pp.~163--210.

\bibitem{PetReb-07}
{\sc S.~N. Petrenko and A.~L. Rebenko}, {\em Superstable criterion and
  superstable bounds for infinite range interaction. {I}. {T}wo-body
  potentials}, Methods Funct. Anal. Topology, 13 (2007), pp.~50--61.

\bibitem{Pisante-11}
{\sc Pisante}, {\em {Two results on the equivariant Ginzburg-Landau vortex in
  arbitrarydimension}}, J. Funct. Anal., 260 (2011), pp.~892--905.

\bibitem{Pisante-14}
{\sc A.~Pisante}, {\em Symmetry in nonlinear {PDE}s: some open problems}, J.
  Fixed Point Theory Appl., 15 (2014), pp.~299--320.

\bibitem{Pitaevskii-61}
{\sc L.~P. Pitaevskii}, {\em Vortex lines in an imperfect {B}ose gas}, Zh.
  Eksper. Teor. fiz., 40 (1961), pp.~646--651.

\bibitem{PitStr-03}
{\sc L.~P. Pitaevskii and S.~Stringari}, {\em Bose-{E}instein condensation},
  no.~116, Oxford University Press, 2003.

\bibitem{PolBaiFerBla-24}
{\sc E.~Poli, D.~Baillie, F.~Ferlaino, and P.~B. Blakie}, {\em Excitations of a
  two-dimensional supersolid}, Phys. Rev. A, 110 (2024), p.~053301.

\bibitem{PomRic-93}
{\sc Y.~Pomeau and S.~Rica}, {\em Model of superflow with rotons}, Phys. Rev.
  Lett., 71 (1993), pp.~247--250.

\bibitem{PomRic-94}
\leavevmode\vrule height 2pt depth -1.6pt width 23pt, {\em Dynamics of a model
  of supersolid}, Phys. Rev. Lett., 72 (1994), pp.~2426--2429.

\bibitem{PreSerBru-18}
{\sc S.~Prestipino, A.~Sergi, and E.~Bruno}, {\em Clusterization of
  weakly-interacting bosons in one dimension: an analytic study at zero
  temperature}, J. Phys. A: Math. Theor., 52 (2018), p.~015002.

\bibitem{Radin-84}
{\sc C.~Radin}, {\em Classical ground states in one dimension}, J. Statist.
  Phys., 35 (1984), pp.~109--117.

\bibitem{Radin-04}
\leavevmode\vrule height 2pt depth -1.6pt width 23pt, {\em Existence of ground
  state configurations}, Math. Phys. Electron. J., 10 (2004), pp.~Paper 6, 7.

\bibitem{Rebenko-98}
{\sc A.~L. Rebenko}, {\em A new proof of {R}uelle's superstability bounds}, J.
  Statist. Phys., 91 (1998), pp.~815--826.

\bibitem{ReeSim2}
{\sc M.~Reed and B.~Simon}, {\em Methods of {M}odern {M}athematical {P}hysics.
  {II}. {F}ourier analysis, self-adjointness}, Academic Press, New York, 1975.

\bibitem{RipBaiBla-23}
{\sc B.~T.~E. Ripley, D.~Baillie, and P.~B. Blakie}, {\em {Two-dimensional
  supersolidity in a planar dipolar Bose gas}}, Phys. Rev. A, 108 (2023),
  p.~053321.

\bibitem{Roman-19}
{\sc C.~Rom\'{a}n}, {\em Three dimensional vortex approximation construction
  and {$\varepsilon$}-level estimates for the {G}inzburg-{L}andau functional},
  Arch. Ration. Mech. Anal., 231 (2019), pp.~1531--1614.

\bibitem{RotSer-15}
{\sc S.~{Rota Nodari} and S.~Serfaty}, {\em Renormalized energy
  equidistribution and local charge balance in 2d {C}oulomb system}, Int. Math.
  Res. Not. (IMRN), 11 (2015), pp.~3035--3093.

\bibitem{Rowlinson-79}
{\sc J.~S. {Rowlinson}}, {\em {Translation of J. D. van der Waals' ``The
  thermodynamik theory of capillarity under the hypothesis of a continuous
  variation of density''}}, J. Statist. Phys., 20 (1979), pp.~197--200.

\bibitem{Ruelle-70}
{\sc D.~Ruelle}, {\em Superstable interactions in classical statistical
  mechanics}, Comm. Math. Phys., 18 (1970), pp.~127--159.

\bibitem{Ruelle}
\leavevmode\vrule height 2pt depth -1.6pt width 23pt, {\em Statistical
  mechanics. Rigorous results}, {Singapore: World Scientific. London: Imperial
  College Press}, 1999.

\bibitem{Sandier-98}
{\sc E.~Sandier}, {\em Locally minimising solutions of {$-\Delta u=u(1-|u|^2)$}
  in {${\bf R}^2$}}, Proc. Roy. Soc. Edinburgh Sect. A, 128 (1998),
  pp.~349--358.

\bibitem{Sandier-98b}
\leavevmode\vrule height 2pt depth -1.6pt width 23pt, {\em Lower bounds for the
  energy of unit vector fields and applications}, J. Funct. Anal., 152 (1998),
  pp.~379--403.

\bibitem{SanSha-17}
{\sc E.~Sandier and I.~Shafrir}, {\em Small energy {G}inzburg-{L}andau
  minimizers in {$\Bbb{R}^3$}}, J. Funct. Anal., 272 (2017), pp.~3946--3964.

\bibitem{Savin-09}
{\sc O.~Savin}, {\em Regularity of flat level sets in phase transitions}, Ann.
  of Math. (2), 169 (2009), pp.~41--78.

\bibitem{SepJosRic-10}
{\sc N.~{Sep{\'u}lveda}, C.~{Josserand}, and S.~{Rica}}, {\em {Superfluid
  density in a two-dimensional model of supersolid}}, Eur. Phys. J. B, 78
  (2010), pp.~439--447.

\bibitem{Shafrir-94}
{\sc I.~Shafrir}, {\em Remarks on solutions of {$-\Delta u=(1-|u|^2)u$} in
  {${\bf R}^2$}}, C. R. Acad. Sci. Paris S\'{e}r. I Math., 318 (1994),
  pp.~327--331.

\bibitem{Simon-81}
{\sc B.~Simon}, {\em Large time behavior of the $l^p$ norm of {S}chr{\"o}dinger
  semigroups}, J. Funct. Anal., 40 (1981), pp.~66--83.

\bibitem{Suto-05}
{\sc A.~S{\"u}t\H{o}}, {\em Crystalline ground states for classical particles},
  Phys. Rev. Lett., 95 (2005), p.~265501.

\bibitem{Suto-11}
\leavevmode\vrule height 2pt depth -1.6pt width 23pt, {\em Ground state at high
  density}, Comm. Math. Phys., 305 (2011), pp.~657--710.

\bibitem{Suto-11b}
\leavevmode\vrule height 2pt depth -1.6pt width 23pt, {\em Superimposed
  particles in 1{D} ground states}, J. Phys. A, 44 (2011), pp.~035205, 7.

\bibitem{TanLucFamCatFioGabbBisSanMod-19}
{\sc L.~Tanzi, E.~Lucioni, F.~Fam\`a, J.~Catani, A.~Fioretti, C.~Gabbanini,
  R.~N. Bisset, L.~Santos, and G.~Modugno}, {\em Observation of a dipolar
  quantum gas with metastable supersolid properties}, Phys. Rev. Lett., 122
  (2019), p.~130405.

\bibitem{TanRocLucFamFioGabModRecStr-19}
{\sc L.~{Tanzi}, S.~M. {Roccuzzo}, E.~{Lucioni}, F.~{Fam{\`a}}, A.~{Fioretti},
  C.~{Gabbanini}, G.~{Modugno}, A.~{Recati}, and S.~{Stringari}}, {\em
  {Supersolid symmetry breaking from compressional oscillations in a dipolar
  quantum gas}}, Nature, 574 (2019), pp.~382--385.

\bibitem{TorSti-08}
{\sc S.~Torquato and F.~H. Stillinger}, {\em New duality relations for
  classical ground states}, Phys. Rev. Lett., 100 (2008), p.~020602.

\bibitem{Triay-18}
{\sc A.~Triay}, {\em Derivation of the dipolar {G}ross-{P}itaevskii energy},
  SIAM J. Math. Anal., 50 (2018), pp.~33--63.

\bibitem{Trudinger-67}
{\sc N.~S. Trudinger}, {\em On {H}arnack type inequalities and their
  application to quasilinear elliptic equations}, Comm. Pure Appl. Math., 20
  (1967), pp.~721--747.

\bibitem{Trudinger-73}
{\sc N.~S. {Trudinger}}, {\em {Linear elliptic operators with measurable
  coefficients}}, {Ann. Sc. Norm. Super. Pisa, Sci. Fis. Mat., III. Ser.}, 27
  (1973), pp.~265--308.

\bibitem{VanDerWaals-79}
{\sc J.~D. van~der Waals}, {\em {The thermodynamik theory of capillarity under
  the hypothesis of a continuous variation of density}}, J. Statist. Phys., 20
  (1979), pp.~200--244.
\newblock Translation of the original 1893 article, by J.S. Rowlinson.

\bibitem{VenNij-79-2}
{\sc W.~J. Ventevogel and B.~R.~A. Nijboer}, {\em On the configuration of
  systems of interacting particles with minimum potential energy per particle},
  Phys. A, 98 (1979), pp.~274--288.

\bibitem{VenNij-79-1}
\leavevmode\vrule height 2pt depth -1.6pt width 23pt, {\em On the configuration
  of systems of interacting particles with minimum potential energy per
  particle}, Phys. A, 99 (1979), pp.~569--580.

\bibitem{Wang-04}
{\sc F.-Y. Wang}, {\em Gradient estimates of {D}irichlet heat semigroups and
  application to isoperimetric inequalities}, Ann. Probab., 32 (2004),
  pp.~424--440.

\bibitem{WatBra-12}
{\sc H.~Watanabe and T.~c.~v. Brauner}, {\em Spontaneous breaking of continuous
  translational invariance}, Phys. Rev. D, 85 (2012), p.~085010.

\bibitem{YeZho-96}
{\sc D.~Ye and F.~Zhou}, {\em Uniqueness of solutions of the
  {G}inzburg-{L}andau problem}, Nonlinear Anal., 26 (1996), pp.~603--612.

\bibitem{ZhaMauPoh-19}
{\sc Y.-C. Zhang, F.~Maucher, and T.~Pohl}, {\em {Supersolidity around a
  Critical Point in Dipolar Bose-Einstein Condensates}}, Phys. Rev. Lett., 123
  (2019), p.~015301.

\end{thebibliography}

\end{document}